\providecommand{\keywords}[1]
{
  \small	
  \textbf{\textit{Keywords---}} #1
}
\numberwithin{equation}{section}
\newtheorem{theorem}{Theorem}[section]
\newtheorem{corollary}[theorem]{Corollary}
\newtheorem{proposition}[theorem]{Proposition}
\newtheorem{lemma}[theorem]{Lemma}
\newtheorem{definition}[theorem]{Definition}
\theoremstyle{remark}
\newtheorem{remark}[theorem]{Remark}
\tikzset{cross/.style={cross out, draw=black, minimum size=5*(#1-\pgflinewidth), inner sep=0pt, outer sep=0pt},
cross/.default={3pt}}
\newcommand{\ii}{\mathrm{i}}
\newcommand{\dd}{\,\mathrm{d}}
\newcommand{\pd}{\partial}
\newcommand{\mr}{\mathrm}
\renewcommand{\Re}{\mathtt{Re}}
\renewcommand{\Im}{\mathtt{Im}}
\newcommand{\sgn}{\mathrm{sgn}}
\newcommand{\email}[1]{\protect\href{mailto:#1}{#1}}
\begin{document}
\title{Stability of traveling waves in a nonlinear hyperbolic system approximating a dimer array of oscillators}
 
\author{Huaiyu Li\thanks{Department of Applied Physics and Applied Mathematics, Columbia University, New York, NY (\email{hl3002@columbia.edu}).},\ \ Andrew Hofstrand\thanks{New York Institute of Technology, New York, NY (\email{ahofstra@nyit.edu}).},\ \ Michael I. Weinstein\thanks{Department of Applied Physics and Applied Mathematics and Department of Mathematics, Columbia University, New York, NY (\email{miw2103@columbia.edu}).}}
\date{\today}

\maketitle
\begin{abstract}
We study a semilinear hyperbolic system of PDEs which arises as a continuum approximation of 
the discrete nonlinear dimer array model  introduced by Hadad, Vitelli and Alu (HVA) in \cite{HVA17}.
We classify the system's traveling waves, and study their stability properties. We focus 
on traveling pulse solutions (``solitons'') on a nontrivial background
and moving domain wall solutions (kinks); both arise as heteroclinic connections between spatially uniform equilibrium of a reduced dynamical system.  
We present analytical results on: 
nonlinear stability and spectral stability of supersonic pulses, and spectral stability of moving domain walls.
Our stability results are in terms of weighted $H^1$ norms of the perturbation, which capture the phenomenon of {\it convective stabilization}; 
as time advances, the  traveling wave ``outruns'' the \underline{growing} disturbance excited by an initial perturbation; 
the non-trivial spatially uniform equilibria are linearly exponentially unstable. We use our analytical results to interpret phenomena observed in numerical simulations.
\end{abstract}
\keywords{solitary wave, kink, domain wall, stability and instability of coherent structures}
\tableofcontents
\label{toc}
\section{Introduction}

\subsection{Background and motivation}
We study the system of semi-linear hyperbolic PDEs:
\begin{equation}
\label{eq: PDE in lab frame}
\begin{aligned}
u_t & = v_y + \mathcal N\big(u^2 + v^2\big) v \\
v_t & = u_y - \mathcal N\big(u^2 + v^2\big) u\ ,
\end{aligned}
\end{equation}
governing the time evolution of 
$ (y,t)\in\mathbb R\times\mathbb R\mapsto b(y,t) = \begin{bmatrix}
    u(y,t) & v(y,t) 
\end{bmatrix}^\mathsf T \in\mathbbm R^2$. 
The properties of the nonlinearity, $\mathcal{N}(\cdot)$ are discussed below in {\sc section} \ref{sec: nonlinearity}.
Our study is inspired by work of Hadad, Vitelli and Alu (referred to as HVA in this article) \cite{HVA17},
who introduced a nonlinear variant of the {\it discrete and linear} Su-Schrieffer-Heeger (SSH) dimer model \cite{ssh79},
which can be experimentally realized via an array of coupled \textit{nonlinear} electrical circuit elements; see \eqref{eq: discrete} below.
The SSH model is well-known to exhibit topological transitions, related to the closing 
(and formation of a linear crossing at  ``Dirac points'' in the band structure) and re-opening of a spectral gap in its band structure as the ratio 
of the intra-cell to inter-cell coupling (hopping) coefficients is varied.
HVA studied a continuum model, appropriate for wave-packet excitations centered on the Dirac point quasi-momentum. 
They derived, via phase portrait analysis and numerics, traveling pulse solutions (solitons) and traveling domain wall solutions (kinks/antikinks). They then studied, by numerical simulations,
the spatially \textit{discrete} nonlinear time-evolution for initial data given by such solitons and kinks, 
sampled on the lattice. 
These numerical simulations of the discrete model showed that the  
{\it core} of both kinks and supersonic pulses appears
to be stable against small spatially localized perturbations. 
Extensive simulations of the time-dependent nonlinear continuum model, 
\eqref{eq: PDE in lab frame} (see \cite{li2023thesis}\cite{du2023discontinuous}) 
demonstrate that this traveling core is {\it convectively stable}; the core persists although away from the core the solution tends to grow with advancing time.

{\it In this article, 
we present analytical results for the system \eqref{eq: PDE in lab frame} on nonlinear stability and spectral stability
of supersonic pulses (solitons) that asymptote to different nontrivial equilibria, and spectral stability of moving domain walls (kinks),
which contribute to an understanding of the dynamics.}

We next present a precise  mathematical formulation,  
discuss numerical results which exhibit the phenomenon of convective stabilization of pulses and solitons, 
and summarize our analytical results.

\subsection{Assumptions on the nonlinearity 
}
\label{sec: nonlinearity}
Throughout this article, the nonlinearity $s\mapsto \mathcal{N}(s)$ in \eqref{eq: PDE in lab frame}, for $0 \leq s = u^2 + v^2 < \infty$,  is assumed to be smooth and to satisfy:
\begin{itemize} 
\item[($\mathcal{N}1$)]  $\mathcal{N}'(s)<0$ for $s>0$, and   $\mathcal N(0) = 1$, $\mathcal N(1)=0$. 
\footnote{By $\mathcal N'$, $\mathcal N''$ etc., we always mean $\mathcal N'(s) := \dd \mathcal N(s) / \dd s$ etc.}
The parameter 
\begin{equation}
\label{eq: K}
K:= \big|\mathcal N'(1)\big| = - \mathcal N'(1) > 0.
\end{equation}
will play an important role.
\item[($\mathcal{N}2$)] $\lim_{s \to +\infty} \mathcal N(s) = \mathcal{N}_\infty\in [-\infty,0)$. 
\end{itemize}

A common physical assumption is that the nonlinearity be saturable.
We say that the nonlinearity $\mathcal N(s)$ 
is \textbf{saturable} 
if $(\mathcal{N}2)$ is replaced by
\medskip

\noindent $(\mathcal{N}2')$ $\lim_{s \to +\infty} \mathcal N(s) = \mathcal N_\infty \in(-\infty , 0) $ and further that
 $\mathcal N(s) -\mathcal N_\infty$,
and its derivatives $\mathcal N^{(k)} (s) \to 0$, $k = 1, 2, 
 \cdots$, decay to zero
 sufficiently rapidly as $s \to \infty$. 
\medskip

 \noindent An example of a saturable nonlinearity is $\mathcal N(s) = \frac{1-s}{1+s}$. 
An example of a general nonlinearity is $\mathcal N(s) = 1-s$.

The system \eqref{eq: PDE in lab frame} is a semilinear hyperbolic system, 
whose characteristic lines are given by solutions of 
$dy/dt = \pm 1$.
Any $\mathcal C^1$ solution satisfies the conservation law
\begin{equation}
\label{eq: conserved}
    \partial_t \big(u^2+v^2\big) + \partial_y\big(-2uv\big)=0. 
\end{equation}
The conservation law \eqref{eq: conserved} plays a role in our classification
of traveling wave solutions in {\sc section} \ref{sec: tws},
and in the proof and application of finite propagation speed 
in {\sc sections} \ref{sec:finite-prop} and {\sc section} \ref{sec:nonlin-convec}.
The system \eqref{eq: PDE in lab frame} does not appear to be of Hamiltonian type. It does have certain discrete symmetries which we summarize in the following:
\begin{proposition}
[Discrete Symmetries]
\label{prop: discrete symmetries} 
Let $ b(y,t) = [u(y,t),v(y,t)]$ denote  a solution of (\ref{eq: PDE in lab frame}). Then, 
\begin{align*}
    \big[ \mathcal P  b \big](y,t) & := [-u(y,t),-v(y,t)]\\
 \big[\mathcal T  b\big](y,t) &:= [u(y,-t),-v(y,-t)] \\
 \big[\mathcal {C}  b\big](x,t) &:= [v(-y,-t),u(-y,-t)]
\end{align*}
are also solutions. Moreover,
\begin{align*}
    \mathcal P^2 & = \mathcal T^2 = \mathcal C^2 = \mathrm{id} \\
    \mathcal {PC} &= \mathcal {CP},\ \mathcal {PT} = \mathcal {TP},\ \mathcal {TC} =\mathcal {CPT}
\end{align*}
\end{proposition}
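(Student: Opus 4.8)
The plan is to treat the two assertions of the proposition separately: first that each of $\mathcal{P}$, $\mathcal{T}$, $\mathcal{C}$ carries solutions to solutions, and then the eight algebraic identities. For the first, I would verify each generator by direct substitution, exploiting the single structural fact that the argument of the nonlinearity, $u^2+v^2$, is invariant under all three maps: the global sign flip in $\mathcal{P}$ leaves $u^2+v^2$ unchanged, the component swap $(u,v)\mapsto(v,u)$ in $\mathcal{C}$ leaves it unchanged, and reflection of the independent variables merely evaluates the same quantity at a reflected point. Consequently $\mathcal{N}(u^2+v^2)$ is transported rigidly and no property of $\mathcal{N}$ beyond its dependence on $u^2+v^2$ is used, so checking the PDE reduces to matching the linear terms. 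For $\mathcal{P}$ this is immediate since \eqref{eq: PDE in lab frame} is odd in $(u,v)$. For $\mathcal{T}$ and $\mathcal{C}$ one applies the chain rule: writing $\tilde b$ for the transformed field, the reflections $t\mapsto-t$ and $y\mapsto-y$ each produce a factor $-1$ on the corresponding derivative, and I would check that these signs, together with the target sign/permutation matrix, convert the first equation of \eqref{eq: PDE in lab frame} evaluated at the reflected point into the first equation for $\tilde b$, and likewise for the second. Each such check is a short computation.

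For the algebraic relations I would introduce the compact encoding $b\mapsto M\,b(\mathcal{R}(y,t))$, where $\mathcal{R}$ is the reflection acting on $(y,t)$ and $M\in\mathbb{R}^{2\times2}$ acts on the target. In this notation $\mathcal{P}=(\mathrm{id},-I)$, $\mathcal{T}=(S_t,\sigma_3)$ and $\mathcal{C}=(S_yS_t,\sigma_1)$, where $S_t:(y,t)\mapsto(y,-t)$, $S_y:(y,t)\mapsto(-y,t)$, $\sigma_3=\mathrm{diag}(1,-1)$ and $\sigma_1=\left[\begin{smallmatrix}0&1\\1&0\end{smallmatrix}\right]$. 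The single composition rule $(\mathcal{R}_1,M_1)(\mathcal{R}_2,M_2)=(\mathcal{R}_2\circ\mathcal{R}_1,\,M_1M_2)$ — in which the argument reflections compose in the \emph{opposite} order to the matrices — then reduces every relation to elementary matrix algebra together with the fact that $S_y$ and $S_t$ commute and square to the identity. The involution identities follow from $(-I)^2=\sigma_1^2=\sigma_3^2=I$; the commutations $\mathcal{PC}=\mathcal{CP}$ and $\mathcal{PT}=\mathcal{TP}$ follow because $-I$ is central; and $\mathcal{TC}=\mathcal{CPT}$ reduces to the anticommutation $\sigma_3\sigma_1=-\sigma_1\sigma_3$.

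The computation is conceptually routine; the care required is twofold. First, in the solution-preservation step one must track the chain-rule signs correctly and pair them with the target matrix so that the two scalar equations map to one another rather than getting crossed. Second, and this is the only genuinely non-obvious identity, the relation $\mathcal{TC}=\mathcal{CPT}$ hinges on the anticommutation $\sigma_1\sigma_3=-\sigma_3\sigma_1$ of the component-swap and component-sign matrices; the mismatched sign this produces is precisely the central factor $-I$ supplied by the extra $\mathcal{P}$. I would therefore derive the composition rule once from the definition $[(\mathcal{R}_1,M_1)(\mathcal{R}_2,M_2)b](y,t)=M_1\bigl[(\mathcal{R}_2,M_2)b\bigr](\mathcal{R}_1(y,t))=M_1M_2\,b(\mathcal{R}_2(\mathcal{R}_1(y,t)))$ and invoke it uniformly, rather than recompute each product of three maps by hand, which is where sign errors would otherwise creep in.
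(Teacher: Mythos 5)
Your proposal is correct: the paper explicitly omits this proof as ``very straightforward,'' and your argument --- direct substitution using the invariance of $u^2+v^2$ under each generator, followed by the $(\mathcal{R},M)$ composition calculus for the group relations --- is precisely the routine verification the authors had in mind, organized somewhat more systematically. In particular, your reduction of the one non-obvious identity $\mathcal{TC}=\mathcal{CPT}$ to the anticommutation $\sigma_1\sigma_3=-\sigma_3\sigma_1$, with the extra $\mathcal{P}$ supplying the central factor $-I$, checks out exactly (both sides equal $b\mapsto\sigma_3\sigma_1\,b(-y,t)$).
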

The proof of {\sc proposition} \ref{prop: discrete symmetries} is very straightforward and we omit it.

\subsection{Phenomena motivating this work}
 System \eqref{eq: PDE in lab frame}, with nonlinearity assumptions $(\mathcal N 1)$ and $(\mathcal N 2)$, has spatially uniform equilibria:
\begin{equation}\textrm{either $b = [0\ 0]^\top$ or  $b=[u\ v]^\top$, where $|b| = \sqrt{u^2+v^2}=1$; see $(\mathcal N 1)$.
}\label{eq:spat-unif}
\end{equation}
In {\sc section} \ref{sec: tws} we classify the \textbf{traveling wave solutions} (TWS) of \eqref{eq: PDE in lab frame}, which are of the form
\begin{equation}
    b(y,t) = b_*(y,t) =  b_*(x := y-ct), \label{eq:tw-ansatz} 
\end{equation}
and tend to spatially uniform equilibria at infinity:
\[ \lim_{x \to \pm \infty } b_*(x) = b_{*,\pm}.\]
Here, $b_{*,+}$ and $b_{*,-}$ are among the spatially uniform states displayed in \eqref{eq:spat-unif}.
Traveling wave solution profiles, $b_*(x)$, are heteroclinic orbits in the phase portrait of a two-dimensional dynamical system obtained from \eqref{eq: PDE in lab frame} via the ansatz \eqref{eq:tw-ansatz}; see Section \ref{sec: tws}.

Pulses are orbits connecting distinct nontrivial equilibria satisfying $|b|=1$, 
and kinks (and antikinks) are those which connect the trivial equilibrium with a non-trivial equilibrium. 
Pulses may be supersonic ($|c|>1$) or subsonic ($|c|<1$), while  kinks and anti-kinks are all subsonic.

\subsubsection{Convective stability and weighted spaces}

Consider the case of a supersonic pulse with $c>1$. {\sc figure} \ref{fig: perturbed supersonic pulse simulation} displays snapshots of the time-evolution of a localized perturbation of $b_*$.
This initial perturbation  generates time-evolving perturbations of the pulse which, in a frame of reference moving with the traveling wave speed, $c>1$, appear
to travel leftward, away
from the traveling pulse core,
while also growing in amplitude.
In this same moving frame of reference,  
the deviation from an exact traveling wave profile, when measured within a fixed 
semi-infinite ``window" $[R,\infty)$, 
tends zero as $t$ increases because 
perturbations exit the window
at $x=R$ as $t$ increases.
We say that the supersonic pulse (or its core) is 
{\bf convectively stable}. 
The notion of convective stability has been  considered previously in, for example, \cite{pego1994asymptotic,martel2001asymptotic,pego1997convective}.

We capture this stabilization of the traveling wave core by working in {\bf weighted} function spaces. In a reference frame \textit{moving} with the traveling wave solution $b_*(x)$,
perturbations are studied as elements of $H^1\big(\mathbbm R;W(x)\dd x\big)$,
where the weight $W(x)$ is chosen to be of  \textbf{exponential type}. Specifically, $W(x)$ is monotone and of the form
\[
    W (x) = e^{w(x)},\quad w(x) \in \mathbbm R; 
\] 
see {\sc section} \ref{sec: weighted spaces}.
If, in the \textit{moving frame} (speed $c>1$), 
the perturbation travels in the direction of decrease of $w(x)$ 
(to the left),
then it is registered  as decaying with advancing time if the time rate of  amplitude growth is not too large. This intuition underlies our nonlinear and spectral stability results for supersonic pulses;
see {\sc section} \ref{sec:summary}.
\begin{remark}[Do solutions grow without bound in $L^\infty(\mathbb R)$?]\label{rem:pertgrowth}
 We note that numerical simulations of the nonlinear PDEs suggest that the perturbation of traveling wave may be growing without bound. (Note that we have no {\it a priori} $H^1$ bound on the solution; see {\sc Theorem} \ref{thm: global well-posedness perturbation H 1}.) However, as time advances, the perturbation grows in regions which are further and further away from the traveling wave core. This behavior of the perturbation is registered as time-decay with respect to the weighted norm. This 
 convective stability scenario differs from the more typical scenarios in solitary wave stability. For example, in KdV type equations, which are Hamiltonian
  and come with an {\it a priori} bound on the $H^1$ norm,  the perturbation remains bounded and is in fact comprised of small amplitude solitary waves and a radiation component which together decay in appropriate weighted or local energy norms;
   see, for example, \cite{pego1994asymptotic,martel2001asymptotic}.
\end{remark}

\begin{figure}[h!]
     \centering
     \begin{subfigure}[b]{0.85\textwidth}
     \centering
     \includegraphics[width=\textwidth]{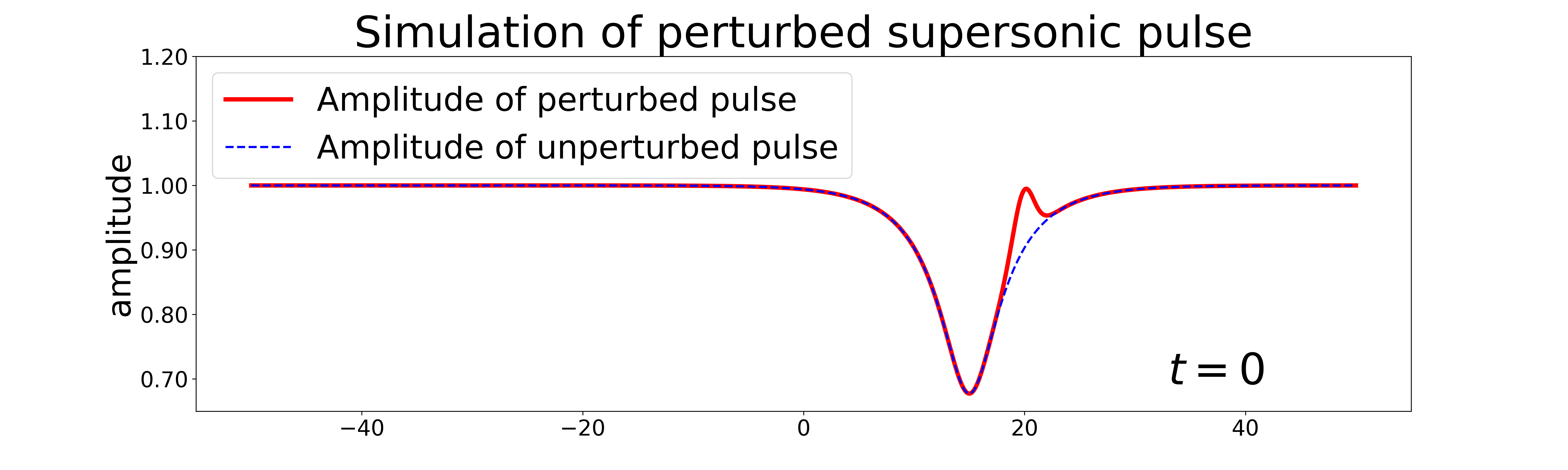}
         \label{fig: perturbed supersonic pulse initial}
     \end{subfigure}
     \begin{subfigure}[b]{0.85\textwidth}\centering\includegraphics[width=\textwidth]{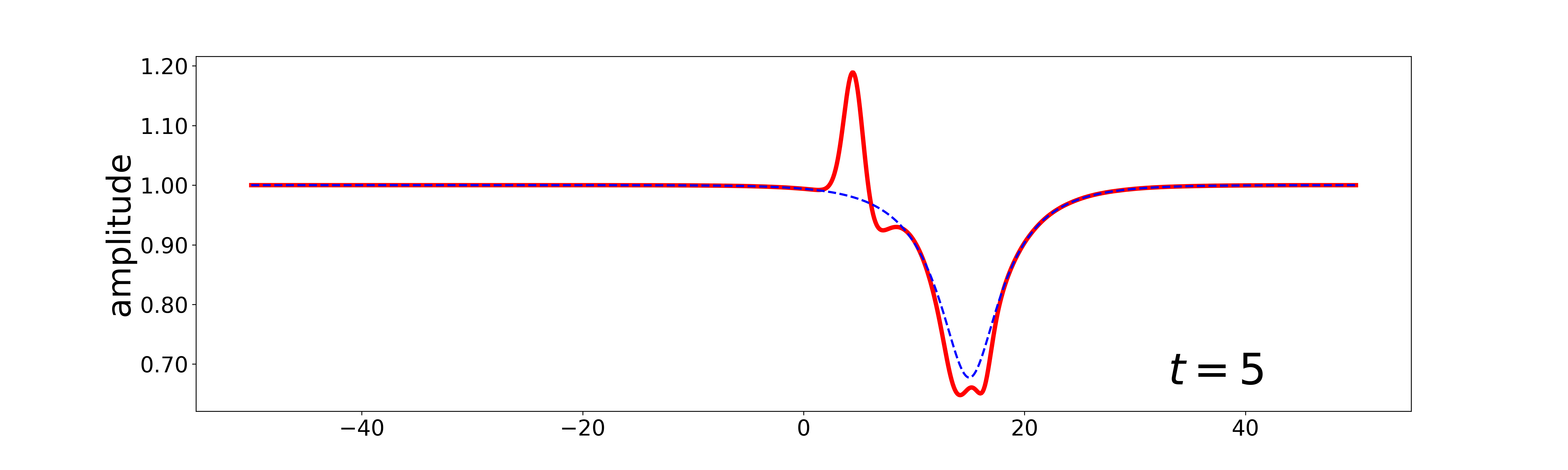}
         \label{fig: perturbed supersonic pulse 1}
     \end{subfigure}
          \begin{subfigure}[b]{0.85\textwidth}
         \centering
         \includegraphics[width=\textwidth]{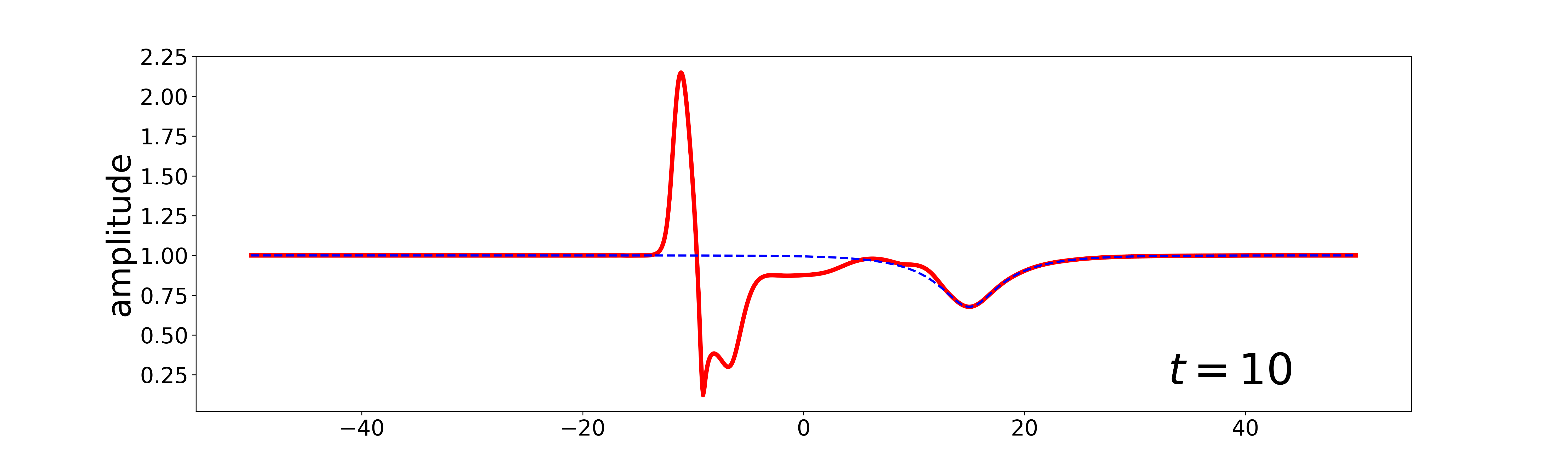}
         \label{fig: perturbed supersonic pulse 2}
     \end{subfigure}
     \begin{subfigure}[b]{0.85\textwidth}
         \centering
         \includegraphics[width=\textwidth]{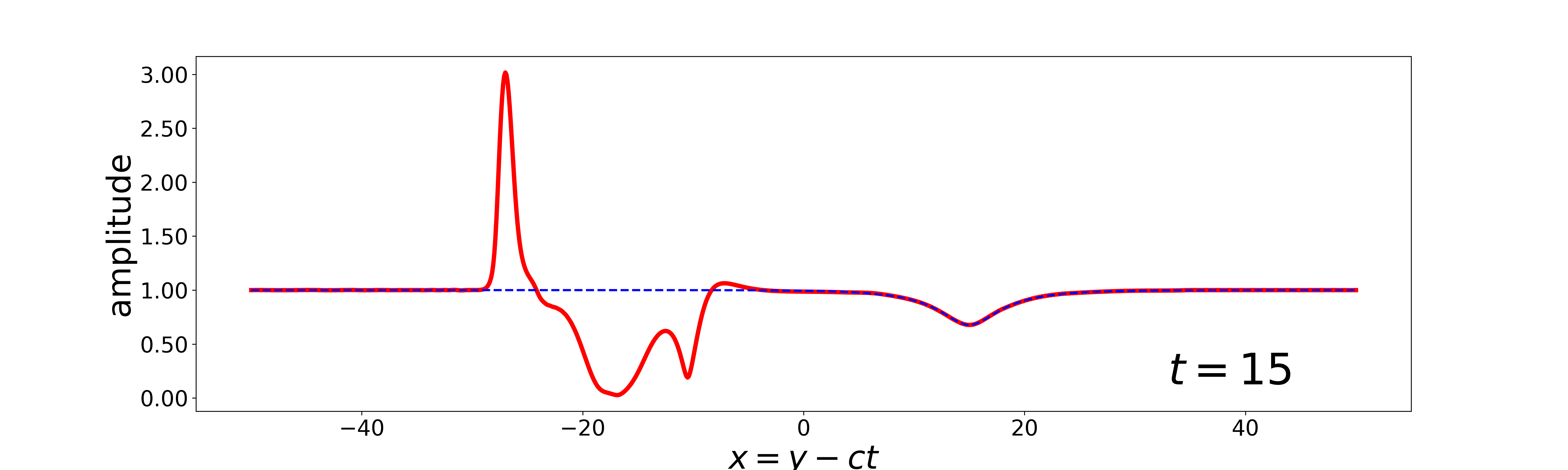}
         \label{fig: perturbed supersonic pulse 3}
     \end{subfigure}
\caption{Convective stability of supersonic pulses.
Snapshots of a perturbed supersonic ($c=2>c_0=1$) pulse of (\ref{eq: PDE in lab frame}) in a reference frame of speed $c$. 
Perturbation at $t=0$ is concentrated to the right of the core. 
Red curves indicate the amplitude of the solution at different times.
Blue dashed curves indicate the amplitude profile of the unperturbed supersonic pulse. 
The perturbation grows relative to the  unperturbed pulse as time advances; note the differing amplitude scales of the different panels. 
 The pulse core is nearly restored without any phase shift at $t = 15$.
} 
\label{fig: perturbed supersonic pulse simulation}
\end{figure}

\begin{figure}[h!]
     \centering
\includegraphics[scale=0.75]{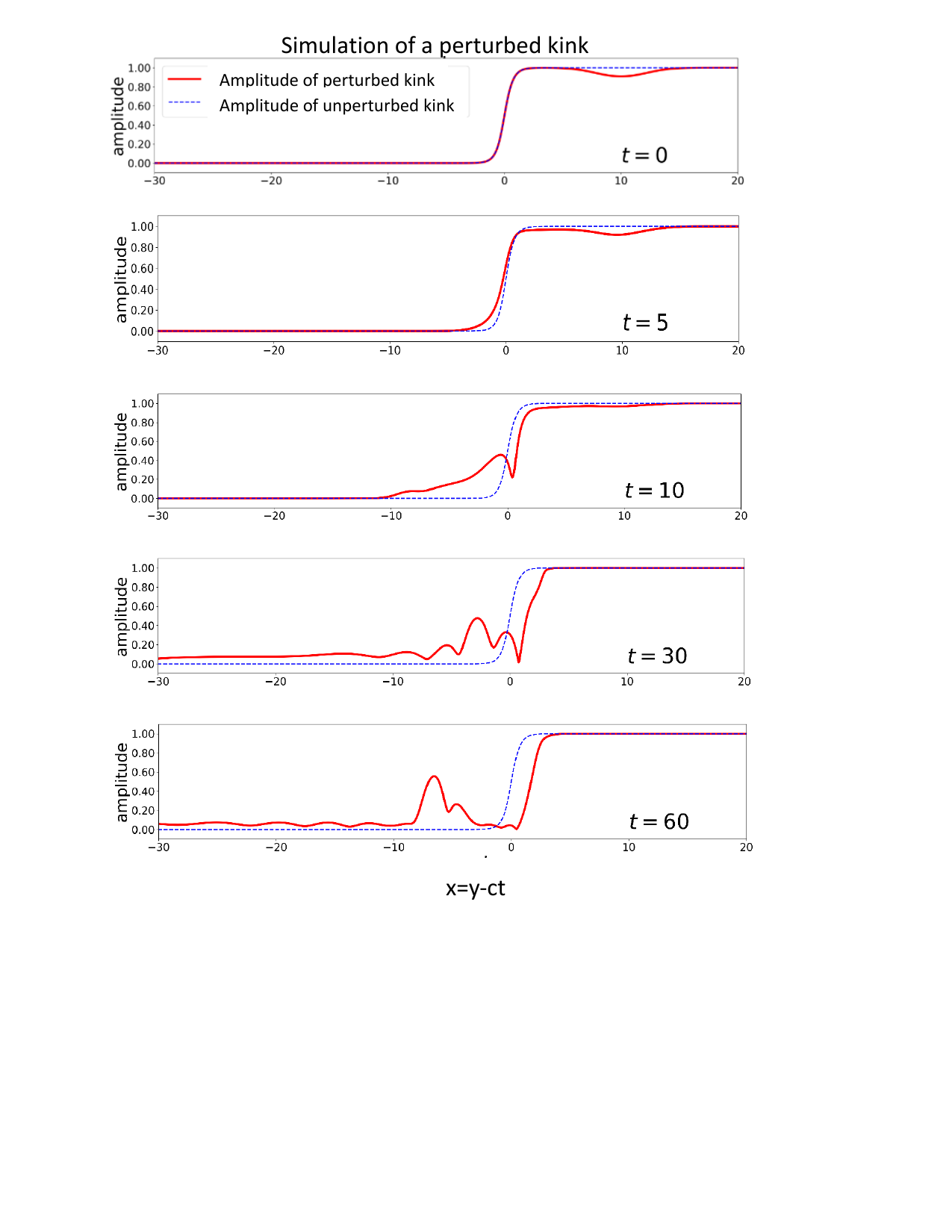}
\caption{Numerical evidence for the convective stability of kinks.
Snapshots of perturbed kink ($c=0.9$) of (\ref{eq: PDE in lab frame})) 
in a reference frame of the same speed $c$. 
Perturbation at $t=0$ is concentrated to the right of its core. 
Red curves indicate the solution amplitude at different times.
Blue dashed curves indicate the amplitude profile of the unperturbed kink.
The perturbation departs from the kink core to its left. 
The kink core is nearly completely  restored \textit{modulo a phase shift} at $t = 60$; see Section \ref{sec:nlstab-kink}.
} 
\label{fig: perturbed subsonic kink simulation}
\end{figure}

{\sc Figure} \ref{fig: perturbed subsonic kink simulation} displays numerical simulations illustrating the convective stability of kinks, which are all subsonic ($|c|<1$).
Here too, we observe in a frame of reference traveling at the same speed as does the unperturbed kink. 
The kink outruns the generated perturbations as time advances, consistent with our results on linear stability of kinks; see {\sc section} \ref{sec:summary} and the discussion of {\sc section} \ref{sec:nlstab-kink}.

\subsection{Summary of results}\label{sec:summary}
We summarize the results presented in this article.
\begin{itemize}
\item {\sc Section} \ref{sec: tws} and {\sc appendix} \ref{app: classification} contain the classification of all bounded traveling wave solutions; pulses (dark and bright), kinks and anti-kinks.
    \item {\sc Theorem} \ref{thm: convective} (Nonlinear convective stability of supersonic pulses): Consider the case of saturable nonlinearities (assumptions ($\mathcal N1$) and ($\mathcal N2'$) in
    {\sc section} \ref{sec: nonlinearity}). 
    Supersonic pulses ($|c|>1$) are nonlinearly convectively stable.    
    \item {\sc Theorem} \ref{thm: spectral stability of supersonic pulses} (Linear convective stability of supersonic pulses): For general nonlinearities (not necessarily saturable), 
    supersonic pulses are {\it spectrally stable} in suitable weighted  
 $L^2$ spaces, denoted $L^2_{w}$; see {\sc section} \ref{sec: weighted spaces}.
    \item Theorems on spectral stability of kinks ($|c|<1$): \\
    (i) {\sc Theorem} \ref{thm: kink spectral stability c = 0}:  Static / non-moving kinks ($c=0$) are spectrally stable. 
  \\  (ii) {\sc Theorem} \ref{thm: kink stability}:  Under the following additional hypothesis ($\mathcal{N}3$):
  \begin{enumerate}
\item[($\mathcal{N}3$)]   $\mathcal N'(s)\le 0$ for $s>0$ (monotonicity) and $\mathcal N''(s)\le0$ for  $s \in [0,1]$ (concavity),\\ \\
\noindent  moving kinks ($0 \leq |c|<1$) are spectrally stable.
    \end{enumerate}
\end{itemize}

\subsection{Future directions, open questions}
\label{sec:openq}
We list some possible directions for future investigation and corresponding open questions.

\subsubsection{Large time selection of supersonic pulses problems}\label{sec:selection}
Consider a supersonic pulse of speed $c>1$.
Theorem \ref{thm: spectral stability of supersonic pulses} on spectral stability
and Theorem \ref{thm: convective} on nonlinear asymptotic stability are convective stability results, which measure the initial perturbation in spatial norms with an exponential weight. The exponential rate satisfies constraints which depend on the underlying traveling wave speed, $c$,  and properties of the nonlinearity. In particular, the weighted norm imposes a minimal decay rate of the perturbation in the direction of pulse propagation.
{\it What does a supersonic pulse evolve into, 
under perturbations which violate the decay rate constraints in  {\sc Theorems} \ref{thm: convective} and \ref{thm: spectral stability of supersonic pulses} ?}

Note that any supersonic pulse (of speed $c_0>1$) whose profile connects two equilibria on the unit circle,  is embedded in a  continuous family of supersonic pulses with speeds encompassing the range $1<c<\infty$; see {\sc Figure} \ref{fig: fixed asymptotics}. Further, the profile of a supersonic pulse of speed $c$ approaches its asymptotic values at an exponential rate, 
given in \eqref{eq: supersonic pulse asymptotic rate}, which becomes smaller as $|c|$ grows; see Section \ref{sec:conv-rate}.

{\it Does a supersonic pulse traveling with speed $c_0>1$, when perturbed by a slow-decaying perturbation -- outside the validity of 
 {\sc Theorem} \ref{thm: convective} --
evolve into a supersonic pulse of some speed $c' > c_0$, with a compatible spatial decay? If so, what determines the asymptotically selected profile?}

\subsubsection{Nonlinear stability of kinks}\label{sec:nlstab-kink}
Our spectral stability analysis and numerical simulations (see {\sc figure} \ref{fig: perturbed subsonic kink simulation}) 
suggest that the {\it family of spatial translates of a kink} is nonlinearly convectively stable.
We conjecture the following: Let $b_*(x)$ denote a kink and  $B_0(x)$ a 
sufficiently rapidly decaying initial perturbation. 
Then, there exists $x_0\in\mathbbm R$, depending on $B_0$ (and $b_*(x)$), 
such that the solution $b(x,t)$ to \eqref{eq: PDE in moving frame} 
with initial data $b_*(x)+B_0(x)$, satisfies 
\begin{equation}
\label{eq:phase-shift}
    \lim_{t \to \infty} \Big\| b(x,t) - b_*(x-x_0)\Big\|_{H^1\left([R,\infty),dx\right)} = 0
\end{equation}
The phase shift in \eqref{eq:phase-shift} is related to the zero energy translation mode of the linearized operator,
see {\sc Theorem} \ref{thm: kink stability} and {\sc remark} \ref{remark: translation mode kink}.
In contrast, Theorem 
\ref{thm: convective} on nonlinear and convective (asymptotic) stability of supersonic pulses, requires no asymptotic phase adjustment.
This is corroborated by numerical studies showing no phase shift in the emerging stable supersonic pulse,
see {\sc figure} \ref{fig: perturbed supersonic pulse simulation}.
Note: although there is a state which is formally in the kernel of the linearized operator (due to translation invariance of  \eqref{eq: PDE in lab frame}),  
 this state is not in the weighted $L^2$ space with respect to which the supersonic pulse is spectrally stable; 
 see {\sc Theorem} \ref{thm: spectral stability of supersonic pulses} and the discussion following {\sc proposition} \ref{prop:no_disc-super}.
 
Another question is to clarify the scenario described in Remark \ref{rem:pertgrowth}, which is based on numerical simulations. 
And an example of further technical questions concerning kinks is whether,
for example, spectral stability can be established if the concavity assumption ($\mathcal N3$) (used in \eqref{eq: concavity}), 
on the nonlinearity, is relaxed.

\subsubsection{Alternative measures of the perturbation's spatial localization and size}
Our stability results for pulses (nonlinear and spectral stability) and kinks (spectral stability) are formulated in function spaces, requiring exponential decay of the perturbation in the direction of propagation of the traveling wave. It would be of interest to extend these stability results to spaces with weaker spatial localization requirements; 
for example,  algebraically weighted $L^2$ spaces \cite{miller1997spectral} or $H^1_{\rm loc}$  \cite{martel2001asymptotic}. 

\subsection{Linear asymptotic stability} For the case of supersonic pulses, we believe that our  results on linear spectral stability can be used to obtain exponential time-decay bounds for the 
linear semi-group $e^{L_{w,*}t}$, along the lines of the analysis of \cite{pego1994asymptotic,pego1997convective}. 
For the case of kinks, where the spectrum of $L_{w,*}$ is spectrally stable, but with part of its spectrum on the  imaginary axis, we expect the governing time decay to be dispersive type, after projecting out the zero energy mode.

\subsubsection{Periodic solutions} As discussed in detail in \cite{li2023thesis}\, 
\eqref{eq: PDE in lab frame}  has rich families of periodic solutions traveling wave solutions.
Their stability properties is an open question. 
\subsubsection{Relation between discrete and continuum models}
Finally, system \eqref{eq: PDE in lab frame} is introduced in \cite{HVA17} as a formal continuum approximation for  
a nonlinear discrete array of coupled nonlinear circuits, 
valid for excitations whose spatial scale is slow on the inter-dimer length scale.
After scaling and nondimensionalization, the discrete  system takes the form
\begin{equation}
    \label{eq: discrete}
    \begin{aligned}
        \dot u_n & = v_n - v_{n-1} + \mathcal N \big(u_n^2 + v_n^2\big) v_n \\
        \dot v_n & = u_{n+1} - u_n - \mathcal N \big(u_n^2 + v_n^2\big) u_n
    \end{aligned}
\end{equation}
As demonstrated in\cite{HVA17} there is evidence of the  pulse-like and kink-like behaviors in the discrete system \eqref{eq: discrete}.
It is of interest to understand the relation between our continuum  analytical and numerical results for \eqref{eq: PDE in lab frame} and those observed,  thus far only numerically,
in \eqref{eq: discrete}. 

\subsection{Notation and conventions}
\begin{enumerate}
\item $H^s=H^s(\mathbb R)$ denotes the Sobelev space with norm given by:
\[
    \big\| f\big\|^2_{H^s} := \int_\mathbbm{R} \big(1 + |k|^2 \big)^s \big| \hat { f} (k) \big|^2 \dd k < \infty
\]
where $\hat{f}$ denotes   the Fourier transform.
    \item {\it Weighted spaces:} We define the \textbf{weighted $L^2$ spaces}, with weight $W (x) = e^{w(x)}$ where $w(x)$ is a real-valued function on $\mathbbm R$ as
    \begin{equation}
    \label{eq: weighted spaces}
    L^2_w := L^2\big(\mathbbm R, e^{w(x)} \dd x\big),\quad 
    H^1_w:= \Big\{  f(x) \in L^1_\mr{loc}:\ e^{w(x)} f(x) \in H^1 
    \Big\} 
\end{equation}
for details of the particular weighted spaces used in this work, see {\sc Section} \ref{sec: weighted spaces}.
\item {\it Coordinates:} 
The linear stability analysis of this work is always conducted in frames of reference that travel at the speed of an underlying traveling wave. 
We denote with $y$ the spatial coordinate in the non-moving (lab) frame of reference, cf. \eqref{eq: PDE in lab frame}, and with $x = y - ct$ the spatial coordinate in the frame of reference traveling with some speed $c$; see \eqref{eq: PDE in moving frame}.
\item {\it Default branch of the square root function:} 
We define function $z \mapsto \sqrt{z}$ in such a way that its values have non-negative real parts. 
In particular, $\sqrt{1} = 1$ and $\sqrt{z}$ is conformal from the cut complex plane, $\mathbbm C \setminus (-\infty , 0]$,
to the open right-half plane $\big\{ \Re z > 0 \big\}$.
For $z \leq 0$, $\sqrt{z}$ is continued from above the cut 
and its values always have non-negative imaginary parts, e.g., $\sqrt{-1} = \ii$.
\item {\it Pauli matrices:}
We use the standard convention of defining Pauli matrices, $\sigma_0, \sigma_1, \sigma_2$ and $\sigma_3$, as a set of basis in the linear space of 2-by-2 complex matrices:
\begin{equation}
\label{eq: pauli}
    \begin{aligned}
        & \sigma_0 = \begin{bmatrix}
            1 & 0 \\ 0 & 1
        \end{bmatrix},\quad 
        & \sigma_1 = \begin{bmatrix}
            0 & 1 \\ 1 & 0
        \end{bmatrix} \\
        & \sigma_2 = \begin{bmatrix}
            0 & \ii \\ -\ii & 0
        \end{bmatrix},\quad 
        & \sigma_3 = \begin{bmatrix}
            1 & 0 \\ 0 & -1
        \end{bmatrix}
    \end{aligned}
\end{equation}
Here $\sigma_i\sigma_j=-\sigma_j\sigma_i$ for $i, j \in \{ 1,2,3 \}$ and $i\ne j$ and $\sigma_i^2=\sigma_0$.
\end{enumerate}

\subsection{Acknowledgements} 
The authors wish to thank A. Al\`{u}, Y. Hadad, Q. Du and L. Zhang for many stimulating discussions.
This research was supported in part by NSF grant DMS-1908657 (MIW, HL),  DMS-1937254 (MIW) and Simons Foundation Math + X Investigator Award \# 376319 (MIW, HL). 
AH was supported in part by the Simons Collaboration on Extreme Wave Phenomena Based on Symmetries and AFSOR Grant No. FA9950-23-1-0144.
Part of this research was completed during the 2023-24 academic year, when M.I. Weinstein was a Visiting Member in the School of Mathematics - Institute of Advanced Study, Princeton, supported by the Charles Simonyi Endowment, and a Visiting Fellow in the Department of Mathematics at Princeton University.

\section{Traveling wave solutions}
\label{sec: tws}
We express \eqref{eq: PDE in lab frame} with respect to a coordinate system traveling with speed $c$, where $|c|\ne1$. Setting  $x := y - ct$ we obtain
\begin{equation}
\label{eq: PDE in moving frame}
\begin{aligned}
    u_t & = c u_x + v_x + \mathcal N (u^2+v^2) v \\
    v_t & = u_x + c v_x - \mathcal N (u^2 +v^2 ) u
\end{aligned}
\end{equation}
For $c\ne1$, the system \eqref{eq: PDE in moving frame} has the {\bf equilibria}:
\begin{equation} b_{*0}=\begin{bmatrix}
    0 & 0
\end{bmatrix}\quad {\rm and}\quad b_{*}(\theta)=\begin{bmatrix}
    \cos\theta & \sin\theta
\end{bmatrix}, \quad \theta \in ( -\pi , \pi].
\label{eq:equil}\end{equation}
The profile of a \textbf{traveling wave solution} (TWS) profile, $b(x) = \begin{bmatrix} u(x) & v(x)\end{bmatrix}$,
of speed $c \neq \pm 1$ is an orbit of the dynamical system 
\begin{equation}
\label{eq: TWS profile}
    \begin{aligned}
    u' &= \frac{\mathcal N (u^2+v^2)  }{1 - c^2} \big(u + c v\big)
    \\
    v' &= \frac{\mathcal N(u^2+v^2)}{1 - c^2} \big( -c u - v\big)
    \end{aligned}
\end{equation} 
If  $|c| > 1$, we say the TWS is  {\it supersonic} and if  $|c|<1$ say that it is {\it subsonic}. 

Evaluating the conservation law 
\eqref{eq: conserved} on a TWS $(u,v)(y-ct)$, we conclude that along its phase plane trajectory, the   ``energy'' $E_c[u,v]$ of $x\mapsto (u(x),v(x))$:
\begin{equation}
\label{eq: E c}
    E_c[b] = E_c[u,v] \equiv c (u^2+v^2) + 2 u v= \frac{-1 + c}{2} (u-v)^2 
    + \frac{1+c}{2} (u+v)^2
\end{equation}
is independent of $x$. 
Thus, traveling wave profiles correspond to connected subsets of level sets in $\mathbb R^2$
 of $(u,v)\mapsto E_c[u,v]$:
\[ E_c^{-1}(E)=\Big\{ (u,v) \in \mathbbm R^2:\  E_c(u,v)=E\Big\}.\]


\subsection{Level sets  of $E_c(u,v)$}
\label{sec:Lsets}
\begin{itemize}
\item For $c>1$, $E_c(u,v)$ is positive definite. 
Hence, the level sets $E_c = E$ are ellipses
parametrized by $E>0$:
\begin{equation} (c+1) \left(\frac{u+v}{\sqrt2}\right)^2+(c-1)
\left(\frac{u-v}{\sqrt2}\right)^2 = E.\label{eq:ellipse+1}
\end{equation}
This family of ellipses degenerates to the origin, $(0,0)$ as $E\downarrow0$.
\\
\item For $c<-1$, $E_c(u,v)$ is negative definite. 
Hence, the level sets are ellipses
parametrized by $E<0$:
\begin{equation} \big(|c|-1 \big) \left(\frac{u+v}{\sqrt2}\right)^2+ \big(|c|+1 \big)
\left(\frac{u-v}{\sqrt2}\right)^2 = -E=|E|.\label{eq:ellipse-1}
\end{equation}
This family of ellipses also degenerates to the origin, $\begin{bmatrix} 0& 0\end{bmatrix}$ as $E\uparrow0$.
\\
\item For $|c|<1$, $E_c(u,v)$ is indefinite.
The level sets are hyperbolas with two branches,
with one orientation for $E>0$ and another orientation for $E<0$. 
As $E\downarrow0$, and as $E\uparrow0$, these level sets degenerate to a pair of lines which intersect at the origin.
\end{itemize}

\subsection{Bounded heteroclinic traveling wave solutions}
\label{sec:bddTWS}
We denote a bounded traveling wave solution (TWS) profile with speed $c$ for the parameter $E$ by $b_{c,E}(x)$. 
A bounded TWS, $b_{c,E}(x)$, corresponds to a bounded  heteroclinic orbit  of \eqref{eq: TWS profile} which connect,  as $x$ varies from $-\infty$ to $+\infty$, 
 distinct equilibria which lie in the set:
 \begin{equation}
 \big\{\begin{bmatrix}
     u & v
 \end{bmatrix}:u^2+v^2=1\big\}\cup \big\{\begin{bmatrix} 0 & 0\end{bmatrix}\big\}.\label{eq:equilib}\end{equation}
{\it Hence, heteroclinic orbits are determined by the bounded connected subsets of level sets of $E_c(u,v)$, whose boundary points lie in the set of equilibria \eqref{eq:equilib}.}
See Figures \ref{fig: supersonic}-\ref{fig: kinks} 
\footnote{The reduced dynamical system \eqref{eq: TWS profile}  also has periodic orbits for $|c|>1$; we do not study these solutions in the present work.}.

The following proposition, displays relations among traveling wave orbits, which are implied by the  symmetries of \eqref{eq: PDE in lab frame}. 

\begin{proposition}[Discrete symmetries of the family of traveling wave solutions]
\label{prop: discrete symmetries TWS}
Let $ b(x)$ be the \textbf{profile} of a TWS with speed $c$. 
The corresponding solution $ b(y-ct)$ of \eqref{eq: PDE in lab frame} can be transformed into other TWSs under \textbf{discrete} transformations 
$\mathcal P$, $\mathcal T$ and $\mathcal C$ given in {\sc proposition} \ref{prop: discrete symmetries}. 
In particular, the profiles of and corresponding conserved quantity $E_c$ of the transformed traveling wave solution is listed below.
\begin{enumerate}[label=(\roman*)]
    \item 
    \label{prop: TWS P symmetry parity}
    $\mathcal{P}  b(x)$ is a TWS with speed $c$ whose profile is $- b(x)$ and $E_c[\mathcal{P}  b] = E_c[ b(x)]$.
    \item 
    \label{prop: TWS T symmetry time reversal}
    $\mathcal{T}  b(x)$ is a TWS with speed $-c$ whose profile is $ \sigma_3 b(x)$ and $E_{-c}[\mathcal{T}  b] = -E_c[ b(x)]$.
    \item 
    \label{prop: TWS C symmetry conjugacy}
    $\mathcal{C} b(x)$ is a TWS with speed $c$ whose profile is $\sigma_1  b(-x)$ and $E_c[\mathcal{C}  b] = E_c[ b(x)]$.
\end{enumerate}
\end{proposition}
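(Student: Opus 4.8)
The plan is to leverage the solution-level symmetries already established in {\sc Proposition} \ref{prop: discrete symmetries}: since $\mathcal{P}$, $\mathcal{T}$ and $\mathcal{C}$ each map solutions of \eqref{eq: PDE in lab frame} to solutions, applying any of them to a traveling wave $b(y,t)=b_*(y-ct)$ as in \eqref{eq:tw-ansatz} again yields a solution, and it remains only to (a) re-express that solution as a function of a single moving-frame coordinate, thereby reading off its speed and profile, and (b) evaluate the conserved energy \eqref{eq: E c} on the new profile. Throughout I would write the profile as $b_*(x)=[U(x),\,V(x)]^{\mathsf T}$, so that $u(y,t)=U(y-ct)$ and $v(y,t)=V(y-ct)$.

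For the coordinate bookkeeping, I would substitute this ansatz into each definition and collapse the arguments. For $\mathcal{P}$, the transformation acts pointwise in $(y,t)$, so $[\mathcal{P}b](y,t)=[-U(y-ct),\,-V(y-ct)]^{\mathsf T}$ is manifestly a function of $x=y-ct$, with speed $c$ and profile $-b_*(x)$. For $\mathcal{T}$, the reversal $t\mapsto -t$ turns the argument $y-ct$ into $y+ct=y-(-c)t$, which identifies the new speed as $-c$; together with the sign flip on the second component this gives profile $\sigma_3\, b_*$ in the coordinate $x'=y+ct$. For $\mathcal{C}$, the simultaneous reflections $y\mapsto -y$, $t\mapsto -t$ send $y-ct$ to $-(y-ct)=-x$, so the speed is again $c$ but the argument is reflected; the component swap then produces the profile $\sigma_1\, b_*(-x)$.

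The energy identities follow from the explicit quadratic form $E_c[u,v]=c(u^2+v^2)+2uv$ in \eqref{eq: E c} together with its symmetries. Since $u^2+v^2$ and $uv$ are each invariant under $(u,v)\mapsto(-u,-v)$ and under the swap $(u,v)\mapsto(v,u)$, one reads off immediately that $E_c[\mathcal{P}b]=E_c[b]$ and, using constancy of $E_c$ along the orbit to discard the reflected argument, that $E_c[\mathcal{C}b]=E_c[b]$. For $\mathcal{T}$ the relevant computation is $E_{-c}[U,-V]=-c(U^2+V^2)-2UV=-E_c[U,V]$, giving the sign reversal $E_{-c}[\mathcal{T}b]=-E_c[b]$.

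There is no substantive obstacle here: the only care required is in tracking the arguments through the sign changes of $y$, $t$ and $c$, and in keeping straight which coordinate ($y-ct$ versus $y+ct$) is the moving-frame variable for the transformed wave. As an independent cross-check one could instead verify directly that each candidate profile satisfies the reduced system \eqref{eq: TWS profile}---for $\mathcal{T}$ with $c$ replaced by $-c$---which confirms the claimed speeds without appeal to {\sc Proposition} \ref{prop: discrete symmetries}.
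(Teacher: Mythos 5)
Your proposal is correct and takes precisely the route the paper intends: the paper states this proposition without proof, as an immediate consequence of {\sc Proposition} \ref{prop: discrete symmetries}, and your argument---applying each of $\mathcal P$, $\mathcal T$, $\mathcal C$ to $b_*(y-ct)$, collapsing the argument to read off the new speed and profile, then evaluating the quadratic form \eqref{eq: E c}---is exactly that omitted straightforward computation. The coordinate bookkeeping (in particular $y-ct \mapsto y+ct$ identifying speed $-c$ under $\mathcal T$, and $y-ct\mapsto -(y-ct)$ under $\mathcal C$) and the sign check $E_{-c}[U,-V]=-c(U^2+V^2)-2UV=-E_c[U,V]$ are all accurate.
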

\begin{remark}\label{rem:NB} {\bf N.B.}
In view of Proposition \ref{prop: discrete symmetries TWS} we shall focus, particularly in our stability analyses, on the case of right-moving pulses and kinks, $c\ge0$. 
The linearized spectra of TWSs indeed respect these symmetries; see {\sc Theorem} \ref{thm: discrete symmetry of spectra}.
A complete classification of all traveling wave solutions and their relation through discrete transformations is given in {\sc appendix} \ref{app: classification}.
\end{remark}

\subsection{Traveling Pulses and Domain Walls (Kinks and Anti-kinks)}
\label{sec:PKAK}

The set of equilibria (fixed points) of the system \eqref{eq: TWS profile} consists of 
the origin together with the unit circle of the $(u,v)-$ plane; 
see \eqref{eq:equilib}. 
There are no non-trivial homoclinic orbits -- all homoclinic orbits are fixed points/equilibria.
However, there are many heteroclinic orbits connecting either distinct points
on the unit circle, or some point on the unit circle and the origin. 
A heteroclinic orbit connecting a pair of fixed points on the unit circle is called a {\it pulse};
its amplitude 
$r^2(x)=u^2(x)+v^2(x)\to 1$ as $x\to\pm\infty$. 
A heteroclinic orbit connecting the origin and the unit circle is 
either a {\it kink} or {\it antikink} (examples of moving {\it domain walls}); 
$r(x)\to0$ tends to $0$ as $x\to-\infty$  and $r(x)\to1$ as $x\to+\infty$ (kink) or $r(x)\to0$ tends to $0$ as $x\to\infty$  and $r(x)\to1$ as $x\to-\infty$ (antikink).
\medskip

\noindent
\subsubsection{\it Pulses: Supersonic and Subsonic} \label{sec:theta}
Consider a traveling pulse solution (with speed $c$ and energy parameter $E$) given by a heteroclinic orbit connecting distinct equilibria on the unit circle.
If this orbit asymptotes to the equilibrium 
$\begin{bmatrix}
    \cos \theta & \sin \theta
\end{bmatrix}$, 
then since $E_c[u,v] = c (u^2+v^2) + 2 u v=E$ is constant, we have 
\begin{equation}\textrm{$E=c+\sin(2\theta)$}\label{eq:Eeqc+}
\end{equation}
 and hence $-1\leq E - c \leq 1$.
It is easy to see that if $-1\leq E - c \leq 1$, then there are four distinct values of $\theta$ in the interval $(-\pi,\pi]$
 satisfying \eqref{eq:Eeqc+}. 
Fix $\theta = \theta_{c,E}\in (-\pi,\pi]$ to be the  solution of $\sin(2\theta)=E-c$ of smallest absolute value.
If $|c|>1$, then the ellipse $E_c[u,v]=c+\sin(2\theta)$ intersects the unit circle at the four points
\begin{subequations}\label{eq:4pts}
\begin{align}
A:& \begin{bmatrix}\cos\theta_{c,E} & \sin\theta_{c,E}\end{bmatrix}\\
B:&\begin{bmatrix}\cos(\frac{\pi}2-\theta_{c,E}) &\sin(\frac{\pi}2-\theta_{c,E})\end{bmatrix} = \begin{bmatrix}\sin\theta_{c,E} & \cos\theta_{c,E}\end{bmatrix}\\
C:&\begin{bmatrix}\cos(-\frac{\pi}2-\theta_{c,E}) &\sin(-\frac{\pi}2-\theta_{c,E})\end{bmatrix}= -\begin{bmatrix}\sin\theta_{c,E} & \cos\theta_{c,E}\end{bmatrix}
\\
D:&\begin{bmatrix}\cos(\pi + \theta_{c,E}) & \sin(\pi + \theta_{c,E})\end{bmatrix}=-\begin{bmatrix}\cos\theta_{c,E}& \sin\theta_{c,E}\end{bmatrix}
\end{align}
\end{subequations}
The point $A$ is a reflection of the point $B$, and the $C$ is a reflection of the point $D$, both with respect to the line $v=u$.
They are distinct points provided $\theta_{c,E}\ne\pi/4$ and form a pair of double points on this line for $\theta=\pi/4$. Thus, any $E$ and $c$ (here with $0<E-c<1$) gives rise to an ellipse with the four intersection points \eqref{eq:4pts}.
Note: if $-1<E-c<0$, then the four points come in pairs which are reflections about the line $v=-u$.

Conversely, given any $\theta\in(-\pi,\pi]$ and $c>1$, we define $E$ via \eqref{eq:Eeqc+} and find that  \begin{align}
    &\textrm{expressions \eqref{eq:4pts}, with $\theta_{c,E}$ replaced by $\theta$,
are the $4$ intersection points of the ellipse}\nonumber\\ 
&\qquad\qquad \c (u^2+v^2) + 2 u v = c +\sin(2\theta). \label{eq:theta-ellipse}
\end{align}

For $|c|<1$,  the level sets $E_c$ are hyperbolae with two branches in the $(u,v)-$ plane. The branches are both symmetric about $v=u$ or about $v=-u$, and each branch intersects the unit circle at two distinct points, for a total (again) of four intersections.


\bigskip

\noindent{\it Supersonic traveling pulses, $|c|>1$:} For $|c|>1$ the level set 
\begin{equation}\label{eq:Ectheta}
    E_c[u,v]= c + \sin 2 \theta
\end{equation} 
is an ellipse which passes through the four points \eqref{eq:4pts} on the unit circle.
The portion of this ellipse which is exterior to the unit circle is called a  {\it bright soliton} pulse,
 and the portion of this ellipse 
 which is interior to the unit circle is called a {\it dark soliton} pulse; see {\sc Figure} \ref{fig: supersonic}.
 For given $c > 0$ and $E > 0$, the orbits corresponding to supersonic pulses with speeds $\pm c$ and phase portrait energy parameters $\pm E$ can be related to one another via the discrete symmetries displayed in {\sc proposition} \ref{prop: discrete symmetries TWS}; see {\sc figure} \ref{fig: supersonic}. 
 \begin{figure}[h!]
     \includegraphics[width = \textwidth]{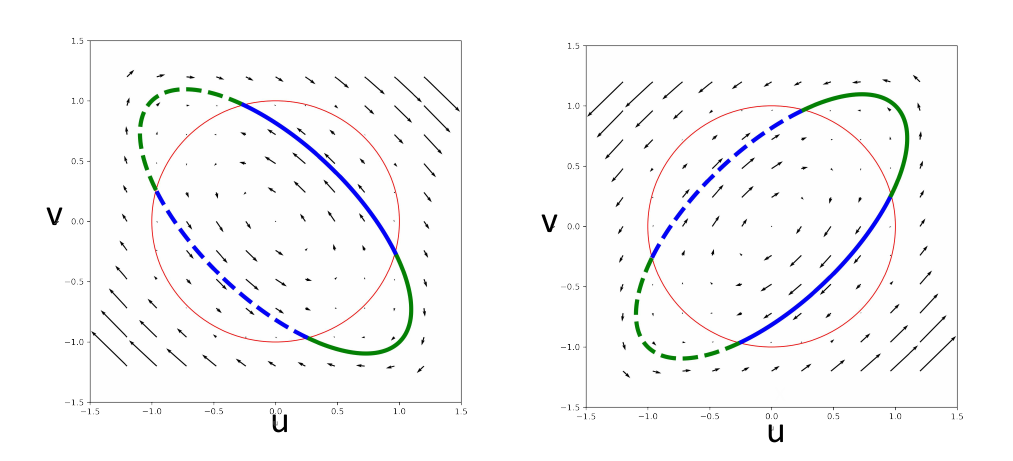}
\caption{
 Supersonic (dark and bright) pulses and their relations via discrete symmetries; Proposition \ref{prop: discrete symmetries}.
Left panel: fixed $c > 1$. Right panel: fixed  $c < -1$.
Dark solid / dashed ellipses correspond to a particular $E$ such that $-1<E-c<1$.
For example, referring to the left panel:
suppose the solid blue curve is denoted $b_{c,E}$. Then, the dashed blue curve is $\mathcal P b_{c,E}=-b_{c,E}$. Further, the
solid green curve is $\mathcal T b_{-c,-E}$, which travels with $c > 1$, and the dashed green curve is $\mathcal{PT} b_{-c,-E}$. 
The curves in the right panel arise by applying the transformation $\mathcal T$ to  curves on the left panel, plotted with the same line colors and styles.}
\label{fig: supersonic}
\end{figure}
\begin{figure}[h!]
\includegraphics[width=\textwidth]{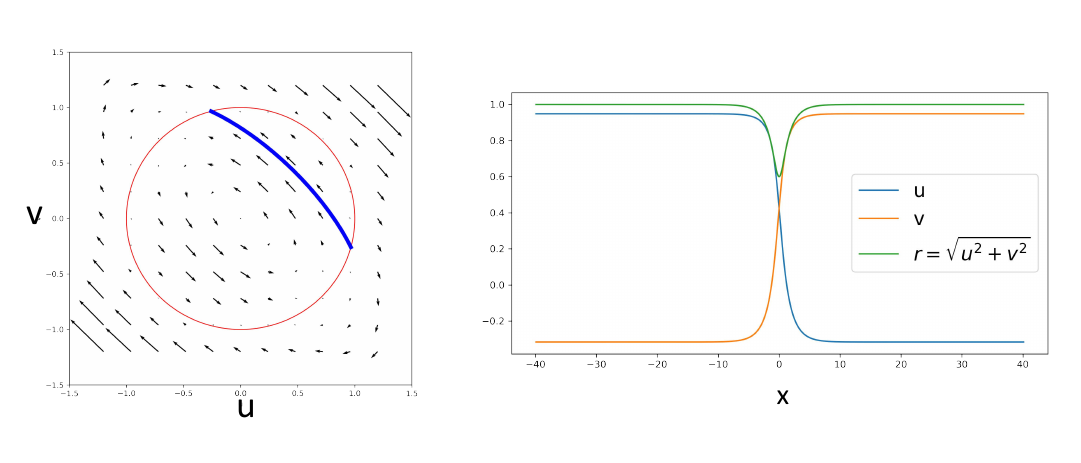}
\caption{Left panel: Orbit (solid curve),  $b_{c,E}(x)=(u(x),v(x))$,  in a phase portrait corresponding to a  typical (dark) supersonic pulse;
    $c>1$ and $-1+c < E < 1+c$. 
     Right panel: plot of components $u(x)$, 
     $v(x)$ and its amplitude $r(x)=\sqrt{u^2(x)+v^2(x)}$.}
\label{fig: pulses}
\end{figure}
 
\noindent{\it Subsonic traveling pulses,  $|c|<1$: } 
From the above discussion, for $|c|<1$ the level set \eqref{eq:Ectheta} is a hyperbola with two branch curves. Each branch curve intersects the unit circle at two points of the points in \eqref{eq:4pts}. The part of a branch curve contained inside the unit disc
is a subsonic dark soliton pulse.
The parts of branch curves which lie outside unit disc are unbounded and correspond to spatially unbounded traveling waves; we do not consider these.
\medskip

\subsubsection{\it Kinks and 
Antikinks as limits of subsonic pulses} \label{sec:subs-to-kink}
For the case of subsonic ($|c|<1$) pulses, the hyperbolic level sets, which determine subsonic (dark) pulses, degenerate, as $|E|\to0$, into two straight lines. For example, if $0<c<1$, then by the expression in \eqref{eq:ellipse+1}, then the limiting two lines are given by:
\begin{equation} u+v = \pm
\sqrt{\frac{1-c}{1+c}}\ (u-v).
\label{eq:4lines}\end{equation}
 These lines determine four line segments in the phase portrait which connect the origin to a distinct point on the unit circle;  see {\sc figure} \ref{fig:kink-phaseport}. 
 \begin{figure}[h!]
 \centering
 \begin{subfigure}[b]{0.49\textwidth}
     \centering
     \includegraphics[width=\textwidth]{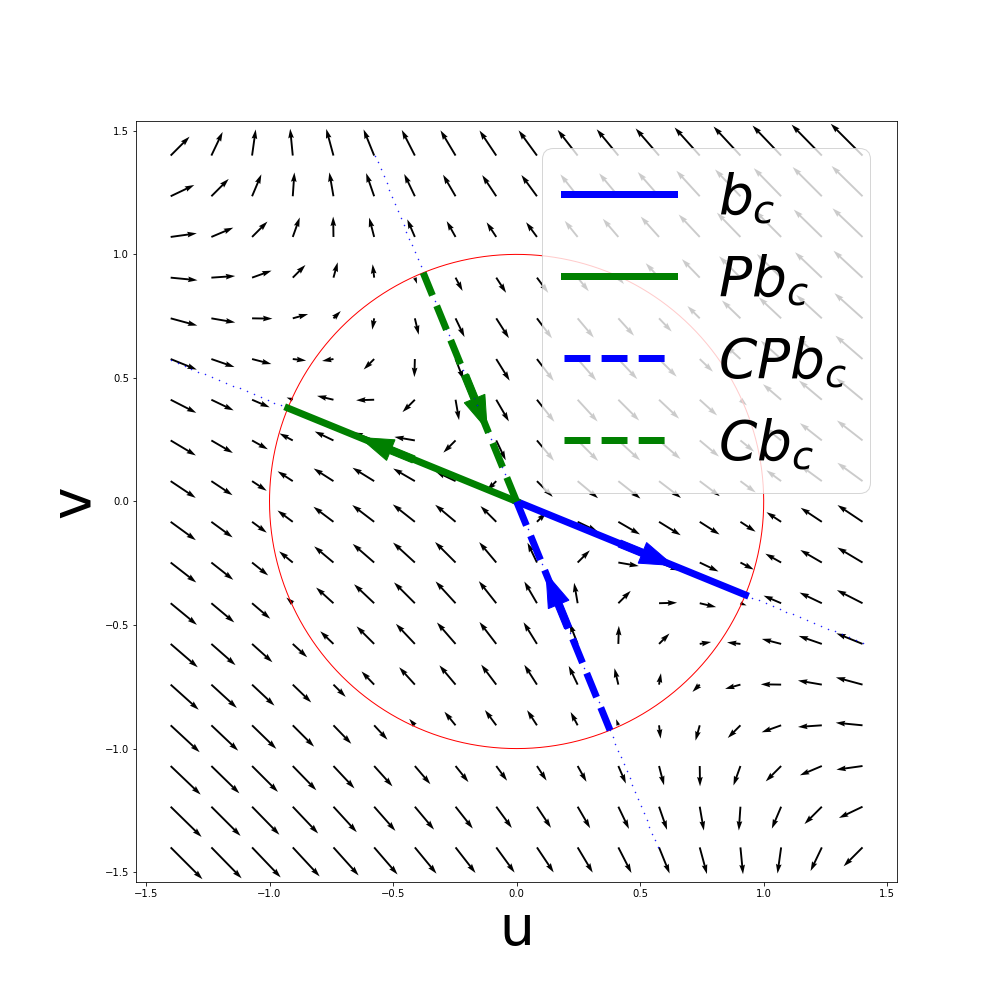}
     \caption{}
     \label{fig: subsonic kink c > 0}
 \end{subfigure}
 \begin{subfigure}[b]{0.49\textwidth}
     \centering
     \includegraphics[width=\textwidth]{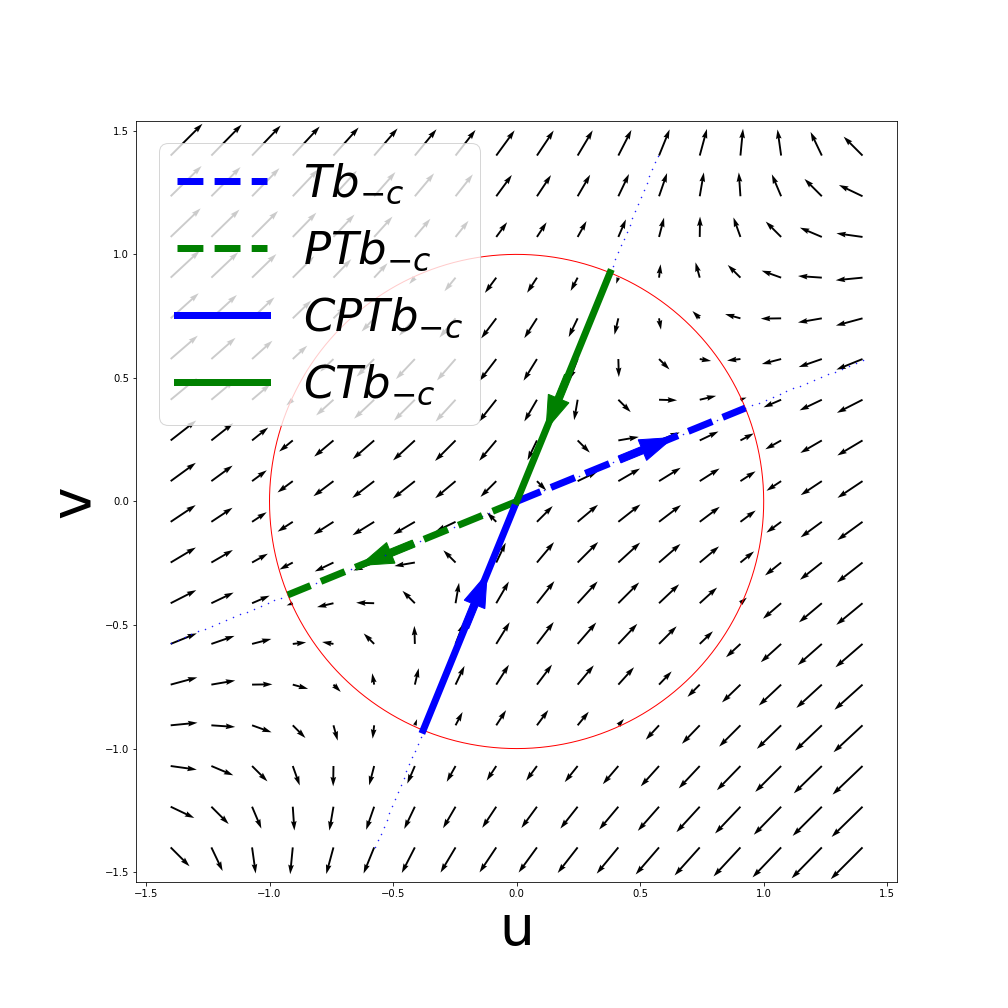}
     \caption{}
     \label{fig: subsonic kink c < 0}
 \end{subfigure}
    \caption{Kinks (solid), antikinks (dashed) and their relation through discrete symmetries. (a) $c \in [0,1)$ (b) $c \in (-1,0]$. 
    In both of the plots, solid lines stand for kinks whose amplitudes $|b|$ increase in the same direction of their speed $c$; dashed lines are antikinks, whose amplitudes decrease \textit{against} the direction of their speed $c$. 
    Heteroclinic connections,
    which are interior to the unit circle and which connect points on the unit circle, 
    which are not highlighted with solid curves, 
    correspond to spatially bounded \textit{subsonic} pulses. 
    These are not stable. 
    See {\sc section} \ref{sec: instability}}
     \label{fig:kink-phaseport}
\end{figure}
 The corresponding four solutions consist of two kinks and two antikinks; those profiles for which the phase portrait trajectory radius $r(x)$ approaches the unit circle in the direction of transport ($\sgn(c)$) are called {\bf kinks}, 
and those for which the approach to the unit circle is in the direction  which is opposite to the direction of propagation, are called {\bf antikinks}. 
See {\sc figure} \ref{fig: kinks} for an example. {\sc proposition} \ref{prop: discrete symmetries TWS}  discusses relations among kink and antikink trajectories under discrete symmetry transformations.
Along these four kink and antikink trajectories the dynamics \eqref{eq: TWS profile} reduces, thanks to \eqref{eq:4lines}, 
to a one dimensional dynamical system
of the general form 
\footnote{Let $b_{c,0}(x)$ ($0<c<1$),  denote the orbit connecting the origin to the unit circle at $(\cos \theta_{c,0}, \sin \theta_{c,0})$, 
with $E[b_{c,0}]=0$.
Its amplitude \textit{increases} to $1$ as $x \to \infty$:
\begin{equation}
    \label{eq: kink profile ODE}
    \begin{aligned}
    u'(x) &= \frac{u(x)\mathcal N(r(x)^2) }{\sqrt{1-c^2}},\ r(0) = \frac{1}{2}
    \\
    v(x) &= - u(x)\tan\left( \frac{1}{2}\arcsin c \right)
\end{aligned}
\end{equation} where $r(x)^2 = u(x)^2 + v(x)^2$. 
}
\[ u'= \tilde{\mathcal{N}}(u^2)\ u,\ v(x)=\textrm{constant}\times u(x).
\]

\begin{figure}[h!]
 \centering
\includegraphics[width=\textwidth]{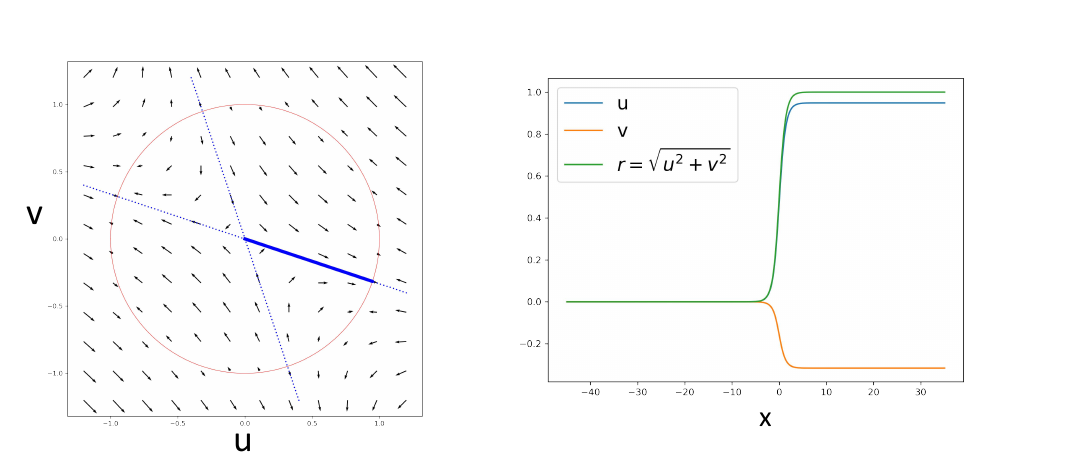}
 \caption{Left panel: Phase portrait with trajectory of kink (solid blue), $b_{c,0}(x)$ which tends to  $[0\ 0]$ as $x\to-\infty$ and to $\begin{bmatrix}
         \cos \theta_{c,0} & \sin \theta_{c,0}
     \end{bmatrix}$ as $x \to +\infty$.
     Right panel: Graphs of components of $b_{c,0}(x)$, $u(x)$ and $v(x)$,  and amplitude $r(x)=\sqrt{u^2(x)+v^2(x)}$.}
\label{fig: kinks}
\end{figure}


\subsection{Convergence rate of heteroclinic orbits to asymptotic equilibria}\label{sec:conv-rate}
Along a trajectory in the phase portrait, 
 as $(u(x),v(x))$ approaches its asymptotic state $\begin{bmatrix}
    u_0 & v_0
\end{bmatrix} ^\mathsf T= \begin{bmatrix}
    \cos\theta & \sin \theta 
\end{bmatrix}^\mathsf T$ 
on the circle, 
along its stable manifold.
The behavior, near an equilibrium $(u_0,v_0)$,
is characterized by  the constant coefficient linearization of \eqref{eq: TWS profile}:
\begin{equation}\label{eq:asy-lin}
    \frac{\dd}{\dd x} 
    \begin{bmatrix}
        \delta u \\ \delta v 
    \end{bmatrix}
    = -\frac{2K}{1-c^2}
    \begin{bmatrix}
        u_0 + c v_0 \\ - c u_0 - v_0
    \end{bmatrix} \begin{bmatrix}
        u_0 & v_0
    \end{bmatrix}
    \begin{bmatrix}
        \delta u \\ \delta v 
    \end{bmatrix}.
\end{equation}
The matrix in \eqref{eq:asy-lin} has eigenpairs:
\begin{align} \mu=0,\quad \Vec{v}_0= \begin{bmatrix} -v_0 \\ u_0 \end{bmatrix},\\
\mu_{\theta} = -\frac{2K}{1-c^2} \cos 2\theta,\quad \Vec{v}_\theta= \begin{bmatrix} u_0 + c v_0 \\ -c u_0 - v_0 \end{bmatrix}.
\end{align}
We may express the nontrivial eigenvalue of the linearization, $\mu_\theta$ in terms of the parameters $E$ and $c$. Along a pulse or a kink solution we must have $E=c+\sin2\theta$. Hence, 
$\sin2\theta=E-c\in (-1,1)$ and therefore $\cos2\theta=\pm\sqrt{1-(E-c)^2}$.

 For a kink, $b_{c,0}$, from $E=c+\sin2\theta_{c,0}=0$.  Therefore, we have for the asymptotic behavior of kink $b_{c,0}$ as $x \to  \infty$:
\begin{equation}
    \begin{bmatrix}
        \delta u (x) \\
        \delta v(x)
    \end{bmatrix} 
    \sim 
    \begin{bmatrix}
        \cos \theta_{c,0} \\
        \sin \theta_{c,0}
    \end{bmatrix} 
    \exp \left(- \frac{2K}{\sqrt{1-c^2}} x \right)
\end{equation}
and as $|x| \to \infty$, for pulses 
\begin{equation}
    \begin{bmatrix}
        \delta u (x) \\ \delta v(x)
    \end{bmatrix}
    \sim 
    \begin{bmatrix}
        u_{\pm\infty} + c v_{\pm \infty}
        \\ 
        -c u_{\pm \infty} - v_{\pm \infty}
    \end{bmatrix}
    \exp \left( - \frac{2K \sqrt{1-(E-c)^2}}{c^2-1} \big|x\big| 
    \right)
\end{equation}
The translation modes $\pd_x b_*(x)$ has the same exponential decaying behavior,  
for kinks:
\begin{equation}
\label{eq: kink translation mode decay}
    \pd_x \begin{bmatrix}
        u (x) \\
        v(x)
    \end{bmatrix} 
    \sim 
    \exp \left(- \frac{2K}{\sqrt{1-c^2}} x \right),
     \quad \textrm{as $x\to \infty$},
\end{equation}
and for supersonic pulses
\begin{equation}
\label{eq: supersonic translation mode decay}
    \pd_x \begin{bmatrix}
        u (x) \\ v(x)
    \end{bmatrix}
    \sim 
    \exp \left( - \frac{2K\sqrt{1-(E-c)^2}}{c^2-1} \big|x\big| 
    \right),\quad \textrm{as $|x| \to \infty$}. 
\end{equation}
Note that the asymptotic behavior does not depend on which of the  four non-trivial equilibria (see \eqref{eq:4pts}) are approached as $x$ tends to infinity.

\subsection{Family of supersonic pulses with fixed asymptotics}
\label{sec: fixed asymptotics}
Consider supersonic pulses with speed $c>1$, obtained from the ellipse
\[ c(u^2+v^2) + 2uv = c+\sin(2\theta).\]
By the observation \eqref{eq:theta-ellipse}, for \underline{fixed} $\theta\ne\pi/4$, the family of ellipses has the same four intersection points with the unit circle for all $c>1$. 
In polar coordinates, $(u,v)=(\rho\cos\phi,\rho\sin\phi)$, these curves are given by 
\[ \rho^2(\phi;\theta)  = \frac{c+\sin(2\theta)}{c+\sin(2\phi)}.\]
The minimum radius $\rho_{min}$, is attained at $\phi=\pi/4$:
\begin{equation} \rho^2_{min}= \rho^2(\pi/4;\theta) = \frac{c+\sin(2\theta)}{c+1}<1.\label{eq:rhomin}\end{equation}

For each $c> 1$ there is a traveling wave trajectory, $b_{c, c + \sin 2\theta}$, that connects $(\cos \theta, \sin \theta)$ to $(\sin \theta, \cos \theta)$; see Figure \ref{fig: fixed asymptotics}.
\begin{figure}
    \centering
    \includegraphics[scale=0.75]{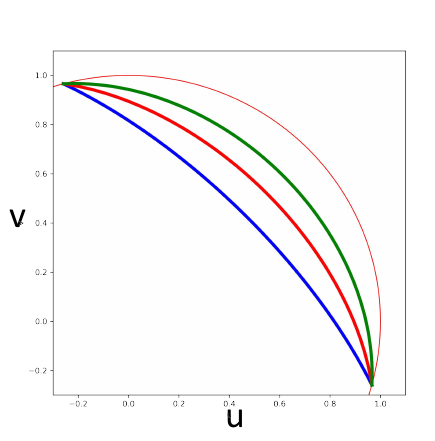}
    \caption{Phase portrait trajectories of (dark) supersonic pulse $b_{c,c+\sin(2\theta)}$, for fixed $\theta$ and several distinct values of $c>1$, with the same set of asymptotics on the unit circle (thin red).
    The pulse whose orbit is in blue travels the slowest, with its profile converging the fastest as $x \to \pm \infty$; 
    the pulse whose orbit is in green travels the fastest, with its profile converging the slowest as $x \to \pm \infty$.
    For the pulse in red, the speed and profile convergence rate to equilibria as $x\to\pm\infty$ are situated between those of the blue and green pulses.}
    \label{fig: fixed asymptotics}
\end{figure}
These pulses  have different rates of approach to their asymptotic equilibria (see \eqref{eq: supersonic translation mode decay}) :

\begin{equation} b_{c,c+\sin2\theta} \sim e^{-\gamma |x|},\quad  \gamma = \frac{2K \cos 2 \theta}{c^2 - 1} = \frac{2K \sqrt{1-(E-c)^2}}{c^2 - 1}.
\label{eq: supersonic pulse asymptotic rate}
\end{equation}
From \eqref{eq:rhomin} and \eqref{eq: supersonic pulse asymptotic rate} we see that supersonic pulses corresponding to slower speeds (say $c\approx 1$, but $c>1$) have a deeper dip in their intensity profile ($z\mapsto u^2(z)+v^2(z)$) and approach their spatial asymptotes rapidly, while supersonic pulses corresponding to faster speeds (say $c\gg1$) have a shallower dip in their intensity profile and approach their spatial asymptotes slowly.

\section{Nonlinear dynamics around traveling wave solutions}
\label{sec: nonlinear results}
Throughout this section, we require the nonlinearity to be \textbf{saturable}; see {\sc section} \ref{sec: nonlinearity}. 
We derive the hyperbolic system of PDEs  \eqref{eq: pert} for 
\begin{equation}
 B(x,t) = \begin{bmatrix} U(x,t) \\ V(x,t) \end{bmatrix},\label{eq:Bpert}
 \end{equation}
the perturbation about a fixed traveling wave solution $b_*$. We then
establish well-posedness of the initial value problem for \eqref{eq: pert}; local well-posedness using standard fixed-point techniques and  global well-posedness via Gr\"{o}nwall's inequality applied to integral inequalities for the norm of the solution. 
We then prove finite propagation speed of information via energy-type arguments. 
Then, we provide a bound on the growth rate of the perturbations, for an arbitrary initial perturbation. We use this bound, together with our finite propagation speed result, to prove nonlinear convective and asymptotic stability of supersonic pulses, for initial pertubations,  $B_0(x)$, which  decay rapidly enough as $x \to +\infty$.

\subsection{Local and global well-posedness}\label{sec:well-posed}
Consider (\ref{eq: PDE in moving frame}) with initial data $b_0$ which is a perturbation of a TWS: $b_0(x) = b_*(x) +  B_0(x)$. 
Writing $  b(x,t) =  b_*(x) +  B(x,t)$,  we obtain that the perturbation  $ B(x,t)$ (see \eqref{eq:Bpert})
solves the initial value problem:
\begin{equation}
\label{eq: pert}
\begin{aligned}
    \begin{bmatrix}
    U \\ V
    \end{bmatrix}_t & = \Sigma \begin{bmatrix}
    U \\ V
    \end{bmatrix}_x +  N_* \big(x; U(x,t),V(x,t)\big) \\
    &:= \begin{bmatrix}
    c & 1\\1 & c
    \end{bmatrix}\begin{bmatrix}
    U \\ V 
    \end{bmatrix}_x + 
    \begin{bmatrix} 
    \mathcal N\big(r_*(x)^2\big) V \\
    - \mathcal N\big(r_*(x)^2\big) U
    \end{bmatrix} +
    \begin{bmatrix}
    \delta \mathcal N_*(x; U,V) v_*(x) \\
    -\delta \mathcal N_*(x; U,V) u_*(x) 
    \end{bmatrix}
\end{aligned}
\end{equation}
where $ r_*( x ) \equiv |b_*(x)| = \sqrt{u_*^2(x) + v_*^2(x)}$ and
\begin{equation}
    \label{eq: delta N *}
    \delta \mathcal N_*(x;U,V):= \mathcal N\Big( \big( u_*(x) + U \big)^2 + \big(v_*(x) + V\big)^2 \Big) - \mathcal N\big( u_*(x)^2 + v_*(x)^2 \big) 
\end{equation}
with initial data $B(x,0) =  B_0(x)$. 
A formally equivalent integral formulation of the IVP is  
\begin{equation}
\label{eq: pert integral}
     B(x,t) = T_c(t)  B_0(x) + \int_0^t  T_c(t-t')
     N_*\big(x, B(x,t')\big) \dd t'
\end{equation}
in terms of the semi-group
\[
     T_c(t) = \exp \left(  t \begin{bmatrix}
    c & 1 \\ 1 &c 
    \end{bmatrix} \pd_x\right) 
\]
A solution $ B(x,t)$ of (\ref{eq: pert integral}) is called a {\bf mild solution} of the initial value problem for \eqref{eq: pert}. 
For $ f \in H^s$, Fourier transform with respect to $x$ of $ T_c(t)  f(x)$ gives
\[
    \big(  T_c(t)  f\big)^\wedge (k) = \exp \left( \ii \begin{bmatrix}
    c & 1 \\ 1 & c
    \end{bmatrix} kt \right) \hat { f}(k)
\]
Therefore $ T_c(t)$ is a unitary group on $H^s$ for all $s\in \mathbbm R$ by Plancherel's identity.

To prove our well-posedness results, we require properties of the nonlinearity $N_*$ (see \eqref{eq: pert}), which we
summarize in the following proposition.
The proof is given in {\sc appendix} \ref{app: nonlinearity}.
\begin{proposition}
\label{prop: nonlinearity} Assume that $\mathcal{N}$ in \eqref{eq: PDE in moving frame} is a saturable nonlinearity.  
    Consider the nonlinear mapping $N_*(B) = N_*\big(x ; U(x), V(x)\big)$ given  in \eqref{eq: pert}. 
    $N_*$ as a mapping only depends on the nonlinearity profile $\mathcal N_*(r^2)$ and the TWS $b_*$. 
      For $N_*$, the following hold:
    \begin{enumerate}
        \item $N_*$ is \underline{globally} Lipschitz on $L^2$. Namely, for any $B,\tilde B\in L^2$, there is a constant independent $B,\tilde B$, such that
        \begin{equation}
            \big\| N_*(B) - N_*(\tilde B) \big\|_{L^2}
            \leq 
            C\big\| B - \tilde B \big\|_{L^2}
        \end{equation}
        \item $N_*$ is \underline{locally} Lipschitz on $H^1$. 
        In particular, for any $B, \tilde B \in H^1$, 
        there is a constant $C$ independent of $B, \tilde B$ such that 
        \begin{equation}
            \label{eq: N * locally lipschitz H 1}
            \big\| N_*(B) - N_*(\tilde B) \big\|_{H^1}
            \leq 
            C\Big( 1 + \min \big\{ \big\|B\big\|_{H^1}, \big\|\tilde B\big\|_{H^1}  \big\} \Big) \big\| B - \tilde B \big\|_{H^1}
        \end{equation}
        As a special case, for any $B \in H^1$, 
        \begin{equation}
            \label{eq: N * linear growth H 1}
            \big\| N_*(B)\big\|_{H^1}
            \leq 
            C \big\| B \big\|_{H^1}
        \end{equation}
    \end{enumerate}
\end{proposition}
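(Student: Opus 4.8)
The plan is to view $N_*$ as a shifted substitution (Nemytskii) operator generated by a single smooth vector field on $\mathbb{R}^2$. Writing $J=\begin{bmatrix}0&1\\-1&0\end{bmatrix}$ and $G(b):=\mathcal N(|b|^2)\,Jb$ for $b\in\mathbb{R}^2$, the moving-frame nonlinearity in \eqref{eq: pert} is exactly $G(b)$, so that, with $\delta\mathcal N_*$ as in \eqref{eq: delta N *},
\[
N_*(B)(x)=G\big(b_*(x)+B(x)\big)-G\big(b_*(x)\big).
\]
The entire proposition then reduces to two structural facts: (i) $G\in C^2_b(\mathbb{R}^2)$, i.e. $G$ together with $DG$ and $D^2G$ is bounded on all of $\mathbb{R}^2$; and (ii) the profile $b_*$ is smooth with $b_*,b_*'\in L^\infty$. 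Fact (ii) is inherited from the ODE \eqref{eq: TWS profile}: $b_*$ is a bounded heteroclinic orbit, so $b_*'$ is a product of the bounded function $\mathcal N(r_*^2)$ with a bounded linear expression in $b_*$, and in fact $b_*'$ decays exponentially by Section \ref{sec:conv-rate}.

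First I would establish (i), which is where saturability is genuinely used. Differentiating $G$ produces entries that are bounded trigonometric-type factors (e.g. $u/|b|$, $v/|b|$) multiplied by $\mathcal N(s)$, $s\,\mathcal N'(s)$, and $s^{3/2}\mathcal N''(s)$, with $s=|b|^2$. Continuity of $\mathcal N$ on $[0,\infty)$ together with the finite limit $\mathcal N_\infty$ gives $\sup_{s\ge0}|\mathcal N(s)|<\infty$, while assumption $(\mathcal N2')$ — that $\mathcal N^{(k)}(s)\to0$ sufficiently rapidly — is precisely what forces the weighted quantities $s\,\mathcal N'(s)$ and $s^{3/2}\mathcal N''(s)$ to be bounded on $[0,\infty)$ (near $s=0$ they vanish by smoothness of $\mathcal N$). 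Hence every entry of $DG$ and $D^2G$ is uniformly bounded and $G\in C^2_b$.

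Given (i)--(ii), the three claims follow by routine estimates. Part 1 is immediate: $G\in C^1_b$ yields the pointwise bound $|G(b_*+B)-G(b_*+\tilde B)|\le\|DG\|_{L^\infty(\mathbb{R}^2)}\,|B-\tilde B|$, and integrating the square in $x$ gives global Lipschitz continuity on $L^2$ with $C=\|DG\|_{L^\infty}$. For Part 2 I would differentiate in $x$ via the chain rule,
\[
\partial_x N_*(B)=DG(b_*+B)\,(b_*'+\partial_x B)-DG(b_*)\,b_*',
\]
and estimate $\partial_x N_*(B)-\partial_x N_*(\tilde B)$ in $L^2$ by grouping it into $DG(b_*+B)(\partial_x B-\partial_x\tilde B)$, bounded by $\|DG\|_{L^\infty}\|B-\tilde B\|_{H^1}$; and the two terms $[DG(b_*+B)-DG(b_*+\tilde B)]\,\partial_x\tilde B$ and $[DG(b_*+B)-DG(b_*+\tilde B)]\,b_*'$, controlled through the Lipschitz bound $|DG(\beta)-DG(\tilde\beta)|\le\|D^2G\|_{L^\infty}|B-\tilde B|$ from (i). In the first of these I use the one-dimensional Sobolev embedding $H^1(\mathbb{R})\hookrightarrow L^\infty(\mathbb{R})$ to extract $\|B-\tilde B\|_{L^\infty}\le C\|B-\tilde B\|_{H^1}$, leaving $\|\partial_x\tilde B\|_{L^2}\le\|\tilde B\|_{H^1}$; in the second I use boundedness of $b_*'$. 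Combining with the $L^2$ estimate yields \eqref{eq: N * locally lipschitz H 1} with factor $(1+\|\tilde B\|_{H^1})$, and symmetry in $B,\tilde B$ upgrades this to the minimum. Finally \eqref{eq: N * linear growth H 1} is the special case $\tilde B=0$, since $N_*(0)=G(b_*)-G(b_*)=0$.

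The main obstacle is step (i): verifying $G\in C^2_b(\mathbb{R}^2)$. This is the only place where saturability $(\mathcal N2')$ genuinely enters, and it cannot be removed — for a non-saturable nonlinearity such as $\mathcal N(s)=1-s$ one has $s\,\mathcal N'(s)=-s$ unbounded, so $DG$ is unbounded and $N_*$ is then only \emph{locally} (not globally) Lipschitz on $L^2$. Everything else is mechanical once uniform pointwise bounds on $DG$ and $D^2G$ are in hand; accordingly, the quantitative meaning of ``sufficiently rapidly'' in $(\mathcal N2')$ should be read as exactly the requirement that the weighted quantities $s^{k}\mathcal N^{(k)}(s)$ appearing in $DG$ and $D^2G$ remain bounded (which the model nonlinearity $\mathcal N(s)=\tfrac{1-s}{1+s}$ satisfies with room to spare).
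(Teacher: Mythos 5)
Your proposal is correct, and it takes a genuinely different route from the paper's own proof in {\sc appendix} \ref{app: nonlinearity}. The paper never introduces a single generating field: it splits $N_*$ componentwise into the part linear in $B$, namely $\mathcal N(r_*^2)V$ (resp.\ $-\mathcal N(r_*^2)U$), and the profile-weighted part $\delta\mathcal N_*\, v_*$ (resp.\ $-\delta\mathcal N_*\, u_*$), and then estimates each piece via the calculus {\sc lemma} \ref{lem: estimate F} (a bounded $\mathcal C^1$ function of $(x,\eta)$ vanishing at $\eta=0$ with locally bounded $\eta$-gradient satisfies $|F(x,\eta)|\le C|\eta|$ uniformly), applied in turn to $\delta\mathcal N_* v_*$, to $\delta\mathcal N_* v_*'$, and to the $\mathcal N'$-difference term that arises in the $H^1$ Lipschitz estimate. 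Your reduction $N_*(B)=G(b_*+B)-G(b_*)$ with $G(b)=\mathcal N(|b|^2)\,(v,-u)^\mathsf T$ compresses all of this into one uniform mean-value bound plus the chain rule and the Sobolev embedding $H^1(\mathbbm R)\hookrightarrow L^\infty(\mathbbm R)$; it also matches the exact lab-frame nonlinearity \eqref{eq: pert lab} (the coefficient of $V$ displayed in \eqref{eq: pert} reads $\mathcal N(r_*^2)$ where the exact subtraction produces $\mathcal N(|b_*+B|^2)$, a regrouping your identity silently corrects). The price of the compression is a quantitatively stronger reading of ``sufficiently rapidly'' in $(\mathcal N2')$: because $DG$ and $D^2G$ carry full powers of $|b|$, your step (i) needs $s\,\mathcal N'(s)$ and $s^{3/2}\mathcal N''(s)$ bounded, whereas the paper's decomposition always pairs the potentially large direction with the bounded factors $b_*$, $b_*'$, so it only ever invokes boundedness of $\sqrt{s}\,\mathcal N'(s)$ and $s\,\mathcal N''(s)$; in the appendix's normalization $|\mathcal N^{(k)}(r^2)|=\mathcal O(r^{-\alpha-k})$, your argument requires $\alpha\ge 1$ while the paper's works for any $\alpha>0$. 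Since $(\mathcal N2')$ is stated loosely and the model nonlinearity $\mathcal N(s)=\frac{1-s}{1+s}$ satisfies your stronger condition comfortably --- as you yourself note --- this costs nothing here; what your approach buys is a shorter, more systematic proof in which all three claims (global $L^2$ Lipschitz, the $H^1$ estimate \eqref{eq: N * locally lipschitz H 1} with the minimum obtained by symmetry in $B,\tilde B$, and \eqref{eq: N * linear growth H 1} from $N_*(0)=0$) fall out of the two structural facts $G\in\mathcal C^2_b(\mathbbm R^2)$ and $b_*,b_*'\in L^\infty$.
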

The following $H^1$ global well-posedness result holds:
\begin{theorem}
[Global well-posedness of the mild solution of (\ref{eq: pert})]
\label{thm: global well-posedness perturbation H 1}
For $B_0 \in H^1$, \eqref{eq: pert}
has a unique global mild solution $B(x,t) \in \mathcal C^0 \Big( [0,\infty)_t, H^1 \big( \mathbbm R_x \big) \Big)\cap C^1 \Big( [0,\infty)_t, L^2\big( \mathbbm R_x \big) \Big)$, which satisfies  the following exponential bound:
\begin{equation}
\label{eq: H 1 norm exponential growth}
    \big\|B\big\|_{H^1} \leq C e^{Ct} \big\|B_0\big\|_{H^1}.
\end{equation} 
The constant $C$ depends on $\mathcal N$ and $b_*$ but does not depend on the initial data, $B_0$.
\end{theorem}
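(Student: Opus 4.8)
The plan is to treat the mild formulation \eqref{eq: pert integral} by the standard two-step strategy: first establish local-in-time existence and uniqueness in $C^0\big([0,T],H^1\big)$ via a contraction-mapping (Banach fixed point) argument, and then upgrade to a global solution using an a priori exponential bound obtained from Gr\"onwall's inequality. The whole argument rests on the three properties of $N_*$ collected in {\sc Proposition} \ref{prop: nonlinearity}, together with the unitarity of the group $T_c(t)$ on every $H^s$.

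For the local step I would fix $M>\|B_0\|_{H^1}$ and a time $T>0$, and consider the map
\begin{equation*}
\Phi[B](t) := T_c(t) B_0 + \int_0^t T_c(t-t')\, N_*\big(B(\cdot,t')\big)\dd t'
\end{equation*}
on the closed ball of radius $M$ in $X_T := C^0\big([0,T],H^1\big)$. Since $T_c(t)$ is unitary on $H^1$, the linear-growth bound \eqref{eq: N * linear growth H 1} gives $\|\Phi[B](t)\|_{H^1}\le \|B_0\|_{H^1}+CTM$, so $\Phi$ maps the ball into itself for $T$ small; and the local Lipschitz bound \eqref{eq: N * locally lipschitz H 1} gives $\|\Phi[B](t)-\Phi[\tilde B](t)\|_{H^1}\le CT(1+M)\,\|B-\tilde B\|_{X_T}$, so $\Phi$ is a contraction once $CT(1+M)<1$. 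Its fixed point is the desired local mild solution. Unconditional uniqueness then follows cleanly from the \emph{global} $L^2$-Lipschitz property: any two mild solutions satisfy $\|B(t)-\tilde B(t)\|_{L^2}\le C\int_0^t \|B-\tilde B\|_{L^2}\dd t'$ by unitarity of $T_c(t)$ on $L^2$, whence they coincide by Gr\"onwall.

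Next I would address the temporal regularity, namely that $B\in C^1\big([0,T],L^2\big)$ and that $B$ solves \eqref{eq: pert} in $L^2$. This is the standard passage from mild to classical solutions: the generator $A=\Sigma\,\pd_x$ of $T_c(t)$ has domain $H^1$ in $L^2$, the data lies in $D(A)=H^1$, and $N_*$ maps $H^1$ into $H^1$ locally Lipschitz-continuously; hence the mild solution is differentiable in $t$ with values in $L^2$, with $B_t=\Sigma B_x + N_*(B)\in C^0\big([0,T],L^2\big)$. Here $N_*(B)\in C^0\big([0,T],L^2\big)$ follows from the $L^2$-Lipschitz bound together with $N_*(0)=0$.

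The globalization is where the structure of the problem matters most. Applying unitarity and \eqref{eq: N * linear growth H 1} directly to \eqref{eq: pert integral} yields
\begin{equation*}
\|B(t)\|_{H^1}\le \|B_0\|_{H^1}+C\int_0^t \|B(t')\|_{H^1}\dd t',
\end{equation*}
and Gr\"onwall's inequality produces the asserted bound $\|B(t)\|_{H^1}\le \|B_0\|_{H^1}e^{Ct}$, which is the stated estimate \eqref{eq: H 1 norm exponential growth}. Because this a priori control is finite on every bounded time interval, the usual continuation argument rules out finite-time blow-up and the local solution extends to all $t\ge0$. The point I would emphasize is that global existence is available at all only because $N_*$ has at most \emph{linear} (rather than superlinear) growth in $H^1$ --- a feature that comes precisely from the saturability assumption $(\mathcal N 2')$, which bounds $\mathcal N$ and its derivatives uniformly. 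I do not anticipate a genuine obstacle: the real work has been front-loaded into {\sc Proposition} \ref{prop: nonlinearity}, and the only mild care needed is the classical-solution bookkeeping in the regularity step.
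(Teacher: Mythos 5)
Your proposal is correct and follows essentially the same route as the paper's own proof: local well-posedness by contraction mapping with the hypotheses on $N_*$ supplied by {\sc Proposition} \ref{prop: nonlinearity}, the exponential bound \eqref{eq: H 1 norm exponential growth} obtained by applying Gr\"onwall's inequality to the integral inequality coming from \eqref{eq: pert integral}, unitarity of $T_c(t)$, and the linear-growth estimate \eqref{eq: N * linear growth H 1}, followed by the standard continuation argument. The paper compresses the steps you spell out (the self-mapping/contraction bookkeeping, the $L^2$-Lipschitz uniqueness argument, and the mild-to-classical regularity giving $B\in C^1\big([0,\infty),L^2\big)$) into a remark that the proof is standard, so your write-up simply supplies the details it omits.
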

\begin{proof}[Proof of {\sc Theorem} \ref{thm: global well-posedness perturbation H 1}]
The proof is standard so we only remark briefly on it. Local well-posedness follows from a  standard  application of contraction mapping principle, whose
 hypotheses on the nonlinear term, $N_*$, are verified in {\sc proposition} \ref{prop: nonlinearity}. 
The key to global well-posedness is the bound \eqref{eq: H 1 norm exponential growth} which we now prove.
From \eqref{eq: pert integral} and estimate
\eqref{eq: N * linear growth H 1} we have:
\begin{equation*}
\begin{aligned}
    & \big\|  B \big\|_{H^1} \leq  \big\| B_0\big\|_{H^1} + \int_0^t \Big\| N_* \big( x;  B(x,t') \big) \Big\|_{H^1} \dd t' 
    \leq  \big\| B_0\big\|_{H^1} + \int_0^t  C \big\|  B(x,t') \big\|_{H^1} \dd t'
\end{aligned}
\end{equation*}
Therefore Gr\"{o}nwall's integral inequality\cite{HuNa01} yields \eqref{eq: H 1 norm exponential growth}.
Following standard arguments (see, for example,  \cite[Theorem]{Reed76}), we conclude the global existence of mild solutions to (\ref{eq: pert integral}) in $H^1$.
\end{proof}

\subsection{Finite propagation speed}
\label{sec:finite-prop}
Substitute $u(y,t) = u_*(y,t) + U(y,t)$ and $v(y,t) = v_*(y,t) + V(y,t)$ in \eqref{eq: PDE in lab frame}
 (\textit{non-moving} frame) and we have
\begin{equation}
\label{eq: pert lab}
\begin{aligned}
U_t = &  V_y +\mathcal N_* \big( U,V \big) V + \Big [ \mathcal N_* \big( U,V \big) - \mathcal N_* \big( 0,0 \big) \Big] v_* \\
V_t = & U_y - \mathcal N_* \big(U,V \big) U -  \Big [ \mathcal N_* \big(U,V \big) - \mathcal N_* \big( 0,0 \big) \Big] u_*
\end{aligned}
\end{equation}
which governs the perturbation $B(y,t) = \begin{bmatrix}
    U(y,t) & V(y,t)
\end{bmatrix}^\mathsf T$ 
in the non-moving reference frame. 
For simplicity, we have introduced the notation
\begin{equation}
\label{eq: nonlinearity shorthand}
\begin{aligned}
   \mathcal N_*\big(U,V\big) & := \mathcal N \Big(r_*^2 + 2 u_* U + 2 v_* V + U^2 + V^2 \Big),\quad \textrm{with $u_* = u_*(y-ct)$ etc.}
\end{aligned}
\end{equation}

The following result says the speed of propagation of data for the system \eqref{eq: pert} is at most $|c_0| = 1$:
\begin{proposition}
\label{prop: finite propagation speed}
Consider the initial value problem for the system  (\ref{eq: pert lab})  with initial data $B_0$. Fix $y_0 \in \mathbbm R$ and $ t_0 > 0$.
\begin{enumerate}
    \item 
    \label{prop: finite propagation 1}
    Assume $B_0 \in H^1$. 
    If $B_0(y') = 0$ for all  $y'\in [y_0 - t_0, y_0 + t_0]$, then 
     $B(y',t') = 0$ for all $(y',t')\in \Delta (y_0,t_0)$
     where 
    \begin{equation}
    \label{eq: domain of dependence}
      \Delta (y_0,t_0) := \big\{ (y',t') \in \mathbbm R^2 |\ |y'-y_0|\le t_0-t',\ 0\le t'\le t_0 \big\}  
    \end{equation}
    $\Delta (y_0,t_0)$ is called the domain of dependence of the space-time point $(y_0,t_0)$.
    \item 
    \label{prop: finite propagation 2}
    {Assume 
    $A_0=(W_0,Z_0), B_0=(U_0,V_0) \in H^1$ are such that $B_0(y') = A_0(y')$ for $y' \in [y_0 - t_0, y_0 + t_0]$. Then, 
    $B(y',t') = A(y',t')$ on $\Delta (y_0,t_0)$.}
    \item 
    \label{prop: finite propagation 3}
    Suppose that for some $\ell \in \mathbbm R$, $B_0(y') = A_0(y')$ for all $ y' \in [ \ell, \infty)$.   
    Then, for all  $(y',t')$ satisfying $y' \geq \ell + t$
   we have $B(y',t') = A(y',t')$. 
    In the moving frame with speed $c$ where the spatial coordinate is $x = y - ct$, we have $B(x,t) = A(x,t)$ for all $x \geq \ell - (c-1)t$.
\end{enumerate}
\end{proposition}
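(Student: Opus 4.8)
The plan is to establish a classical finite-speed-of-propagation (domain-of-dependence) result for the semilinear hyperbolic system \eqref{eq: pert lab} via an energy estimate on a backward light cone, and then to derive parts \ref{prop: finite propagation 2} and \ref{prop: finite propagation 3} as corollaries using linearity of the cone geometry together with the $H^1$ well-posedness already established in {\sc Theorem} \ref{thm: global well-posedness perturbation H 1}.

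First I would focus on part \ref{prop: finite propagation 1}, the homogeneous case. The key observation is that the principal part of \eqref{eq: pert lab} is $\partial_t B = \Sigma_0 \partial_y B$ with $\Sigma_0 = \sigma_1$ (the off-diagonal coupling $U_t = V_y$, $V_t = U_y$), whose characteristic speeds $dy/dt = \pm 1$ are exactly $\pm 1$. I would define the local energy on horizontal slices of the backward cone,
\begin{equation*}
    \mathcal{E}(t') := \frac{1}{2}\int_{|y'-y_0|\le t_0 - t'} \Big( U(y',t')^2 + V(y',t')^2 \Big)\, dy',
\end{equation*}
and differentiate in $t'$. The volume term produces $\int (U U_t + V V_t)\,dy'$, while shrinking the interval contributes two boundary terms from the moving endpoints $y' = y_0 \pm (t_0 - t')$, each carrying a factor $+1$ (the cone slope). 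Substituting $U_t = V_y + (\text{zeroth-order})$ and $V_t = U_y + (\text{zeroth-order})$, the gradient part $\int(U V_y + V U_y)\,dy' = \int \partial_y(UV)\,dy'$ integrates to the boundary values $[UV]$ at the two endpoints. The crucial algebra is that on the right endpoint the flux contribution $+UV$ is dominated by the Cauchy--Schwarz bound $|UV| \le \tfrac12(U^2+V^2)$ against the positive boundary term coming from the shrinking interval, and symmetrically on the left; hence all boundary terms combine to a nonpositive quantity. What remains is the zeroth-order (nonlinear and $b_*$-coupling) contribution, which by the smoothness of $\mathcal N$ and boundedness of $b_*$ is bounded by $C\,\mathcal{E}(t')$ with $C$ depending only on $\mathcal N$ and $b_*$. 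This yields the differential inequality $\mathcal{E}'(t') \le C\,\mathcal{E}(t')$, so Gr\"onwall gives $\mathcal{E}(t') \le e^{Ct'}\mathcal{E}(0) = 0$ whenever $\mathcal{E}(0)=0$, i.e.\ when $B_0$ vanishes on $[y_0 - t_0, y_0 + t_0]$. This forces $B \equiv 0$ on all of $\Delta(y_0,t_0)$.

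The main obstacle I anticipate is a regularity issue: the energy identity requires enough smoothness to integrate by parts and to differentiate $\mathcal{E}(t')$ in $t'$, whereas {\sc Theorem} \ref{thm: global well-posedness perturbation H 1} only supplies $B \in C^0([0,\infty); H^1)\cap C^1([0,\infty); L^2)$. I would handle this by a density/approximation argument: carry out the energy estimate for smooth approximating data (or mollify the solution), obtain the inequality with constants uniform in the approximation, and pass to the limit using continuous dependence in $H^1$, which the contraction-mapping well-posedness supplies. Care must be taken that the zeroth-order bound $|\mathcal N_*(U,V) - \mathcal N_*(0,0)|$ and its products with $v_*,u_*$ are controlled by $|B|$ uniformly — this follows from the global Lipschitz property of $N_*$ on $L^2$ recorded in {\sc Proposition} \ref{prop: nonlinearity}(1).

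Finally, parts \ref{prop: finite propagation 2} and \ref{prop: finite propagation 3} follow without new analysis. For part \ref{prop: finite propagation 2}, set $D := B - A$; since \eqref{eq: pert lab} is of the stated semilinear form, $D$ solves a system of the same type with initial data $D_0 = B_0 - A_0$ vanishing on $[y_0-t_0,y_0+t_0]$ (here I use that the difference of the nonlinear terms is again globally Lipschitz in $B,A$, by {\sc Proposition} \ref{prop: nonlinearity}), so part \ref{prop: finite propagation 1} gives $D \equiv 0$ on $\Delta(y_0,t_0)$. For part \ref{prop: finite propagation 3}, I observe that the half-line hypothesis $B_0 = A_0$ on $[\ell,\infty)$ means that for any base point $(y_0,t_0)$ with $y_0 - t_0 \ge \ell$ the data agree on the full base interval $[y_0-t_0,y_0+t_0]$; applying part \ref{prop: finite propagation 2} and taking the union of the apexes of all such cones yields agreement on $\{(y',t') : y' \ge \ell + t'\}$. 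Translating into the moving coordinate $x = y - ct$ via $y' = x' + ct'$ converts $y' \ge \ell + t'$ into $x' \ge \ell - (c-1)t'$, which is the stated conclusion.
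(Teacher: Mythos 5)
Your proposal is correct and takes essentially the same route as the paper: a local energy estimate on the domain of dependence followed by Gr\"onwall, with Part \ref{prop: finite propagation 3} obtained by covering the half-space with cones --- the paper's version merely passes to the characteristic variables $\tilde U=(U+V)/\sqrt2$, $\tilde V=(-U+V)/\sqrt2$ and integrates a spacetime divergence identity over the trapezoid $\Omega_t$ via Gauss's theorem, which is exactly your slice-energy computation with the slope-one lateral boundary terms absorbing the flux $\int\partial_y(UV)$. One caveat worth tightening: the paper proves Part \ref{prop: finite propagation 2} directly because $D=B-A$ does \emph{not} satisfy a system of the form \eqref{eq: pert lab} (the zeroth-order term couples to both $B$ and $A$, producing cross terms like $\tilde U\tilde Z-\tilde V\tilde W$), so one must rerun the energy estimate for the difference system using \emph{pointwise} bounds obtained from the $H^1$ growth bound of {\sc Theorem}~\ref{thm: global well-posedness perturbation H 1} and Sobolev embedding on the finite time interval --- the $L^2$-Lipschitz property of $N_*$ from {\sc Proposition}~\ref{prop: nonlinearity} that you cite is an integrated statement and does not by itself supply the pointwise control needed in the differential inequality on slices.
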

It suffices to prove only Part \ref{prop: finite propagation 2} and Part \ref{prop: finite propagation 3} of this proposition. 
Part \ref{prop: finite propagation 1} follows from the case $A_0 = 0$. 
\begin{proof}
Introduce characteristic variables 
\begin{equation}
\label{eq: characteristic variables}
\tilde U = \frac{U+V}{\sqrt2},\qquad \tilde V = \frac{-U+V}{\sqrt2}
\end{equation}
and
\[
\tilde u_* = \frac{u_*+v_*}{\sqrt2},\qquad \tilde v_* = \frac{-u_*+v_*}{\sqrt2}.
\]
Therefore, 
\begin{equation}
\label{eq: pert characteristic}
\begin{aligned}
\big( \pd_t - \pd_y \big) \tilde U & = \mathcal N_*\big( U, V\big)  \tilde V + \Big[\mathcal N_*\big( U, V\big) - \mathcal N_*\big(0,0\big)\Big] \tilde v_* \\ 
\big( \pd_t + \pd_y \big) \tilde V & = - \mathcal N_*\big(U, V\big)  \tilde U - \Big[\mathcal N_*\big(U, V\big) - \mathcal N_*\big(0,0\big)\Big] \tilde u_*
\end{aligned}
\end{equation}
{Denote $A = \begin{bmatrix} W & Z \end{bmatrix}^\mathsf T$  a second solution which satisfies \eqref{eq: pert lab}. Analogously we define, via \eqref{eq: characteristic variables}, 
$\tilde A = \begin{bmatrix} \tilde W & \tilde Z \end{bmatrix}^\mathsf T$.}
Taking the difference, we obtain coupled equations of $\tilde U -\tilde W$ and $\tilde V-\tilde Z$:
\begin{equation}\label{eq: pert characteristic 1}
\begin{aligned}
\big( \pd_t - \pd_y \big) \big(\tilde U -\tilde W \big) & = \mathcal N_*\big(U, V\big)  \tilde V - \mathcal N_*\big(W, Z\big)  \tilde Z + \Big[\mathcal N_*\big( U, V\big) - \mathcal N_*\big(W,Z\big)\Big] \tilde v_* \\ 
\big( \pd_t + \pd_y \big) \big(\tilde V-\tilde Z\big) & = - \mathcal N_*\big(U, V\big)  \tilde U + \mathcal N_*\big(W, Z\big)  \tilde W  - \Big[\mathcal N_*\big(U, V\big) - \mathcal N_*\big(W,Z\big)\Big] \tilde u_*
\end{aligned}
\end{equation}
We next derive an {\it energy inequality} from \eqref{eq: pert characteristic 1}.
Multiply both sides of the first equation of \eqref{eq: pert characteristic 1} with $2 \big( \tilde U - \tilde W\big) $, 
and similarly the second equation in \eqref{eq: pert characteristic 1} with $2\big(\tilde V - \tilde Z\big)$. 
Adding the results gives
\begin{equation}
    \label{eq: energy estimate identity}
\begin{aligned}
    & \big(\pd_t - \pd_y \big) \big(\tilde U - \tilde W \big)^2 + \big(\pd_t + \pd_y\big) \big( \tilde V - \tilde Z \big)^2 
    \\
    =& 2\Big[\mathcal N_*\big( U, V\big) - \mathcal N_* \big( W, Z\big)\Big] 
    \Big[ \big(\tilde U - \tilde W \big) \tilde v_* 
    - \big(\tilde V - \tilde Z \big) \tilde u_* + \tilde U\tilde Z - \tilde V\tilde W 
    \Big]
\end{aligned}
\end{equation}


 Fix $\ell>0$.  Assume $H^1$ initial data $ B_0(y) = A_0(y)$ 
on $y \in [-\ell,\ell]$, 
and consider the  closed trapezoidal region on the $(y,t)$-plane:
\begin{equation}
\label{eq: Omega t}
    \Omega_t = \bigcup_{0\leq s\leq t}\big\{ (y,s)\ :\  -\ell + s \leq y \leq \ell - s \big\},
\end{equation}
where $0 < t < \ell$.
The region $\Omega_t$ is bounded by the four line segments $\Gamma_{1,2,3,4}$ 
and is shown in {\sc figure} \ref{fig: Omega t}.
It is easy to check 
$
\Omega_\ell = \Delta (0,\ell)=[-\ell,+\ell]$; see \eqref{eq: domain of dependence}.

Let \begin{equation}
\label{eq: energy estimate vector field}
\Phi = \Big[ - \big(\tilde U - \tilde W \big)^2  + \big( \tilde V - \tilde Z \big)^2 , \big(\tilde U - \tilde W \big)^2 + \big( \tilde V - \tilde Z \big)^2  \Big].
\end{equation}
Then,  energy identity   \eqref{eq: energy estimate identity} is equivalent to 
\begin{equation}
\label{eq: div Phi}
\mathrm{div}_{y,t} \Phi\  =\  \textrm{RHS of \eqref{eq: energy estimate identity}}
\end{equation}
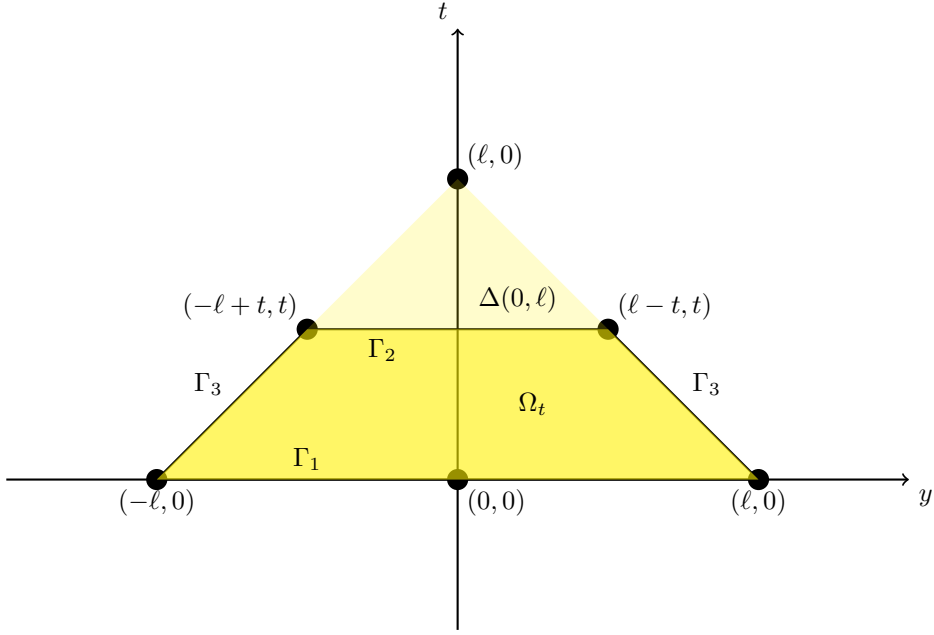
\begin{figure}
    \centering
    \begin{tikzpicture}
[scale=2]
  \draw[thick,->] (-3,0) -- (3,0) node[anchor=north west] {$y$};
  \draw[thick,->] (0,-1) -- (0,3) node[anchor=south east] {$t$};
  \draw[thick] (-2,0) -- (2,0) -- (1,1) -- (-1,1) -- cycle;
  \fill (-2,0) circle (2pt) node[below] {$(-\ell,0)$};
  \fill (2,0) circle (2pt) node[below] {$(\ell,0)$};
  \fill (-1,1) circle (2pt) node[anchor = south east] {$(-\ell + t,t)$};
  \fill (1,1) circle (2pt) node[anchor = south west] {$(\ell - t,t)$};
  \fill (0,0) circle (2pt) node[anchor = north west] {$(0,0)$};
  \fill (0,2) circle (2pt) node[anchor = south west] {$(\ell,0)$};
\fill[fill=yellow,opacity=0.5] (-2,0) -- (2,0) -- (1,1) -- (-1,1) -- cycle;
  \fill[fill=yellow,opacity=0.2] (-2,0) -- (2,0) -- (0,2) -- cycle;
  \node at (0.5,0.5) {$\Omega_t$};
  \node at (0.4,1.2) {$\Delta(0,\ell)$};
  \node[anchor = south] at (-1,0){$\Gamma_1$};
  \node[anchor = north] at (-0.5,1){$\Gamma_2$};
  \node[anchor = south east] at (-1.5,0.5){$\Gamma_3$};
  \node[anchor = south west] at (1.5,0.5){$\Gamma_3$};
\end{tikzpicture}
\caption{Domain of dependence: The closed trapezoidal region $\Omega_t$ used in the proof of {\sc proposition} \ref{prop: finite propagation speed}, shaded with darker yellow;
the domain of dependence of point $(0,\ell)$ is $\Omega_t$ along with the top region shaded with lighter yellow.}
    \label{fig: Omega t}
\end{figure}
{
Integrating \eqref{eq: div Phi} over $\Omega_t$ and applying Gauss's divergence theorem we obtain:
\begin{equation}
\label{eq: gauss div}
\begin{aligned}
 & \int_{\Omega_t}\ \mathrm{div}_{y,t}\ \Phi \dd y \dd t =  \int_{\Gamma_1} \Phi \hat n \cdot \dd l + \int_{\Gamma_2} \Phi \hat n \cdot \dd l + \int_{\Gamma_3} \Phi \hat n \cdot \dd l + \int_{\Gamma_4} \Phi \hat n \cdot \dd l
\\
= & - \int_{-\ell}^\ell \big(\tilde U_0(y) - \tilde W_0(y) \big)^2 + \big(\tilde V_0(y) - \tilde Z_0(y) \big)^2 \dd y \ \text{(this term vanishes)} \\
    & + \int_{-\ell + t}^{\ell - t} \big(\tilde U(y,t) - \tilde W(y,t) \big)^2 + \big(\tilde V(y,t) - \tilde Z(y,t) \big)^2 \dd y \\
     & + \int_0^t 2 \big(\tilde U(-\ell + y,y) - \tilde W(-\ell + y,y) \big)^2 \dd y \ \text{(this term $\geq 0$ )} \\
     & +\int_0^t 2 \big( \tilde V(\ell - y, y) - \tilde Z(\ell - y, y) \big) ^2 \dd y 
     \ \text{(this term $\geq 0$ )} \\
     = & \int_{\Omega_t}  2 \Big[ \mathcal N_* \big(U, V\big) - \mathcal N_* \big(W,Z\big)\Big] 
     \Big[  \big(\tilde U - \tilde W \big) \tilde v_* 
    - \big(\tilde V - \tilde Z \big) \tilde u_* + \tilde U\tilde Z - \tilde V\tilde W 
    \Big] 
     \dd y \dd t
\end{aligned}
\end{equation}
Now we bound pointwise the absolute value of the integrand on the last line. 
Note that $u_*$, $v_*$ are bounded since the TWSs we work with in this paper are all bounded.
The derivative of the nonlinearity, $\mathcal N'$, is also bounded since it is continuous, and its arguments are all bounded.
By the growth rate bound \eqref{eq: H 1 norm exponential growth} of {\sc Theorem} \ref{thm: global well-posedness perturbation H 1}, we have that

 on the fixed time interval $0\leq t\leq \ell$, the $H^1$ thus the $L^\infty$ norm of the perturbation is bounded. Moreover, 
\[
\begin{aligned}
\Big| \mathcal N_* \big(U, V\big) - \mathcal N_* \big(W,Z\big) \Big|
    & \leq C \Big[ |\tilde U - \tilde W| + |\tilde V - \tilde Z|\Big]
\\
\Big| \big( \tilde U - \tilde W\big)  \tilde v_* - \big( \tilde V - \tilde Z \big) 
     \tilde u_* \Big| 
     & \leq C \Big[ \big| \tilde U - \tilde W\big|+\big| \tilde V - \tilde Z \big|\Big]
\\
\big | \tilde U \tilde Z - \tilde V \tilde W \big| \leq \big|\tilde U \big| \big| \tilde W - \tilde Z\big| + \big|\tilde U - \tilde V\big| \big|\tilde W \big| &
\leq C \Big[ \big| \tilde U - \tilde W\big|+\big| \tilde V - \tilde Z \big|\Big],
\end{aligned}
\]
where the constants $C$ depends on the nonlinearity, TWS profile $b_*$ as well as $\ell$. 
These estimates imply that the absolute value of the expression on the last line of \eqref{eq: gauss div} has the upper bound:
\[
\begin{aligned}&
\int_{\Omega_t} 2 \Big|\mathcal N_* \big(\tilde U, \tilde V\big) - \mathcal N_* \big(W,Z\big)\Big|
\Big| \big(\tilde U - \tilde W \big) \tilde v_* 
- \big(\tilde V - \tilde Z \big) \tilde u_* + \tilde U\tilde Z - \tilde V\tilde W \Big|
\dd y \dd t'\\
\leq & C \int_0^t \int_{-\ell + t'}^{\ell - t'} \Big[ \big| \tilde U (y,t') - \tilde W (y,t')\big|^2 +\big| \tilde V (y,t')- \tilde Z (y,t') \big|^2 \Big]  \dd y \dd t'
\end{aligned}
\]
Let $I(t)$ be the nonnegative function of $t$ defined by the third line of \eqref{eq: gauss div}:
\[
    I(t) := \int_{-\ell + t}^{\ell - t} \big|\tilde U(y,t) - \tilde W(y,t) \big|^2 + \big|\tilde V(y,t) - \tilde Z(y,t) \big|^2 \dd y
    \geq 0
\]
Then, 
\[
    I(t) \leq  C \int_0^t I(t') \dd t'
\]
where we used that $I(0) = 0$ by the assumption $B_0(y) = A_0(y)$ on $[-\ell, \ell]$. By Gr\"onwall's inequality
$I(t) = 0$ for  $0 \leq t \leq \ell$. 
This proves Part \ref{prop: finite propagation 2} of  Proposition
\ref{prop: finite propagation speed}.

Part \ref{prop: finite propagation 3} of {\sc proposition}
\ref{prop: finite propagation speed} is an immediate consequence of Part \ref{prop: finite propagation 2}.
To see this, suppose $B_0 (y) = A_0(y)$ on $(\ell, \infty)$ for some $\ell$.
Now if $y > \ell + t$, then 
$[ y - t , y + t] \cap (-\infty, \ell] = \emptyset $. 
Hence, for all such $(y,t)$, we have $B(y',t') = A(y',t')$ for $(y',t')\in \Delta(y,t)$. 
Thus, $B(y,t)=A(y,t)$ for all $(y,t)$ such that  $y > \ell + t$. Equivalently,  in a frame of reference moving with speed $c$: $x = y - ct > \ell - ct + t$. 
See {\sc figure} \ref{fig: lab2move} for illustration of this part of the proof.
Part \ref{prop: finite propagation 3} is thus proved, and the proof of {\sc proposition} \ref{prop: finite propagation speed} is now complete.
}
\begin{figure}
    \centering
    \begin{tikzpicture}
[scale=0.8]
  \draw[thick,->] (-10,0) -- (-2,0) node[anchor=north west] {\footnotesize $y$};
  \draw[thick,->] (-10,1) -- (-10,4) node[anchor=north west] {\footnotesize $t$};
  \draw[thick, color=blue] (-7,0) -- (-4,3);
  \draw[thin, color=green] (-7,0) -- (-3,3);
  
  \node[anchor = east] at (-6,1) {\textcolor{blue}{\footnotesize $y = \ell + t$}};
  \fill (-7,0) circle (2pt) node[anchor = north] {\footnotesize $(\ell,0)$};
  \fill[fill=yellow,opacity=0.2] (-7,0) -- (-2,0) -- (-2,3) -- (-4,3) -- cycle;
  \fill[fill=yellow,opacity=0.5] (-6.2,0) -- (-3.2,0) -- (-4.7,1.5) -- cycle;
  \fill (-4.7,1.5) circle (2pt) node[anchor = west] {\footnotesize $(y,t)$};
  \node[anchor = north] at (-4.7,0.9) {\footnotesize $\Delta(y,t)$};
  \node[anchor = west] at (-4.5,2.3) {\footnotesize $B(\cdot,t) = A(\cdot,t)$};
  \node[anchor = east] at (-5,2.3) {\footnotesize $B(\cdot,t) \neq A(\cdot,t)$};
  \node[anchor = south] at (-5,3.5) {Non-moving frame};
  
  \draw[thick,->] (1,0) -- (8,0) node[anchor=north west] {\footnotesize $x=y-ct$};
  \draw[thick, color = blue] (4,0) -- (3,3);
  \draw[thin, color = green] (4,0) -- (4,3);
  \fill (4,0) circle (2pt) node[anchor = north] {\footnotesize $(\ell,0)$};
  \node[anchor = east] at (3+2/3,1) {\textcolor{blue}{\footnotesize $x = \ell -ct + t$}};
  \fill[fill=yellow,opacity=0.2] (4,0) -- (8,0) -- (8,3) -- (3,3) -- cycle;
  \fill[fill=yellow,opacity=0.5] (4.8,0) -- (7.8,0) -- (4.3,1.5) -- cycle;
  \fill (4.3,1.5) circle (2pt) node[anchor = west] {\footnotesize $(x=y-ct,t)$};
  \node[anchor = north] at (5.5,0.9) {\footnotesize $\tilde\Delta(x,t)$};
  \node[anchor = west] at (3.2,2.3) {\footnotesize $B(\cdot,t) = A(\cdot,t)$};
  \node[anchor = east] at (3.2,2.3) {\footnotesize $B(\cdot,t) \neq A(\cdot,t)$};
  \node[anchor = south] at (3,3.5) {Moving frame};
\end{tikzpicture}
\caption{Finite propagation speed of the perturbation and separated localizations of the traveling pulse core and its perturbations:
 Illustration of the proof of Part \ref{prop: finite propagation 3} of {\sc proposition}
\ref{prop: finite propagation speed}.
Here, at time $t=0$, in the lab frame, $B_0(y)  =A_0(y)$ and in the moving frame where $x = y-ct$, 
$B_0(x)= A_0(x)$. The thin green lines are the spacetime trajectory of a point on the unperturbed supersonic pulse whose coordinate is $x = y = \ell$ at $t = 0$.
The yellow regions in both frames are the spacetime region on on $(y,t)$ and $(x = y-ct,t)$ planes respectively, 
where $B(\cdot,t) = A(\cdot, t)$.
The darker yellow triangles in the left panel stands for $\Delta(y,t)$, 
the domain of dependence of $(y,t)$; whereas in the right panel the \textit{same} domain of dependence of $(x= y-ct,t)$, 
observed in the moving frame.
The blue lines are borders 
of the regions $B(\cdot,t) \neq A(\cdot, t)$ and $B(\cdot,t) = A(\cdot, t)$, in either reference frame respectively.}
    \label{fig: lab2move}
\end{figure}
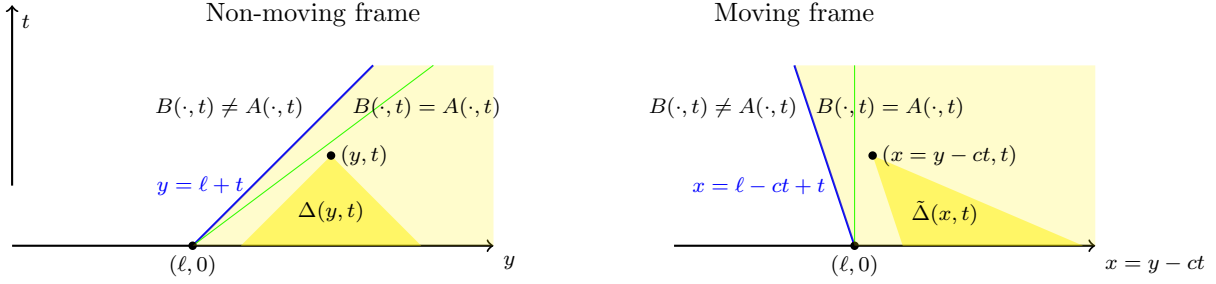
\end{proof}

\section{Nonlinear convective stability of supersonic pulses}\label{sec:nonlin-convec} 

Consider a supersonic pulse, traveling with speed $c>1$.
The following theorem is stated in the comoving frame, i.e., for an ``observer'' that travels with the pulse at speed $c$. 
We continue with the assumption that the nonlinearity is saturable as we do for the last section;
see {\sc section} \ref{sec: nonlinearity}.
\begin{theorem}[Supersonic pulses are nonlinearly convectively stable]
\label{thm: convective}
   Assume a saturable nonlinearity. Let $b_*$ be a supersonic pulse that travels with speed  $c>1$. 
    \begin{enumerate}
    \item
    \label{thm: convective lab} 
    Consider equation \eqref{eq: pert lab} for the perturbation in a \underline{non-moving} frame of reference,   with initial data $B(y,0) = B_0(y)$. 
    Assume that for some $\gamma_0  > 0$, if   
    $\gamma > \frac{\gamma_0}{c-1}$, then
    \begin{equation}
    \label{eq: convective condition}
    \big\| B_0 \big\|_{H^1 \big( [\ell, \infty ) \big) } = o \big(e^{-\gamma \ell} \big)\quad \textrm{as}\quad \ell \to \infty.
    \end{equation} 
    Then, for any $R \in \mathbbm R$, we have exponential time-decay: 
    \begin{equation}
    \label{eq: convective stability lab} \big\| B(y,t) \big\|_{H^1\big( [R + ct, \infty)_y \big) } = \mathcal{O} \Big( e^{- \big ( \gamma(c-1) - \gamma_0 \big) t} \Big).
    \end{equation}
    \item 
    \label{thm: convective moving} 
   Equivalently, consider equation \eqref{eq: pert} for the perturbation of $b_*$ in the \underline{comoving} frame of reference ($x = y-ct$), traveling with the same speed $c$, with initial data $B(x,0) = B_0(x)$ satisfying \eqref{eq: convective condition}. Then, for any $R \in \mathbbm R$, 
    \begin{equation}
    \label{eq: convective stability comoving}
    \big\| B(x,t) \big\|_{H^1\big( [R, \infty)_x \big) } = \mathcal{O} \Big( e^{- \big ( \gamma(c-1) - \gamma_0 \big) t} \Big).
    \end{equation}
    \end{enumerate}
\end{theorem}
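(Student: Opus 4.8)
The plan is to prove Part \ref{thm: convective lab} in the lab frame and obtain Part \ref{thm: convective moving} by transcribing through the change of coordinates $x = y - ct$, under which the window $[R+ct,\infty)_y$ becomes $[R,\infty)_x$ while the $H^1(\mathbb R)$ norm is preserved. The core idea is to decouple two facts already in hand: the solution's value in the far-right window at time $t$ depends only on initial data far to the right (finite propagation speed, {\sc proposition} \ref{prop: finite propagation speed}), and that far-right data is exponentially small by the decay hypothesis \eqref{eq: convective condition}; the global-in-time growth bound \eqref{eq: H 1 norm exponential growth} then costs only a factor $e^{\gamma_0 t}$, which the spatial decay overwhelms exactly when $\gamma(c-1) > \gamma_0$.

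Fix $R\in\mathbb R$ and $t>0$, and set $\ell = R + (c-1)t$. First I would replace $B_0$ by an auxiliary datum $\tilde B_0 \in H^1(\mathbb R)$ that agrees with $B_0$ on $[\ell,\infty)$ but is localized there: take $\tilde B_0 = \chi_\ell B_0$, where $\chi_\ell$ is a translate of a single fixed smooth cutoff, equal to $1$ on $[\ell,\infty)$ and vanishing on $(-\infty,\ell-1]$, so that $\|\tilde B_0\|_{H^1(\mathbb R)} \le C_0 \|B_0\|_{H^1([\ell-1,\infty))}$ with $C_0$ independent of $\ell$. Let $\tilde B(\cdot,t)$ be the global solution of \eqref{eq: pert lab} with this datum, which exists by {\sc Theorem} \ref{thm: global well-posedness perturbation H 1}. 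Since $B_0 = \tilde B_0$ on $[\ell,\infty)$, Part \ref{prop: finite propagation 3} of {\sc proposition} \ref{prop: finite propagation speed} gives $B(y,t') = \tilde B(y,t')$ whenever $y \ge \ell + t'$; evaluating at $t'=t$ yields $B(\cdot,t) = \tilde B(\cdot,t)$ on $[\ell + t,\infty) = [R+ct,\infty)$.

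The estimate then chains together cleanly. On the window the two solutions coincide, so
\[
\big\|B(\cdot,t)\big\|_{H^1([R+ct,\infty))} = \big\|\tilde B(\cdot,t)\big\|_{H^1([R+ct,\infty))} \le \big\|\tilde B(\cdot,t)\big\|_{H^1(\mathbb R)}.
\]
Applying the growth bound \eqref{eq: H 1 norm exponential growth} (valid for \eqref{eq: pert lab} by the same Gr\"onwall argument, the $H^1(\mathbb R)$ norm being frame-independent), with $\gamma_0$ taken as the rate in its exponent, and then the localization of $\tilde B_0$ together with \eqref{eq: convective condition}, gives
\[
\big\|\tilde B(\cdot,t)\big\|_{H^1(\mathbb R)} \le C e^{\gamma_0 t}\big\|\tilde B_0\big\|_{H^1(\mathbb R)} \le C C_0\, e^{\gamma_0 t}\big\|B_0\big\|_{H^1([\ell-1,\infty))} = e^{\gamma_0 t}\, o\!\left(e^{-\gamma(\ell-1)}\right).
\]
Since $\ell = R + (c-1)t$ and $R$ is fixed, the right-hand side is $o\!\big(e^{\gamma_0 t} e^{-\gamma(c-1)t}\big) = o\!\big(e^{-(\gamma(c-1)-\gamma_0)t}\big)$, which is $\mathcal O\big(e^{-(\gamma(c-1)-\gamma_0)t}\big)$ and genuinely decays because $\gamma > \gamma_0/(c-1)$ forces $\gamma(c-1) - \gamma_0 > 0$. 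This is \eqref{eq: convective stability lab}, establishing Part \ref{thm: convective lab}; Part \ref{thm: convective moving} follows from the coordinate change noted above.

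I expect the only real obstacle to be the bookkeeping in the cutoff step: one must ensure the constant $C_0$ is uniform in $\ell$ (handled by translating a single fixed profile) and avoid a sharp truncation, since $B_0(\ell)\ne 0$ in general would place $B_0\,\mathbbm{1}_{[\ell,\infty)}$ outside $H^1$. Everything else is a direct composition of the finite-propagation-speed comparison with the global exponential growth bound, so no new analytic estimates are required.
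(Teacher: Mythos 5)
Your proposal is correct and follows essentially the same route as the paper's proof: construct auxiliary initial data agreeing with $B_0$ on $[\ell,\infty)$ but with globally small $H^1$ norm, invoke Part \ref{prop: finite propagation 3} of {\sc proposition} \ref{prop: finite propagation speed} with the sliding choice $\ell = R+(c-1)t$, and apply the a priori growth bound \eqref{eq: H 1 norm exponential growth}, with the spatial decay winning because $\gamma(c-1)-\gamma_0>0$. The only difference is cosmetic: the paper localizes by reflecting the tail of $B_0$ about $y=\ell$, which gives the exact identity $\big\|A_0(\ell;\cdot)\big\|^2_{H^1} = 2\big\|B_0\big\|^2_{H^1([\ell,\infty))}$, whereas you use a translated smooth cutoff with a uniform constant $C_0$ --- both devices serve the identical purpose, and your remark about avoiding a sharp truncation is well taken.
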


In particular, any Gaussian perturbation, $B_0$, centered at an arbitrary point, satisfies \eqref{eq: convective condition}. Numerical simulations of the evolution, which are consistent with {\sc theorem} \ref{thm: convective} are presented  in {\sc figure} \ref{fig: perturbed supersonic pulse simulation}.
{\sc Theorem} \ref{thm: convective} states that as long as the initial perturbation decays fast enough as $x \to \infty$, then
 in any finite window \textit{in the comoving frame with speed $c$}, the solution profile will eventually converge toward the unperturbed profile. 
The details of the perturbation  outside this window are inaccessible to this approach. In fact, in view of the linear exponential instability of non-trivial equilibria, we believe that the perturbation grows outside the window. The numerical simulations of {\sc figure} \ref{fig: perturbed supersonic pulse simulation} support this.
\begin{proof}[Proof of {\sc Theorem} \ref{thm: convective}]
We prove Part \ref{thm: convective moving}; Part \ref{thm: convective lab} is equivalent.
From {\sc Theorem} \ref{thm: global well-posedness perturbation H 1},
there is a constant $\gamma_0$, 
depending only on the nonlinearity and the traveling wave solution $b_*$, such that for all $B_0\in H^1(\mathbb R)$:
\begin{equation}
\label{eq: H 1 exponential growth 1}
\|B(\cdot, t)\|_{H^1(\mathbb{R})} \leq \gamma_0 e^{\gamma_0 t} \|B_0\|_{H^1(\mathbb{R})}
\end{equation}
Now we fix $B_0\in H^1$, and require that it satisfies \eqref{eq: convective condition}, and $B(x,t)$ will denote the solution to \eqref{eq: pert} with initial data $B_0$ for the rest of this proof.

Next, we define a family of initial data given this \textit{fixed} $B_0$. 
Let $\ell\in \mathbbm R$. 
Let $A_0(\ell;y)$ be defined on all $\mathbbm R$ as the function obtained by reflecting the tail of $B_0$ to the right of $y=\ell$ about $y=\ell$:
\begin{equation}
\label{eq: A 0}
    A_0(\ell; y) = \begin{cases}
        B_0(y) & \text{for $y \geq \ell$} \\
        B_0(2\ell - y) & \text{for $y < \ell$}.
    \end{cases}
\end{equation}
A graphical illustration  of 
$\mathcal A(\ell;y)$ is given in {\sc figure} \ref{fig: A 0 ell}.
\begin{figure}
    \centering
    \begin{tikzpicture}
    \begin{axis}[
        no markers, 
        domain=-4:4, 
        samples=100,
        ymin=0,
        axis lines*=left, 
        height=3cm, 
        width=12cm,
        xtick=\empty, 
        ytick=\empty,
        enlargelimits=false, 
        clip=false, 
        axis on top,
        grid = major,
        hide y axis
    ]
    \addplot [very thick,blue] {gauss(0,1)};
    \addplot [very thick,red, domain=2:4] {gauss(0,1)};
    \addplot [very thick,red, domain=-4:2] {gauss(4,1)};
    \node[anchor=south, color = blue] at (axis cs:0,0.4) {$B_0$};
    \node[anchor=south west, color = red] at (axis cs:2.5,0.02) {$A_0(\ell;\cdot)$};
    \draw[dashed] (axis cs:2,0) -- (axis cs:2,0.4);
    \node[anchor = west] at (axis cs:2,0.3) {$y = \ell$};
    \end{axis}
\end{tikzpicture}
    \caption{Illustration
    of the construction 
    of the family of initial data $A_0(\ell;y)$ used in the proof of {\sc Theorem} \ref{thm: convective}.
    This is only a \textit{schematic} plot, since $B_0$ and $A_0$ are vector-valued.}
    \label{fig: A 0 ell}
\end{figure}
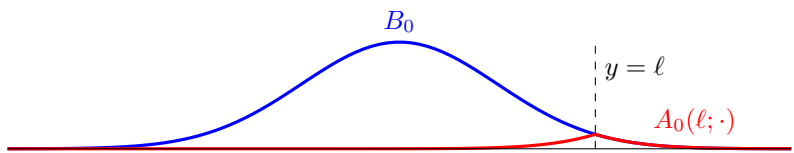
Further, we introduce the corresponding one-parameter family, $A(\ell; x,t)$, of solutions of the IVP \eqref{eq: pert} (posed in the moving frame) with initial data $A(\ell;x,t=0)=  A_0(\ell; x)$. 
 Therefore, by \eqref{eq: H 1 exponential growth 1} we have
\begin{equation}
 \label{eq: H 1 exponential growth 2}
    \big\| A(\ell; \cdot, t ) \big\|_{H^1(\mathbb{R})} \leq \gamma_0 e^{\gamma_0 t} \big\|A_0(\ell; \cdot ) \big \|_{H^1(\mathbb{R})}, \quad \textrm{for all}\ t\ge0\  \textrm{and}\ \ell \in \mathbbm R.
\end{equation}

Now $B(x,t)$ and $A(\ell;x,t)$ are defined for each $x,\ell \in \mathbbm R$ and $t \geq 0$.
Note that $B_0(y) = A_0(\ell; y)$ for $y \geq \ell$, since for $t = 0$, $x = y - ct = y$. 
Then by Part \ref{prop: finite propagation 3} of {\sc proposition} \ref{prop: finite propagation speed}, 
for any $x, \ell \in \mathbbm R$ and any $t \geq 0$:
\begin{equation}
\label{eq: tail identity}
    B(x,t) = A(\ell; x,t),\quad \textrm{if $x + (c-1) t \geq \ell$}
\end{equation}
{\it In the following, we shall use 
as a family of functions $A(\ell; x,t)$, which depends on $\ell,x,t$, and which satisfies the $\ell-$ parameterized family of identities  
\eqref{eq: tail identity} to  bound the solution $B(x,t)$.}
Fix any $R \in \mathbbm R$. Then, for any $t \geq 0$ and $x \geq R$, we have
\[
x + (c-1)t \geq R + (c-1)t = \ell(t).
\] Therefore, by \eqref{eq: tail identity}, let $\ell = R + (c-1)t$, we have, 
\begin{equation}
\label{eq: tail identity 1}
    B(x,t) = A(R + (c-1)t; x,t),\quad \textrm{for all  $x \geq R$ and 
    $t \geq 0$}.
\end{equation} 
See {\sc figure} \ref{fig: tail identity} for illustration as well as the idea of competing growth/decay rates used in the current proof below. 
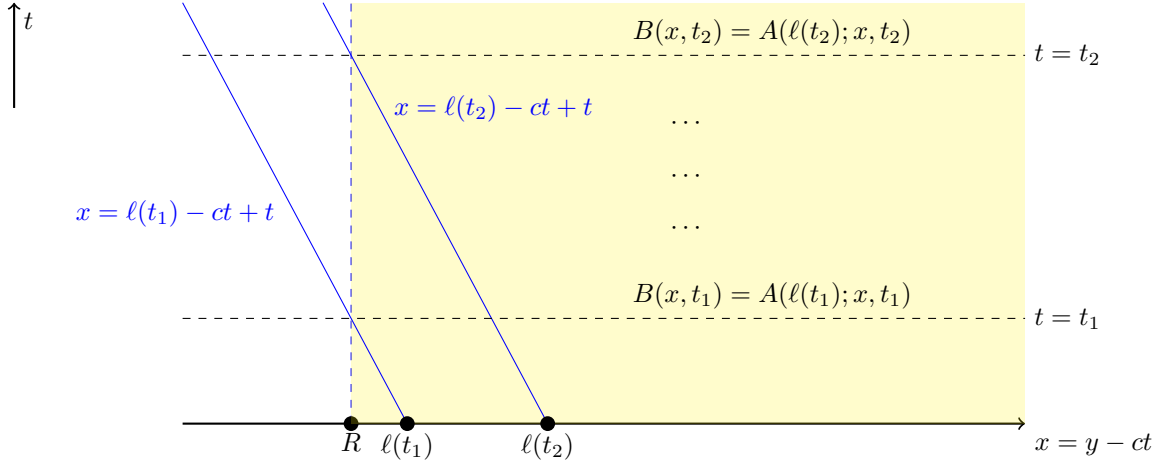
\begin{figure}
    \centering
    \begin{tikzpicture}
    [scale = 1.4, x=1.6cm]
    \draw[thick,->] (-1,0) -- (4,0) node[anchor=north west] {$x=y-ct$};
    \draw[thick,->] (-2,3) -- (-2,4) node[anchor=north west] {$t$};
  \draw[dashed, color = blue] (0,0) -- (0,4);
  \fill (0,0) circle (2pt) node[anchor = north] {$R$};
  \fill[color = yellow, opacity = 0.2] (0,0) -- (0,4) -- (4,4) -- (4,0) -- cycle;
  \draw[dashed] (-1,3.5) -- (4,3.5) node[right] {$t = t_2$};
  \node[anchor = south] at (2.5,3.5) {$B(x,t_2) = A(\ell(t_2); x,t_2)$};
  \draw[blue] (-1/6,4) -- (3.5/3,0);
  \node[anchor = west] at (0.2,3) {\textcolor{blue}{$x = \ell(t_2) - ct + t$}};
  \fill (3.5/3,0) circle (2pt) node[anchor = north] {$\ell(t_2)$};
  \draw[dashed] (-1,1) -- (4,1) node[right] {$t = t_1$};
  \node[anchor = south] at (2.5,1) {$B(x,t_1) = A(\ell(t_1); x,t_1)$};
  \draw[blue] (-1,4) -- (1/3,0);
  \node[anchor = east] at (-0.4,2) {\textcolor{blue}{$x = \ell(t_1) - ct + t$}};
  \fill (1/3,0) circle (2pt) node[anchor = north] {$\ell(t_1)$};
  \node[anchor = north] at (2,3) {$\cdots$};
  \node[anchor = north] at (2,2.5) {$\cdots$};
  \node[anchor = north] at (2,2) {$\cdots$};
  \end{tikzpicture}
    \caption{Illustration of \eqref{eq: tail identity 1} used in the proof of {\sc Theorem} \ref{thm: convective}.
    For all $t \geq 0$, here illustrated by $t = t_1$ and $t = t_2$, the identity \eqref{eq: tail identity 1} holds for all $x \geq R$, 
    colored yellow.
    Compare with the right panel with {\sc figure} \ref{fig: lab2move}.}
    \label{fig: tail identity}
\end{figure}
We note that as a function of $t$, 
the RHS of \eqref{eq: tail identity 1} is not a solution to \eqref{eq: pert}, in general.
It follows that for for all $R\in \mathbbm R$, all $t \geq 0$, we have
\begin{equation}
\label{eq: convective estimate 1}
\begin{aligned}
    &\big\| B(\cdot, t) \big\|_{H^1\big( [R, \infty) \big) } = \big\| A(R + (c-1) t ;\cdot,t) \big\|_{H^1\big( [R, \infty) \big) } 
    \leq  \big\| A(R+(c-1)t;\cdot,t) \big\|_{H^1}
\end{aligned}
\end{equation}
By bound \eqref{eq: H 1 exponential growth 2}
\begin{equation}
\label{eq: convective estimate 2}
    \big\| A(R+(c-1)t;\cdot,t) \big\|_{H^1} \leq \gamma_0 e^{\gamma_0 t} \big\| A_0 (R + (c-1) t ;\cdot ) \big\|_{H^1}
\end{equation}
Combining \eqref{eq: convective estimate 1} and \eqref{eq: convective estimate 2}, 
\begin{equation}
    \label{eq: convective estimate 4}
    \big\| B(\cdot, t)\big\|_{H^1\big([R, \infty) \big) } \leq \gamma_0 e^{\gamma_0t}
    \big\| A_0 (R + (c-1) t ;\cdot ) \big\|_{H^1}
\end{equation}
where the RHS depends only on data $B_0$ through $A_0$ defined by \eqref{eq: A 0}. 
Now we proceed to bound the RHS of \eqref{eq: convective estimate 4}.

By assumption \eqref{eq: convective condition},
we can choose $M \in \mathbbm R$ large enough,
such that  
\[
\big\| B_0 \big\|^2_{H^1 \big( [z, \infty ) \big) } \leq \frac12\left(M e^{-\gamma z}\right)^2,\quad z\in\mathbb R.
\]
By the definition \eqref{eq: A 0} of $A_0$ we have
\begin{equation}
\label{eq: tail H 1 norm}
\big\|A_0(z;\cdot) \big\|^2_{H^1}
 = 2\big\| B_0 \big\|^2_{H^1 \big( [z, \infty ) \big) } \leq M^2 e^{-2\gamma z}
\end{equation}  
Now we apply \eqref{eq: tail H 1 norm} and set $z = R + (c-1) t $, 
\begin{equation}
\label{eq: convective estimate 3}
\big\| A_0 (R + (c-1) t ;\cdot ) \big\|_{H^1} \leq M e^{-\gamma R } e^{-\gamma (c-1) t}
\end{equation}
So by \eqref{eq: convective estimate 4} and \eqref{eq: convective estimate 3}, 
\begin{equation}
    \label{eq: convective estimate}
    \big\| B(\cdot, t) \big\|_{H^1\big( [R, \infty) \big) } \leq 
     M \gamma e^{-\gamma R} e^{ - \big( \gamma(c-1) - \gamma_0 \big) t}.
\end{equation}
Finally, since by hypothesis $\gamma > \frac{\gamma_0}{c-1}$, where $c > 1$,  we have $\gamma(c-1) - \gamma_0 > 0$,  and hence \eqref{eq: convective estimate} implies exponential decay in time as asserted in  \eqref{eq: convective stability comoving}.
\end{proof}


\section{Linearized stability analysis}

\noindent{\bf Motivation:} The nonlinear stability analysis of supersonic pulses of the previous section relies on the assumption that nonlinearity is saturable and, 
in particular, both $\mathcal N(s)$ and its derivative $\mathcal N'(s)$ are bounded. In fact, if the Lipschitz constant of $\mathcal N(s)$ grows with $s$, 
then it may well be that solutions of the IVP do not, in general, exist globally in time; they may, for example, blow up in $H^1$ in finite time. 
Hence, the stability arguments which apply to supersonic pulses with saturated nonlinearities do not apply when considering non-saturable nonlinearities.
%
%
Further, our stability analysis of supersonic pulses does \textit{not} apply to kink-type traveling wave solutions; see Figure \ref{fig: kinks}. 
Thus, we are motivated to study the  linearized stability / instability properties of kinks and pulses.

We next give a general discussion of the linear spectral analysis of (heteroclinic) traveling waves, and then present results 
on the spectral stability properties of equilibria,
which arise as the $x\to\pm\infty$ limits of traveling wave solutions. 
In {\sc section} \ref{sec: pulse stability} we turn to the spectral stability of supersonic pulses, 
and then in {\sc section} \ref{sec: kink stability} to the spectral stability of kink traveling wave solutions.

\subsection{Linearized spectral analysis of pulses and kinks; setup}



Theorem \ref{thm: convective} (for saturable nonlinearities), as well as  numerical simulations described in the introduction and in \cite{du2023discontinuous},  indicate the {\it convective stability} of supersonic pulses and kinks (which are all subsonic). We now study  this effect via a linear spectral analysis, by working in spatially weighted $L^2-$ spaces, which register perturbations which move away from the core of the traveling wave, as decaying with time.

\subsubsection{Exponentially weighted spaces} 
\label{sec: weighted spaces}

We work in exponentially weighted function spaces;
 see, for example, \cite{pego1994asymptotic}, \cite{pego1997convective}, \cite{miller1996asymptotic}.
In particular, we introduce  weights $W(x)$, 
 with differing exponential rates as $x$ tends to plus or minus infinity:
\begin{equation}
\label{eq: weight general}
W(x) \equiv e^{w(x)}= 
\begin{cases}
e^{a_- x} \text{ for $x\leq -1$} \\
        e^{a_+ x} \text{ for $x\geq 1$}
    \end{cases}.
\end{equation}
Here, $a_\pm \in \mathbbm R$, and  $W(x)$, is chosen to be monotone, smooth and to interpolate between $e^{a_-x}$ and $e^{a_+x}$.
We shall work with the weighted Lebesgue  and Sobolev spaces: 
\begin{equation*}
    L^2_w := L^2\big(\mathbbm R, e^{w(x)} \dd x\big),\quad 
    H^1_w:= \Big\{  f(x) \in L^1_\mr{loc}:\ e^{w(x)} f(x) \in H^1 \Big\} 
\end{equation*}
In  our analysis of pulses and kinks,  we'll make use of the following special cases:
\begin{enumerate}
    \item $a_- = a_+ = a$. In this case we take $W(x) = e^{ax}$, and for simplicity we will write
    \begin{equation*}
        L^2_a:= L^2\big(\mathbbm R, e^{ax}\dd x\big),\quad 
        H^1_a:= \Big\{  f(x) \in L^1_\mr{loc}:\ e^{ax} f(x) \in H^1 \Big\}
    \end{equation*}
    \item $a_- = 0$, $a_+ = a \neq 0$. Thus, $W(x) = 1$ for $x<-1$ and $W(x)=e^{ax}$ for $x>1$.
    \item $a_+ = 0$, $a_- = a \neq 0$. 
\end{enumerate}
As we shall see, these choices are determined by the spectral properties of the spatially uniform states to which our heteroclinic traveling waves converge as $x\to\pm\infty$.

\subsubsection{Linearized perturbation equation in a moving frame with speed $c\ge0$}

Let $b_*(x)=\begin{bmatrix} u_*(x) & v_*(x)\end{bmatrix}$ denote a traveling wave solution, which in a frame of reference moving with speed $c$ is a static (time-independent) solution. Define $r_*^2(x)= u_*^2(x)+v_*^2(x)$. In a frame of reference, moving with speed $c$, 
 the perturbation, $B=\begin{bmatrix} U & V \end{bmatrix}^\mathsf T$,
 is governed by equation \eqref{eq: pert}. 
 Keeping only linear terms in 
 \eqref{eq: pert}, we obtain the \textit{linearized perturbation equation}, governing infinitesimal perturbations:
\begin{equation}
\label{eq: linearized operator}
 B_t = L_*  B,\quad    L_* \equiv \Sigma \pd_x + A_*
\end{equation}
with
\begin{equation}
\label{eq: A *}
   A_*(x) \equiv
    \begin{bmatrix}
    {2 \mathcal N'\big(r_*^2 \big)} u_*v_* & \mathcal N_* \big(r_*^2 \big) +  {2 \mathcal N' \big(r_*^2 \big)} v_*^2 \\
     -\mathcal N \big(r_*^2 \big)- {2\mathcal N' \big(r_*^2 \big)}u_*^2 & -{2\mathcal N' \big(r_*^2 \big)}u_*v_*
    \end{bmatrix}
\end{equation} and 
\begin{equation}
\label{eq: formula Sigma}
    \Sigma\equiv  \begin{bmatrix}
    c & 1 \\ 1 & c
    \end{bmatrix}
\end{equation}
We refer to $L_*$ as the \textit{linearized operator about $b_*$}.

Suppose $\lambda\in\mathbb{C}$ 
with $\Re \lambda > 0$, and   $0\neq B_0 (x)\in L^2$ are such that $L_* B_0= \lambda  B_0$. 
Then, $B(x,t)=e^{\lambda t}B_0(x)$ is a solution of \eqref{eq: linearized operator},
such that $\|B(\cdot,t)\|_2$  grows exponentially as $t\to \infty$. 
However, in appropriately weighted spaces  $B(x,t)$ needs not always grow.
Indeed, the weighted perturbation $e^{w(x)}B(x,t)$ satisfies 
\[ 
\partial_t
\left(e^{w(x)}B(x,t)\right) = L_{*,w}\left(e^{w(x)}B(x,t)
\right),  \]
where $L_{*,w}$ is related to $L_*$ by conjugation:
\begin{equation}
\label{eq: conjugate operator}
    L_{*,w} := e^{w(x)} L_* \big( e^{-w(x)} \cdot \big) = \Sigma \big(\pd_x - w'(x) \big) + A_*(x)
\end{equation}
The study of the weighted perturbation $e^{w(x)}B(x,t)$ in $L^2$ or $H^1$ 
is equivalent to the study of $B(x,t)$ in the corresponding weighted spaces $L^2_w$ or $H^1_w$.
\begin{definition}
[Spectral stability]
\label{def: spectral stability}
Let $ b_*(x)$ denote a heteroclinic traveling wave solution of speed $c$, whose profile satisfies (\ref{eq: TWS profile}). 
Let $L_*$ denote the linearized operator of $b_*$. We say that $b_*$ is spectrally stable if
 \[ \textrm{  $L_w^2(\mathbb{R})-$ spectrum of $L_{*}\ \subset \{z: \Re z\le0\}$},\]
 or equivalently 
\[ \textrm{  $L^2(\mathbb{R})-$ spectrum of $L_{*,w}\ \subset \{z: \Re z\le0\}$}. \]
\end{definition}
\medskip

\noindent{\bf Terminology:} Suppose that for a choice of weight $W(x)=e^{w(x)}$ of the exponential type \eqref{eq: weight general} the traveling wave solution $b_*$ is spectrally stable in the sense of Definition \ref{def: spectral stability}. Then, if the context is clear, we shall refer to $b_*$ as being spectrally stable without explicit reference to the particular weight $W(x)=e^{w(x)}$.

\subsection{Spectral stability of equilibria}
\label{sec: equilibria}
We study the spectral stability of equilibria 
\[ 
\text{$ b_*=b_O = \begin{bmatrix} 0 & 0\end{bmatrix} $ and $ b_*(\theta) = \begin{bmatrix} \cos\theta &  \sin \theta\end{bmatrix} $,\quad $\theta\in(-\pi,\pi]$}
\] 
in  $L^2_a = L^2_w$, where  $W(x) = e^{ax}$.
\subsubsection{The trivial equilibrium}
\label{sec: trivial equilibrium}

Consider the trivial equilibrium, $ b_* =  b_O:= [0,0]$, viewed in a frame of reference moving with speed $c$. {\it 
    The relevance of considering the stability of equilbria in different reference frames lies in their determining the essential spectrum of the linearized operator, $L_*$, for heteroclinic traveling waves of speed $c$.}

Let $a\in\mathbb{R}$ be fixed. The $L^2_a$-spectral stability properties are determined by the $L^2$-spectrum of the operator:
\begin{equation}
    \label{eq: L O a}
    L_{O,a} = \begin{bmatrix}
    c & 1 \\ 1 & c 
    \end{bmatrix} \big( \pd_x - a \big) + \begin{bmatrix}
    0 & 1\\
     -1 & 0 
    \end{bmatrix} = \Sigma (\pd_x - a) + A_O
\end{equation}
The spectrum of $L_{O,a}$ is determined \cite{KaPr13} by the frequency of non-trivial plane wave solutions of wave numbers $k\in\mathbbm{R}$:
$B = e^{ikx} B_0$, 
$B_0 \in\mathbbm C^2$ and $L_O B_0 =\lambda  B_0$, where $\lambda = \lambda(k)$ satisfies:
\begin{equation*}
    \det \left( \big( \ii k - a \big) \begin{bmatrix}
    c & 1 \\ 1 & c 
    \end{bmatrix}  + \begin{bmatrix}
    0 & 1\\
     -1 & 0 
    \end{bmatrix} - \begin{bmatrix}
    \lambda & 0 \\ 0 & \lambda
    \end{bmatrix} \right) = 0,
\end{equation*}
yielding two branches (dispersion relations), depending on $a$, 
the union of whose images is exactly the $L^2$-spectrum of $L_{O,a}$, 
equivalently the $L^2_a$-spectrum of $L_O$.
It can be seen that the essential spectrum is stable (does not intersect the open right half plane) if and only if $a=0$.
The two branches of $L^2=L^2_0$ essential spectrum are swept out by the dispersion relations:
\begin{equation}
\label{eq: lambda O}
    \lambda(k) = \lambda_{O}^\pm(k) = \ii \Big(  k c \pm \sqrt{1+k^2} \Big),\quad k\in\mathbbm{R}.
\end{equation}
\begin{proposition}\label{prop:stab-equil}
    The trivial equilibrium, ($b_*=(0,0)$, is spectral stable in $L^2(\mathbb R)$.
\end{proposition}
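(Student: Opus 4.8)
The plan is to exploit the fact that $L_O$ is a constant-coefficient first-order system, so that its $L^2(\mathbb R)$-spectrum is purely essential and can be read off directly from its Fourier symbol. Since the proposition concerns the unweighted case $a=0$, I take $L_O=\Sigma\pd_x+A_O$, with $\Sigma$ as in \eqref{eq: formula Sigma} and $A_O$ as in \eqref{eq: L O a}. Conjugating by the Fourier transform in $x$ turns $L_O$ into multiplication by the matrix symbol $M(k):=\ii k\,\Sigma+A_O$ acting on $L^2(\mathbb R;\mathbb C^2)$. By the standard Fourier-multiplier characterization of the spectrum of constant-coefficient systems (see \cite{KaPr13}), a point $\lambda$ lies in the resolvent set of $L_O$ exactly when $M(k)-\lambda$ is invertible for every $k\in\mathbb R$ and $\sup_{k\in\mathbb R}\big\|(M(k)-\lambda)^{-1}\big\|<\infty$.

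Next I would compute the pointwise spectrum of the symbol. The characteristic equation $\det\big(M(k)-\lambda\big)=0$ reduces to $(\lambda-\ii kc)^2=-(1+k^2)$, whose two roots are exactly the dispersion branches $\lambda_O^\pm(k)=\ii\big(kc\pm\sqrt{1+k^2}\big)$ already recorded in \eqref{eq: lambda O}. The decisive observation is that for every real $k$ both roots are \emph{purely imaginary}, so that $\mathrm{spec}\,M(k)\subset\ii\mathbb R$ for all $k$, and hence $\bigcup_{k\in\mathbb R}\mathrm{spec}\,M(k)\subset\ii\mathbb R$.

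It then remains to rule out additional spectrum arising from the behaviour of the resolvent as $|k|\to\infty$ or from the non-normality of $M(k)$. For any $\lambda$ with $\Re\lambda>0$ the determinant $\det\big(M(k)-\lambda\big)=(\lambda-\ii kc)^2+(1+k^2)$ never vanishes, and since $|c|\ne1$ throughout (the sonic speeds are excluded in \eqref{eq: PDE in moving frame}) its modulus grows like $|1-c^2|\,k^2$ as $|k|\to\infty$, whereas the entries of the adjugate of $M(k)-\lambda$ grow only linearly in $k$. Consequently $\big\|(M(k)-\lambda)^{-1}\big\|\to0$ as $|k|\to\infty$ and is continuous, hence bounded, on compact $k$-sets; this yields the required uniform bound. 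Therefore the open right half-plane lies in the resolvent set and $\mathrm{spec}_{L^2}(L_O)\subset\ii\mathbb R\subset\{\,\Re z\le0\,\}$, which is precisely the spectral stability of $b_O$ asserted in the proposition.

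I expect the only genuinely nontrivial point to be the final paragraph: justifying that the $L^2$-spectrum is exactly the closure of $\bigcup_k \mathrm{spec}\,M(k)$, i.e. that uniform invertibility of $M(k)-\lambda$ in $k$ truly controls the resolvent and that the non-normality of $M(k)$ produces no spurious spectrum. Here this is resolved cleanly by the explicit determinant estimate, which degenerates only at $|c|=1$ — exactly the speeds excluded in the moving-frame formulation.
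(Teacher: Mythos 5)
Your proposal is correct and takes essentially the same route as the paper: the paper also treats $L_O$ (with $a=0$) as a constant-coefficient operator, reads off the dispersion branches $\lambda_O^\pm(k)=\ii\big(kc\pm\sqrt{1+k^2}\big)$ from the Fourier symbol, and concludes the $L^2$-spectrum is purely imaginary, delegating the Fourier-multiplier characterization to \cite{KaPr13}. Your final paragraph simply verifies by hand (via the determinant growth $\sim|1-c^2|k^2$ versus the linearly growing adjugate, valid since $c\ne\pm1$) the standard fact the paper cites, so there is no gap and no genuinely different method.
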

There are two qualitatively distinct cases: 
\begin{itemize}
    \item 
$|c|<1$ (subsonic frame of reference), 
\[    \sigma(L_O) = \ii \mathbbm R \setminus \ii (-\sqrt{1-c^2}, \sqrt{1-c^2})
\]
{\it i.e.} the spectrum is a subset of the imaginary axis and has a gap, which is symmetric about the origin, and 
\item $|c| >1$ (supersonic frame of reference)
\[
    \sigma(L_O) = \ii \mathbbm R
\]  
\end{itemize}

\subsubsection{Nontrivial equilibria}
\label{sec: nontrivial equilibria}

Nontrivial equilibria are of the form $b_* = b_\theta:= \begin{bmatrix} 
\cos \theta & \sin \theta
\end{bmatrix}$; see \eqref{eq:equil}.
In a frame of reference with speed $c$, the linearized operator $L_\theta = L_*$ is given by \eqref{eq: linearized operator}. The weight-conjugated operator (see \eqref{eq: conjugate operator}) is:
\begin{equation}
\label{eq: L theta a}
    L_{\theta, a} =\begin{bmatrix}
    c & 1 \\ 1 & c 
    \end{bmatrix}
    \big( \pd_x - a \big) 
    -K \begin{bmatrix}
    \sin 2 \theta   & 1 - \cos 2\theta  \\
     -1 - \cos 2\theta & - \sin 2 \theta 
    \end{bmatrix};
\end{equation} 
recall that $K=-\mathcal{N}^\prime(1)>0$ by assumption \ref{eq: K}. The essential spectrum of $L_{\theta,a}$ is characterized by its (bounded)  plane wave solutions $e^{i(kx-\lambda)t)}\xi$, with $\xi\in\mathbb C^2\ne0$. 
Thus, we obtain dispersion curves 
\begin{equation}
\label{eq: lambda theta a}
    \lambda_{\theta, a}^\pm (k) = \big( \ii k - a \big ) c \pm 
    \sqrt{\big (\ii k - a\big ) \big (\ii k - a +2 K\cos 2 \theta \big)}
\end{equation} 
and 
\begin{equation}
    \sigma(L_{\theta, a}) = \bigcup_{\beta=\pm}\{\lambda^\beta_{\theta,a}(k): k\in\mathbb{R}\}. \label{eq:specLtheta-a}
    \end{equation}
    We seek conditions on $a$ guaranteeing that
     $\Re\sigma(L_{\theta,a})\le0$. Note from \eqref{eq: lambda theta a} that for $k\in\mathbb{R}$ 
     \begin{equation}  \Re\lambda_{\theta, a}^\pm (k) = -ac\pm \Re\sqrt{\big (\ii k - a\big ) \big (\ii k - a +2 K\cos 2 \theta \big)} \label{eq:Relampm}\end{equation}
     
Using \eqref{eq:Relampm} we obtain the following bounds on $ \Re \sigma(L_{\theta,a})$:
\begin{proposition}\label{prop:Re_sig}
\begin{align} \inf \Re \sigma(L_{\theta,a}) &= -ac - \big| a -K \cos 2 \theta \big| \label{eq: inf equil spectrum}\\
\label{eq: sup equil spectrum}
    \sup \Re \sigma(L_{\theta,a}) &=  -ac + \big| a - K \cos 2 \theta \big|
\end{align} 
Moreover, these extrema are not achieved if and only if $a - K \cos 2\theta \neq 0$;
otherwise, 
if and only if $a - K \cos2\theta = 0$, $\sigma\big(L_{\theta,a}\big) \subset \ii \mathbbm{R}$. 
\end{proposition}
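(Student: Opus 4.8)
The plan is to reduce the computation of $\inf\Re\sigma(L_{\theta,a})$ and $\sup\Re\sigma(L_{\theta,a})$ to a one‑variable extremization of
\[
g(k) := \Re\sqrt{\big(\ii k - a\big)\big(\ii k - a + 2K\cos 2\theta\big)},\qquad k\in\mathbbm R.
\]
By \eqref{eq:specLtheta-a} the spectrum is the union of the two dispersion curves $\lambda_{\theta,a}^{\pm}$, and by \eqref{eq:Relampm} we have $\Re\lambda_{\theta,a}^{\pm}(k) = -ac \pm g(k)$, where $g(k)\ge 0$ is forced by the branch convention that $\sqrt{\cdot}$ has non‑negative real part. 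Consequently $\Re\lambda^{-}\le -ac\le \Re\lambda^{+}$, so $\sup\Re\sigma = -ac + \sup_k g(k)$ and $\inf\Re\sigma = -ac - \sup_k g(k)$. Everything thus reduces to proving $\sup_k g(k) = \big|a - K\cos 2\theta\big|$.

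To analyze $g$, write $w(k) = \big(\ii k - a\big)\big(\ii k - a + 2K\cos 2\theta\big) = X(k) + \ii Y(k)$ and set $\alpha := -a$, $\beta := -a + 2K\cos 2\theta$. A direct expansion gives $X = \alpha\beta - k^2$, $Y = k(\alpha+\beta)$, together with the elementary identity $X^2 + Y^2 = (k^2+\alpha^2)(k^2+\beta^2)$, i.e.\ $|w| = \sqrt{(k^2+\alpha^2)(k^2+\beta^2)}$. Using the standard formula $\Re\sqrt{w} = \sqrt{(|w| + \Re w)/2}$ yields
\[
g(k)^2 = \tfrac12\Big(\sqrt{(k^2+\alpha^2)(k^2+\beta^2)} + \alpha\beta - k^2\Big).
\]
I would then treat this as a function of $s := k^2 \ge 0$ and differentiate: the sign of $\dd(g^2)/\dd s$ equals that of $\big(2s + \alpha^2 + \beta^2\big) - 2\sqrt{(s+\alpha^2)(s+\beta^2)}$, and squaring the two non‑negative sides reduces this to the always‑true inequality $(\alpha^2 - \beta^2)^2 \ge 0$. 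Hence $g^2$ is nondecreasing in $k^2$, and it is constant precisely when $\alpha^2 = \beta^2$.

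Expanding $|w|$ for large $k$ gives $g(k)^2 \to (\alpha+\beta)^2/4 = (a - K\cos 2\theta)^2$ as $|k|\to\infty$; combined with monotonicity this gives $\sup_k g(k) = \big|a - K\cos 2\theta\big|$, and substituting into the expressions above produces exactly \eqref{eq: inf equil spectrum} and \eqref{eq: sup equil spectrum}. For the attainment statement, since $g^2$ is strictly increasing unless $\alpha^2 = \beta^2$, the supremum is approached only as $|k|\to\infty$ and is attained at no finite $k$ whenever $\alpha^2 \neq \beta^2$; under the nondegeneracy $\cos 2\theta \neq 0$ (which holds for all kinks, and generically for pulses) one has $\alpha=\beta$ excluded, so $\alpha^2=\beta^2$ is equivalent to $\alpha = -\beta$, i.e.\ to $a - K\cos 2\theta = 0$. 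In that borderline case one checks directly that $w(k) = -\big(k^2 + K^2\cos^2 2\theta\big)\le 0$, whence $g\equiv 0$ and the spectrum collapses onto the vertical line $\{\Re z = -ac\}$ (lying on $\ii\mathbbm R$ when $ac=0$). I expect the only real obstacle to be the careful handling of the complex square root and its branch when extracting $\Re\sqrt{w}$ and checking the sign of the derivative; once the identity $|w|^2 = (k^2+\alpha^2)(k^2+\beta^2)$ is secured, the monotonicity and the large‑$k$ limit are routine.
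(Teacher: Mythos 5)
Your proposal is correct, and it takes a genuinely different route from the paper's. The paper funnels everything through {\sc lemma} \ref{lemma: real part bounded}, whose proof is geometric: the conformal map $z \mapsto z^2$ sends the line $\Re z = \big|\tfrac{\alpha+\beta}{2}\big|$ to a parabola, and one checks that the curve $k \mapsto f(k)^2$ (itself a parabola) lies on the closed left side of it, with the large-$k$ expansion supplying optimality. You instead use the closed-form identity $\Re\sqrt{w} = \sqrt{(|w| + \Re w)/2}$ (valid under the paper's branch convention, including on the cut, where both sides vanish) together with $|w|^2 = (k^2+\alpha^2)(k^2+\beta^2)$, which reduces the problem to one-variable calculus: $g^2$ is nondecreasing in $s = k^2$ by AM--GM, strictly increasing unless $\alpha^2 = \beta^2$, with limit $(\alpha+\beta)^2/4$ as $s \to \infty$. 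What your route buys is exactly this strictness information, and with it a sharper attainment analysis than the paper's: in the degenerate case $\cos 2\theta = 0$, $a \neq 0$ (so $\alpha = \beta \neq 0$), your computation gives $g \equiv |a| = \big|a - K\cos 2\theta\big|$, i.e.\ the extrema are attained at \emph{every} $k$ even though $a - K\cos 2\theta \neq 0$ --- so the ``not achieved iff $a - K\cos2\theta \neq 0$'' clause of the proposition (and the final sentence of {\sc lemma} \ref{lemma: real part bounded}, since $\alpha+\beta = 2a \ne 0$ there) fails in this corner case, which your nondegeneracy caveat correctly isolates; this case is harmless for the kinks and generic pulses studied later, but the statement as written does not exclude it. Likewise, your observation that in the borderline case $a = K\cos 2\theta$ the spectrum is the vertical line $\{\Re z = -ac\}$, contained in $\ii\mathbbm{R}$ only when $ac = 0$, agrees with the computation inside the paper's own proof ($\Re\lambda^\pm_{\theta,a} \equiv -ac$) and corrects the proposition's unconditional claim that $\sigma(L_{\theta,a}) \subset \ii\mathbbm{R}$. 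In short: the paper's lemma is shorter and reusable as a black box; your elementary argument is more precise at the degenerate parameter values and actually repairs two small inaccuracies in the stated result.
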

We shall use the following technical lemma proved in {\sc appendix} \ref{app: proof of lemma: real part bounded}:
\begin{lemma}
\label{lemma: real part bounded} Let $\alpha, \beta\in\mathbb{R}$ be fixed, and consider the mapping $f:k\in\mathbb{R}\mapsto \mathbb{C}$ given by:
\[
f(k) = \sqrt{\big(\ii k - \alpha \big) \big(\ii k - \beta \big) },\quad k\in\mathbb{R}.
\]
where the square-root function is defined 
on the cut complex plane $\mathbbm C \setminus (-\infty,0]$ to have a positive real part; 
on the cut it is taken to have nonnegative imaginary part. 
Then, for all $k\in\mathbb{R}$, we have 
\[
 \Re f(k) \le \Big| \frac{\alpha + \beta} 2 \Big| 
\]
In particular, if $\alpha + \beta = 0$, then $\Re f(k)=0$ for all $k\in\mathbb R$. And if $\alpha + \beta \ne 0$, then $\sup \Re f(k)$ is attained only in the limit $k\to\pm\infty$. 
\end{lemma}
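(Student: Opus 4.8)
The plan is to reduce the whole statement to an explicit formula for the real part of a complex square root, followed by a single application of the AM--GM inequality. First I would expand the radicand, writing $z(k) := (\ii k - \alpha)(\ii k - \beta) = (\alpha\beta - k^2) - \ii k(\alpha+\beta)$, and abbreviate $X := \Re z(k) = \alpha\beta - k^2$ and $Y := \Im z(k) = -k(\alpha+\beta)$. Because the chosen branch has nonnegative real part (and nonnegative imaginary part on the cut), the real part of $f(k)=\sqrt{z(k)}$ obeys the standard identity
\[
\big(\Re f(k)\big)^2 = \tfrac12\big(X + |z(k)|\big), \qquad |z(k)| = \sqrt{X^2 + Y^2},
\]
which holds uniformly in $k$, including where $z(k)$ lands on the negative real axis (there both sides vanish, matching the stated convention that $\sqrt{z}$ is then purely imaginary).

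Given this identity, I would observe that the asserted bound $\Re f(k)\le\big|\tfrac{\alpha+\beta}{2}\big|$ is equivalent to $X+|z(k)|\le\tfrac{(\alpha+\beta)^2}{2}$, i.e. to $|z(k)|\le\tfrac{(\alpha+\beta)^2}{2}-X$. The next step is to verify that the right-hand side is nonnegative, which follows from $\tfrac{(\alpha+\beta)^2}{2}-X = \tfrac{\alpha^2+\beta^2}{2}+k^2\ge0$; this legitimizes squaring. Squaring and cancelling the common terms $k^4+(\alpha^2+\beta^2)k^2$ from each side collapses the inequality to $(\alpha\beta)^2\le\big(\tfrac{\alpha^2+\beta^2}{2}\big)^2$, equivalently $(\alpha^2-\beta^2)^2\ge0$, which is exactly AM--GM. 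This settles the main bound.

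For the equality and attainment assertions I would read off the same chain. When $\alpha+\beta=0$ the bound forces $(\Re f(k))^2\le0$, so $\Re f(k)=0$ for all $k$ (directly, $z(k)=-k^2-\alpha^2\le0$ sits on the cut and $f$ is purely imaginary). When $\alpha+\beta\ne0$ and $\alpha\ne\beta$, equality in the final step requires $(\alpha^2-\beta^2)^2=0$, impossible at any finite $k$, so the supremum is never attained for finite $k$; the asymptotics $|z(k)| = k^2 + \tfrac{\alpha^2+\beta^2}{2}+o(1)$ as $k\to\pm\infty$ then give $X+|z(k)|\to\alpha\beta+\tfrac{\alpha^2+\beta^2}{2}=\tfrac{(\alpha+\beta)^2}{2}$, so $\Re f(k)\to\big|\tfrac{\alpha+\beta}{2}\big|$, proving that the value is approached only in the limit.

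The computation is routine, so I expect the two delicate points to be the following. First, the uniform validity of the real-part formula across the branch cut must be confirmed, since the sign convention there is what forces $\Re f=0$ rather than a spurious value; I regard this bookkeeping as the main place where an argument could go astray, though the explicit identity above disposes of it. Second, the attainment clause as literally written presumes $\alpha\ne\beta$: in the degenerate coincidence $\alpha=\beta\ (\ne0)$ one has $z(k)=(\ii k-\alpha)^2$ and $\Re f\equiv|\alpha|=\big|\tfrac{\alpha+\beta}{2}\big|$ is constant, so the supremum is attained at every $k$. I would note that this degeneracy does not affect the applications, since the nontrivial equilibria attached to pulses and kinks satisfy $\cos 2\theta\ne0$, hence $\alpha\ne\beta$ there.
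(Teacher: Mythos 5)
Your proof is correct, and it takes a genuinely different route from the paper's. You reduce everything to the closed-form identity $\bigl(\Re\sqrt{z}\bigr)^2=\tfrac12\bigl(\Re z+|z|\bigr)$ --- which, as you verify, holds uniformly across the branch cut under the paper's convention, both sides vanishing there --- and then a legitimate squaring (the majorant $\tfrac{\alpha^2+\beta^2}{2}+k^2$ is nonnegative) collapses the bound to $(\alpha\beta)^2\le\bigl(\tfrac{\alpha^2+\beta^2}{2}\bigr)^2$, i.e.\ AM--GM. The paper argues geometrically instead: it pushes the vertical line $\Re z=\bigl|\tfrac{\alpha+\beta}{2}\bigr|$ forward under the conformal map $z\mapsto z^2$ to a parabola $x=p(y)$, computes that $k\mapsto f(k)^2$ traces the parabola $x=q(y)=-y^2/(\alpha+\beta)^2+\alpha\beta$, and concludes from $q\le p$ (the same AM--GM in disguise) that the range of $f$ lies in the strip $\{0<\Re z\le|\tfrac{\alpha+\beta}{2}|\}$. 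The two arguments rest on the identical inequality; yours is more elementary and makes the equality analysis transparent, while the paper's buys the geometric picture of where the dispersion curves sit, which feeds the essential-spectrum figures.

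Your flag about the degenerate case is not mere bookkeeping: you have found a genuine (if harmless) error in the lemma's final sentence. When $\alpha=\beta\ne0$ one has $f(k)=\sqrt{(\ii k-\alpha)^2}$ and $\Re f(k)\equiv|\alpha|=\bigl|\tfrac{\alpha+\beta}{2}\bigr|$ for every $k$ (e.g.\ $\alpha=\beta=1$, $k=1$ gives $f=\sqrt{-2\ii}=1-\ii$), so the supremum is attained at every finite $k$, contradicting the claim that it is attained only as $k\to\pm\infty$. The paper's proof misses this because $q\equiv p$ when $\alpha=\beta$, and its closing asymptotic computation only shows the supremum is approached in the limit, not that it is approached exclusively there. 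The slip propagates to the attainment clause of Proposition \ref{prop:Re_sig}: with $\cos2\theta=0$ and $a\ne0$ (the marginal pulses $E=c\pm1$), the supremum $-a(c-1)<0$ is attained even though $a-K\cos2\theta\ne0$. As you note, the stability conclusions are unaffected, since the supremum values themselves are correct and remain strictly negative in that regime; the clean fix is to add the hypothesis $\alpha\ne\beta$ (equivalently $\cos2\theta\ne0$ in the application) to the non-attainment statement.
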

\begin{proof}[Proof of Proposition \ref{prop:Re_sig}]
Applying the Lemma \ref{lemma: real part bounded} with $\alpha=a$ and $\beta=a-2K\cos2\theta$. 
   Therefore \[-ac - \big|a - K \cos 2\theta|
\leq \Re \lambda_{\theta, a}^\pm(k) \leq -ac + \big|a - K \cos 2\theta|\]
Moreover, the lower and upper bounds above are optimal;
if $a \neq K \cos 2\theta$, then the $\sup$ in \ref{eq: sup equil spectrum} and $\inf$ in \ref{eq: inf equil spectrum} are only attained in the limit $k\to \infty$. 
Otherwise 
$\Re\lambda_{\theta,a}^\pm(k)= -ac$ for all $k\in\mathbbm R$.
\end{proof}
By definition, $b_\theta$ is $L^2_a$-spectrally stable  if and only if  
$\sup \Re \sigma(L_{\theta,a}) \leq 0$,
which by \eqref{eq: sup equil spectrum} is the condition
\begin{equation}
    \label{eq: nontrivial equil. stability}
     \big| a -K \cos 2 \theta \big| \leq ac
     \quad {\rm or}\quad -a(c+1)\le-K\cos2\theta\le a(c-1).
\end{equation}
For $c>1$, condition \eqref{eq: nontrivial equil. stability} is equivalent to:
\begin{equation}
\label{eq: supersonic a condition}
    a \geq \max \left\{- \frac{K \cos 2\theta} { c-1} ,  \frac{K\cos 2\theta}{c+1}\right\}\equiv a_{c>1}(\theta,c)
\end{equation}
For $c<-1$, 
\begin{equation}
\label{eq: supersonic a condition 2}
    a \leq \min \left\{ - \frac{K \cos 2\theta}{ c-1} , \frac{K \cos 2\theta}{c+1}\right\}
\end{equation}
For $-1<c<1$, 
\begin{equation}
\label{eq: subsonic equilibria weight condition}
\frac{K \cos 2\theta}{1+c} \leq a \leq \frac{K \cos 2\theta}{1-c}.
\end{equation}
For parameters $a$ which satisfy \eqref{eq: subsonic equilibria weight condition} to exist, it is necessary that the indicated $a-$interval be non-empty. Thus we require:
\begin{equation}
\label{eq: c cos 2 theta geq 0}
c \cos 2\theta \geq 0.
\end{equation} 
We summarize the preceding discussion in:
\begin{proposition}[Spectral Stability of Equilibria in $L^2_a=L^2(e^{ax}dx)$]
    \label{prop: equilibria stability}
Fix a spatially uniform equilibrium, $b_0$ or $b_*(\theta)$, and consider, in a frame of reference moving with speed $c \neq \pm 1$, the linearized operator, $L_{O}$ (trivial equilibrium) or $L_{\theta,a}$ (non-trivial equilibria). 
\begin{enumerate}
    \item Trivial equilibrium: The trivial equilibrium is spectrally stable if and only if $a = 0$. In this case, $L_{O}$ is a subset of the imaginary axis.
\item Non-trivial equilibria, $\begin{bmatrix}
    \cos \theta & \sin \theta
\end{bmatrix}$: The spectrum  $L_{\theta, a}$ 
 is contained in the closed left-half plane if and only if \eqref{eq: nontrivial equil. stability} holds. The condition 
\eqref{eq: nontrivial equil. stability} is equivalent to \eqref{eq: supersonic a condition}
if $c>1$, or \eqref{eq: supersonic a condition 2} if $c<-1$ and \eqref{eq: subsonic equilibria weight condition} if $-1<c<1$.
\end{enumerate}
\end{proposition}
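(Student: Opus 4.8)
The plan is to assemble the proposition from the constant-coefficient spectral analysis already carried out, the essential observation being that both $L_{O,a}$ and $L_{\theta,a}$ have $x$-independent coefficients, so by Fourier transform (as in \cite{KaPr13}) their $L^2(\mathbb R)$ spectrum is exactly the closure of the union over $k\in\mathbb R$ of the two dispersion branches; there is no separate point spectrum to account for. Consequently spectral stability in each case reduces to the single scalar criterion $\sup_k \Re\lambda^\pm(k)\le 0$, and this criterion is not merely necessary but sufficient. My task is then to evaluate that supremum and unfold it into the stated conditions on $a$.

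For Part 1 (trivial equilibrium) I would first record the dispersion relation for $L_{O,a}$ obtained from $\det\big((\ii k-a)\Sigma + A_O-\lambda\big)=0$, namely $\lambda^\pm_O(k)=(\ii k-a)c\pm\sqrt{(\ii k-a)^2-1}$, and factor $(\ii k-a)^2-1=(\ii k-(a+1))(\ii k-(a-1))$. Applying {\sc Lemma} \ref{lemma: real part bounded} with $\alpha=a+1$ and $\beta=a-1$ gives $\Re\sqrt{\cdots}\le |a|$, whence $\sup_k\Re\sigma(L_{O,a})=-ac+|a|$, approached only as $k\to\pm\infty$ when $a\neq 0$ and identically zero when $a=0$. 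Setting $a=0$ recovers the purely imaginary branches \eqref{eq: lambda O} and reproduces {\sc Proposition} \ref{prop:stab-equil}; the stability criterion becomes $|a|\le ac$, and in the (subsonic) frames in which the trivial state arises as the $x\to-\infty$ limit of a kink this forces $a=0$, giving the asserted equivalence together with $\sigma(L_O)\subset\ii\mathbb R$.

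For Part 2 (nontrivial equilibria) the real-part extrema have already been computed in {\sc Proposition} \ref{prop:Re_sig}: $\sup\Re\sigma(L_{\theta,a})=-ac+\big|a-K\cos 2\theta\big|$. Hence $b_\theta$ is $L^2_a$-spectrally stable if and only if $-ac+|a-K\cos 2\theta|\le 0$, which is precisely condition \eqref{eq: nontrivial equil. stability}, and I would rewrite the absolute value as the two-sided inequality $-a(c+1)\le -K\cos 2\theta\le a(c-1)$. The remaining step is to solve this for $a$ in each regime of $c$: dividing by the factors $c\pm1$ preserves or reverses the inequalities according to their signs, yielding \eqref{eq: supersonic a condition} for $c>1$, \eqref{eq: supersonic a condition 2} for $c<-1$, and \eqref{eq: subsonic equilibria weight condition} for $-1<c<1$; in the last case nonemptiness of the admissible $a$-interval forces the compatibility condition \eqref{eq: c cos 2 theta geq 0}.

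I do not anticipate a genuine analytic obstacle, since {\sc Lemma} \ref{lemma: real part bounded} and {\sc Proposition} \ref{prop:Re_sig} supply the hard estimates and the proposition is essentially a collation of them. The one place requiring care is the sign bookkeeping when unfolding $|a-K\cos 2\theta|\le ac$ into the three $c$-regimes, together with the subsonic nonemptiness condition \eqref{eq: c cos 2 theta geq 0}; and, at the level of principle, the point that for these constant-coefficient operators the spectrum is exhausted by the dispersion curves, so that the scalar condition $\sup_k\Re\lambda^\pm(k)\le 0$ genuinely characterizes stability.
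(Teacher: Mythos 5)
Your proposal follows the paper's own route essentially verbatim: the proposition there is stated as a summary of the preceding constant-coefficient analysis, in which the Fourier transform reduces the $L^2_a$-spectrum to the closure of the union of the dispersion branches (no separate point spectrum for constant-coefficient operators), {\sc Lemma} \ref{lemma: real part bounded} and {\sc Proposition} \ref{prop:Re_sig} supply $\sup\Re\sigma(L_{\theta,a})=-ac+\big|a-K\cos 2\theta\big|$, and Part 2 is exactly the sign bookkeeping you describe. I checked your unfolding of $|a-K\cos 2\theta|\le ac$ into \eqref{eq: supersonic a condition}, \eqref{eq: supersonic a condition 2} and \eqref{eq: subsonic equilibria weight condition}, together with the nonemptiness condition \eqref{eq: c cos 2 theta geq 0}; it is correct. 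For the trivial equilibrium you are in fact more explicit than the paper, which only asserts ``it can be seen that'' stability holds iff $a=0$: your factorization $(\ii k-a)^2-1=(\ii k-(a+1))(\ii k-(a-1))$ and application of {\sc Lemma} \ref{lemma: real part bounded} with $\alpha=a+1$, $\beta=a-1$ correctly give $\sup_k\Re\sigma(L_{O,a})=-ac+|a|$.

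The one substantive wrinkle is in Part 1, and your parenthetical ``(subsonic)'' is silently doing real work there. Your criterion $|a|\le ac$ forces $a=0$ only when $|c|<1$. For $c>1$ every $a\ge 0$ satisfies it, and indeed $\sup\Re\sigma(L_{O,a})=a(1-c)<0$ for $a>0$: with $c=2$, $a=1$ the whole spectrum lies strictly in the open left half-plane, so the trivial equilibrium is $L^2_a$-spectrally stable with $a\neq 0$ (symmetrically, $a\le 0$ works for $c<-1$). Hence the ``only if'' in Part 1, as stated for all $c\neq\pm 1$, cannot be proved --- your own computation refutes it in supersonic frames --- and your argument establishes Part 1 only in the subsonic regime $|c|<1$, which is where the trivial state actually arises (as the $x\to-\infty$ tail of a kink) and the only place the paper uses it. To match the proposition verbatim you should either restrict Part 1 to $|c|<1$ or replace ``iff $a=0$'' by the correct characterization ``iff $|a|\le ac$'' (i.e.\ $a=0$ for $|c|<1$, $a\ge 0$ for $c>1$, $a\le 0$ for $c<-1$); state this caveat explicitly rather than leaving it in a parenthesis, since as written the claim you prove and the claim in the statement differ for $|c|>1$. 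Nothing downstream in the paper is affected by this discrepancy.
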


\section{Strategy for studying TWS stability}
The spectrum of a closed operator $L$ on a
Banach space can be uniquely decomposed into two disjoint subsets of $\mathbbm{C} $\cite{KaPr13}: 
the \textbf{essential spectrum} $\sigma_\mr{e}(L)$ and the \textbf{discrete spectrum} $\sigma_\mr{d}(L)$:
$\sigma(L) = \sigma_\mr{e}(L) \cup \sigma_\mr{d}(L)$. 
In particular, for $L_{w,*}$, the linearization about a traveling wave solution, $b_*(x)$, we have
\[
\sigma(L_{w,*}) = \sigma_\mr{e}(L_{w,*}) \cup \sigma_\mr{d}(L_{w,*}).\]
So $\sigma(L_{w,*})$ is contained in the closed left-half plane (and hence  $b_*(x)$ is spectrally stable)
if and only if both $\sigma_\mr{e}(L_{w,*})$ and $\sigma_\mr{d}(L_{w,*})$ are both contained in the closed left-half plane.

For heteroclinic traveling wave solutions, where
$b_*(x)$  approaches  equilibria as $x \to \pm \infty$, the corresponding  {\bf right-} and {\bf left-asymptotic linearized operators} are, 
formally, given by:
\begin{equation}
\label{eq: L pm}
    L_\pm = \lim_{x\to\pm\infty}L_{*,w},
\end{equation} 
$L_+\ne L_-$ due to the heteroclinic nature of our traveling waves. 

Introduce  the  piecewise constant-coefficient {\bf asymptotic operator} which transitions between $L_-$ and $L_+$ across $x=0$:
\begin{equation}
\label{eq: L infty}
    L_\infty =  \mathbbm 1_{x\leq 0}L_- +  \mathbbm 1_{x >0}L_+.
\end{equation}
Since $L_{*,w}-L_\infty$ is spatially well-localized, by Weyl's theorem on the invariance of essential spectrum under relatively compact perturbations, we have 
\[ \sigma_\mr{e} \big(L_{*,w}\big) = \sigma_\mr{e} \big( L_\infty \big);\] see {\sc proposition} \ref{prop: essential spectra identical}.
Therefore
\begin{equation}
\label{eq: decomposition}
    \sigma\big(L_{*,w} \big) = \sigma_\mr{e}\big(L_\infty\big) \cup \sigma_\mr{d}\big(L_{*,w} \big)
\end{equation} 
The following result is used to located the maximum real part over the essential spectrum of $L_{*,w} $ in terms of the operators $L_+$ and $L_-$.
\begin{proposition}
\label{prop: essential spectrum qualitative}
Let $b_*(x)$ denote any TWS with speed $c$, 
which is asymptotic to spatially equilibria as $x\rightarrow \pm \infty$. 
Denote by $L_{*,w}$, the operator obtained by conjugating the linearized operator with the weight $W(x)=e^{w(x)}$ of the exponential type;
see section \ref{sec: weighted spaces} and \eqref{eq: conjugate operator}.  
Finally, denote by $L_\pm$ (where we suppress the dependence on $w(x)$), the constant coefficient asymptotic operators; see \eqref{eq: L pm}.
Then we have
\begin{equation}
\label{eq: L pm spectra in essential spectrum}
    \sigma\big(L_+\big) \cup \sigma \big(L_- \big) \subset \sigma_\mr{e}(L_{*,w})
\end{equation}
and 
\begin{equation}
\label{eq: sup essential spectrum L * w}
    \sup \Re \sigma_\mr{e}(L_{*,w}) = \max\big\{ \sup \Re \sigma(L_+), \sup \Re \sigma(L_-)\big\}
\end{equation}
\end{proposition}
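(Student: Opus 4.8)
The plan is to reduce everything to the constant-coefficient asymptotic operators $L_\pm$ and the piecewise-constant operator $L_\infty$, exploiting the already-established identity $\sigma_{\mathrm{e}}(L_{*,w}) = \sigma_{\mathrm{e}}(L_\infty)$ (Weyl's theorem; see {\sc proposition} \ref{prop: essential spectra identical}). Since $L_\pm = \Sigma(\partial_x - a_\pm) + A_\pm$, with $A_\pm = \lim_{x\to\pm\infty} A_*(x)$, are constant-coefficient first-order systems, their $L^2$-spectra are purely essential and coincide with the closures of the dispersion curves swept out by the eigenvalues of the Fourier symbols $\widehat{L}_\pm(k) = (\ii k - a_\pm)\Sigma + A_\pm$, $k \in \mathbb{R}$, computed exactly as in \eqref{eq: lambda theta a}.

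For the inclusion \eqref{eq: L pm spectra in essential spectrum} I would construct singular Weyl sequences for $L_\infty$. Fix $\lambda \in \sigma(L_+)$; by the symbol computation there is a real $k_0$ and $0 \neq \xi \in \mathbb{C}^2$ with $\widehat{L}_+(k_0)\xi = \lambda\xi$, so that $e^{\ii k_0 x}\xi$ solves $(L_+ - \lambda)(\,\cdot\,) = 0$. Set $\psi_n(x) = \chi_n(x)\, e^{\ii k_0 x}\xi$, where $\chi_n$ is a smooth cutoff supported in $[n,3n]$ (a region in which $L_\infty = L_+$) with $\|\chi_n'\|_{L^2}/\|\chi_n\|_{L^2} \to 0$. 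Then $(L_\infty - \lambda)\psi_n = \Sigma \chi_n'\, e^{\ii k_0 x}\xi$, so $\|(L_\infty-\lambda)\psi_n\|_{L^2}/\|\psi_n\|_{L^2} \to 0$, while $\psi_n \rightharpoonup 0$ since the supports march off to $+\infty$. Hence $\lambda \in \sigma_{\mathrm{e}}(L_\infty)$, and closedness of the essential spectrum upgrades this to all of $\sigma(L_+)$; the mirror construction on $[-3n,-n]$ handles $\sigma(L_-)$. With $\sigma_{\mathrm{e}}(L_\infty) = \sigma_{\mathrm{e}}(L_{*,w})$ this proves \eqref{eq: L pm spectra in essential spectrum}, and in particular yields the lower bound $\sup\Re\sigma_{\mathrm{e}}(L_{*,w}) \ge \max\{\sup\Re\sigma(L_+), \sup\Re\sigma(L_-)\}$.

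For the matching upper bound in \eqref{eq: sup essential spectrum L * w}, write $m = \max\{\sup\Re\sigma(L_+), \sup\Re\sigma(L_-)\}$ and take $\lambda$ with $\Re\lambda > m$. Rewriting $L_\infty B = \lambda B$ as a first-order system $B' = M(x;\lambda)B$ with piecewise-constant asymptotic matrices $M_\pm(\lambda) = \Sigma^{-1}(\lambda - A_\pm) + a_\pm I$ (here $\Sigma^{-1}$ exists since $\det\Sigma = c^2-1 \neq 0$), the condition $\Re\lambda > m$ forces $\lambda \notin \sigma(L_+)\cup\sigma(L_-)$, so neither $M_+(\lambda)$ nor $M_-(\lambda)$ has a purely imaginary eigenvalue, i.e.\ both are hyperbolic. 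By the exponential-dichotomy / Fredholm-border theory for asymptotically constant first-order systems \cite{KaPr13}, $L_\infty - \lambda$ is then Fredholm, with index locally constant on the connected open half-plane $\{\Re\lambda > m\}$. Since $\Sigma$ is real symmetric, $\Sigma\partial_x$ is skew-adjoint on $L^2$, so the estimate $\Re\langle (L_\infty-\lambda)B, B\rangle \le (C - \Re\lambda)\|B\|_{L^2}^2$, with $C = \max_{\pm}\|A_\pm - a_\pm\Sigma\|$, shows $L_\infty - \lambda$ is boundedly invertible (index $0$) once $\Re\lambda$ is large. By constancy of the index on the connected half-plane, the index is $0$ throughout $\{\Re\lambda > m\}$, whence no point there lies in $\sigma_{\mathrm{e}}(L_\infty)$. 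Thus $\sup\Re\sigma_{\mathrm{e}}(L_\infty) \le m$, and combining with the lower bound gives \eqref{eq: sup essential spectrum L * w}.

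I expect the main obstacle to be the upper bound: rigorously invoking the Fredholm-border/exponential-dichotomy machinery to conclude that $L_\infty - \lambda$ is Fredholm off the dispersion curves with index constant on connected components, and pinning the index to $0$ via the far-field invertibility estimate. The Weyl-sequence lower bound is comparatively elementary, needing only the symbol computation and a standard stretched-cutoff argument.
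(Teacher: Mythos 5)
Your proposal is correct, and it shares the paper's overall skeleton---reduce to the piecewise-constant operator $L_\infty$ via the relative-compactness/Weyl identity $\sigma_{\mathrm e}(L_{*,w})=\sigma_{\mathrm e}(L_\infty)$---but it replaces the paper's citations with direct arguments. The paper's proof ({\sc appendix} \ref{app: proof of prop: essential spectrum qualitative}) adapts Kapitula--Promislow: {\sc Theorem} \ref{thm: essential spectrum of L infty} characterizes $\sigma_{\mathrm e}(L_\infty)$ as $\sigma(L_+)\cup\sigma(L_-)$ together with the regions of nonzero Morse index, and {\sc Theorem} \ref{thm: border of essential spectrum} (the border of $\sigma_{\mathrm e}(L_\infty)$ lies in $\sigma(L_+)\cup\sigma(L_-)$, and complementary components are all-or-nothing) combined with invertibility for $\Re\lambda\gg 1$ yields \eqref{eq: sup essential spectrum L * w}. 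You instead prove the inclusion \eqref{eq: L pm spectra in essential spectrum} by an explicit stretched-cutoff singular Weyl sequence, which is more elementary and in fact shows $L_\infty-\lambda$ fails to be semi-Fredholm, so the conclusion holds under every standard definition of essential spectrum for non-self-adjoint operators; and you obtain the upper bound from the same hyperbolicity/exponential-dichotomy Fredholm theory the paper cites, but you make explicit how the index is pinned to zero on the half-plane $\{\Re\lambda>m\}$: local constancy of the index on that connected set, plus the coercivity estimate $\Re\langle(L_\infty-\lambda)B,B\rangle\le (C-\Re\lambda)\|B\|_{L^2}^2$ coming from skew-adjointness of $\Sigma\partial_x$. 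Two small details to record when writing this up, neither a gap: (i) the displayed coercivity estimate gives injectivity and closed range, and to conclude bounded invertibility at large $\Re\lambda$ you also need the identical estimate for the adjoint, which holds verbatim since $(\Sigma\partial_x)^*=-\Sigma\partial_x$ and the zeroth-order coefficients are bounded; (ii) your appeal to closedness of the essential spectrum to capture limit points of the dispersion curves (e.g.\ as $k\to\pm\infty$) is exactly right, since $\sigma(L_\pm)$ is the closure of the curves swept out by \eqref{eq: lambda theta a}, and only points on the curves themselves admit the plane-wave solutions used in your cutoff construction.
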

We sketch the proof of Proposition \ref{prop: essential spectrum qualitative} in {\sc appendix} \ref{app: proof of prop: essential spectrum qualitative}. It is a consequence of 
Proposition \ref{prop: essential spectrum qualitative} and a direct application of the theory on the essential spectra of asymptotically constant differential operators \cite[Chapter 3]{KaPr13}.

Before proceeding with a detailed stability analysis,
we note that the linearized spectra of pairs of heteroclinic traveling waves which are related by discrete symmetries,
in {\sc propositions} \ref{prop: discrete symmetries} and \ref{prop: discrete symmetries TWS},
also have simple relations. 
\begin{theorem}
[Discrete symmetry of linearized spectra] 
\label{thm: discrete symmetry of spectra}
Let $ b_*$ be a TWS of speed $c$ and conserved quantity $E_c[b] = E$, see \eqref{eq: E c}.
Let $L_{*,w}$ be the weight-conjugated linearized operator of $b_*$ with weight $W(x) = e^{w(x)}$. 
\begin{enumerate}[label=(\roman*)]
    \item Let
    $b_\bullet = \mathcal P b_*$.  Then, $b_\bullet$ is the profile of a TWS of the same speed $c$ and conserved quantity $E_c = E$.
    Moreover, 
    $L_{\bullet,w}= L_{*,w}$ and hence $\sigma(L_{\bullet,w})=\sigma(L_{*,w})$
    \item Let $b_\bullet = \mathcal {TC} b_* = \begin{bmatrix} v_*(-x) & -u_*(-x) \end{bmatrix}$.
    Then, $b_\bullet$ is the profile of a TWS of speed $-c$ with conserved quantity $E_c = -E$.
    Let $\tilde w(x) = w(-x)$. Then, 
 $L_{\bullet,\tilde w} = L_{*,w}$, and hence
  the $\sigma(L_{\bullet,\tilde w})=\sigma(L_{*,w})$.
\end{enumerate}
\end{theorem}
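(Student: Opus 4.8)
The plan is to dispatch the ancillary claims—that $b_\bullet$ is again a traveling wave of the stated speed and conserved quantity—by direct appeal to {\sc Proposition} \ref{prop: discrete symmetries TWS}: part (i) is exactly the parity statement \ref{prop: TWS P symmetry parity}, while part (ii) follows by composing the conjugation statement \ref{prop: TWS C symmetry conjugacy} (profile $\sigma_1 b_*(-x)$, speed $c$, energy $E$) with the time-reversal statement \ref{prop: TWS T symmetry time reversal} (which sends $c \mapsto -c$, $E \mapsto -E$ and left-multiplies the profile by $\sigma_3$), giving profile $\sigma_3\sigma_1 b_*(-x) = [v_*(-x),\,-u_*(-x)]$. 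The substantive content is the assertion about the linearized operators, which I would establish by explicit computation with the matrix $A_*$ in \eqref{eq: A *}.

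For part (i), I would substitute $b_\bullet = -b_*$ into \eqref{eq: A *}. Since $r_\bullet^2 = u_\bullet^2 + v_\bullet^2 = r_*^2$ and each entry of $A_*$ is either a product of two components of $b_*$ or a function of $r_*^2$ alone, every entry is invariant under $b_* \mapsto -b_*$; hence $A_\bullet = A_*$ pointwise. Because $b_\bullet$ travels with the same speed $c$, the transport matrix $\Sigma$ and the weight $w$ are unchanged, so \eqref{eq: conjugate operator} yields the genuine operator identity $L_{\bullet,w} = L_{*,w}$, and equality of spectra is immediate.

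Part (ii) is the main obstacle, because $\mathcal{TC}$ combines a spatial reflection (from $\mathcal C$) with a time reversal (from $\mathcal T$), and the time reversal flips the sign of the speed, so the transport matrix for $b_\bullet$ is $\Sigma_\bullet := \Sigma|_{c\mapsto -c}$ rather than $\Sigma$. I therefore do not expect literal equality of the two operators but an isospectral conjugation; the ``$=$'' in the statement is to be read up to this canonical unitary equivalence. Concretely, let $R$ denote the reflection $(Rf)(x) = f(-x)$, a unitary involution on $L^2(\mathbb R;\mathbb C^2)$. Using $R\,\pd_x\,R = -\pd_x$ and $\tilde w(x) = w(-x)$, a short computation gives
\begin{equation*}
R\, L_{*,w}\, R \;=\; -\Sigma\big(\pd_x - \tilde w'(x)\big) + A_*(-x).
\end{equation*}
The key algebraic step is then to find the constant matrix reconciling this with $L_{\bullet,\tilde w} = \Sigma_\bullet(\pd_x - \tilde w'(x)) + A_\bullet(x)$. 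I claim the correct choice is $J := \ii\sigma_2$ (the real rotation by $\pi/2$): conjugation by $J$ sends $-\Sigma \mapsto \Sigma_\bullet$, and a direct check of the four entries of \eqref{eq: A *}, using $u_\bullet(x) = v_*(-x)$, $v_\bullet(x) = -u_*(-x)$ and $r_\bullet^2(x) = r_*^2(-x)$, shows $J\, A_*(-x)\, J^{-1} = A_\bullet(x)$.

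Combining these, and noting that $J$ (acting pointwise on $\mathbb C^2$) commutes with $R$, I would conclude $L_{\bullet,\tilde w} = (JR)\, L_{*,w}\, (JR)^{-1}$ with $JR$ unitary on $L^2(\mathbb R;\mathbb C^2)$, whence $\sigma(L_{\bullet,\tilde w}) = \sigma(L_{*,w})$. The one point requiring care is the sign bookkeeping: verifying that the reflection-induced sign on $\pd_x$ and on $\tilde w'$, the speed flip $c\mapsto -c$, and the off-diagonal sign pattern produced by conjugation by $\ii\sigma_2$ all cancel consistently in every matrix entry. This is where I expect the argument to be delicate, though ultimately mechanical.
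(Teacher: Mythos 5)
Your proposal is correct, and it is worth noting that the paper itself offers no proof of this theorem (it is left implicit, in the same spirit as the omitted proof of {\sc Proposition} \ref{prop: discrete symmetries}), so your computation supplies precisely the missing argument. Part (i) is right and genuinely a literal operator identity: every entry of $A_*$ in \eqref{eq: A *} is quadratic in $(u_*,v_*)$ or depends only on $r_*^2$, so $\mathcal P$ leaves $A_*$, $\Sigma$ and $w$ untouched. For part (ii) you correctly diagnose that the ``$=$'' in the statement cannot be literal equality when $c\neq 0$, since the principal parts $\Sigma = c\sigma_0+\sigma_1$ and $\Sigma_\bullet = -c\sigma_0+\sigma_1$ differ; the assertion must be read as unitary equivalence, which is all that the spectral conclusion requires. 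Your conjugation checks out: with $R$ the reflection and $\tilde w(x)=w(-x)$ one has $R\,L_{*,w}\,R = -\Sigma(\pd_x-\tilde w'(x))+A_*(-x)$, and with $J=\ii\sigma_2$ the anticommutation relations give $J(-\Sigma)J^{-1} = -c\sigma_0+\sigma_1 = \Sigma_\bullet$, while for a generic matrix $A=\begin{bmatrix} a & b\\ c' & d\end{bmatrix}$ one computes $JAJ^{-1} = \begin{bmatrix} d & -c'\\ -b & a \end{bmatrix}$; applying this to $A_*(-x)$ and comparing entrywise with \eqref{eq: A *} evaluated on $u_\bullet(x)=v_*(-x)$, $v_\bullet(x)=-u_*(-x)$, $r_\bullet^2(x)=r_*^2(-x)$ confirms $J\,A_*(-x)\,J^{-1}=A_\bullet(x)$ in all four entries. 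Since $J$ acts pointwise on $\mathbbm C^2$ it commutes with $R$, and $JR$ is unitary on $L^2(\mathbbm R;\mathbbm C^2)$, so $\sigma(L_{\bullet,\tilde w})=\sigma(L_{*,w})$ follows. Your reduction of the ancillary profile/speed/energy claims to {\sc Proposition} \ref{prop: discrete symmetries TWS}, including $\sigma_3\sigma_1 b_*(-x)=\begin{bmatrix} v_*(-x) & -u_*(-x)\end{bmatrix}^\mathsf T$, is also correct. The only stylistic alternative would have been the more conceptual route--observing that the symmetry transformations map solutions of the linearized flow about $b_*$ to solutions of the linearized flow about $b_\bullet$--but your direct matrix computation is complete and, by making the intertwining operator $JR$ explicit, actually sharpens the statement as printed.
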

\begin{remark}\label{rem:pulse2pulse}
{\sc Theorem} \ref{thm: discrete symmetry of spectra} is convenient since it reduces checking the spectral stability of TWSs to checking that of representative ones.
In particular, for supersonic pulses with speed $\pm c$ we only need to work with $b_{c,E}$ and $\mathcal T b_{-c,-E}$,
corresponding to the solid blue line and the solid green line in {\sc figure} \ref{fig: supersonic}.
Other supersonic pulses with speed 
$\pm c$ schematically shown in {\sc figure} \ref{fig: supersonic} can be obtained by acting $\mathcal P$ and $\mathcal T C$ on these two representative solutions. 
Note that pulse solutions are all invariant under $\mathcal C$.
On the other hand, for kinks, we only need to work with $b_{c,0}$ kink with $c\geq0$, represented by the solid blue line in 
{\sc figure} \ref{fig: subsonic kink c > 0}.
For details of how heteroclinic traveling waves transform under discrete symmetries, 
see {\sc appendix} \ref{app: classification}.
\end{remark}

\section{Supersonic pulses}
\label{sec: pulse stability} 
By Remark \ref{rem:pulse2pulse}, we need  only study the two supersonic pulses corresponding to the solid blue and green trajectories in {\sc figure}  \ref{fig: supersonic}. Recall $K = - \mathcal N'(1)>0$; see \eqref{eq: K}.
\begin{theorem}
[Spectral stability for supersonic pulses]
\label{thm: spectral stability of supersonic pulses}
Let $ b_*(x)$ be a supersonic pulse of speed $c > 1$ and corresponding to a trajectory of \eqref{eq: TWS profile} with phase portrait energy $E_c = E$ for $c>1$, 
marked with \underline{solid} blue or \underline{solid} green lines in {\sc figure} \ref{fig: supersonic}.
Then,
\begin{enumerate}
    \item 
$b_*$ is spectrally stable in  $L^2_a$, 
{\it i.e.} $\sup\Re\sigma(L_{*,a})\le0$, 
if and only if 
\begin{equation}
\label{eq: supersonic a condition 1}
    a\geq \frac{K}{c - 1} \sqrt{1 - \big(E-c\big)^2} = \frac{c+1}{2}\gamma>\gamma; 
\end{equation}
see \eqref{eq: supersonic translation mode decay}. 
In fact, 
\[
    \sup \Re \sigma\big( L_{*,a} \big) = - a \big( c - 1 \big) + K \sqrt{1 - \big(E-c\big)^2} \leq 0
\]
Hence, if strict inequality holds in \eqref{eq: supersonic a condition 1}, then $\sup\Re\sigma(L_{*,a})<0$.
\item Assume $E\ne c\pm 1$, or $E = c \pm 1$ and $a \neq 0$, then, $\sigma(L_{*,a})$ is always in the \underline{open} left-half plane;
the supremum is not attained. 
If $E = c \pm 1$ and $a = 0$, 
$\sigma(L_{*,a})$ is in the \textit{closed} left-half plane and not in the open left-half plane.
\item $0\notin\sigma(L_{*,a})$. In particular, the translation mode: $e^{ax}\ \partial_xb_*$, which satisfies $L_{*,a} \left( e^{ax}\ \partial_xb_*\right)=0$, is not an $L^2(\mathbb{R})$ solution of  $L_{*,a}Y=0$.
\item For supersonic pulses with $c<-1$ we have similar results by {\sc Theorem} \ref{thm: discrete symmetry of spectra} and {\sc remark} \ref{rem:pulse2pulse}.
\end{enumerate}
\end{theorem}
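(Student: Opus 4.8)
The plan is to use the spectral decomposition \eqref{eq: decomposition}, $\sigma(L_{*,a}) = \sigma_{\mathrm{e}}(L_\infty)\cup\sigma_{\mathrm{d}}(L_{*,a})$, computing the essential-spectrum bound sharply from the two asymptotic equilibria and then showing that the discrete spectrum contributes nothing to the right of it. First I would pin down the asymptotic states: a supersonic pulse of speed $c>1$ and energy $E$ connects the unit-circle points $(\cos\theta,\sin\theta)$ and $(\sin\theta,\cos\theta)$ with $\sin2\theta = E-c$; see {\sc Section} \ref{sec:theta} and \eqref{eq:4pts}. These carry angles $\theta$ and $\pi/2-\theta$, so their $\cos2\theta$ values are $+\kappa$ and $-\kappa$, where $\kappa := \sqrt{1-(E-c)^2}\ge0$. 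Since both ends are nontrivial equilibria, the relevant weight is the single-rate $W(x)=e^{ax}$.

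For the essential spectrum I would apply {\sc Proposition} \ref{prop: essential spectrum qualitative}: $\sup\Re\sigma_{\mathrm{e}}(L_{*,a}) = \max\{\sup\Re\sigma(L_+),\sup\Re\sigma(L_-)\}$, where $L_\pm$ are the weight-conjugated constant-coefficient operators \eqref{eq: L theta a} at the two asymptotic equilibria. {\sc Proposition} \ref{prop:Re_sig} (eq.\ \eqref{eq: sup equil spectrum}) gives $\sup\Re\sigma(L_{\theta,a}) = -ac + |a - K\cos2\theta|$. Substituting $\cos2\theta = \pm\kappa$ and using $a\ge0$, $\kappa\ge0$, one has $|a+K\kappa| = a+K\kappa \ge |a-K\kappa|$, so the maximum is realized by the equilibrium with $\cos2\theta=-\kappa$ and equals $-a(c-1)+K\sqrt{1-(E-c)^2}$, the asserted value. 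Requiring it $\le0$ gives $a\ge \frac{K}{c-1}\sqrt{1-(E-c)^2}$, and the identity with $\frac{c+1}{2}\gamma$ is immediate from the definition of $\gamma$ in \eqref{eq: supersonic pulse asymptotic rate}, with $\frac{c+1}{2}>1$ yielding $>\gamma$. Conversely, if $a$ falls below this threshold the limiting value exceeds zero, so finite-$k$ points of the dispersion curve have positive real part and the pulse is unstable; this gives the ``only if''.

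The crux is to show $L_{*,a}$ has no point spectrum in $\{\Re\lambda > \sup\Re\sigma_{\mathrm{e}}\}$, which upgrades the essential-spectrum bound to the full spectrum. I would recast the eigenvalue equation $L_{*,a}Y=\lambda Y$ as the first-order system $Y' = \Sigma^{-1}(\lambda I + a\Sigma - A_*(x))Y =: M(x;\lambda)Y$, with $M(x;\lambda)\to M_\pm(\lambda)$ exponentially as $x\to\pm\infty$; for $\lambda$ to the right of the essential spectrum the limiting matrices are hyperbolic with a consistent stable/unstable splitting. One then forms the Evans function $D(\lambda)$ as the Wronskian of the $+\infty$-decaying and $-\infty$-decaying solutions and shows $D(\lambda)\ne0$ throughout the region, e.g.\ by computing its nonzero limit as $\Re\lambda\to+\infty$ (where the $x$-dependent part of $M$ is negligible) together with a homotopy/winding argument. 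This is precisely {\sc Proposition} \ref{prop:no_disc-super}, and it is the main obstacle: the naive energy estimate from pairing $L_{*,a}Y$ with $Y$ only controls $\Re\langle A_*Y,Y\rangle$ by $\sup_x\big(-\mathcal N'(r_*^2)r_*^2\big)\ge K$, yielding the non-sharp threshold $a\ge K/(c-1)$ rather than the sharp $a\ge \frac{K}{c-1}\sqrt{1-(E-c)^2}$. Granting it, $\sigma(L_{*,a})$ coincides with $\sigma_{\mathrm{e}}$ to the right, proving part (1).

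For part (2), {\sc Lemma} \ref{lemma: real part bounded} and {\sc Proposition} \ref{prop:Re_sig} show that $\sup\Re\sigma(L_{\theta,a})$ is approached only as $k\to\infty$ exactly when $a - K\cos2\theta\ne0$; since $\kappa\ne0$ iff $E\ne c\pm1$, the dominant equilibrium satisfies $a - K\cos2\theta = a+K\kappa\ne0$ for $a\ge0$, so the supremum is not attained and at the critical weight the spectrum still lies in the open left half-plane; only the degenerate tangency case $E=c\pm1$ (so $\kappa=0$, $\theta=\pm\pi/4$) with $a=0$ gives $a-K\cos2\theta=0$, whence $\sigma\subset\ii\mathbb{R}$. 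For part (3), the translation mode solves $L_*\partial_xb_*=0$, so $e^{ax}\partial_xb_*$ formally solves $L_{*,a}(\cdot)=0$; but by \eqref{eq: supersonic translation mode decay} $\partial_xb_*\sim e^{-\gamma|x|}$, so $e^{ax}\partial_xb_*\sim e^{(a-\gamma)x}$ as $x\to+\infty$ leaves $L^2$ once $a>\gamma$, which holds throughout the stability range since $a\ge\frac{c+1}{2}\gamma>\gamma$; combined with $0\notin\sigma_{\mathrm{e}}$ (the bound $\sup\Re\sigma_{\mathrm{e}}\le0$ being approached only at $k\to\infty$) and the absence of discrete spectrum, this gives $0\notin\sigma(L_{*,a})$. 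Finally part (4) follows by applying the symmetry $\mathcal{TC}$ and {\sc Theorem} \ref{thm: discrete symmetry of spectra}, which carries the $c>1$ spectrum onto the $c<-1$ spectrum with $w(x)\mapsto w(-x)$, so every conclusion transfers verbatim.
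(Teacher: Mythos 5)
Your essential-spectrum computation and your parts (2)--(4) track the paper's proof exactly: {\sc Proposition}~\ref{prop: essential spectrum qualitative} plus {\sc Proposition}~\ref{prop:Re_sig} applied at the asymptotic angles $\theta$ and $\pm\pi/2-\theta$ give $\sup\Re\sigma_{\mathrm e}(L_{*,a})=-a(c-1)+K\sqrt{1-(E-c)^2}$, your attainment discussion in the cases $E=c\pm1$, $a=0$ is the paper's {\sc Proposition}~\ref{prop: supersonic ess spec}, and your treatment of the translation mode and of $c<-1$ matches the paper. The genuine gap is the discrete-spectrum step, and it is not only that you defer it (``Granting it''): the Evans-function scaffolding you propose rests on a false premise. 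You assume that for $\Re\lambda>\sup\Re\sigma_{\mathrm e}(L_{*,a})$ the limiting matrices admit ``a consistent stable/unstable splitting'' and would form $D(\lambda)$ as the Wronskian of a $+\infty$-decaying and a $-\infty$-decaying solution. For a supersonic pulse there is no splitting with a nontrivial stable part at $+\infty$: in this entire region \emph{both} roots $\mu_\pm(\lambda)$ of \eqref{eq:mu_pm}, i.e.\ both spatial eigenvalues of $\mathcal A_+(\lambda)$, have strictly positive real part (property ($\mathcal A$3) in the paper's proof; cf.\ the asymptotics $\mu_+\sim\lambda/(c+1)$, $\mu_-\sim\lambda/(c-1)$ of \eqref{eq: supersonic mu large lambda} --- both characteristics are outrun by a frame of speed $c>1$). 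Hence there is no $+\infty$-decaying solution to place in your Wronskian, and the nonzero-limit-plus-winding computation you sketch has no object to act on.

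This degenerate splitting is in fact the whole proof, and it makes the step easier than you anticipated: once $\mathcal A_+(\lambda)$ is known to have two eigenvalues of positive real part, every nontrivial solution of the first-order system \eqref{eq: supersonic eigenfunction} grows exponentially as $x\to+\infty$ (including the Jordan-block case, where one solution grows like $x\,e^{\tilde\mu x}$), so no $L^2$ eigenfunction exists and $\sigma_{\mathrm d}(L_{*,a})$ is empty to the right of the essential spectrum --- no Evans function and no homotopy are needed. What actually requires proof is the two-unstable-roots claim, and the paper establishes it with ingredients you already have: $m(\lambda)=\min\{\Re\mu_+(\lambda),\Re\mu_-(\lambda)\}$ is continuous on $\mathbb C$ (branch points included); $m>0$ for large real $\lambda$ by \eqref{eq: supersonic mu large lambda}; and if $m(\hat\lambda)=0$ at some $\hat\lambda$ with $\Re\hat\lambda>\sup\Re\sigma_{\mathrm e}(L_{*,a})$, then $\mathcal A_+(\hat\lambda)$ has a purely imaginary eigenvalue $\ii\hat\xi$, producing a bounded plane-wave solution of $(\hat\lambda I-L_+)Y=0$, whence $\hat\lambda\in\sigma(L_+)\subset\sigma_{\mathrm e}(L_{*,a})$ by \eqref{eq: L pm spectra in essential spectrum} --- a contradiction. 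Your side remark that a naive $L^2$ energy pairing only yields the non-sharp threshold $a\ge K/(c-1)$ is correct, and it is precisely why the argument must go through this spatial-eigenvalue route; but as written your proposal leaves the decisive {\sc Proposition}~\ref{prop:no_disc-super} unproved and mis-describes the structure that makes it true.
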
 
We now proceed with the proof of Theorem \ref{thm: spectral stability of supersonic pulses}. 
We have the decomposition $\sigma(L_{*,a}) = \sigma_\mr{e}(L_{*,a}) \cup \sigma_\mr{d}(L_{*,a})$.
We will prove, for an appropriate choice of   weight $e^{ax}$, that both the {\it essential spectrum} and  the {\it discrete spectrum} 
of operator $L_{*,a}$
are contained in the \textit{open} left-half plane,  except when $E = c \pm 1$ and $a = 0$, when $\sigma(L_{*,a})$ is contained in the closed left-half plane.

\subsection{Essential spectrum for supersonic pulses}\label{sec:ess-spec-super}
The essential spectrum is determined by the operator $L_{*,a}$ evaluated on its asymptotic equilibria; in particular, 
we have the expression on the supremum of its essential spectrum; 
see {\sc proposition} \ref{prop: essential spectrum qualitative}.
For pulses, $b_*(x) = \big[u_*(x),v_*(x) \big] = b_{c,E}$ or $\mathcal T b_{-c,-E}$, are the representative profiles, see 
{\sc remark} \ref{rem:pulse2pulse}.
$b_* = b_{c,E}$ asymptotics to 
\begin{equation}
\label{eq: asymptotic equilibria supersonic pulse}
\begin{aligned}
    \big[
    u_*(-\infty), v_*(-\infty)\big] & = \begin{bmatrix} 
    \cos \theta & \sin \theta \end{bmatrix}\\
    \begin{bmatrix}
    u_*(\infty) & v_*(\infty)
    \end{bmatrix} &= \begin{bmatrix}
    \sin \theta & \cos \theta \end{bmatrix}
    = \begin{bmatrix}
    \cos \Big(\frac\pi2 - \theta\Big) & \sin \Big(\frac\pi2 - \theta\Big) \end{bmatrix}
\end{aligned}
\end{equation} 
while $b_* = \mathcal T b_{-c,-E}$ asymptotics to 
\begin{equation}
\label{eq: asymptotic equilibria supersonic pulse 1}
\begin{aligned}
    \big[
    u_*(-\infty), v_*(-\infty)\big] & = \begin{bmatrix} 
    \cos \theta & \sin \theta \end{bmatrix}\\
    \begin{bmatrix}
    u_*(\infty) & v_*(\infty)
    \end{bmatrix} &= \begin{bmatrix}
    - \sin \theta & - \cos \theta \end{bmatrix}
    = \begin{bmatrix}
    \cos \Big(- \frac\pi2 - \theta\Big) & \sin \Big(- \frac\pi2 - \theta\Big) \end{bmatrix}
\end{aligned}
\end{equation} 
where $\theta=\theta_{c,E} = \frac 12 \arcsin (E-c)$, consistent with definition of $\theta_{c,E}$ in {\sc section} \ref{sec:theta}.

By Proposition \ref{prop: essential spectrum qualitative} we have 
\[
\sup \Re \sigma_\mr{e}(L_{*,a}) = 
\max \big\{\sup \Re \sigma(L_-), \sup \Re \sigma(L_+)\big\}, 
\]
where $L_- = L_{\theta,a}$ and $L_+ = L_{\pm \frac{\pi}{2} - \theta,a}$ whose expressions are given in \eqref{eq: L theta a}, which we shall prove to be non-positive
for $a$ satisfying \eqref{eq: supersonic a condition 1}.
Note that the two choices of $L_+ = L_{\pm \frac\pi2 - \theta, a}$ are identical operators, since by their definition \eqref{eq: L theta a},
$L_{\pm \frac\pi2 - \theta, a}$ only depend on $\pm \frac\pi2 - \theta$ through cosine or sine of $ 2\Big(\pm \frac\pi2 - \theta\Big)$, which give the same values.

Now we apply {\sc proposition} \ref{prop: essential spectrum qualitative} to the weighted operator $L_{*,a}$ of supersonic pulse $b_*(x)$
to find the range of $a$ for both the spectra of $L_\pm$ to be contained in the \textit{open} left-half plane,
or to be on the imaginary axis.
By {\sc proposition} \ref{prop: equilibria stability}, $\sigma(L_-)$ is in the closed left-half plane if \eqref{eq: supersonic a condition} is satisfied:
\[
    a \geq \max \left\{- \frac{K \cos 2\theta} { c-1} ,  \frac{K\cos 2\theta}{c+1}\right\} = \frac{K \cos2\theta}{c+1}
\]
Note that $\cos2\theta \geq 0$. 
Similarly for $\sigma(L_+) = \sigma(L_{\pm\frac\pi2 - \theta, a})$, 
since $\cos 2 (\pm \pi/2 - \theta) = - \cos 2\theta \leq 0$, 
$\sup \Re \sigma(L_+) \leq 0$ if and only if 
\[
    a \geq \max \left\{- \frac{- K \cos 2\theta} { c-1} ,  \frac{- K\cos 2\theta}{c+1}\right\}= \frac{K \cos 2\theta}{c-1}
\]
again as a result of \eqref{eq: supersonic a condition}.
So 
\[
\sup \Re \sigma_\mr{e}(L_{*,a}) = 
\max \big\{\sup \Re \sigma(L_-), \sup \Re \sigma(L_+)\big\} \leq 0
\]
is equivalent to 
\[
a \geq \max \left\{ \frac{K \cos 2\theta}{c-1}, \frac{K \cos 2\theta}{c+1}\right\} 
 = \frac{K \cos2\theta}{c-1} = \frac{K}{c-1} \sqrt{1-(E-c)^2}
\]
since $\theta=\theta_{c,E} = \frac 12 \arcsin (E-c)$ and $\cos 2\theta = \sqrt{1-(E-c)^2}$.
This is exactly \eqref{eq: supersonic a condition 1}.
Moreover, the suprema of $\Re \sigma(L_-) \equiv \Re \sigma(L_{\theta, a})$ and $\Re \sigma(L_+) \equiv \Re \sigma(L_{\pm \frac\pi2 + \theta, a})$ satisfy the following inequality
\[
\sup \Re \sigma(L_{\theta, a}) = - a c - \big| a - K \cos 2\theta \big| 
\leq - a c + \big| a + K \cos 2\theta \big| \leq \sup \Re \sigma(L_{\pm \frac\pi2 + \theta, a})
\]
since $\cos2\theta = \sqrt{1- (E-c)^2} \geq 0$, and there is
\begin{equation}
\label{eq: ess spec of supersonic pulse is sup Re sigma L +}
\sup \Re\sigma_\mathrm{e} (L_{*,a}) = \Re \sigma(L_+).
\end{equation}
\underline{If $E\neq c \pm 1$},
the supremum is never achieved is a consequence of Proposition \ref{prop:Re_sig}, {\it i.e.}
 $\sup \sigma_\mathrm{e} (L_{*,a}) \subset \big\{ \Re z < 0\big\}$.
\underline{If $E = c \pm 1$}, then \eqref{eq: supersonic a condition 1} becomes $a \geq 0$. 
If $a = 0$, then from \eqref{eq: L theta a}, both $\sigma_\mathrm{e}\big(L_\pm\big)$ are both subsets the imaginary axis. So we have proved
\begin{proposition}
    \label{prop: supersonic ess spec}
    \begin{enumerate}
        \item 
   The essential spectrum of $L_{*,a}$, $\sigma_\mr{e}(L_{*,a})$, is contained in the closed left half plane, if and only if condition \eqref{eq: supersonic a condition 1} on the weight parameter $a$ is satisfied.
   \item Assume \eqref{eq: supersonic a condition 1}.
   $\sigma_\mr{e} (L_{*,a})$ is in the open left-half plane, if and only if additionally we have that either $E \neq c \pm 1$ or $a \neq 0$. 
    \item Assume \eqref{eq: supersonic a condition 1}. Then, $\sigma_\mr{e} (L_{*,a})$ is on the imaginary axis if and only if $E = c \pm 1$ and $a = 0$.
     \end{enumerate}
\end{proposition}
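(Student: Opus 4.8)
The plan is to reduce the computation of $\sigma_\mr{e}(L_{*,a})$ to its two constant-coefficient asymptotic operators and then read off the answer from the real-part bounds already in hand. First I would record, from \eqref{eq: asymptotic equilibria supersonic pulse}--\eqref{eq: asymptotic equilibria supersonic pulse 1}, that for either representative pulse the left- and right-asymptotic operators are $L_- = L_{\theta,a}$ and $L_+ = L_{\pm\pi/2-\theta,a}$, with $\theta = \theta_{c,E} = \tfrac12\arcsin(E-c)$, so that $\cos2\theta = \sqrt{1-(E-c)^2}\ge0$ while $L_+$ carries $\cos 2(\pm\pi/2-\theta) = -\cos2\theta$. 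By Proposition \ref{prop: essential spectrum qualitative}, $\sup\Re\sigma_\mr{e}(L_{*,a}) = \max\{\sup\Re\sigma(L_-),\sup\Re\sigma(L_+)\}$, and Proposition \ref{prop:Re_sig} evaluates the two terms as $-ac+|a-K\cos2\theta|$ and $-ac+|a+K\cos2\theta|$, respectively.

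For Part (1) I would use the elementary identity $\max\{|a-K\cos2\theta|,|a+K\cos2\theta|\} = |a|+K\cos2\theta$, valid since $K\cos2\theta\ge0$, to obtain
\[ \sup\Re\sigma_\mr{e}(L_{*,a}) = -ac + |a| + K\cos2\theta. \]
If $a<0$ this equals $-a(c+1)+K\cos2\theta>0$, so stability forces $a\ge0$; and for $a\ge0$ the expression becomes $-a(c-1)+K\sqrt{1-(E-c)^2}$, which is $\le0$ exactly when $a\ge\frac{K}{c-1}\sqrt{1-(E-c)^2}$. This is condition \eqref{eq: supersonic a condition 1}, and the maximizer is $L_+$, recovering \eqref{eq: ess spec of supersonic pulse is sup Re sigma L +}.

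Parts (2) and (3) then come down to deciding when the bound is \emph{attained}. If $E=c\pm1$ then $\cos2\theta=0$, the two asymptotic matrices in \eqref{eq: L theta a} coincide, so $L_+=L_-$ and $L_\infty$ is genuinely constant-coefficient; the stability condition reduces to $a\ge0$. For $a=0$ the dispersion relation \eqref{eq: lambda theta a} collapses to $\lambda^\pm(k)=\ii kc\pm \ii|k|\in \ii\mathbbm R$, giving $\sigma_\mr{e}(L_{*,a})\subset \ii\mathbbm R$ (Part (3)), whereas $a>0$ yields $\sup\Re\sigma_\mr{e} = -a(c-1)<0$. If instead $E\ne c\pm1$ then $\cos2\theta>0$ and \eqref{eq: supersonic a condition 1} forces $a\ge\frac{K\cos2\theta}{c-1}>0$: strict inequality gives a negative supremum, and in the boundary case of equality, where $\sup\Re\sigma_\mr{e}=0$, I would invoke the final clause of Lemma \ref{lemma: real part bounded} (equivalently Proposition \ref{prop:Re_sig}): since $a+K\cos2\theta\ne0$, the value $0$ is approached only as $|k|\to\infty$ and never attained at finite $k$, so $\sigma_\mr{e}(L_{*,a})$ remains in the open left half-plane (Part (2)).

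The main obstacle I anticipate is precisely this non-attainment step: one must argue that the \emph{entire} essential spectrum, not merely the Fredholm-border curves $\sigma(L_\pm)$, fails to touch $\ii\mathbbm R$ when $\sup\Re\sigma_\mr{e}=0$. The way around it is that Proposition \ref{prop: essential spectrum qualitative} identifies $\sup\Re\sigma_\mr{e}(L_{*,a})$ with $\max\{\sup\Re\sigma(L_\pm)\}$; a supremum that is unattained over $\sigma(L_\pm)$ is therefore unattained over all of $\sigma_\mr{e}(L_{*,a})$. Combined with $\sup\Re\sigma(L_-)<0$ strictly --- which holds because $|a-K\cos2\theta|<a+K\cos2\theta$ when $a,K\cos2\theta>0$ --- this forces $\sigma_\mr{e}(L_{*,a})\cap \ii\mathbbm R=\emptyset$ and completes the proof of Parts (2) and (3).
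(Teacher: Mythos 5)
Your proof is correct and is essentially the paper's own argument: you reduce to the asymptotic operators $L_\pm$ via Proposition \ref{prop: essential spectrum qualitative}, evaluate $\sup\Re\sigma(L_\pm)$ via Proposition \ref{prop:Re_sig}, and settle attainment with the final clause of Lemma \ref{lemma: real part bounded}, exactly as in Section \ref{sec:ess-spec-super}. Your two small refinements --- collapsing the two weight conditions through the identity $\max\big\{|a-K\cos2\theta|,\,|a+K\cos2\theta|\big\}=|a|+K\cos2\theta$ rather than citing \eqref{eq: supersonic a condition} separately for $L_-$ and $L_+$, and arguing explicitly that non-attainment on the curves $\sigma(L_\pm)$ controls all of $\sigma_\mr{e}(L_{*,a})$ (to be fully airtight this step should invoke $\pd\sigma_\mr{e}(L_{*,a})\subset\sigma(L_+)\cup\sigma(L_-)$ from Theorem \ref{thm: border of essential spectrum}, since equality of suprema alone does not transfer non-attainment; any point of $\sigma_\mr{e}$ off the curves is interior and would force spectrum with positive real part) --- only streamline or make explicit steps that the paper's proof treats the same way or leaves implicit.
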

\begin{figure}[h!]
    \centering
    \textbf{Spectrum of linearized operator, $L_{*,a}$, for a supersonic pulse}\par\medskip
 \includegraphics[width= \textwidth]{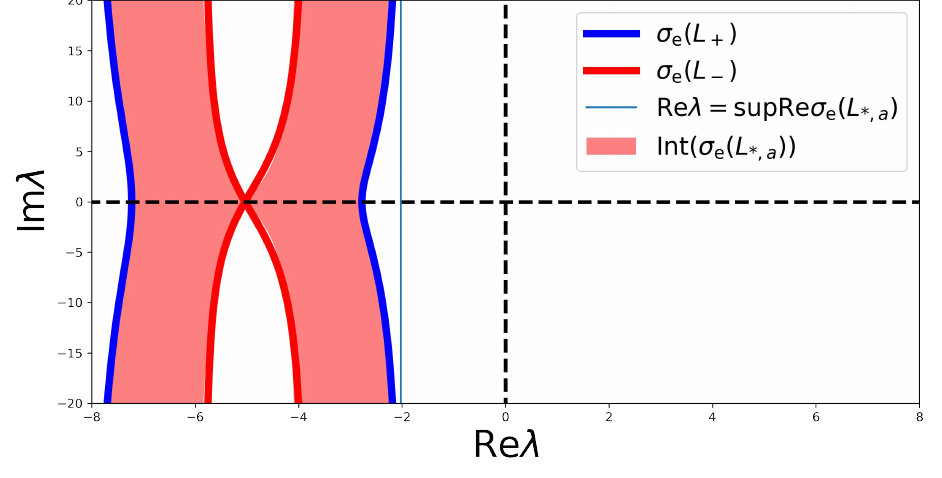}
 \caption{{\it Spectral stability of supersonic pulses via Theorem \ref{thm: spectral stability of supersonic pulses}:} Essential spectrum, $\sigma_{\rm e}(L_{*,a})$ 
  of weight-conjugated linearized operator $L_{*,a}$ for 
 a supersonic pulse $b_* = b_{c,E}$. Parameters:   $K = - \mathcal N'(1) = 2$,
 speed $c = 5$ 
 and phase portrait energy $E_c = 4.84$.
Weight parameter $a = 1$ ($W(x)=e^{x}$) satisfies $a \geq 0.494$, the lower bound in
 \eqref{eq: supersonic a condition 1}. Shown are  spectra of $L_\pm$ (dark blue and red curves)
 which enclose $\sigma_{\rm e}(L_{*,a})$ (shaded light red).
Vertical light blue line: $\Re\lambda=\sup \Re\sigma_\mathrm{e}(L_{*,a})<0$.
There is no discrete spectrum.
 }
 \label{fig: essential supersonic pulse}
\end{figure}
We have plotted the essential spectrum of a supersonic pulse in an appropriate weighted space $L^2_a$,
to illustrate {\sc proposition} \ref{prop: supersonic ess spec}, in {\sc figure} \ref{fig: essential supersonic pulse}.

\subsection{Discrete spectrum for supersonic pulses}
In this section we prove that for \textit{any} $a\in\mathbb R$, 
$L_{*,a}$ does not have any discrete spectrum to the right of its essential spectrum. That is,
 \begin{equation} \Re\sigma_{\rm d}(L_{*,a})\le\sup\Re\sigma_{\rm e}(L_{*,a}). \label{eq:dspec-bnd}\end{equation}
{\sc proposition} \ref{prop: supersonic ess spec} ensures that if
that if \eqref{eq: supersonic a condition 1} holds, 
then $\sup \Re \sigma_{\rm e} (L_{*,a}) \leq 0$ which implies spectral stability,  
$\sup \Re \sigma(L_{*,a}) \leq 0$ along with \eqref{eq:dspec-bnd}.

We now prove \eqref{eq:dspec-bnd}. By definition, $\lambda \in \sigma_\mr{d}(L_{*,a})$ if and only if there is an $L^2$ function $f(x)$ such that $(\lambda I - L_{*,a}) f = 0$. We rewrite this spectral problem  as an equivalent system of ODEs:
\begin{equation}
\label{eq: supersonic eigenfunction}
    \pd_x  f(x) =\Big[  a + \Sigma^{-1} \big(\lambda- A_*(x)  \big)  \Big] f(x)
\equiv \mathcal{A}(x,\lambda) f(x)\end{equation}
where $A_*(x)$ is given by \eqref{eq: A *} and 
$\Sigma$ is given by \eqref{eq: formula Sigma}.
We show that for all $\lambda\in\mathbbm C$ such that $\Re\lambda>\sup\Re \sigma_\mathrm{e}(L_{*,a})$, that $\lambda \not \in \sigma_\mathrm{d} (L_{*,a})$; 
this is the case if for such $\lambda$, 
all nontrivial solutions of \eqref{eq: supersonic eigenfunction} are unbounded as $x\to+\infty$.
We claim that the matrix $\mathcal{A}(x,\lambda)$ has the following properties: 
\begin{itemize}
\item[($\mathcal A$1)] $x\mapsto\mathcal A(x,\lambda) \in \mathcal C^1[0,\infty)$ is a 
$\mathcal C^1$ mapping into the space of complex square matrices which, 
for each $x$, 
varies analytically for all $\lambda \in \mathbbm C$.
\item[($\mathcal A$2)] 
\[ \sup_{\lambda\in \mathbb C}\|\mathcal A(x,\lambda)-\mathcal A_+(\lambda)\|_{\mathbb C^{n\times n}} \rightarrow 0\]
exponentially fast as $x \rightarrow + \infty$, where $\mathcal A_+(\lambda)=\lim_{x\to+\infty} \mathcal{A}(x,\lambda)$ 
is analytic for all $\lambda \in \mathbbm C$.
\item[($\mathcal A$3)] For all $\lambda \in \mathbbm C$, such that $\Re \lambda > \sup \Re \sigma_\mathrm{e} (L_{*,a})$, $\mathcal A_+(\lambda)$
has two eigenvalues with strictly positive real parts.
\end{itemize}
Assuming ($\mathcal A$3) holds, 
the set of all solutions of ODE are expressible 
as a linear combinations of
(a) either two linearly independent solutions with exponential growth rates equal to the (positive) real parts of eigenvalues of $\mathcal A_+(\lambda)$ or
(b) in the case of a non-diagonal Jordan normal form (due to a degenerate eigenvalue $\tilde\mu$) a solution with growth $\sim e^{\tilde\mu x}$ and a solution with growth $\sim x\ e^{\tilde\mu x}$. Hence, all solutions grow exponentially as $x\to+\infty$.

It suffices to verify ($\mathcal A1$)-$(\mathcal A3$) for any given $a \in \mathbbm R$, 
in particular for $a$ that satisfies \eqref{eq: supersonic a condition 1}.
Properties $(\mathcal A1$) and $(\mathcal A2$) are trivial.
We next show  $(\mathcal A3$).
Due to the equivalence of the equation $(\lambda I - L_+)f=0$ to  \eqref{eq: supersonic eigenfunction}, 
the eigenvalues, $\mu$, of $\mathcal{A}_+(\lambda)$ are roots
of the characteristic polynomial, arising by seeking solutions of  $(\lambda I - L_+)f=0$, 
with ansatz $f = f_0 e^{\mu x}$ where $f_0 \in\mathbbm C^2$ is a nonzero vector.
Recall that $L_+$, 
is the weight-conjugated  linearized operator $L_{*,a}$ evaluated at $x=+\infty$ using the nontrivial equilibrium
$\begin{bmatrix}
    \cos \vartheta& \sin\vartheta
\end{bmatrix}^\mathsf T$,
where $\vartheta = \pm\frac\pi2 - \theta$, see \eqref{eq: L theta a}.
Therefore,  $\mu=\mu(\lambda)$ satisfies 
\[
\begin{aligned}
    0 = & \det \left ( \begin{bmatrix}
    c & 1 \\ 1 & c 
    \end{bmatrix} (\mu - a)
    -K \begin{bmatrix}
    \sin 2 \vartheta   & 1 - \cos 2\vartheta \\
     -1 - \cos 2\vartheta & - \sin 2 \vartheta 
    \end{bmatrix} - \begin{bmatrix}
    \lambda & 0 \\ 0 &\lambda
    \end{bmatrix} \right) \\
    = & \big(c^2-1\big) (\mu-a)^2 - 2\big(\lambda c + K \cos 2\theta \big) (\mu-a) + \lambda^2 
\end{aligned}.   
\] 
These roots are given by
\begin{equation}
    \mu_\pm(\lambda) = a + \frac{\lambda c + K \cos 2 \vartheta 
    \pm \sqrt{\lambda^2 + 2 \lambda c K
    \cos 2 \vartheta + K^2 \sin^2 2\vartheta}} {c^2 - 1}.
\label{eq:mu_pm}
\end{equation}
Recall that $K>0$ and $c>1$.
\bigskip

Let 
\[
m(\lambda) = \min \big\{ \Re \mu_+(\lambda), \Re \mu_-(\lambda) \big\}
\]
As $\lambda$ varies over $\mathbbm C$ the \textit{set} of roots $\mu$ is always equal to $\{\mu_-(\lambda),\mu_+(\lambda)\}$. 
In a neighborhood which is small enough of any point where these roots vary analytically. 
To be precise, there is a pair of analytic functions $\mu_1(\lambda)$ and $\mu_2(\lambda)$ in some small neighborhood of $\lambda$, such that $\big\{\mu_1(\lambda), \mu_2(\lambda)\big\} = \big\{\mu_+(\lambda), \mu_-(\lambda)\big\}$.
This is true as long as $\lambda$ is not one of the branch points $\lambda_{1,2}$ at which $q(\lambda) = 0$, where $\sqrt{q(\lambda)}$ is the square-root expression appearing in \eqref{eq:mu_pm}, 
where 
\[q(\lambda) \equiv \lambda^2 + 2 \lambda c K \cos 2 \vartheta + K^2 \sin^2 2\vartheta \]
If $\lambda$ is not on the cut of $\sqrt{q(\lambda)}$,
we can simply choose $\mu_1 = \mu_-$ and $\mu_2 = \mu_+$, otherwise we just perturb the cut so that $\lambda$ is not on it.
Therefore, for $\lambda \neq \lambda_{1,2}$, $m(\lambda)$ is continuous.
Note that $\Re\mu_+(\lambda_j)=\Re\mu_-(\lambda_j)$ for $j=1,2$, and 
$\Re\mu_\pm(\lambda)\to \Re\mu_\pm(\lambda_j)$ as $\lambda\to\lambda_j$.
Hence
\[ m(\lambda)= \min\{\Re\mu_+(\lambda),\Re\mu_-(\lambda)\}\]
varies continuously on $\mathbb C$. 

We claim next that for all $\lambda$ satisfying $\Re\lambda>\sup\Re\sigma_{\rm e}(L_{*,a})$, we have that $m(\lambda)>0$.
Indeed,  for  $\lambda = M\in\mathbbm R$ such that $M\gg1$ :
\begin{equation}
\label{eq: supersonic mu large lambda}
    \mu_+(M) = \frac{M}{c+ 1} + \mathcal O(1),\quad \mu_-(M) = \frac{M}{c- 1} + \mathcal O(1),
\end{equation}
which implies that $m(M)>0$ for all large $M$, since $c>1$.

Now consider $\lambda$ varying over the region $\Re\lambda>\sup \Re \sigma_\mathrm{e} (L_{*,a})$. 
Suppose $m(\lambda)$ is not always positive in this region.
Since $m(\lambda)$ is continuous,
there must be a $\hat \lambda$ for which 
\begin{equation}
\label{eq: hat lambda}
    \Re \hat \lambda > \sup \Re \sigma_\mathrm{e}(L_{*,a})
\end{equation}
such that $m(\hat \lambda) = 0$.
Therefore either $\mu_+(\hat\lambda)$ or $\mu_-(\hat\lambda)$ is purely imaginary.
Without loss of generality we may assume that $\mu_+(\hat\lambda)=\ii\hat\xi$ for $\xi \in \mathbbm R$.
It follows there is a function $Y\sim e^{i\hat\xi x}$,
such that $(\hat\lambda I - L_+)Y=0$. Since $\sigma_{\rm e}(L_+)\subset \sigma_{\rm e}(L_{*,a})$, it follows that $\hat\lambda\in \sigma_{\rm e}(L_{*,a})$. 
However, this contradicts \eqref{eq: hat lambda}.
This contradiction implies that for all $\lambda$ such that \eqref{eq: hat lambda} holds,
we have $\Re\mu_\pm(\lambda)>0$.

Summarizing the result in this section, we have:
\begin{proposition}
\label{prop:no_disc-super}
Let $L_{*,a}$ denote the linearization about a supersonic pulse, with the weight parameter, $a$, satisfying \eqref{eq: supersonic a condition 1}.
Then, 
\begin{enumerate}
\item If $\Re\lambda>0$, then $\lambda$ is not in $\sigma_{\rm d}(L_{*,a})$.
\item$0\notin\sigma(L_{*,a})$. In particular, the translation mode: $e^{ax}\ \partial_xb_*$, which satisfies $L_{*,a} \left( e^{ax}\ \partial_xb_*\right)=0$, is not an $L^2(\mathbb{R})$ solution of  $L_{*,a}Y=0$.
\end{enumerate}
\end{proposition}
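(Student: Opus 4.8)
The plan is to read off both assertions from the essential- and discrete-spectrum analyses already carried out for $L_{*,a}$ in the two preceding subsections, supplemented by a direct inspection of the eigenvalue system \eqref{eq: supersonic eigenfunction} at $\lambda=0$. Part~(1) is an immediate corollary of the bound \eqref{eq:dspec-bnd}, while Part~(2) combines {\sc proposition} \ref{prop: supersonic ess spec} (to rule $0$ out of the essential spectrum) with the $\lambda=0$ version of the dichotomy argument (to rule $0$ out of the discrete spectrum and to exhibit the growth of the translation mode).

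For Part~(1), recall that $\lambda\in\sigma_{\rm d}(L_{*,a})$ requires a nontrivial $L^2$ solution of \eqref{eq: supersonic eigenfunction}. The discrete-spectrum analysis showed that for every $\lambda$ with $\Re\lambda>\sup\Re\sigma_{\rm e}(L_{*,a})$ one has $m(\lambda)=\min\{\Re\mu_+(\lambda),\Re\mu_-(\lambda)\}>0$, so by ($\mathcal A1$)--($\mathcal A3$) and the theory of asymptotically constant systems every nontrivial solution of \eqref{eq: supersonic eigenfunction} grows exponentially (or like $x\,e^{\tilde\mu x}$ in the degenerate Jordan case) as $x\to+\infty$, hence lies outside $L^2$; this is precisely \eqref{eq:dspec-bnd}. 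Since the weight hypothesis \eqref{eq: supersonic a condition 1} forces $\sup\Re\sigma_{\rm e}(L_{*,a})\le0$, any $\lambda$ with $\Re\lambda>0$ obeys $\Re\lambda>\sup\Re\sigma_{\rm e}(L_{*,a})$ and is therefore excluded from $\sigma_{\rm d}(L_{*,a})$.

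For Part~(2) I would argue that $0$ misses both components of the spectrum. For a genuine pulse the two asymptotic equilibria are distinct, so $\theta_{c,E}\ne\pm\pi/4$ and $\cos 2\theta=\sqrt{1-(E-c)^2}>0$, i.e.\ $E\ne c\pm1$ and $\gamma>0$ in \eqref{eq: supersonic pulse asymptotic rate}; {\sc proposition} \ref{prop: supersonic ess spec}(2) then places $\sigma_{\rm e}(L_{*,a})$ in the open left half plane, so $0\notin\sigma_{\rm e}(L_{*,a})$. To exclude $0$ from $\sigma_{\rm d}(L_{*,a})$ I would evaluate \eqref{eq:mu_pm} at $\lambda=0$: the two roots of $\mathcal A_+(0)$ reduce to $a$ and $a-\gamma$ (the phase and translation directions of the asymptotic profile linearization, shifted by the weight $a$). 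The hypothesis \eqref{eq: supersonic a condition 1} gives $a\ge\frac{c+1}{2}\gamma>\gamma>0$, so both roots are strictly positive; hence every nontrivial solution of \eqref{eq: supersonic eigenfunction} at $\lambda=0$ grows as $x\to+\infty$ and cannot be $L^2$, giving $0\notin\sigma_{\rm d}(L_{*,a})$ and therefore $0\notin\sigma(L_{*,a})$. The translation mode is the solution associated with the slower rate $a-\gamma$: differentiating the steady form of \eqref{eq: PDE in moving frame} in $x$ gives $L_*\,\partial_x b_*=0$, so by the conjugation identity \eqref{eq: conjugate operator} $L_{*,a}\big(e^{ax}\partial_x b_*\big)=0$; but \eqref{eq: supersonic translation mode decay} yields $e^{ax}\partial_x b_*\sim e^{(a-\gamma)x}$ as $x\to+\infty$ with $a-\gamma>0$, confirming it is not in $L^2(\mathbb R)$.

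The conceptually hard part --- establishing property ($\mathcal A3$) and the continuity/connectedness argument that propagates $m(\lambda)>0$ from large real $\lambda$ across the whole half plane $\{\Re\lambda>\sup\Re\sigma_{\rm e}\}$ --- was already discharged in the discrete-spectrum subsection, so the present proposition is mainly an assembly. The one point still needing care is the threshold configuration: if equality holds in \eqref{eq: supersonic a condition 1} then $\sup\Re\sigma_{\rm e}(L_{*,a})=0$ and the general Part~(1) dichotomy no longer applies at $\lambda=0$, which is exactly why Part~(2) should be proved by the direct computation $\mu_\pm(0)\in\{a,a-\gamma\}$ rather than by invoking \eqref{eq:dspec-bnd}. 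Finally one must exclude the degenerate endpoint $E=c\pm1$, where the two equilibria coalesce (so no honest heteroclinic pulse exists), $\gamma=0$, and $\lambda=0$ can sit on the imaginary-axis essential spectrum when $a=0$.
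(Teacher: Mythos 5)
Your proposal is correct. For Part 1 it is exactly the paper's argument: the dichotomy \eqref{eq:dspec-bnd} established in the discrete-spectrum subsection, combined with $\sup\Re\sigma_{\rm e}(L_{*,a})\le 0$ under \eqref{eq: supersonic a condition 1}. For Part 2, however, you take a genuinely different and more complete route. The paper's own proof verifies only the ``in particular'' clause: translation invariance gives $L_*\partial_x b_*=0$, hence $L_{*,a}\big(e^{ax}\partial_x b_*\big)=0$, and the chain $a\ge \tfrac{K}{c-1}\sqrt{1-(E-c)^2}>\tfrac{2K\sqrt{1-(E-c)^2}}{c^2-1}=\gamma$ shows $e^{ax}\partial_x b_*$ grows at $+\infty$; the full claim $0\notin\sigma(L_{*,a})$ is left to rest implicitly on the half-plane dichotomy, which applies at $\lambda=0$ only when the inequality in \eqref{eq: supersonic a condition 1} is \emph{strict} (at equality, $\sup\Re\sigma_{\rm e}(L_{*,a})=0$ and $\lambda=0$ lies on the boundary line, outside the scope of that argument). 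Your direct evaluation at $\lambda=0$ closes exactly this gap: the characteristic polynomial factors as $(\mu-a)\big[(c^2-1)(\mu-a)+2K\cos 2\theta\big]=0$, so the asymptotic rates at $+\infty$ are $\mu\in\{a,\,a-\gamma\}$, both strictly positive under \eqref{eq: supersonic a condition 1} since $a\ge\tfrac{c+1}{2}\gamma>\gamma>0$ for a genuine pulse ($E\ne c\pm1$). Hence \emph{every} nontrivial solution of \eqref{eq: supersonic eigenfunction} at $\lambda=0$ grows as $x\to+\infty$ — ruling out any $L^2$ zero mode, not merely the translation mode — and together with {\sc proposition} \ref{prop: supersonic ess spec}(2) (applicable because $\gamma>0$ forces $a>0$) this yields $0\notin\sigma(L_{*,a})$ including the threshold case. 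Your identification of $a$ and $a-\gamma$ as the weight-shifted phase and stable-manifold rates of the asymptotic linearization \eqref{eq:asy-lin} is also correct. One caution: do not read the roots off the displayed formula \eqref{eq:mu_pm} literally — its radicand should be $K^2\cos^2 2\vartheta$ rather than $K^2\sin^2 2\vartheta$ (the discriminant of the quadratic $(c^2-1)(\mu-a)^2-2(\lambda c+K\cos 2\vartheta)(\mu-a)+\lambda^2$ is $\lambda^2+2\lambda cK\cos2\vartheta+K^2\cos^2 2\vartheta$), and at $\lambda=0$ the printed version would give $a+\tfrac{-K\cos2\theta\pm K|\sin2\vartheta|}{c^2-1}$ instead of $\{a,\,a-\gamma\}$; deriving the roots from the factored quadratic, as you implicitly do, gives the correct values.
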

We need only verify Part 2. Note that if $b_*(x)$  satisfies \eqref{eq: PDE in moving frame}, then 
$L_*\pd_x b_*=0$ and hence $L_{*,a} \left( e^{ax}\ \partial_xb_*\right)=0$. 
We claim that $\pd_x b_*\notin L^2_a$ for $a$ satisfying \eqref{eq: supersonic a condition 1}. Note that
\[ \pd_x b_*(x) \sim \exp\left(-\frac{2 K \sqrt{1-(E-c)^2}}{c^2-1} \right),\]
from \eqref{eq: supersonic translation mode decay}.
  Further, by \eqref{eq: supersonic a condition 1}, we have
    \[
    a \geq\frac{K \sqrt{1-(E-c)^2}}{c-1} 
    > \frac{2 K \sqrt{1-(E-c)^2}}{(c-1)(c+1)} = \frac{2 K \sqrt{1-(E-c)^2}}{c^2-1}.
    \]
    Therefore
$e^{ax}  \pd _ x b_*(x) \to \infty$ as $x \to \infty$, and $e^{ax}  \pd _ x b_*(x)$ is not in $L^2$.

\subsection{Remark on instability of subsonic pulses and antikinks}
\label{sec: instability}
{\sc Proposition} \ref{prop: essential spectrum qualitative} implies, 
for a TWS to be spectrally stable in some exponential weighted space $L^2_w$, it is necessary that 
both the spectra of asymptotic operators $L_\pm$ of
$L_{*,w}$ is in the closed left-half plane.
Since we have restricted the weight $W(x) = e^{w(x)}$ to be of exponential type defined in \eqref{eq: weight general},
both of the asymptotic equilibria $b(\pm \infty)$ of $b_*$ need to be spectrally stable in some $L^2_{a_\pm}$ space, when observed in the reference frame moving with the same speed $c$ of $b_*$.
Conversely, if it is impossible to find, WLOG, $a_+$,
such either $b(\infty)$ is stable in $L^2_{a_+}$, as a result, for any $W(x)= e^{w(x)}$ of the exponential type, $\sup \Re \sigma_\mathrm{e} (L_{*,w}) \geq \sup \Re \sigma (L_+) > 0$ and $b_*$ is not stable in any $L^2_w$ with exponential type weight.
This is precisely what happens for the subsonic pulses, as well as antikinks.

In fact, since the subsonic pulses and antikinks are all subsonic, it is straightforward to verify that one of the asymptotic equilibria of a subsonic pulse, 
as well as the $b(-\infty)$ equilibrium of an antikink, 
cannot be rendered spectrally stable in any $L^2_a$ space since for these equilibria \eqref{eq: c cos 2 theta geq 0} is violated.

\section{Spectral stability of kinks}
\label{sec: kink stability}

Consider a kink profile $b_*(x)$ satisfying \eqref{eq: kink profile ODE}. 
As noted in {\sc remark} \ref{rem:pulse2pulse},
we may restrict our attention to kinks with speed $ 0 \leq c < 1$, $b_* = b_{c,0}$, 
where $b_{c,0}$ is the solution to \eqref{eq: kink profile ODE},
represented by the solid blue line in {\sc figure} \ref{fig: subsonic kink c > 0}.
The profile $b_*(x)$ tends to the equilibrium $\begin{bmatrix}
    0 & 0 
\end{bmatrix}^\mathsf T$
as $x \to -\infty$ and to a non-trivial equilibrium at $x\to+\infty$.
We have shown in section \ref{sec: trivial equilibrium} that the trivial equilibrium is 
spectrally stable in 
the \underline{unweighted} ($a=0$) space, $L^2(\mathbbm{R})$. 
We will first characterize the essential spectrum of kinks by applying {\sc proposition} \ref{prop: essential spectrum qualitative} again, 
before tackling the problem of locating the discrete spectrum.
\subsection{Essential spectrum for kinks}
Let $L_*$ denote the linearized operator about $b_*$.
Introduce a smooth spatial exponential weight $W(x) =e^{w(x)}$, where
\begin{equation}
\label{eq:w-def}
   \begin{cases}
w(x) = 0 & x\le -1\\
w(x) = ax & x\ge1
 \end{cases} 
\end{equation}

The linearized operator whose $L^2(\mathbb{R};dx)$ spectrum determines
 the spectrum of $L_*$ in $L^2(\mathbb{R};W(x)dx)$ is given by $L_{*,w} = \Sigma \big(\pd_x - w'(x)\big) + A_*(x)$; see \eqref{eq: conjugate operator}. 
\begin{proposition}
\label{prop: essential kink}
Assume that the weight $W(x)=e^{w(x)}$ is given by \eqref{eq:w-def}, where 
\begin{equation}
    \label{eq: essential kink condition}
        K \sqrt{\frac{1-c}{1+c}} \leq a \leq K \sqrt{\frac{1+c}{1-c}}.
    \end{equation}
    Here, recall $K = - \mathcal N'(1)>0$; see \eqref{eq: K}.
    Then, the essential spectrum of $L_{*,w}$ is contained in the closed left-half plane, i.e. $\Re \sigma_\mathrm{e} (L_{*,w}) \le 0$.
\end{proposition}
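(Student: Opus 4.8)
The plan is to apply Proposition \ref{prop: essential spectrum qualitative}, which reduces the location of $\sup\Re\sigma_{\mathrm e}(L_{*,w})$ to the two constant-coefficient asymptotic operators via \eqref{eq: sup essential spectrum L * w}. First I would identify $L_-$ and $L_+$ explicitly. Since the weight satisfies $w'(x)=0$ for $x\le-1$ and $b_*(x)\to[0\ 0]^{\mathsf T}$ as $x\to-\infty$, the left-asymptotic operator is $L_-=\Sigma\partial_x+A_O=L_O$, the linearization about the trivial equilibrium in the \emph{unweighted} space ($a=0$). Since $w'(x)=a$ for $x\ge1$ and $b_*(x)\to[\cos\theta_{c,0}\ \sin\theta_{c,0}]^{\mathsf T}$, the right-asymptotic operator is $L_+=L_{\theta_{c,0},a}$, where $\theta_{c,0}$ is fixed by $E_c[b_{c,0}]=0$.

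For the left operator, Proposition \ref{prop: equilibria stability} applies: the trivial equilibrium is spectrally stable precisely in the unweighted space, so $\sigma(L_-)=\sigma(L_O)\subset\ii\mathbb R$ and hence $\sup\Re\sigma(L_-)=0$. This contributes nothing to the right half-plane, so the binding constraint comes entirely from $L_+$. For the right operator I would invoke Proposition \ref{prop:Re_sig} (equivalently the nontrivial case of Proposition \ref{prop: equilibria stability}): with $\theta=\theta_{c,0}$ and $|c|<1$, the condition $\sup\Re\sigma(L_{\theta_{c,0},a})\le0$ is exactly \eqref{eq: subsonic equilibria weight condition}, i.e.
\[
\frac{K\cos2\theta_{c,0}}{1+c}\ \le\ a\ \le\ \frac{K\cos2\theta_{c,0}}{1-c}.
\]

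The final step is to evaluate $\cos2\theta_{c,0}$. Along a kink $b_{c,0}$ we have $E=0$, so from $E=c+\sin2\theta_{c,0}$ we get $\sin2\theta_{c,0}=-c$; since $\theta_{c,0}$ is the root of smallest absolute value with $0\le c<1$, the positive branch gives $\cos2\theta_{c,0}=\sqrt{1-c^2}$. Substituting and using $\sqrt{1-c^2}/(1+c)=\sqrt{(1-c)/(1+c)}$ and $\sqrt{1-c^2}/(1-c)=\sqrt{(1+c)/(1-c)}$ turns the interval into exactly \eqref{eq: essential kink condition}. Combining $\sup\Re\sigma(L_-)=0$ with $\sup\Re\sigma(L_+)\le0$ in \eqref{eq: sup essential spectrum L * w} yields $\sup\Re\sigma_{\mathrm e}(L_{*,w})\le0$, as claimed.

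The computation is routine once the asymptotic operators are correctly identified; the only points requiring care are the branch choice for $\cos2\theta_{c,0}$ and the verification that $L_-$ is genuinely the \emph{unweighted} trivial-equilibrium operator (i.e.\ that $w'$ vanishes at $-\infty$). A nonzero left weight or the wrong branch would destabilize the left essential spectrum and defeat the whole argument, which is precisely why the weight \eqref{eq:w-def} is taken trivial on the left and exponential only on the right.
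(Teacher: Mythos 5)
Your proposal is correct and follows essentially the same route as the paper's proof: both apply Proposition \ref{prop: essential spectrum qualitative} to reduce the question to the asymptotic operators $L_-=L_O$ and $L_+=L_{\theta_{c,0},a}$, invoke Propositions \ref{prop:stab-equil} and \ref{prop: equilibria stability} to get $\sup\Re\sigma(L_O)=0$ and the weight condition \eqref{eq: subsonic equilibria weight condition}, and then use $\sin 2\theta_{c,0}=-c$, $\cos 2\theta_{c,0}=\sqrt{1-c^2}$ to rewrite that condition as \eqref{eq: essential kink condition}. Your explicit attention to the branch choice for $\cos 2\theta_{c,0}$ and to $w'$ vanishing at $-\infty$ is a correct (and slightly more careful) articulation of steps the paper treats implicitly.
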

\begin{proof}
The ODE for a kink profile is given in \eqref{eq: kink profile ODE}.
The kink is a heteroclinic connection between the trivial equilibrium at $x = -\infty$
and the nontrivial equilibrium
$\begin{bmatrix}
    \cos \theta & \sin \theta 
\end{bmatrix}^\mathsf T
$ 
at $x = \infty$, 
with $\theta = \theta_{c,0} = -\frac12 \arcsin c$; see Section \ref{sec:theta}.
By {\sc proposition} \ref{prop: essential spectrum qualitative}, the supremum of essential spectrum $\sigma_\mathrm{e}(L_{*,w})$ is determined
by the spectra of the asymptotic operators $L_- = L_O$ and $L_+ = L_{\theta, a}$; specifically, 
\[
\sup \Re \sigma_\mathrm{e}(L_{*,w}) = \max \Big\{ \sup \Re \sigma(L_{O}), \sup \Re \sigma(L_{\theta,a})\Big\}
\]
The spectrum of $L_O$ is on the imaginary axis, 
by {\sc proposition} \ref{prop:stab-equil},
so $\sup \Re \sigma(L_{O}) = 0$.
Therefore $\sup \Re \sigma_\mathrm{e}(L_{*,w}) \leq 0$ if and only if $\sup \Re \sigma_\mathrm{e}(L_{\theta,a}) = 0$.
By {\sc proposition} \ref{prop: equilibria stability}, the spectrum of $L_{\theta, a}$ is in the closed left-half plane if and only if $a$ satisfies \eqref{eq: subsonic equilibria weight condition}. Hence, $\sup \Re \sigma_\mathrm{e}(L_{*,w}) \le 0$ if and only if 
\eqref{eq: subsonic equilibria weight condition} is satisfied. 
Since $\cos 2\theta = \sqrt{1-c^2}$, 
condition \eqref{eq: subsonic equilibria weight condition} is equivalent to \eqref{eq: essential kink condition}.

\end{proof}
{\sc Figure} \ref{fig: spectrum moving kink} shows a typical case of the essential spectrum of a moving kink. 
\begin{figure}
    \centering
    \textbf{Spectrum of linearized operator, $L_{*,w}$, for a kink}\par\medskip
    \includegraphics[width= \textwidth]{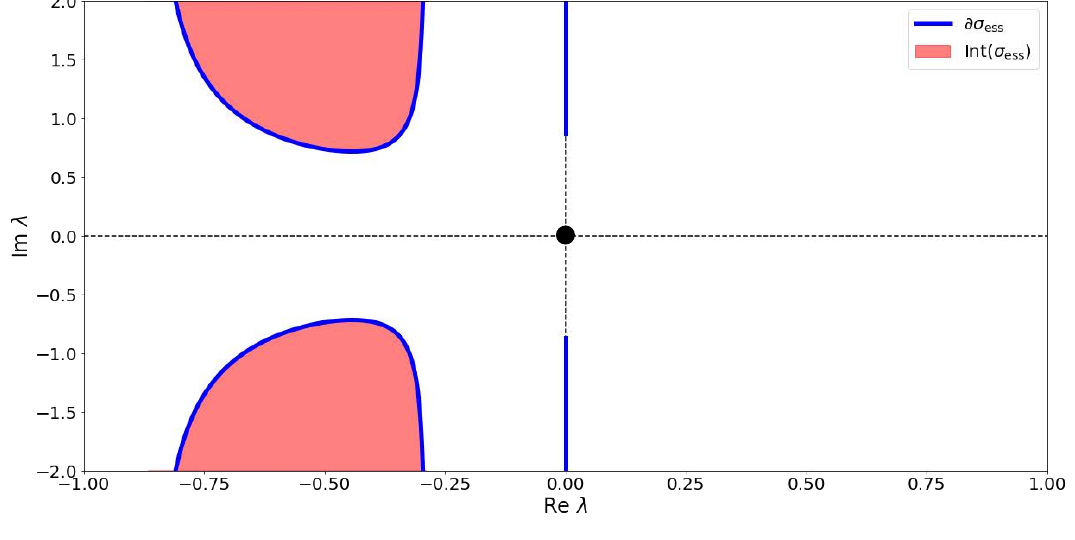}
    \caption{Spectrum of linearized operator, $L_{*,w}$, about a kink with speed $c\ne0$, including $ 0\in \sigma_\mr d \big(L_{*,w} \big) $ in black, showing linear spectral stability in the space $L^2_{w_c}$.  Parameter values: $K=2$, $c=0.5$, and $a=1.15$.}
    \label{fig: spectrum moving kink}
\end{figure}


\subsection{Neutral spectral stability of Non-moving ($c=0$) kinks}
\label{sec: nonmoving kink stability}
    The fact that non-moving kinks are spectrally stable in an appropriate weighted $L^2$ space is a special case of {\sc Theorem} \ref{thm: kink stability} in {\sc section} \ref{sec: moving kink stability} below.
    However, the current {\sc section} \ref{sec: nonmoving kink stability} provides a stronger result and simpler treatment for the $c = 0$ special case.
    Therefore we advise that the readers start with this section first, for pedagogical purposes.
    In particular, in the current section we do not assume that the nonlinearity $\mathcal N(r^2)$ satisfies the concavity condition \eqref{eq: concavity}.
    Moreover, the transformation \eqref{eq: B tilde transform 0} in the current section is a special case
    of the one provided in {\sc proposition} \ref{prop: B tilde B transform} 
    in {\sc section} \ref{sec: moving kink stability}, and is less complicated.
For non-moving kinks with speed $c = 0$ the only possible asymptotic weight as $x \to \infty$ is $e^{Kx}$ if it were to be spectrally stable in $L^2_w$, as a result of {\sc proposition} \ref{prop: essential kink} where the only possible $a$ is $a = K$.
There are four non-moving kinks, each connecting the trivial equilibrium to different nontrivial equilibria, 
but WLOG we study the one that asymptotes to $[1,0]$ as $x \to \infty$, by virtue of {\sc Theorem} \ref{thm: discrete symmetry of spectra}.
The polar angle of this equilibrium on the phase plane of \eqref{eq: TWS profile} is $\theta = 0$, 
setting $c = 0$ and $E = 0$ in \eqref{eq: theta c E}.
Now we note that, 
for this particular kink, 
$u_*(x) = r_0(x) \geq 0$ in \eqref{eq: kink profile ODE}, 
which now implies
\begin{equation}
\label{eq: kink amplitude ODE 0}
    r_0'(x) = r_0(x) \mathcal N \big( r_0(x)^2\big),\quad r_0(0) = \frac12
\end{equation}
So $r_0(x) \to 0$ exponentially as $x \to -\infty$, and $r_0(x) \to 1$ exponentially as $x \to \infty$.

\begin{theorem}[Neutral spectral stability for non-moving ($c=0$) kinks]
\label{thm: kink spectral stability c = 0}
Consider the non-moving kink ($c=0$) namely $ b_*= b_{0,0}$ satisfying \eqref{eq: kink profile ODE} with $c=0$, connecting the trivial fixed point and the fixed point $[1,0]$ on the phase plane.
Let $W(x) = e^{w(x)}\in\mathcal C^1$ such that 
\begin{equation*}
W(x) = e^{w(x)} = \begin{cases}
1 & \text{ for  $x \leq -1$} \\
e^{Kx} & \text{ for $x \geq 1$}
\end{cases}
\end{equation*}
Then, $b_*$ is neutrally spectrally stable in  $L^2_w$, {\it i.e.}
$\sigma\big(L_{*,w}\big)\subset \ii\mathbb{R}$. In particular, 
\begin{enumerate}[label=(\roman*)]
\item
The essential spectrum of $L_{*,w}$ is a subset of the imaginary axis:
\[
    \sigma_\mr{e} \big(L_* \big) = \ii \mathbbm R \setminus \ii \big( -\kappa, \kappa \big),
\]
where $\kappa = \min \big\{ 1,K\big\}>0$.
\item The discrete spectrum of $L_{*,w}$ is a subset of gap, $(-i\kappa, i\kappa)$ on the imaginary axis. Furthermore, $ 0\in \sigma_\mr d \big(L_{*,w} \big) $ always, and is a simple eigenvalue of $L_{*,w}$.
\end{enumerate}
\end{theorem}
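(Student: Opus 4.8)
The plan is to handle the essential and discrete spectra separately, exploiting the fact that for $c=0$ the profile asymptoting to $[1,0]$ has $v_*\equiv0$ and $u_*=r_0$ (with $r_0$ solving \eqref{eq: kink amplitude ODE 0}), so that the zeroth-order matrix $A_*$ in \eqref{eq: A *} is purely anti-diagonal,
\[
A_*(x)=\begin{bmatrix} 0 & p(x)\\ q(x) & 0\end{bmatrix},\qquad p=\mathcal N(r_0^2),\quad q=-\mathcal N(r_0^2)-2r_0^2\mathcal N'(r_0^2).
\]
This structure is what makes the $c=0$ case tractable without the concavity hypothesis $(\mathcal N3)$. For part (i) I would first invoke {\sc proposition} \ref{prop: essential spectrum qualitative} to reduce $\sigma_{\rm e}(L_{*,w})$ to the spectra of the asymptotic operators $L_-=L_O$ (trivial equilibrium, weight $a=0$) and $L_+=L_{0,K}$ (equilibrium $[1,0]$, weight $a=K$, the only value admissible by {\sc proposition} \ref{prop: essential kink} when $c=0$). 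Evaluating \eqref{eq: lambda theta a} gives $\sigma(L_-)=i\mathbb R\setminus i(-1,1)$ (cf. {\sc proposition} \ref{prop:stab-equil}) and $\lambda_{0,K}^{\pm}(k)=\pm i\sqrt{k^2+K^2}$, so $\sigma(L_+)=i\mathbb R\setminus i(-K,K)$; their union is $i\mathbb R\setminus i(-\kappa,\kappa)$ with $\kappa=\min\{1,K\}$. To see this union is the \emph{entire} essential spectrum (no filled-in regions), I would compute the spatial eigenvalues $\mu$ of $\mathcal A_\pm(\lambda)=\Sigma^{-1}(\lambda-A_\pm)+a$, obtaining $\mu^2=1+\lambda^2$ and $\mu^2=K^2+\lambda^2$ respectively; for every $\lambda\notin i\mathbb R\setminus i(-\kappa,\kappa)$ both matrices are hyperbolic with exactly one unstable direction, so the Fredholm index is $0$ throughout the complement, and the theory of asymptotically constant operators \cite[Chapter 3]{KaPr13} then identifies $\sigma_{\rm e}$ with the dispersion curves, giving (i).

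For the discrete spectrum I would remove the anti-diagonal structure by the diagonal gauge change $Y=\mathrm{diag}(e^{\alpha_1},e^{\alpha_2})Z$ (the transformation \eqref{eq: B tilde transform 0}), choosing $\alpha_1'=w'-q$ and $\alpha_2'=w'-p$ to annihilate the diagonal zeroth-order terms. The eigenvalue equation $L_{*,w}Y=\lambda Y$ then becomes the symmetric first-order system $Z_1'=\lambda e^{-\psi}Z_2$, $Z_2'=\lambda e^{\psi}Z_1$ with $\psi:=\alpha_1-\alpha_2$ and $\psi'=p-q=2\big(\mathcal N(r_0^2)+r_0^2\mathcal N'(r_0^2)\big)$. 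A necessary preliminary is to check that this gauge change is an $L^2$-isomorphism: setting $\beta=(\alpha_1+\alpha_2)/2$ one finds $\beta'=w'+r_0^2\mathcal N'(r_0^2)\to0$ as $x\to\pm\infty$, so $e^{\pm\beta}$ are bounded above and below and the two spectral problems are genuinely equivalent. Eliminating $Z_2$ yields $(e^{\psi}Z_1')'=\lambda^2 e^{\psi}Z_1$, and the Liouville substitution $Z_1=e^{-\psi/2}W$ converts this into a Schr\"odinger eigenvalue problem $HW=-\lambda^2W$ with $H=-\partial_x^2+Q$, $Q=(\chi')^2+\chi''$, $\chi:=\psi/2$.

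The decisive observation is that $Q$ has supersymmetric (Riccati) form, so $H$ factors as $H=B^\dagger B$ with $B=\partial_x-\chi'$; hence $H\ge0$ as a quadratic form on its (exponentially decaying) eigenfunctions. Since $H$ is self-adjoint with real potential, any eigenvalue $-\lambda^2$ is real, and non-negativity forces $-\lambda^2\ge0$, i.e. $\lambda\in i\mathbb R$. Combined with part (i), this confines $\sigma_{\rm d}(L_{*,w})$ to the gap $i(-\kappa,\kappa)$ and establishes neutral stability $\sigma(L_{*,w})\subset i\mathbb R$. To show $0\in\sigma_{\rm d}$, I would note that $BW=0$ has the nodeless solution $W=e^{\chi}$, and the asymptotics $\chi'\to1$ ($x\to-\infty$) and $\chi'\to-K$ ($x\to+\infty$) place it in $L^2$; equivalently $\partial_x b_*$ is the zero mode of $L_*$, lying in $L^2_w$ by \eqref{eq: kink translation mode decay}. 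Geometric simplicity is immediate, since at $\lambda=0$ the reduced system leaves constant $Z_1,Z_2$, and the second component $e^{\alpha_2}$ grows at both ends and is not $L^2$, so the kernel is one-dimensional.

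The main obstacle I anticipate is the \emph{algebraic} simplicity of the zero eigenvalue, i.e. ruling out a Jordan block. I would settle this by the Fredholm alternative: a generalized eigenvector $\eta$ with $L_{*,w}\eta=\partial_x b_*$ exists in $L^2$ only if the translation mode pairs trivially with $\ker L_{*,w}^{\dagger}$, and I would verify this pairing is nonzero (equivalently, that the Evans function has a simple zero at the origin, which is consistent with $0$ being the simple ground-state eigenvalue of $H=B^\dagger B$). A secondary technical point, flagged above, is the careful bidirectional justification that the gauge change and the Liouville reduction preserve the $L^2$ spectral data, which is precisely what legitimizes the clean factorization argument.
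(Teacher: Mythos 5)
Your proposal is correct and takes essentially the same route as the paper: part (i) via the asymptotic operators $L_O$ and $L_{0,K}$ (Propositions \ref{prop: essential spectrum qualitative} and \ref{prop: equilibria stability}), and part (ii) via a gauge transformation whose net effect on the first component is multiplication by $e^{-\beta}$ with $\beta$ bounded (your boundedness check for $e^{\pm\beta}$ is exactly Lemma \ref{lemma: c = 0 both L 2}, and it needs the exponential rate of $\beta'\to 0$, not just the limit), followed by reduction to the same Schr\"odinger operator $-\partial_x^2+F^2+F_x$ with $F=\mathcal N_0+\mathcal N_p$ and the same zero mode $e^{\int^x F}$. Your supersymmetric factorization $H=B^\dagger B$ with $B=\partial_x-\chi'$ is just a slicker packaging of the paper's argument that the explicit nodeless zero-energy solution is the ground state (the potential is in Riccati form precisely because of that solution), and, like the paper, you establish only geometric simplicity of $\lambda=0$ — the Jordan-block exclusion you flag is not carried out in the paper's proof either.
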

The essential spectrum of a non-moving kink 
is a subset of the imaginary axis;
this is because the spectra of the left- and right-asymptotic operators of $L_{*,w}$, namely $L_- = L_O$ and $L_+ = L_{0, K}$, 
are all on the imaginary axis.
This can be seen from \eqref{eq: lambda O} and \eqref{eq: lambda theta a}. 
In fact:
\begin{equation}
    \label{eq: kink c=0 essential spectra}
    \sigma(L_O) = \ii \mathbbm R\setminus \ii (-1,1), \quad \sigma(L_{0,K}) = \ii \mathbbm R \setminus \ii (-K,K)
\end{equation}
Therefore to prove Theorem \ref{thm: kink spectral stability c = 0},
it suffices to locate the discrete spectrum, since the essential spectrum is the union of $\sigma(L_O)$ and $\sigma(L_{0,K})$ given by \eqref{eq: kink c=0 essential spectra}.
We consider the eigenvalue problem
\begin{equation}
\label{eq: ep0}
L_{*,w}\tilde{B} = \lambda\tilde{B} 
\end{equation}
and $ \lambda \in \mathbbm C \setminus \sigma_\mr{e} \big( L_{*,w} \big)$  and $0\ne\tilde {B} \in H^1$ satisfies \eqref{eq: ep0}. 
From \eqref{eq: conjugate operator} and \eqref{eq: A *} we have an equivalent formulation of \eqref{eq: ep0}:
\begin{equation}
\label{eq: ep0 explicit}
L_{*,w}\tilde{  B}(x) \equiv 
    \left(\begin{bmatrix}
    0 & 1 \\ 1 & 0 \end{bmatrix} \big(\pd_x - w'(x) \big)+ \begin{bmatrix}
    0 & \mathcal N_0  \\ - \mathcal N_0 - 2 \mathcal N_p & 0 \end{bmatrix}\right)\tilde{  B}(x)   = \lambda \tilde{  B}(x),
\end{equation}
where 
\begin{equation}
\label{eq: N 0 and N p}
    \mathcal N_0(x) := \mathcal N\big(r_0(x)^2 \big),\ \mathcal N_p (x) := \mathcal N'\big(s \big)\Big|_{s=r_0(x)^2}\ r_0(x)^2
\end{equation}
Note that $v_* = 0$ for kinks with speed $c=0$, see \eqref{eq: kink profile ODE} and set $c=0$ there.
Multiplying both sides of  \eqref{eq: ep0 explicit} by the Pauli matrix $\sigma_1$, we find that \eqref{eq: ep0 explicit} is equivalent to
\begin{equation}
   \partial_x\tilde{B}(x)= \begin{bmatrix}
        \mathcal N_0(x) + 2 \mathcal N_p (x) + w'(x)  & \lambda \\ 
        \lambda & - \mathcal N_0(x) + w'(x)
    \end{bmatrix}\tilde{  B} (x)
\label{eq: ep0 explicit1}\end{equation}
Introducing an integrating factor, we rewrite \eqref{eq: ep0 explicit1} as 
\[
\begin{aligned}
    & \pd_x \bigg( \exp \bigg[ -w(x)- \int_{-\infty} ^x  \mathcal N_p(y) \dd y \bigg]\tilde { B} (x) \bigg)\\
    =& 
    \begin{bmatrix}
        \mathcal N_0(x) + \mathcal N_p (x)  & \lambda \\ 
        \lambda & - \mathcal N_0(x) - \mathcal N_p(x)
    \end{bmatrix}
     \bigg( \exp \bigg[ -w(x)- \int_{-\infty} ^x  \mathcal N_p(y) \dd y \bigg] \tilde{ B}(x) \bigg)
\end{aligned}\]
So  $B(x) \in \mathcal C^1$ is a solution of 
\begin{equation}
\label{eq: reduced GEP c = 0}
 \partial_x B (x) = \begin{bmatrix}
        \mathcal N_0(x) + \mathcal N_p (x)  & \lambda \\ 
        \lambda & - \mathcal N_0(x) - \mathcal N_p(x)
    \end{bmatrix}  B(x)
\end{equation}
if and only if 
\begin{equation}
\label{eq: B tilde transform 0}
    \tilde { B}(x) = \exp \Big( w(x) + \int_{-\infty}^x \mathcal N_p(y) \dd y \Big)  B(x)
\end{equation}
is a $\mathcal C^1$ solution to  \eqref{eq: ep0}.
That the eigenvalue problem associated with \eqref{eq: ep0} is equivalent to the eigenvalue problem associated with equation \eqref{eq: B tilde transform 0}, is a consequence of the following lemma, which is proved in {\sc appendix} \ref{app: proof of lemma: c = 0 both L 2}:
\begin{lemma}
\label{lemma: c = 0 both L 2}
    Let $\lambda \in \mathbbm C$, and let $ B(x)$ and $\tilde { B}(x)$ be related by \eqref{eq: B tilde transform 0}. Then, $ B\in L^2$ if and only if $\tilde { B} \in L^2$. 
\end{lemma}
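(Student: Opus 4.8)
The plan is to show that the transformation \eqref{eq: B tilde transform 0} is multiplication by a scalar function that is bounded above and bounded away from zero; such a multiplier is a bicontinuous bijection of $L^2$ onto itself, which gives the equivalence in both directions. Write $\tilde B(x) = e^{g(x)} B(x)$ with
\[
g(x) := w(x) + \int_{-\infty}^x \mathcal N_p(y)\,dy .
\]
It suffices to prove that $g$ is bounded on $\mathbb R$, say $|g(x)| \le C$; then $m := e^{-C} \le e^{g(x)} \le e^{C} =: M$, whence $\|\tilde B\|_{L^2} \le M\|B\|_{L^2}$ and $\|B\|_{L^2} \le m^{-1}\|\tilde B\|_{L^2}$, so $B\in L^2 \iff \tilde B\in L^2$. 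Since $g$ is continuous, boundedness on any compact set is automatic, and the only issue is the behavior as $x\to\pm\infty$.

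First I would treat $x\to-\infty$. Here $w(x)=0$ by \eqref{eq:w-def}, and along the kink $r_0(x)\to 0$ exponentially as $x\to-\infty$ (as recorded just after \eqref{eq: kink amplitude ODE 0}). Since $\mathcal N_p(x)=\mathcal N'(r_0^2)\,r_0^2$ and $\mathcal N'$ is bounded near $0$, $\mathcal N_p(x)$ decays exponentially; hence $\int_{-\infty}^x \mathcal N_p(y)\,dy$ converges and $g(x)\to 0$.

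The main work is the limit $x\to+\infty$, where a cancellation must be exhibited. Here $w(x)=Kx$, while $r_0(x)\to 1$ exponentially (the convergence-rate analysis of Section \ref{sec:conv-rate} gives rate $e^{-2Kx}$ at $c=0$). Consequently $\mathcal N_p(x)\to \mathcal N'(1)=-K$; moreover the smooth function $s\mapsto \mathcal N'(s)\,s + K$ vanishes at $s=1$, so $\mathcal N_p(x)+K$ decays exponentially and is integrable on $[1,\infty)$. Splitting off the constant $-K$ and using $w(x)=Kx$ for $x\ge 1$,
\[
\int_{-\infty}^x \mathcal N_p(y)\,dy = -Kx + \int_{-\infty}^1 \mathcal N_p(y)\,dy + K + \int_1^x\big(\mathcal N_p(y)+K\big)\,dy ,
\]
so the leading $-Kx$ exactly cancels $w(x)=Kx$ and $g(x) = \int_{-\infty}^1 \mathcal N_p(y)\,dy + K + \int_1^x\big(\mathcal N_p(y)+K\big)\,dy$ converges to a finite limit as $x\to+\infty$. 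Combining the three regimes shows $g$ is bounded on $\mathbb R$, completing the argument.

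The one step that requires genuine input rather than routine continuity is this $+\infty$ cancellation: it hinges on the exact match between the limiting value $\mathcal N'(1)=-K$ (definition \eqref{eq: K}) and the asymptotic weight rate $a=K$, which is precisely the value forced by \eqref{eq: essential kink condition} at $c=0$, together with the exponential convergence of the profile to its equilibrium. This is exactly why the admissible weight is $e^{Kx}$ and no other exponential rate would make both $e^{g}$ and $e^{-g}$ bounded.
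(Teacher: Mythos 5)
Your proof is correct and follows essentially the same route as the paper's: both reduce the claim to uniform boundedness of the exponent $g(x)=w(x)+\int_{-\infty}^x\mathcal N_p(y)\,\dd y$, handle $x\to-\infty$ via the exponential vanishing of $\mathcal N_p$ (with $w\equiv 0$ there), and handle $x\to+\infty$ via the exact cancellation of $w(x)=Kx$ against $\mathcal N_p\to-K$, using the exponential convergence $r_0^2\to 1$ and smoothness of $\mathcal N$ to get integrability of $\mathcal N_p+K$. Your explicit observation that a two-sided bound $m\le e^{g}\le M$ yields both directions of the $L^2$ equivalence is exactly what the paper's boundedness claim delivers implicitly.
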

It then follows immediately that:
\begin{corollary}
\label{cor: equivalence of eps}
The pair $(\lambda, \tilde B)$, with $0\ne \tilde B\in H^1$ solves the eigenvalue problem \eqref{eq: ep0}, if and only if $(\lambda, B)$ where $ 0 \neq B \in H^1$ solves the eigenvalue problem \eqref{eq: reduced GEP c = 0}.
\end{corollary}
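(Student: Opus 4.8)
The plan is to obtain the Corollary by splicing together two ingredients already available: the $\mathcal C^1$-level equivalence of the ODE systems \eqref{eq: reduced GEP c = 0} and \eqref{eq: ep0}, established just above via the integrating factor, and the $L^2$-membership equivalence of Lemma \ref{lemma: c = 0 both L 2}. The one thing that does not come for free is that both eigenvalue problems are posed for $H^1$ eigenfunctions, whereas the tools at hand operate at the level of $\mathcal C^1$ solutions and of the $L^2$ norm; the real content is therefore to reconcile these three regularity classes. I would first record that the scalar factor $\exp\big(w(x) + \int_{-\infty}^x \mathcal N_p(y)\dd y\big)$ in \eqref{eq: B tilde transform 0} is smooth, strictly positive and hence nowhere vanishing --- the integral converges because $\mathcal N_p(y) \to 0$ exponentially as $y \to -\infty$, since $r_0(y) \to 0$ exponentially. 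Consequently the transform is invertible and $\tilde B \not\equiv 0$ if and only if $B \not\equiv 0$, which supplies the nontriviality in both directions.

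Next I would show that, for each of the two first-order systems, membership in $H^1$ coincides with membership in $L^2$ and forces $\mathcal C^1$ regularity. Both systems have the form $\partial_x f = M(x) f$, and the entries of $M(x)$ are uniformly bounded: $w'(x)$ is bounded by the definition \eqref{eq:w-def} of the weight, while $\mathcal N_0(x)$ and $\mathcal N_p(x)$ from \eqref{eq: N 0 and N p} are bounded because $r_0(x)$ is bounded and $\mathcal N, \mathcal N'$ are continuous. Thus if $f \in L^2$ solves $\partial_x f = M f$, then $\partial_x f = M f \in L^2$ and so $f \in H^1$; the reverse inclusion is immediate. Moreover an $H^1$ solution is automatically $\mathcal C^1$: in one dimension $H^1 \hookrightarrow \mathcal C^0$, whence $M f$ and therefore $\partial_x f$ are continuous.

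Finally I would assemble the chain. Starting from an $H^1$ solution $\tilde B \neq 0$ of \eqref{eq: ep0}, the previous paragraph makes $\tilde B$ a $\mathcal C^1 \cap L^2$ solution; the $\mathcal C^1$ ODE equivalence then produces a $\mathcal C^1$ solution $B$ of \eqref{eq: reduced GEP c = 0}, Lemma \ref{lemma: c = 0 both L 2} gives $B \in L^2$, and the bounded-coefficient argument upgrades this to $B \in H^1$, with $B \neq 0$ by invertibility of the transform. The converse is identical with the roles of $B$ and $\tilde B$ exchanged. I do not anticipate a serious obstacle; the only substantive step is the boundedness of $M(x)$, which is what powers the $L^2 \to H^1$ upgrade, and the rest is bookkeeping between the regularity classes. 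It is worth flagging that this equivalence is purely ODE-theoretic and holds for every $\lambda \in \mathbbm C$, independently of the location of the essential spectrum, so no spectral hypothesis on $\lambda$ is needed at this stage.
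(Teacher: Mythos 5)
Your proposal is correct and follows essentially the same route as the paper: the paper derives the corollary directly from the integrating-factor equivalence of the two ODE systems together with Lemma \ref{lemma: c = 0 both L 2}, exactly the two ingredients you splice together. Your additional bookkeeping --- that bounded, continuous coefficients make $L^2$ solutions automatically $H^1$ and (via $H^1 \hookrightarrow \mathcal C^0$) $\mathcal C^1$, and that positivity of the exponential factor transfers nontriviality --- is precisely what the paper's ``it then follows immediately'' leaves implicit, so it is a welcome but not divergent elaboration.
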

Now, it is convenient to write out \eqref{eq: reduced GEP c = 0}:
\begin{equation}
\label{eq:writout text}
\begin{aligned}
U_x & = \big[\mathcal N_0(x) + \mathcal N_p (x)\big] U + \lambda V
\\
V_x & = \lambda U - \big[\mathcal N_0(x) + \mathcal N_p (x)\big] V
\end{aligned}
\end{equation}
We are now able to reduce the problem of finding eigenpairs of $L_{*,w}$ to finding bound state and bound state energies of some linear Schr\"{o}dinger operator on the real line, as shown in the following lemma.
For proof of this lemma see {\sc appendix} \ref{app: proof of lemma: schroedinger}.
\begin{lemma} 
\label{lemma: schroedinger}
Assume that the pair $(\lambda, B(x))$ where 
\[\lambda\in\mathbb C\quad{\rm and}\quad  0\ne B(x) \equiv \begin{bmatrix}
     U(x) &V(x)
\end{bmatrix}^\mathsf T \in L^2
\]
solves \eqref{eq: reduced GEP c = 0}. 
Then,
\begin{itemize}
    \item[(i)] $(\mathcal{E}=-\lambda^2, U(x) )$ with $0\ne U\in L^2$ and $ -\lambda^2\in\mathbbm R$,
is an eigenpair for the eigenvalue problem:
\[ \left(-\pd_x^2 + \mathcal V(x)\right)U(x) = \mathcal{E}U(x),\quad U(x) \in L^2,\]
with the real-valued potential
\begin{equation*}
    \mathcal V(x) =  \big[\mathcal N_0(x) + \mathcal N_p (x)\big]_x + \big[\mathcal N_0(x) + \mathcal N_p (x)\big]^2.
\end{equation*}
\item[(ii)] Moreover, 
$\mathcal E = 0$ is the ground state of operator $- \pd_x^2 + \mathcal V(x)$.
\end{itemize}
\end{lemma}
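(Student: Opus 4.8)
The plan is to eliminate $V$ from the first-order system \eqref{eq:writout text} to obtain a scalar second-order equation for $U$, and then to recognize the resulting Schr\"odinger operator as a perfect square. Throughout I write $P(x) := \mathcal N_0(x)+\mathcal N_p(x)$, so that \eqref{eq:writout text} reads $U_x = PU+\lambda V$ and $V_x = \lambda U - PV$.

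For part (i), differentiate the first equation and substitute $V_x$ from the second together with $\lambda V = U_x - PU$ from the first:
\[
U_{xx} = P_x U + P U_x + \lambda V_x = P_x U + PU_x + \lambda^2 U - \lambda P V = \big(P_x + P^2 + \lambda^2\big)U,
\]
where I used $\lambda PV = P(U_x - PU)$. Hence $\big(-\pd_x^2 + \mathcal V\big)U = -\lambda^2 U$ with $\mathcal V = P_x + P^2 = [\mathcal N_0 + \mathcal N_p]_x + [\mathcal N_0 + \mathcal N_p]^2$, which is the claimed potential. Since $B=(U,V)^\mathsf T \in L^2$ we have $U \in L^2$. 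To see $U \not\equiv 0$: if $U\equiv 0$, the second equation gives $V_x = -PV$, so $V(x)=V(0)\exp\!\big(-\int_0^x P\big)$; because $P(x)\to -K$ as $x\to+\infty$ (from $r_0\to1$, $\mathcal N(1)=0$, $\mathcal N'(1)=-K$), this grows like $e^{Kx}$ unless $V\equiv 0$, contradicting $B\neq 0$. Finally $\mathcal V$ is real-valued, so $-\pd_x^2+\mathcal V$ is symmetric and $\mathcal E = -\lambda^2 = \langle (-\pd_x^2+\mathcal V)U, U\rangle/\|U\|^2 \in \mathbb{R}$.

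For part (ii), I would use the factorization
\[
-\pd_x^2 + \mathcal V = \big(\pd_x + P\big)\big(-\pd_x + P\big) = A^\ast A, \qquad A := -\pd_x + P,
\]
valid because $P$ is real (so the formal adjoint is $A^\ast = \pd_x + P$). Consequently $\langle (-\pd_x^2+\mathcal V)U, U\rangle = \|AU\|^2 \geq 0$, so the spectrum lies in $[0,\infty)$ and every eigenvalue satisfies $\mathcal E \geq 0$. It then suffices to exhibit an $L^2$ eigenfunction at $\mathcal E = 0$: this requires $AU_0=0$, i.e.\ $U_0' = P U_0$, solved by $U_0(x)=\exp\!\big(\int_0^x P(y)\,\dd y\big)$. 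From $(\mathcal N1)$ one has $P\to 1$ as $x\to-\infty$ (since $r_0\to0$, $\mathcal N(0)=1$, $\mathcal N_p\to0$) and $P\to -K<0$ as $x\to+\infty$, so $U_0$ decays exponentially at both ends and lies in $L^2$. Since $U_0>0$ is nodeless, it is the ground state of $-\pd_x^2+\mathcal V$, which establishes that $\mathcal E=0$ is the lowest eigenvalue. I would also note, for consistency, that $U_0$ agrees up to a constant with the transform \eqref{eq: B tilde transform 0} of the translation mode $\pd_x b_* = (r_0', 0)^\mathsf T$, so that $\lambda=0$ indeed lies in $\sigma_{\mr d}(L_{*,w})$.

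The only genuinely delicate point is showing that $\mathcal E=0$ is the \emph{ground state} rather than merely \emph{an} eigenvalue; the factorization $-\pd_x^2+\mathcal V = A^\ast A$ resolves this by forcing $\inf\sigma(-\pd_x^2+\mathcal V)=0$, and nodelessness of $U_0$ certifies that $U_0$ realizes the bottom of the spectrum. Verifying $U_0\in L^2$ reduces entirely to the signs of the limits $P(-\infty)=1$ and $P(+\infty)=-K$, which follow directly from $(\mathcal N1)$ and the boundary behavior of $r_0$.
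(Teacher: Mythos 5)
Your proof is correct, and part (i) follows the paper's argument essentially verbatim: differentiate the first equation of \eqref{eq:writout text}, eliminate $V_x$ and $\lambda V$, and read off $\mathcal V = P_x + P^2$ with $\mathcal E = -\lambda^2$; your nontriviality argument for $U$ is in fact slightly more complete than the paper's remark (which only notes $U\neq 0$ when $\lambda\neq 0$, deferring the $\lambda=0$ case to the growth of $V$ at $x\to-\infty$, exactly the mechanism you invoke). Where you genuinely depart from the paper is part (ii). The paper exhibits the explicit zero mode $U_0 = \exp\big(\int^x \mathcal N_0+\mathcal N_p\big)$, checks directly that it solves the equation at $\mathcal E=0$, and concludes that $\mathcal E = 0$ is the (nondegenerate) ground state energy solely from the nodelessness of $U_0$ — an implicit appeal to Sturm oscillation theory for one-dimensional Schr\"odinger operators; it also includes a computation showing $\min_x \mathcal V(x)<0$, which plays no logical role in your version. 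You instead observe the supersymmetric factorization $-\partial_x^2+\mathcal V = A^\ast A$ with $A=-\partial_x+P$, which is available precisely because $\mathcal V = P_x+P^2$ is in Riccati form; this makes nonnegativity of the operator, and hence $\inf\sigma(-\partial_x^2+\mathcal V)=0$ once $AU_0=0$ with $U_0\in L^2$, completely elementary, with nodelessness needed only for nondegeneracy. Your route buys a self-contained proof that $0$ is the \emph{bottom} of the spectrum (not merely an eigenvalue) without citing oscillation theory, while the paper's route is shorter given standard spectral facts; your closing consistency check — that $U_0$ is, up to a constant, the image of the translation mode $\partial_x b_*$ under the transform \eqref{eq: B tilde transform 0} — is also correct, since the logarithmic derivative of $e^{-\int \mathcal N_p}\,r_0'$ equals $\mathcal N_0+\mathcal N_p$, and matches the paper's identification of $0\in\sigma_{\mathrm d}(L_{*,w})$.
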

We are ready to prove {\sc Theorem} \ref{thm: kink spectral stability c = 0}
\begin{proof}[Proof of {\sc Theorem} \ref{thm: kink spectral stability c = 0}]
Part (i) follows Part (i) of {\sc Theorem} \ref{thm: essential spectrum of L infty} and Part (ii) of {\sc Theorem} \ref{thm: border of essential spectrum}, 
with $\sigma(L_\pm)$ given by \eqref{eq: kink c=0 essential spectra}

It follows from {\sc lemma} \ref{lemma: schroedinger},
by self-adjointness of $-\pd_x^2 + \mathcal V$ that $-\lambda^2\in\mathbb{R}$
and therefore,
\[\text{ $\lambda$ is either real or purely imaginary.}\]
We may constrain $\lambda$ further. Indeed, note that  $\mathcal N_p(x) = \mathcal N'\big(r_0^2 (x) \big) r_0^2(x)$ approaches $0$ as $x \to -\infty$  (since $r_0(-\infty) = 0$) and approaches $-K = \mathcal N'(1)$ as $x \to +\infty$ (since $r_0(\infty) = 1$). 
Hence, $\mathcal V(-\infty) = 1$, and $\mathcal V(\infty) = K^2>0$. It follows that $-\lambda^2=\mathcal{E}<\min\{1,K^2\}$. It follows that
\begin{equation}
\label{statement lambda}
    \text{ $\lambda$ is either real or purely imaginary with $\big|\Im \lambda\big| < \min\{1, K\} = \kappa$.}
\end{equation}
Now we claim there are no real \textcolor{blue}{non-zero} eigenvalues of $L_{*,w}$. Once proved, this implies that  there are 
no  eigenvalues with nonzero real part,
and the non-moving kink is neutrally spectrally stable in $L^2_w$.\\
Suppose there exists $0\ne\lambda \in \mathbbm R$ and $\tilde B \in H^1$ that solves \eqref{eq: ep0}, 
then from {\sc corollary} \ref{cor: equivalence of eps} and {\sc lemma} \ref{lemma: schroedinger} 
there would be an eigenpair $\mathcal E = -\lambda^2 < 0$ and $U\in L^2$ of the operator $-\pd_x^2 + \mathcal V(x)$. 
In fact, this is impossible since $\mathcal E = 0$ is the ground state energy of $-\pd_x^2 + \mathcal V(x)$. Indeed, from \eqref{eq:writout text} $U(x)$ is a multiple of $\exp(\int^x\mathcal{N}_0(s)+\mathcal{N}_p(s) ds)$,
which does not change sign.

Therefore there is no real eigenvalues of $L_{*,w}$, and proof of Theorem \ref{thm: kink spectral stability c = 0} is concluded.
\end{proof}

\begin{remark}
We remark that $\sigma\big(L_{*,w} \big)$ has no embedded eigenvalues in the essential spectrum, see statement \eqref{statement lambda}, 
and that $\sigma_\mr d \big(L_{*,w}\big)$
and is a finite subset of the imaginary interval $\big( -\ii\kappa , \ii\kappa \big)$. Indeed, 
these facts hold for any potential, $\mathcal{V}$, satisfying  $\int (1 + |x|)\mathcal V(x) < \infty$
 \textcolor{red}{\bf citation of Fadeev criterion}
\end{remark}

\subsection{Spectral stability of moving kinks $(0\leq c<1)$}
\label{sec: moving kink stability}
Introduce the smooth weight $W(x) = e^{w(x)}$, where  
\begin{equation}
\label{eq: W c weight}
    W_c(x) = e^{w_c(x)} = \begin{cases}
1 & \text{ for  $x \leq -1$} \\
e^{ \frac{K}{\sqrt{1-c^2}} x } & \text{ for $x \geq 1$}
\end{cases}
\end{equation}
We furthermore require the nonlinearity satisfies the following \textbf{concavity} condition:
\begin{equation}
    \label{eq: concavity}
    \mathcal N''(s) \leq 0,\quad \text{for $s \in [0,1]$}
\end{equation}
\begin{theorem}
\label{thm: kink stability}
Assume the nonlinearity satisfies the concavity condition \eqref{eq: concavity}.
Then, 
\begin{enumerate}
    \item Any kink $b_*= b_{c,0}$  (speed $c\in[0,1)$) is spectrally stable in $L^2_{w_c}$ given by \eqref{eq: W c weight}.
    That is, the spectrum of weighted operator $L_{*,w_c}$  is a subset of the closed left-half complex plane. 
\item In particular, $0 \in \sigma_\mr{d} \big(L_{*,w}\big)$
and the corresponding zero energy eigenspace is spanned by $\partial_x b_{c,0}(x)$, 
arising from translation invariance of \eqref{eq: PDE in moving frame}.
\end{enumerate}
\end{theorem}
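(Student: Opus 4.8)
The plan is to follow the template of the non-moving case, {\sc Theorem} \ref{thm: kink spectral stability c = 0}, but to carefully track the extra terms generated by the advection ($c\neq 0$), which is precisely where the concavity hypothesis \eqref{eq: concavity} becomes essential. First I would dispose of the essential spectrum. With the weight $W_c$ of \eqref{eq: W c weight} the asymptotic rate as $x\to+\infty$ is $a=K/\sqrt{1-c^2}$, and a one-line check shows $K\sqrt{(1-c)/(1+c)}\le K/\sqrt{1-c^2}\le K\sqrt{(1+c)/(1-c)}$ (the two inequalities reduce to $1-c\le 1$ and $1\le 1+c$), so $a$ lies in the admissible interval \eqref{eq: essential kink condition}. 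Hence {\sc Proposition} \ref{prop: essential kink} gives $\Re\sigma_{\mathrm e}(L_{*,w_c})\le 0$; moreover, since the left-asymptotic operator is $L_O$, whose spectrum lies on $\ii\mathbb R$ by {\sc Proposition} \ref{prop:stab-equil}, the essential spectrum touches but does not cross the imaginary axis. It then remains only to control the discrete spectrum.

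Next I would set up the weighted eigenvalue problem $L_{*,w_c}\tilde B=\lambda\tilde B$ as the first-order system $\partial_x\tilde B=\big[\lambda\Sigma^{-1}+w'(x)I-\Sigma^{-1}A_*(x)\big]\tilde B$. Here I would exploit that a kink profile is \emph{radial}, $b_*(x)=r_*(x)\,[\cos\theta_{c,0},\sin\theta_{c,0}]$ with $\sin 2\theta_{c,0}=-c$ and $\cos 2\theta_{c,0}=\sqrt{1-c^2}$, so that $A_*(x)$ depends on $x$ only through $\mathcal N_0(x)=\mathcal N(r_*^2)$ and $\mathcal N_p(x)=\mathcal N'(r_*^2)r_*^2$. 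As in the passage from \eqref{eq: ep0 explicit} to \eqref{eq: reduced GEP c = 0}, I would strip off the weight and an integrating factor using the substitution of {\sc Proposition} \ref{prop: B tilde B transform} (whose $c=0$ instance is \eqref{eq: B tilde transform 0}), reducing to a trace-free symmetric system $\partial_x B=\mathcal M(x,\lambda)B$. An analogue of {\sc Lemma} \ref{lemma: c = 0 both L 2} then shows $\tilde B\in L^2$ iff $B\in L^2$, so the two eigenvalue problems are equivalent and $\lambda\in\sigma_{\mathrm d}$ iff the reduced system has a nontrivial $L^2$ solution.

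Squaring and decoupling the reduced system, as in {\sc Lemma} \ref{lemma: schroedinger}, I would then pass to a scalar Schr\"odinger eigenvalue problem $\big(-\partial_x^2+\mathcal V(x)\big)\psi=\mathcal E\psi$, where $\mathcal E=\mathcal E(\lambda)$ is a real-valued function of $\lambda$ (equal to $-\lambda^2$ when $c=0$). Self-adjointness of $-\partial_x^2+\mathcal V$ forces $\mathcal E\in\mathbb R$, which confines $\lambda$ to $\mathbb R\cup\ii\mathbb R$. Finally, because $L_*\partial_x b_*=0$ by translation invariance of \eqref{eq: PDE in moving frame}, the transform of $\partial_x b_*$ is a \emph{nodeless} solution of the reduced system at $\mathcal E=0$; by Sturm theory a nodeless eigenfunction is the ground state, so $\mathcal E=0$ is the bottom of the spectrum and there is no $\mathcal E<0$, i.e.\ no real nonzero $\lambda$. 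Together with the purely imaginary case this gives $\sigma_{\mathrm d}(L_{*,w_c})\subset\ii\mathbb R$, and with the essential-spectrum bound this proves Part (1). For Part (2), the decay \eqref{eq: kink translation mode decay}, $\partial_x b_*\sim e^{-2Kx/\sqrt{1-c^2}}$ as $x\to+\infty$, against $e^{w_c}\sim e^{Kx/\sqrt{1-c^2}}$ yields $e^{w_c}\partial_x b_*\sim e^{-Kx/\sqrt{1-c^2}}\in L^2$ (with flat weight and exponential decay as $x\to-\infty$), so $0\in\sigma_{\mathrm d}(L_{*,w_c})$, and simplicity follows from the nondegeneracy of the Schr\"odinger ground state $\mathcal E=0$.

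The main obstacle is the decoupling step. For $c\neq 0$ the off-diagonal entries of $\mathcal M$ are no longer both simply $\lambda$ (as in \eqref{eq: reduced GEP c = 0}); they acquire $x$-dependent and \emph{asymmetric} corrections coming from $\Sigma^{-1}$ and from the unequal factors $1\mp\sqrt{1-c^2}$ multiplying $\mathcal N_p$ in $A_*$. Guaranteeing that squaring nonetheless produces a genuinely real potential $\mathcal V$ (so that $\mathcal E$ is real and the self-adjoint, ground-state machinery applies, and so that $\mathcal E=0$ remains the \emph{bottom} of the spectrum) is exactly the point at which the concavity condition \eqref{eq: concavity} is needed: it controls the sign of the $\mathcal N''(r_*^2)$-contribution that appears when differentiating $\mathcal N_p=\mathcal N'(r_*^2)r_*^2$ in forming $\mathcal V$. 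Verifying this sign structure, and thereby closing the ground-state comparison for the full range $0\le c<1$, is the technical heart of the argument.
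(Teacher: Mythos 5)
Your treatment of the essential spectrum is correct and matches the paper (the weight $a=K/\sqrt{1-c^2}$ satisfies \eqref{eq: essential kink condition}, so {\sc proposition} \ref{prop: essential kink} applies), and your plan to use the transform of {\sc proposition} \ref{prop: B tilde B transform} to reach the reduced system \eqref{eq: UV} is also the paper's route. The genuine gap is the decoupling step you flag at the end but do not resolve: for $c\neq 0$ the closed second-order equation for $V$ obtained from \eqref{eq: UV} is \eqref{eq: gep V}, $V_{XX}=(F^2-F_X)V+\Lambda(\Lambda-f)V$ with $F=\mathcal N_0+\mathcal N_p$ and $f=2c\,\mathcal N_p(X)$. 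Because the spectral parameter $\Lambda$ multiplies the genuinely $X$-dependent coefficient $f(X)$, this is a \emph{quadratic operator pencil}, not a Schr\"odinger eigenvalue problem: there is no $\lambda$-independent real potential $\mathcal V$ and scalar $\mathcal E(\lambda)$ putting it in the form $(-\partial_X^2+\mathcal V)\psi=\mathcal E\psi$. Consequently self-adjointness, reality of $\mathcal E$, the confinement of $\lambda$ to $\mathbb R\cup\ii\mathbb R$, and the Sturm ground-state comparison with the nodeless translation mode are all unavailable for $c\neq0$ --- and no sign control on the $\mathcal N''$-contribution restores them. The paper's actual mechanism is different and more elementary: pair \eqref{eq: gep V} with $V\in L^2$ to obtain the scalar quadratic \eqref{eq: Lambda quadratic} in $\Lambda$, whose coefficients satisfy $c_2=\|V\|_{L^2}^2>0$, $c_1=\langle V,-fV\rangle_{L^2}\ge 0$ (monotonicity $\mathcal N'\le 0$), and $c_0=\|V_X\|_{L^2}^2+\langle V,(F^2-F_X)V\rangle_{L^2}\ge 0$, where the concavity hypothesis \eqref{eq: concavity} enters precisely to force $F_X\le 0$; any root of such a quadratic has $\Re\Lambda\le 0$, contradicting $\Re\Lambda>0$. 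Note this argument excludes only spectrum in the open right half-plane; your stronger claim that the discrete spectrum lies in $\mathbb R\cup\ii\mathbb R$ is not established by the paper for $c\neq0$ and does not follow from this route.

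There is a second, more technical gap: your proposed ``analogue of {\sc lemma} \ref{lemma: c = 0 both L 2}'' asserting $\tilde B\in L^2$ \emph{iff} $B\in L^2$ is false as stated, because the transform \eqref{eq: transform B beta text} contains the factor $e^{-c\Lambda X}$, which is unbounded and $\lambda$-dependent, so the bounded-exponent argument from the $c=0$ case does not carry over. The paper's {\sc lemma} \ref{lem: c neq 0 kink unbounded} proves only the one implication actually needed ($\Re\lambda>0$ and $\tilde B\in L^2$ imply $B\in L^2$), via an asymptotic ODE analysis as $X\to\pm\infty$ that hinges on the inequality $\Re\sqrt{1+\Lambda^2}>\Re\Lambda$ for $\Re\Lambda>0$, i.e.\ \eqref{eq: sun guanhao}. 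Your verification that $e^{w_c}\partial_x b_{c,0}\in L^2$ (comparing \eqref{eq: kink translation mode decay} against the weight \eqref{eq: W c weight}) correctly gives $0\in\sigma_{\mathrm d}(L_{*,w_c})$, but the asserted simplicity of the zero eigenvalue cannot be inferred from ``nondegeneracy of the Schr\"odinger ground state'' once the Schr\"odinger reduction is gone.
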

\begin{remark}
\label{remark: translation mode kink}
    The neutral mode by translation invariance which corresponds to the neutral eigenvalue $0$, is always in the spectrum, since $e^{\frac{Kx}{\sqrt{1-c^2}}} \pd_x b_*(x) \to 0$ as $x\to \infty$.  
    In fact, the rate $a = \frac{K}{\sqrt{1-c^2}}$ in {\sc Theorem} \ref{thm: kink stability} satisfies 
\[
a \leq \frac{2K}{\sqrt{1-c^2}}
\]
where $\frac{2K}{\sqrt{1-c^2}}$ 
comes from the decay rate of the kink as $x \to \infty$, namely \eqref{eq: kink translation mode decay}.
This is in contrast to the case of supersonic pulses,
see {\sc proposition} \ref{prop:no_disc-super} and the discussion which follows it.
\end{remark}
\begin{remark}
We can improve the result a bit by merely requiring that the parameter $a$ in $e^{w(x)} = e^{ax} $ for $x > 1$ to satisfy
\[
    K \sqrt{\frac{1-c}{1+c}} \leq a \leq K \sqrt{\frac{1+c}{1-c}}
\]
However for simplicity we made the restriction $a = \frac{K}{\sqrt{1-c^2}}$ above.
This restriction does not affect our understanding of the ``big picture''.
Moreover, with this weight, when $c=0$,
the results in {\sc Theorem} \ref{thm: kink spectral stability c = 0} are recovered.
\end{remark}

Recall for $c = 0$ we transformed eigenvalue problem \eqref{eq: ep0} to \eqref{eq: reduced GEP c = 0}, through transformation \eqref{eq: B tilde transform 0}. 
We  proceed in an analogous manner.
We begin with the eigenvalue problem:
\begin{equation}
\label{eq: ep c}
L_{*,w} \tilde { B} (x) 
=  \lambda \tilde { B} (x),\quad B\in L^2
\end{equation}
Explicitly, $L_{*,w}$ defined in \eqref{eq: conjugate operator} is the (weight-conjugated) linearized operator about a kink of speed $0<c<1$:
\begin{align*}
L_{*,w}= \Sigma \Big( \pd_x - w_c'(x) \Big) + A_*(x),
\end{align*}
where $\Sigma= c \sigma_0 + \sigma_1$ and 
\[A_*(x) = \begin{bmatrix}
    {\mathcal N'\big(r_*( x )^2 \big)} r_*(x)^2 \sin 2\theta
    & \mathcal N_* \big(r_*(x)^2 \big) +  {2 \mathcal N' \big(r_*( x )^2 \big)} r_*(x)^2 \sin^2 \theta \\
     -\mathcal N \big(r_*( x )^2 \big)- {2\mathcal N' \big(r_*( x )^2 \big)}r_*(x)^2 \cos^2 \theta & -{\mathcal N' \big(r_*( x )^2 \big)}r_*(x)^2 \sin 2\theta 
    \end{bmatrix}\]
The expression for $A_*(x)$ is the general expression \eqref{eq: A *} where set 
set $(u_*,v_*) = b_*$ satisfying \eqref{eq: kink profile ODE} with speed $c$;
the kink profile which tends to $(0,0)$ as $x\to-\infty$ and to $(\cos(\theta),\sin(\theta))$ as $x\to+\infty$. 
Here, $\theta = \theta_{c,E} = \frac{1}{2}\arcsin (E-c)$; see Section \ref{sec:theta}.

Eigenvalue problem \eqref{eq: ep c} is rather complicated,
in view of the expression of $A_*$ above. 
However, it is possible to reduce it to a simpler form:
\begin{proposition}
\label{prop: B tilde B transform}
Assume eigenvalue problem \eqref{eq: ep c} is satisfied by $\lambda \in \mathbbm C, \tilde B \in L^2$.
Let
 $B=\begin{bmatrix}
    U & V
\end{bmatrix}^\mathsf T$ satisfying 
\begin{equation}
\label{eq: transform B beta text}
    B (X) =
    e^{-c \Lambda X }
    e^{ w_c\big(\sqrt{1-c^2} X \big) + \int_{-\infty}^X \mathcal N_p(Y) \dd Y }  \beta (X)
\end{equation}
where the intermediate $\beta(X)$ is related to $\tilde B$ by
\begin{equation}
\label{eq: B tilde beta transformation text}
\tilde B \big(\sqrt{1-c^2} X \big) = \big(\sigma_0 \cos \theta + \sigma_1 \sin \theta\big)  \beta (X)
\end{equation} 
where $\sigma_j,\ j=0,1,2,3$ are Pauli matrices. See \eqref{eq: pauli}.
Then, 
\begin{equation}
\label{eq: UV}
    \begin{aligned}
     U_X & = \big(\mathcal N_0 + \mathcal N_p \big)  U + \big( \Lambda -2c\mathcal N_p\big) V \\
     V_X & = \Lambda U - \big(\mathcal N_0 + \mathcal N_p\big) V
    \end{aligned},
\end{equation}
where  $\mathcal{N}_0 = \mathcal N\big(r_0(x)^2\big)$ and $\mathcal{N}_p = \mathcal N'(r_0(x)^2)r_0(x)^2$, defined in \eqref{eq: N 0 and N p}, where $r_0(x)$ is the amplitude of the non-moving kink profile satisfying \eqref{eq: kink amplitude ODE 0}; and $\Lambda = \frac{\lambda}{\sqrt{1-c^2}}$.
\end{proposition}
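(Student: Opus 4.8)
The plan is to reduce the eigenvalue problem \eqref{eq: ep c} to the system \eqref{eq: UV} by performing, in sequence, the three operations encoded in \eqref{eq: B tilde beta transformation text} and \eqref{eq: transform B beta text}: a spatial rescaling, a constant change of basis, and scalar integrating factors, exactly paralleling the $c=0$ reduction \eqref{eq: B tilde transform 0}--\eqref{eq: reduced GEP c = 0}. First I would rewrite \eqref{eq: ep c} as a first-order ODE system. Since $0\le c<1$, the matrix $\Sigma$ of \eqref{eq: formula Sigma} is invertible, with $\Sigma^{-1} = \tfrac{1}{1-c^2}\bigl(-c\,\sigma_0 + \sigma_1\bigr)$, so $L_{*,w}\tilde B = \lambda\tilde B$ becomes
\[
\tilde B'(x) = \Bigl[\, w_c'(x)\,\sigma_0 + \Sigma^{-1}\bigl(\lambda\sigma_0 - A_*(x)\bigr)\Bigr]\tilde B(x).
\]
The essential structural input is that along the kink the profile stays on the ray of angle $\theta$, i.e. $(u_*,v_*) = r_*(\cos\theta,\sin\theta)$, and that under the rescaling $x = \sqrt{1-c^2}\,X$ the amplitude equation in \eqref{eq: kink profile ODE} collapses to the $c=0$ amplitude equation \eqref{eq: kink amplitude ODE 0}; hence $r_*(\sqrt{1-c^2}X) = r_0(X)$ and the entries of $A_*$ are expressed through $\mathcal N_0(X)$ and $\mathcal N_p(X)$ of \eqref{eq: N 0 and N p}. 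Because $E=0$ for the kink, I would substitute the key identities $\sin 2\theta = E-c = -c$ and $\cos 2\theta = \sqrt{1-c^2}$ at every opportunity.

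Next I would record $A_*$ from \eqref{eq: A *} in the Pauli basis \eqref{eq: pauli}, namely
\[
A_* = -\,\mathcal N_p\cos 2\theta\,\sigma_1 \;-\;(\mathcal N_0 + \mathcal N_p)\,\ii\sigma_2 \;+\;\mathcal N_p\sin 2\theta\,\sigma_3 .
\]
Applying the change of variables $\tilde B(\sqrt{1-c^2}X) = R_\theta\,\beta(X)$ with $R_\theta = \cos\theta\,\sigma_0 + \sin\theta\,\sigma_1$ (invertible since $\det R_\theta = \cos 2\theta = \sqrt{1-c^2}>0$), the chain rule gives
\[
\beta'(X) = \sqrt{1-c^2}\;R_\theta^{-1}\Bigl[\, w_c'\,\sigma_0 + \Sigma^{-1}\bigl(\lambda\sigma_0 - A_*\bigr)\Bigr]R_\theta\,\beta(X),
\]
all coefficients evaluated at $x=\sqrt{1-c^2}X$. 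I would then compute the conjugation using $\sigma_1 R_\theta = R_\theta\sigma_1$ (so the $\sigma_1$ part passes through untouched) together with the elementary identities for $R_\theta^{-1}\sigma_2 R_\theta$ and $R_\theta^{-1}\sigma_3 R_\theta$, whose $1/\cos 2\theta$ factors cancel against the overall $\sqrt{1-c^2}$. The $\lambda$-term contributes $-c\Lambda\,\sigma_0 + \Lambda\,\sigma_1$ with $\Lambda=\lambda/\sqrt{1-c^2}$, supplying the symmetric off-diagonal entry $\Lambda$; the $A_*$-term, after substituting $\sin 2\theta=-c$, produces the asymmetric off-diagonal correction $-2c\,\mathcal N_p$ in the upper right together with a purely diagonal remainder. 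Finally the scalar prefactor in \eqref{eq: transform B beta text} is an integrating factor whose logarithmic derivative is designed to cancel the three diagonal contributions — the $c\Lambda$ coming from $\Sigma^{-1}\lambda$, the $\sqrt{1-c^2}\,w_c'$ coming from the weight, and the $\mathcal N_p$ left over from the $\sin 2\theta$ terms — leaving the balanced diagonal $\pm(\mathcal N_0+\mathcal N_p)$ of \eqref{eq: UV}. Collecting the four entries yields \eqref{eq: UV}.

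I expect the main obstacle to be the bookkeeping in the conjugation $R_\theta^{-1}\Sigma^{-1}A_* R_\theta$: because $R_\theta$ is symmetric but not orthogonal, the $\sigma_2$ and $\sigma_3$ components get mixed and scaled, and one must verify that exactly the asymmetric term $-2c\,\mathcal N_p$ survives in the upper-right entry with no spurious asymmetry in the lower-left entry. To control this cleanly I would pass to the characteristic (null-coordinate) basis $P=\tfrac{1}{\sqrt2}(\sigma_1+\sigma_3)$, in which $\Sigma$, $\Sigma^{-1}$ and $R_\theta$ are simultaneously diagonal, so that the conjugation reduces to a pair of scalar multiplications and the $\sqrt{1-c^2}$ cancellations become transparent; I would then transform back to read off \eqref{eq: UV}. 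A secondary point to keep straight is the piecewise definition of $w_c$ in \eqref{eq: W c weight}: since $w_c'$ enters only through the integrating factor, the argument is uniform in $x$ and no separate treatment of the interpolation region $|x|\le 1$ is needed; the accompanying membership statements $B\in L^2$ versus $\tilde B\in L^2$ are deferred to the weighted analogue of {\sc lemma} \ref{lemma: c = 0 both L 2}.
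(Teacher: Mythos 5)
Your proposal is correct and follows essentially the same route as the paper's own proof: the rescaling $x=\sqrt{1-c^2}\,X$ together with the identity $r_*\big(\sqrt{1-c^2}\,X\big)=r_0(X)$, the constant rotation $R_\theta=\sigma_0\cos\theta+\sigma_1\sin\theta$ exploited via $\sin 2\theta=-c$, $\cos 2\theta=\sqrt{1-c^2}$ and the fact that $\Sigma^{-1}$ and $R_\theta$ commute (both being polynomials in $\sigma_1$), and finally the scalar integrating factor that removes the $\sigma_0$-proportional part $-c\Lambda+v+\mathcal N_p$, leaving exactly \eqref{eq: UV}. Your suggestion to pass to the characteristic basis diagonalizing $\sigma_1$ is merely a bookkeeping variant of the paper's anticommutation-based Pauli algebra (the paper instead multiplies through by $\sqrt{1-c^2}\,R_\theta^{-1}\Sigma^{-1}$ and uses cancellations such as $(-c\sigma_0+\sigma_1)(\sigma_0+c\sigma_1)=(1-c^2)\sigma_1$), not a genuinely different argument.
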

The proof of {\sc proposition} \ref{prop: B tilde B transform} is given in {\sc appendix} \ref{app: proof of prop: B tilde B transform}. 
Note that setting $c = 0$, 
transformation \eqref{eq: B tilde transform 0} is recovered from \eqref{eq: B tilde beta transformation text} and \eqref{eq: transform B beta text}. 
\begin{remark}
\label{rem: intuition}
    We remark on the intuition behind the transform \eqref{eq: B tilde beta transformation} above.
Consider the ODE \eqref{eq: kink profile ODE} which the profiles of kinks and antikinks satisfy 
(without the restriction on the second line of \eqref{eq: kink profile ODE}).
For $c = 0$, the orbits of possible kinks lie on the $u$-axis, and those of possible antikinks lie on the $v$-axis.
In general, however, the kink $b_{c,0}$ and $\mathcal P b_{c,0}$ lie on the line parallel to vector $\begin{bmatrix}
    \cos \theta & \sin\theta
\end{bmatrix}$, 
and the antikinks $\mathcal {CP} b_{c,0}$ and $\mathcal C b_{c,0}$ lie on the line parallel to the vector
$\begin{bmatrix}
    \sin \theta & \cos\theta
\end{bmatrix}$, see {\sc figure} \ref{fig: subsonic kink c > 0};
note that for $c > 0$, $\theta = -\frac12 \arcsin c <0$.
We would like to express the perturbation in terms of coordinates along these two vectors, hence the transform \eqref{eq: B tilde beta transformation}.
\end{remark}

The following result states that to preclude \textit{unstable} discrete spectrum, it suffice to prove that there are no eigenpairs $(B,\Lambda)$ of 
\eqref{eq: UV}. 
\begin{lemma}
\label{lem: c neq 0 kink unbounded}
Assume \eqref{eq: ep c} is satisfied by $\Re \lambda  > 0$ and $\tilde B\in L^2$.
Then $B$ defined through \eqref{eq: transform B beta text} satisfying \eqref{eq: UV}
with $\Lambda = \frac{\lambda}{\sqrt{1-c^2}}$ is in $L^2$ as well. 
\end{lemma}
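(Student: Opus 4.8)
The plan is to lean on Proposition \ref{prop: B tilde B transform}, which already guarantees that the $B$ built from $\tilde B$ through \eqref{eq: transform B beta text}--\eqref{eq: B tilde beta transformation text} is a genuine solution of the ODE system \eqref{eq: UV}, and then to establish $B\in L^2$ by tracking its behavior as $X\to\pm\infty$. First I would compose the two transformation formulas to get the explicit relation
\[
B(X) = g(X)\, M^{-1}\, \tilde B\!\left(\sqrt{1-c^2}\,X\right), \qquad M = \sigma_0\cos\theta + \sigma_1\sin\theta,
\]
where $g(X)=\exp\!\big(-c\Lambda X + w_c(\sqrt{1-c^2}X) + \int_{-\infty}^{X}\mathcal N_p(Y)\,dY\big)$ is a nonvanishing scalar and $M$ is a constant matrix that is invertible with bounded inverse, since $\det M=\cos 2\theta=\sqrt{1-c^2}\neq 0$ for $|c|<1$. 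Because $M^{-1}$ is bounded, after the change of variables $x=\sqrt{1-c^2}X$ it suffices to control $\|B\|_{L^2}^2 \le \tfrac{\|M^{-1}\|^2}{\sqrt{1-c^2}}\int_{\mathbb R}\big|g(x/\sqrt{1-c^2})\big|^2\,|\tilde B(x)|^2\,dx$, i.e.\ to understand the scalar weight $g$ against the decay of $\tilde B$.

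Next I would split the integral at the two ends and exploit the hypothesis $\Re\lambda>0$. By Proposition \ref{prop: essential kink} the essential spectrum of $L_{*,w}$ sits in the closed left half-plane, so $\lambda$ with $\Re\lambda>0$ lies outside $\sigma_{\mathrm e}(L_{*,w})$; consequently the $L^2$ solution $\tilde B$ of \eqref{eq: ep c} is a true eigenfunction decaying exponentially as $x\to\pm\infty$ at the stable spatial rates of the asymptotic operators $L_-=L_O$ and $L_+=L_{\theta,a}$. At the nontrivial-equilibrium end ($x\to+\infty$, where $w_c$ is active) I would check that $g$ is in fact bounded: the active weight contributes $w_c(\sqrt{1-c^2}X)\sim KX$ while $\int_{-\infty}^X\mathcal N_p\sim -KX$, so these nearly cancel, and $e^{-c\Lambda X}$ decays; hence this end is controlled by $\tilde B\in L^2$ alone, just as in the $c=0$ computation behind Lemma \ref{lemma: c = 0 both L 2}.

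The hard part is the trivial-equilibrium end $x\to-\infty$, where the factor $e^{-c\Lambda X}$ makes $g$ grow exponentially, so $L^2$-membership of $\tilde B$ is not enough and its exponential decay must be used quantitatively. The cleanest way to close this is again through \eqref{eq: UV}: its coefficient matrix converges exponentially to the constant matrix whose spatial eigenvalues are $\pm\sqrt{1+\Lambda^2}$, and since $\Re\lambda>0$ keeps $1+\Lambda^2$ off the negative real axis these eigenvalues have nonzero real parts. By the standard asymptotic theory of linear ODEs with asymptotically constant coefficients, every solution of \eqref{eq: UV} either decays like $e^{+\sqrt{1+\Lambda^2}X}$ (integrable at $-\infty$) or grows like $e^{-\sqrt{1+\Lambda^2}X}$ (not integrable). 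I would then verify, by the rate bookkeeping, that the decay of $\tilde B$ at the stable rate of $L_O$ multiplied by the growth of $g$ collapses exactly to the decaying mode $e^{+\sqrt{1+\Lambda^2}X}$, so $B$ cannot contain the growing mode and is square-integrable near $-\infty$; combining the two ends yields $\|B\|_{L^2}<\infty$. The single delicate point is precisely this rate comparison at $-\infty$ — confirming that the bound-state decay dominates the growth of the conjugating weight — and it is exactly here that the strict inequality $\Re\lambda>0$, rather than merely $\lambda\notin\sigma_{\mathrm e}$, is indispensable.
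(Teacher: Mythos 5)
Your proposal is correct and follows essentially the same route as the paper's own proof: exponential behavior of $\tilde B$ at both ends because $\Re \lambda > 0$ places $\lambda$ off the essential spectrum ({\sc proposition} \ref{prop: essential kink}), near-cancellation of $w_c$ against $\int \mathcal N_p$ so that only the $e^{-c\Lambda X}$ factor matters in the conjugating weight, and at $X \to -\infty$ the exponential dichotomy $B \sim e^{\pm\sqrt{1+\Lambda^2}\,X}$ for solutions of \eqref{eq: UV}, with the growing mode excluded by an inequality of the form $\Re \sqrt{1+\Lambda^2} > \Re \Lambda$ (the paper's \eqref{eq: sun guanhao}). The only cosmetic difference is that where you propose to match rates exactly, the paper rules out the growing mode by the softer estimate $B(X) = o\big(e^{-c\Lambda X}\big)$ as $X \to -\infty$, which follows from $\beta = o(1)$ alone and avoids needing the precise decay rate of $\tilde B$ at the trivial-equilibrium end.
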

The proof of {\sc lemma} \ref{lem: c neq 0 kink unbounded} is given in {\sc appendix} \ref{app: proof of lem: c neq 0 kink unbounded}.


We now show there is no $B\in L^2$ that satisfies \eqref{eq: UV} with some $\Re \Lambda > 0$. We proceed with shorthand $F(X) = \mathcal N_0(X) + \mathcal N_p(X)$ and $f = 2c \mathcal N_p(X)$. 
Acting $\pd_X$ on both sides of the second equation of \eqref{eq: UV}, then use the first equation for $U_X$, we obtain a closed equation for $V$
\begin{equation}
    \label{eq: gep V}
    V_{XX}
    =
\big( F^2 - F_X \big) V + \Lambda\big(\Lambda - f \big) V
\end{equation}

\begin{lemma}
\label{lemma: no V L 2} Assume $\Re \Lambda >0$, 
and that the nonlinearity $\mathcal N(r^2)$ satisfies the concavity condition
\eqref{eq: concavity}:
\[
\mathcal N''(s) \leq 0
\]
For $s \in [0,1]$, along with hypotheses ($\mathcal N$1) to ($\mathcal N$4) in {\sc section} \ref{sec: nonlinearity}.
Then, the only  $V \in L^2$ that satisfies \eqref{eq: gep V} is $V\equiv0$.
\end{lemma}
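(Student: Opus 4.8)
The plan is to collapse the non-self-adjoint eigenvalue problem \eqref{eq: gep V} into a single scalar identity by pairing against $\overline V$, and then to read off the location of $\Lambda$ from a quadratic with \emph{real} coefficients. First I would record that any $V\in L^2$ solving \eqref{eq: gep V} decays exponentially together with $V_X$ as $X\to\pm\infty$: writing $F=\mathcal N_0+\mathcal N_p$ and $f=2c\,\mathcal N_p$, the coefficient $F^2-F_X+\Lambda(\Lambda-f)$ is asymptotically constant, since $F\to1$, $f\to0$ as $X\to-\infty$ and $F\to-K$, $f\to-2cK$ as $X\to+\infty$, with $F_X\to0$ at both ends. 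Standard asymptotic theory of second order linear ODEs then forces any $L^2$ solution onto the exponentially decaying branch, which legitimizes all the integrations by parts below.

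Next I would multiply \eqref{eq: gep V} by $\overline V$ and integrate over $\mathbb R$. Integration by parts yields
\begin{equation*}
\Lambda^2\, C-\Lambda\, D+(A+B)=0,\qquad A=\int_{\mathbb R}|V_X|^2,\ \ B=\int_{\mathbb R}(F^2-F_X)|V|^2,\ \ C=\int_{\mathbb R}|V|^2,\ \ D=\int_{\mathbb R}f\,|V|^2 .
\end{equation*}
Because $F$ and $f$ are real-valued, all four numbers $A,B,C,D$ are real, so $\Lambda$ is a root of a quadratic with real coefficients. The whole question now reduces to checking two sign conditions on that quadratic.

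The two signs are as follows. First, $D=2c\!\int \mathcal N_p\,|V|^2\le0$, because $c\ge0$ and $\mathcal N_p=\mathcal N'(r_0^2)\,r_0^2\le0$ by the monotonicity $\mathcal N'\le0$; this only uses $(\mathcal N1)$/$(\mathcal N3)$. Second, and this is the crux, $A+B\ge0$. Here I would exploit the supersymmetric factorization of the partner potential: with $\mathcal A:=\partial_X+F$ and formal adjoint $\mathcal A^\ast:=-\partial_X+F$ one has $-\partial_X^2+(F^2-F_X)=\mathcal A^\ast\mathcal A$, so that $A+B=\|\mathcal A V\|_{L^2}^2\ge0$ for any $V$ in the form domain. (The concavity hypothesis \eqref{eq: concavity} provides an alternative route, controlling the sign of the partner potential $F^2-F_X$ directly; the factorization makes the nonnegativity automatic.) Writing the quadratic as $\Lambda^2+p\Lambda+q=0$ with $p=-D/C\ge0$ and $q=(A+B)/C\ge0$, the Routh--Hurwitz criterion for a real quadratic forces $\Re\Lambda\le0$ for both roots: if the discriminant is nonnegative the roots are real with nonpositive sum and nonnegative product, and if it is negative the common real part equals $-p/2\le0$. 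This contradicts the assumption $\Re\Lambda>0$, so $V\equiv0$.

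The main obstacle is the non-self-adjointness caused by the $X$-dependent, complex term $\Lambda(\Lambda-f)$, which blocks a direct appeal to a spectral theorem. The device that circumvents it is exactly that pairing against $\overline V$ collapses this term into $\Lambda^2 C-\Lambda D$ with \emph{real} $C,D$, turning a spectral question into elementary root location. A secondary point requiring care is the justification of the vanishing boundary terms (the exponential decay established in the first step) and the observation that $\mathcal A V=0$ admits no nontrivial $L^2$ solution — indeed $\mathcal A V=0$ gives $V\propto e^{-\int F}$, which grows at both infinities — so that in fact $A+B>0$ strictly for $V\not\equiv0$, consistent with the complete absence of modes with $\Re\Lambda>0$.
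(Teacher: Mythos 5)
Your proof is correct, and its skeleton is exactly the paper's: pair \eqref{eq: gep V} against $\overline V$, integrate by parts to obtain a quadratic in $\Lambda$ with real coefficients (the paper's \eqref{eq: Lambda quadratic}, with your $C,\,-D,\,A+B$ matching $c_2,c_1,c_0$), check that the linear coefficient is $\geq 0$ via $c\geq 0$ and $\mathcal N_p = \mathcal N'(r_0^2)r_0^2 \leq 0$, and then locate the roots in $\{\Re\Lambda\le 0\}$, contradicting $\Re\Lambda>0$. The genuine divergence is in how the constant coefficient is shown nonnegative. The paper proves the pointwise bound $F^2 - F_X \geq 0$ by arguing $F_X\le 0$: since $r_0$ is increasing and, by the concavity hypothesis \eqref{eq: concavity}, both $\mathcal N$ and $\mathcal N'$ are nonincreasing in $r_0^2$, the map $X\mapsto F(X)=\mathcal N_0+\mathcal N_p$ is decreasing. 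You instead use the supersymmetric factorization $-\partial_X^2 + (F^2-F_X) = (-\partial_X+F)(\partial_X+F)$, giving $A+B = \|V_X+FV\|_{L^2}^2 \ge 0$ with \emph{no} sign information on $F_X$ at all. This is a real gain: your version of the lemma does not use \eqref{eq: concavity} (only monotonicity $\mathcal N'\le 0$ enters, through the linear coefficient), which bears directly on the open question raised in {\sc Section} \ref{sec:openq} about relaxing the concavity assumption. Two points of care, both of which you handle or should note: the factorization identity requires the integration by parts $\int F\,\partial_X|V|^2 = -\int F_X|V|^2$, so the exponential decay of $V, V_X$ must be established first — the paper leaves this implicit, and your dichotomy argument needs the (true, but unstated) check that the asymptotic exponents are hyperbolic when $\Re\Lambda>0$: purely imaginary exponents at $X\to+\infty$ would force $(\Lambda+cK)^2 \le -K^2(1-c^2)$, i.e. $\Re\Lambda = -cK \le 0$, and similarly $\Re\sqrt{1+\Lambda^2}>\Re\Lambda>0$ at $X\to-\infty$ (the paper's \eqref{eq: sun guanhao}). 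Finally, your Routh--Hurwitz reading of the quadratic is the correct form of the paper's concluding sentence, which contains a typo (the real part of the complex pair is $-c_1/(2c_2)$, not $-c_1/2c_0$); and your observation that $V_X+FV=0$ has no nontrivial $L^2$ solution (it forces $V\propto e^{-\int F}$, growing at both ends since $F(-\infty)=1$, $F(+\infty)=-K$) gives the strict inequality $A+B>0$, though only nonnegativity is needed for the contradiction.
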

\begin{proof}
Suppose $0\ne V\in L^2(\mathbb{R})$. Then, $V$ is also smooth. Take $L^2$-inner product of $V$ with \eqref{eq: gep V} and obtain
\begin{equation}
    \label{eq: Lambda quadratic}
    \big\|V\|_{L^2}^2 \ \Lambda^2 + \big\langle V, - fV \big\rangle_{L^2} \Lambda + 
    \big\langle V, \big(F^2 - F_X\big) V \big\rangle_{L^2}
    +\big\| V_X \big\|^2_{L^2} = 0 ,
\end{equation}
a quadratic equation in $\Lambda$. We claim that the coefficients of this quadratic are all non-negative.
Note that the quadratic coefficient of \eqref{eq: Lambda quadratic}, namely $\big\|V \big\|^2_{L^2} > 0$, and 
\[
- f = - 2c \mathcal N_p = - 2 c \mathcal N' \big(r_0(X)^2 \big) r_0(X)^2 \geq 0
\]
since $\mathcal N$ is monotonically decreasing;
so the linear coefficient $\big\langle V, -f V \big\rangle_{L^2} \geq 0$ as well. 
Furthermore, $F_X \leq 0$. 
In fact, since $r_0(X)$ is monotonically increasing in $X$, and
\[
    F  = \mathcal N_0 + \mathcal N_p = \mathcal N\big(r_0(X)^2\big) + \mathcal N'\big(r_0(X)^2\big) r_0(X)^2
\]
is monotonically \textit{decreasing} in $r_0(X)^2$.
$\mathcal N$ is so by definition, and as a result of the concavity condition \eqref{eq: concavity}, there is also $\mathcal N' \leq 0$ monotonically decreases in $r_0(X)^2$.
Therefore $F_X \leq 0$ and $F^2-F_X \geq 0$ whence $\big\langle V, \big(F^2 - F_X\big) V \big\rangle_{L^2} \geq 0$.
We conclude that $\Lambda$ is a root of a quadratic polynomial:
\[
c_2 \Lambda^2 + c_1 \Lambda + c_0 =0
\]
where $c_2>0$, $c_1, c_0\geq0$.
It is easily checked that either $\Lambda$ is one of two negative real roots or its real part is equal to $-c_1/2c_0 \leq 0$. In all cases, $\Re\Lambda\le0$, contradicting the assumption that $\Re\Lambda>0$.
As a result there is no $\Lambda$ with nonnegative real part such that \eqref{eq: Lambda quadratic} holds. 
This argument carries over for any $V \in L^2$ and we will be done.
\end{proof}

Now we summarize the discussion above and give the proof of {\sc Theorem} \ref{thm: kink stability}:
\begin{proof}[Proof of {\sc Theorem} \ref{thm: kink stability}]
The essential spectrum of $L_{*,w_c}$ is in the closed left-half plane since $a = \frac{K}{\sqrt{1-c^2}}$,
satisfying the condition \eqref{eq: essential kink condition} 
in {\sc proposition} \ref{prop: essential kink}. 

\textit{Assume} there was an eigenpair $\lambda, \tilde B$ of $L_{*,w_c}$,
where $\Re \lambda >0$ and $\tilde B\in L^2$.
Then by {\sc lemma} \ref{lem: c neq 0 kink unbounded}, 
there would be a pair $\Lambda = \frac{\lambda}{\sqrt{1-c^2}}, B = \begin{bmatrix}
    U & V
\end{bmatrix}^\mathsf T \in L^2$ 
which would solve the generalized eigenvalue problem \eqref{eq: UV}.
In particular we would have
$\Re \Lambda > 0$ and $V \in L^2$.
Such $V$ would in fact be smooth since the coefficients
in \eqref{eq: UV} are bounded and smooth.
As a result the pair $\Lambda, V \in L^2$ 
would solve a generalized eigenvalue problem expressed by a second-order variant coefficient ODE,
namely \eqref{eq: gep V}.
However such $V$ does not exist by {\sc lemma} \ref{lemma: no V L 2}. 

Therefore \textit{the assumption does not hold}; namely $L_{*,w}$ does not have any eigenvalue with a positive real part.

Thus $\sigma\big(L_{*,w_c}\big)$ is contained in the closed left-half plane, and by {\sc definition}
\ref{def: spectral stability}, moving kinks with nonlinearity concave on $r^2 \in [0,1]$ are spectrally stable 
in the weighted space $L^2_{w_c}$ where $w_c(x)$ is given by \eqref{eq: W c weight}.
\end{proof}

\section{Conclusions}
There is a large family of heteroclinic traveling wave solutions of \eqref{eq: PDE in lab frame}, in both supersonic ($|c| > 1$) and subsonic ($|c| < 1$) regimes; {\sc section} \ref{sec: tws}.
The bounded  heteroclinic traveling waves are either heteroclinic connections  between the zero solution and an equilibrium on the  unit circle (kinks, antikinks) or  heteroclinic connections between distinct equilbria on the unit circle (pulses). 
For saturable nonlinearities, supersonic pulses are nonlinearly convectively stable against perturbations, which decay rapidly as $x\to+\infty$. This follows 
from an a priori bound on the growth of general solutions of the IVP and finite propagation speed 
property for the semilinear hyperbolic system \eqref{eq: PDE in lab frame} ({\sc Theorem} \ref{thm: convective}). For general nonlinearities,  supersonic pulses are
linear spectrally stable ({\sc Theorem} \ref{thm: spectral stability of supersonic pulses}) in an appropriate weighted $L^2$ space; the spectra of their linearized operators lie in the left half plane.
Kinks are linearly and neutrally spectrally stable in suitably weighted $L^2$ spaces: (a) for the case of non-moving kinks $(c=0)$ ({\sc Theorems} \ref{thm: kink spectral stability c = 0}) and (b)
for arbitrary kinks $(|c|<1)$ under a concavity condition on the nonlinearity ({\sc Theorem} \ref{thm: kink stability}); {\sc section} \ref{sec: kink stability};   the spectra of their linearized operators lie on the imaginary axis. 
Antikinks and subsonic pulses are exponentially unstable; {\sc section} \ref{sec: instability}.
Open questions and future directions are discussed in {\sc section} \ref{sec:openq}.

\appendix
\section{A complete list of the bounded traveling wave solutions}
\label{app: classification}
The following list of \textit{bounded} TWS types is exhaustive. 
\begin{enumerate}
    \item Equilibria. 
    These corresponds to spatially constant time-independent solutions of
    \eqref{eq: PDE in lab frame}:
    \[
        \big\{[0,0]\big\} \cup \Big\{ \big[\cos \theta, \sin \theta\big]:\ \theta\in (-\pi, \pi] \Big\}
    \]
    which are TWSs of arbitrary speeds 
    $c \in\mathbbm R$.
    \item Kinks and antikinks. 
    For each $ 0 \leq c <1$, there are four kink-like solutions; 
    in particular, they are $b_{c,0}$, $\mathcal P b_{c,0}$ which are kinks, 
    and $\mathcal {CP}  b_{c,0}, \mathcal {C}  b_{c,0}$ which are antikinks; for each $-1<c \leq 0$, there are four kink-like solutions. In particular, they are $\mathcal {CPT} b_{|c|,0} = \mathcal {TC}  b_{|c|,0}$, $\mathcal {CT} b_{|c|,0}$ which are kinks, and $\mathcal T b_{|c|, 0}$, $\mathcal {PT}  b_{|c|,0}$ which are antikinks. 
    See {\sc figure} \ref{fig: kinks}.
    
    {\bf TWSs of all the following types are invariant under $\mathcal C$. }
    \item Subsonic pulses.
    In each of the four following cases there are two subsonic pulse solutions. 
    For each $0 \leq c < 1$ and each $0<E<1+c$ they are $ b_{c,E}$ and $\mathcal P  b_{c,E}$;
    for each $0 \leq c < 1$ and each $-1+c <E<0$ they are $\mathcal T  b_{-c,|E|}$ and $\mathcal{PT}  b_{-c,|E|}$; 
    for each $-1<c<0$ and each $0<E<1+c$ they are $\mathcal T  b_{|c|, E}$ and $\mathcal {PT}  b_{|c|, E}$;
    for each $-1<c<0$ and each $-1+c<E<0$ they are $ b_{c, |E|}$ and $ \mathcal P  b_{c, |E|}$.
    
\item Supersonic pulses.
    For each of the following cases there are four supersonic pulse solutions, 
    except for the marginal cases which will be indicated. 
    For each $c>1$ and each $0<-1+c\leq E\leq 1+c$, these are $ b_{c,E}$, $\mathcal P b_{c,E}$ which respectively degenerates to 
    $\pm [1,1]/\sqrt{2}$ at $E = 1+c$, 
    and $\mathcal T  b_{-c,-E}$ and $\mathcal {PT}  b_{-c,-E}$ which respectively degenerates to $\pm[-1,1]/\sqrt{2}$ at $E=-1+c$. 
    For each $c<-1$ and each $-1+c \leq E \leq 1+c <0$, 
    they are $\mathcal T b_{|c|,|E|}$ and $\mathcal {PT} b_{|c|,|E|}$ which respectively degenerates to $\pm[-1,1]/\sqrt{2}$ at $E=-1-c$, 
    and $ b_{c,E}$ and $\mathcal P  b_{c,E}$ which respectively degenerates to $\pm[1,1]/\sqrt{2}$ at $E = -1+c$.
    See {\sc figure} \ref{fig: supersonic}.
%
    \item Periodic wave trains. 
    For each of the following cases there is one periodic wave train solution. In particular, 
    for each $c > 1$ and $0< E<1-c$, it is counterclockwise and ``small''; 
    for each $c>1$ and each $E>1+c$, it is clockwise and ``large''; 
    for each $c< -1$ and $-1+c<E<0$ it is clockwise and ``small''; 
    for each $c<-1$ and each $E<-1-c$ it is counterclockwise and ``large''.
\end{enumerate}

\section{Proof of {\sc proposition} \ref{prop: nonlinearity}}
\label{app: nonlinearity}
Recall 
\begin{equation}
\label{eq: N * formula}
N_*\big (x; B \big) = \begin{bmatrix}
    \mathcal N\big(r_*^2\big) V+
    \delta \mathcal N_*\big (x; B \big) v_* \\
    - \mathcal N\big(r_*^2\big) U -\delta \mathcal N_*\big(x;  B \big) u_*
    \end{bmatrix}=\begin{bmatrix}
    \mathtt{I}_1+\mathtt{II}_1  \\
    \mathtt{I}_2 +\mathtt{II}_2
    \end{bmatrix}
\end{equation}
where $\mathcal N \big(r^2 \big)$ is bounded, real-valued and sufficiently smooth, 
satisfying $\mathcal N(0)=1$, $\mathcal N(1) = 0$ and $r^2=1$ is the only zero of $\mathcal N\big(r^2 \big)$. 
Moreover, there is $\mathcal N_\infty <0$ such that $\big|\mathcal N\big(r^2\big) - \mathcal N_\infty \big| = \mathcal O\big(r^{-\alpha})$
as $r \to \infty$, and for each $k = 1,2,\cdots$ there is $C >0 $ such that
$\big| \mathcal N^{(k)}\big(r^2 \big)\big| = \mathcal O\big(r^{-\alpha-k}\big)
$
as $r \to \infty$.
Moreover, $b_*$, the TWS we work with, as well as its components and all their derivatives with respect to $x$, 
asymptotes to their respective limits exponentially as $x \to \pm\infty$, and are all bounded, in particular we only need to work out the estimate details for terms $\mathtt{I}_1$ and $\mathtt{II}_1$ in \eqref{eq: N * formula}, since those for $\mathtt{I}_2$ and $\mathtt{II}_2$ are almost identical.\\

\subsection{Self-mapping properties}
\subsubsection{
Estimates for $\mathtt I_1$ on $ H^1$
}
If $V \in H^1$ we have $
    \Big\| N\big( r_*^2 \big) V \Big\|_{L^2}  \leq C\big\| V \big\|_{L^2}$
, also ($V$ being weakly differentiable), for each $x$:
\[
\Bigg| \Big[ \mathcal N \big( r_*(x)^2 \big) V(x) \Big]_x \Bigg| \leq   \Big| \mathcal N(r_*^2) V' + 2 \mathcal N'(r_*^2)r_* r_*'V \big| 
\leq C \Big[ \big| V'(x) \big| + \big| V(x) \big| \Big]
\]
Thus 
\[
\begin{aligned}
    N\big(r_*(x)^2\big) V(x) & \in H^1\big(\mathbbm R_x \big)\\
    \Big\| N\big(r_*(x)^2\big) V(x) \Big\|_{H^1\big(\mathbbm R_x \big)} &\leq C \big\|V\|_{H^1} 
\end{aligned}
\]
Where $C$ only depends on $b_*$ and the profile of $\mathcal N_*$.

\subsubsection{Estimates for $\mathtt {II}_1$ in $L^2$}
The following calculus lemma will be repeatedly used in the sequel. 
\begin{lemma}
\label{lem: estimate F}
For $\ell, k$ positive integers, let
     $F\big( x,  y \big) \in \mathcal C^1 \Big( \mathbbm R_x \times \mathbbm R^\ell_{y}, \mathbbm R^k \Big).
$ 
Assume 
\begin{enumerate}[label=(\roman*)]
    \item $ F(x, 0)= 0$ for all $x\in \mathbbm R$, and $ F$ is bounded.
    \item There is a nonnegative continuous function $f\geq 0$ on $[0,\infty)$ such that
\[
    \left\| \frac{\pd  F\big(x,  y \big) }{\pd y} \right\|_{M^{\ell\times k}\big( \mathbbm R \big) } \leq f \big( |y| \big)
\]
uniformly for all $x \in \mathbbm R$.
\end{enumerate}

Then, there is a constant $C>0$ such that, uniformly in $x\in\mathbbm R$, we have
\[
    \Big|  F\big(x,  \eta \big) \Big| \leq C \big|  \eta \big|.
\]
\end{lemma}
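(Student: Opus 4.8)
The plan is to express $F(x,\eta)$ as an integral of its $y$-gradient along the segment from $0$ to $\eta$, and then to estimate it separately in the regimes of small and large $|\eta|$. First I would fix $x\in\mathbbm R$ and $\eta\in\mathbbm R^\ell$ and, using hypothesis (i) in the form $F(x,0)=0$ together with the fundamental theorem of calculus applied to $s\mapsto F(x,s\eta)$ on $[0,1]$, write
\[
    F(x,\eta) \;=\; \int_0^1 \frac{\pd}{\pd s} F(x,s\eta)\,\dd s \;=\; \int_0^1 \frac{\pd F}{\pd y}(x,s\eta)\,\eta\,\dd s .
\]
Taking absolute values, using $|s\eta| = s|\eta|$, and invoking the gradient bound in hypothesis (ii), this gives the pointwise estimate
\[
    \big| F(x,\eta) \big| \;\le\; |\eta|\int_0^1 f\big(s|\eta|\big)\,\dd s ,
\]
which will serve as the starting point for both regimes.

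The only real subtlety is that $f$ is merely assumed continuous on $[0,\infty)$, hence possibly unbounded, so the integral $\int_0^1 f(s|\eta|)\,\dd s$ need not remain bounded as $|\eta|\to\infty$. To circumvent this I would use the gradient bound \emph{only} for small $|\eta|$ and fall back on the boundedness of $F$ for large $|\eta|$. Concretely, for $|\eta|\le 1$ one has $s|\eta|\in[0,1]$ for every $s\in[0,1]$, so by continuity of $f$ on the compact interval $[0,1]$ the quantity $M_1:=\sup_{t\in[0,1]}f(t)$ is finite, and the displayed estimate yields $\big|F(x,\eta)\big|\le M_1|\eta|$.

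For $|\eta|\ge 1$, I would instead invoke the global boundedness of $F$ asserted in hypothesis (i): setting $M_0:=\sup_{x,y}|F(x,y)|<\infty$, we obtain $\big|F(x,\eta)\big|\le M_0\le M_0|\eta|$ since $|\eta|\ge 1$. Combining the two regimes with $C:=\max\{M_0,M_1\}$ then produces $\big|F(x,\eta)\big|\le C|\eta|$ for every $x\in\mathbbm R$ and $\eta\in\mathbbm R^\ell$. The constant $C$ is uniform in $x$ precisely because the gradient bound in (ii) is stated uniformly in $x$ and $M_0$ is a global bound; this uniformity is exactly what is needed when the lemma is subsequently applied to the nonlinearity $N_*$. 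The main obstacle is thus the potential unboundedness of $f$, and the case split resolves it cleanly.
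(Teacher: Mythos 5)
Your proof is correct and follows essentially the same route as the paper's: the fundamental theorem of calculus along the segment from $0$ to $\eta$ combined with the gradient bound for small $|\eta|$, and the global boundedness of $F$ for large $|\eta|$ (the paper keeps a general threshold $M$ where you fix $M=1$, an immaterial difference). Your explicit observation that the case split is needed because $f$ may be unbounded is exactly the point the paper's argument implicitly handles.
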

\begin{proof}
[Proof of {\sc lemma} \ref{lem: estimate F}]
For any $M > 0$ and for all $|\eta| \geq M$,
\[
    \Big|  F\big(x,  \eta \big) \Big| \leq \big\| F\big\|_{L^\infty}  \leq \frac{\big\| F\big\|_{L^\infty} }{M} \big|  \eta\big|.
\]
For $| \eta| \leq M$, 
\[
    F\big(x,  \eta \big)  = \int_0^1 \pd_t  F(x,t\eta) dt= \int_0^1  \frac{\pd  F\big(x, t\eta \big) }{\pd y} dt\ \   \eta.
    \]
Hence,
\[ |F(x,\eta)| \le  \max_{0\le r\le M} f(r)\ |\eta|.\]
Therefore, uniformly in $x$ and for all $\eta \in \mathbbm R^\ell$, we have
\[
    \Big|  F\big(x,  \eta \big) \Big|
    \leq
    \max \bigg\{ \frac{\big\| F\big\|_{L^\infty} }{M},
\max_{r \in [0,M]} f(r) \bigg\} \big|  \eta \big|.
\]
\end{proof}

Recall
\[
\delta \mathcal N_*(x;U,V)  =\mathcal N\Big( \big( u_*(x) + U \big)^2 + \big(v_*(x) + V\big)^2 \Big) - \mathcal N\big( r_*(x)^2\big) 
\]
Now we apply {\sc lemma} \ref{lem: estimate F} to $  B \mapsto \delta \mathcal N_*\big( x; B\big ) v_*(x) $ which is continuously differentiable, bounded, and
vanishes for $ B = 0$. Moreover,
\[
    \begin{aligned}
    & \left| \frac{\pd \delta \mathcal N_*\big( x; B\big ) v_*(x)}
    {\pd  B}\right| = \left| 2 v_* \mathcal N' \Big( \big( u_* + U \big)^2 + \big(v_* + V\big)^2 \Big)
    \begin{bmatrix}
    u_*+ U \\ v_*+ V 
    \end{bmatrix}\right| \\
    \leq & C \sqrt{ \Big[ u_*(x) + U \Big]^2 + \Big[ v_*(x)+ V \Big]^2 } \leq  C \sqrt{ \big\|  b_* \big\|_{L^\infty} ^2 + 2 \big\|  b_* \big\|_{L^\infty} \big|  B\big| + \big| B\big|^2} \\
    \leq & C \sqrt{ 1 + \big|  B\big| ^2 }.
    \end{aligned}
\]
is uniformly bounded by an increasing and continuous function of $| B|$. 
Therefore {\sc lemma} \ref{lem: estimate F} can be applied to $ F \big(x,  B \big) = \delta \mathcal N_*\big (x; B\big) v_*(x)$, thus pointwise
\[ 
\big| \delta \mathcal N_* \big( x; B \big) v_*(x) \big| \leq C \big|  B(x) \big| \]
For $ B \in  L^2$:
\[
\begin{aligned}
    \delta \mathcal N_* \big(x; B \big) v_*(x) & \in L^2 \\
    \big\| \delta \mathcal N_* \big(x; B \big) v_*(x) \big\|_{L^2} & \leq C \big\|  B \big\|_{L^2}
\end{aligned}
\]
for $C$ only depending on $ b_*$ and $\mathcal N$ and we have:
\[
    \Big\|  N_* \big(x ;  B \big) \Big\|_{L^2 } \leq C \big\|  B \big\|_{L^2}
\]

\subsubsection{Estimates for $\mathtt {II}_1$ on $H^1$}
We only need to prove the derivative in $x$ of $\delta \mathcal N_* \big(x; B \big) v_*(x)$ namely the following function is in $L^2$:
\[\Big[ \delta \mathcal N_*\big (x; B(x) \big) v_*(x)\Big]_x = \left( \delta \mathcal N_*\big(x;  B(x) \big)\right)_x v_*(x) + \delta \mathcal N_*\big( x;  B(x) \big) v_*'(x) 
\]
Apply {\sc lemma} \ref{lem: estimate F} to $\delta \mathcal N_*(x; B) v_*'(x)$ to obtain
\[ 
\big| \delta \mathcal N_*(x;U,V) v_*'(x) \big| \leq C \big|  B(x) \big| \] thus
\[
\begin{aligned}
    \delta \mathcal N_*(x;U,V) v_*'(x) & \in L^2 \\
    \big\| \delta \mathcal N_*(x;U,V) v_*'(x) \big\|_{L^2 } & \leq C \big\|  B \big\|_{L^2}.
\end{aligned}
\]
For a fixed $x\in\mathbbm R$, 
{
\[
    \begin{aligned}
    & \left(\frac{\dd}{\dd x} \delta \mathcal N_*(x; B)\right) v_*(x) \\
    = &2 \bigg [ \mathcal N'\Big( \big(u_*+U\big)^2 + \big(v_* + V \big)^2 \Big) \Big[  \big(u_* + U \big) \big( u_*' + U'  \big) 
    + \big(v_* + V\big) \big( v_*' + V' \big) \Big] \\
    & - \mathcal N'\big(u_*^2+v_*^2\big) \big(u_* u_*' + v_* v_*' \big) \bigg]
    v_*\\
    = & 2 \bigg[ 
    \mathcal N' \Big( \big|  b_*(x) +  B(x) \big|^2 \Big) \big(  b_*(x) +  B(x) \big) \cdot \Big(  b_*'(x) +  B' (x) \Big) - \mathcal N' \Big ( \big| b_* (x) \big|^2 \Big)  b_*(x) \cdot  b_*'(x) 
    \bigg] v_*(x)
    \end{aligned}
\]
}
We only need to estimate the expression inside $\big[ \cdots \big]$, since $|v_*|$ is bounded:
{
\[
    \begin{aligned}
    &  \bigg| 
    \mathcal N' \Big( \big|  b_*(x) +  B(x) \big|^2 \Big) \big(  b_*(x) +  B(x) \big) \cdot \Big(  b_*'(x) +  B' (x) \Big) - \mathcal N' \Big ( \big| b_* (x) \big|^2 \Big)  b_*(x) \cdot  b_*'(x) 
    \bigg| \\
    \leq &\bigg| 
    \mathcal N' \Big( \big|  b_*(x) +  B(x) \big|^2 \Big) \big(  b_*(x) +  B(x) \big) \cdot  B'(x) \bigg| 
    \\
    &+\bigg| \Big[ \mathcal N' \Big( \big|  b_*(x) +  B(x) \big|^2 \Big) \big(  b_*(x) +  B(x) \big) - \mathcal N' \Big ( \big| b_* (x) \big|^2 \Big)  b_*(x) \Big] \cdot  b_*'(x) 
    \bigg|\\
    \leq & C \big| B'(x)\big| + C \big|  B(x) \big|
    \end{aligned} 
\]
} the first term can be bounded because $\mathcal N'(r^2)$ decays fast enough as $r \to \infty$ 
so $\mathcal N' \big(r^2 \big) r$ is bounded since it is also continuous. 
Then we apply to the second term {\sc lemma} \ref{lem: estimate F} and note that $ b_*'(x)$ is bounded. 
Therefore
\begin{equation*}
\begin{aligned}
    \Bigg| \left(\frac{\dd}{\dd x} \delta \mathcal N_*(x; B)\right) v_*(x) \Bigg| & \leq C\Big( \big| B(x) \big| + \big| B'(x) \big| \Big)\\
    \Bigg\| \left(\frac{\dd}{\dd x} \delta \mathcal N_*(x; B)\right) v_*(x) \Bigg\|_{L^2 \big( \mathbbm R_x \big) } 
    & \leq C \big\|  B\big\|_{H^1} 
\end{aligned}
\end{equation*}
Thus we have:
\[
    \Big\| \mathcal N_* \big(x ;  B(x) \big) \Big\|_{H^1 \big( \mathbbm R_x\big) } \leq C \big\|  B \big\|_{H^1}
\]
for a constant $C>0$ depending only on $ b_*$ and $\mathcal N$.

\subsection{Lipschitz properties on $L^2$}

\subsubsection{$\mathtt I_1$ is trivially globally Lipschitz}
$\mathtt I_1 = \mathcal N\big(r_*(x)^2 \big)V(x) $ is trivially globally Lipschitz in $ B(x)$ since it is linear in $V$.

\subsubsection{Lipschitz property of $\mathtt {II}_1$}
Let $\tilde B = \big[ \tilde U, \tilde V \big] \in \mathcal X = L^2$.
Note that \[
    \Big| \nabla \mathcal N \big (| B|^2 \big) \Big| = 2 \Big| \mathcal N' \big( | B|^2 \big) \Big| \big|  B \big| 
\]
is bounded:
\[
    \begin{aligned}
    & \Big| \delta \mathcal N_*\big(x; B(x) \big) v_*(x) - \delta \mathcal N_*\big(x; \tilde { B}(x) \big) v_*(x) \Big| \\
    \leq & \bigg| \mathcal N\Big( \big|  b_*(x) +  B(x) \big|^2 \Big) - \mathcal N\Big( \big|  b_*(x) + \tilde{ B}(x) \big|^2 \Big) \bigg| \big\| v_* \big\|_{L^\infty} \\
    \leq &
     \big|  B(x) - \tilde{ B}(x) \big| 
     \sup_{s \in [0,1] } \Big | \nabla \mathcal N \Big (| b_*(x) + (1-s)  B (x) + s \tilde{ B} (x) |^2 \Big) \Big| \\
     \leq & C\big|  B(x) - \tilde{ B}(x) \big| 
    \end{aligned}
\] 
Integrating both sides we have
\[
     \left \| \delta \mathcal N_*\big(x; B(x) \big) v_*(x) - \delta \mathcal N_*\big(x;\tilde{ B} (x) \big) v_*(x)\right\|_{L^2 \big( \mathbbm R_x \big) } \leq  C \big\|  B - \tilde{ B} \big\|_{L^2}
\]

Doing the same estimate for $\mathtt{I}_2$ and for $\mathtt {II}_2$ and we have obtained the global Lipschitz property:
\[
\Big\|  N_* \big( x ;  B(x) \big) -  N_* \big( x ; \tilde{ B} (x) \big) \Big\|_{L^2 \big(\mathbbm R_x \big) } \leq C \big\|  B - \tilde{ B} \big\|_{L^2}\]

\subsection{Lipschitz properties on $H^1$}
\subsubsection{$\mathtt I_1$ is trivially globally Lipschitz on $H^1$} $\mathtt I_1 = \mathcal N\big(r_*(x)^2 \big)V(x) $ is trivially globally Lipschitz since it is linear in $V$ for $ B\in H^1$ and $\mathcal N\big(r_*(x)^2\big)$ is smooth in $x$.
\subsubsection{Lipschitz property of $\mathtt{II}_1$}
Now we estimate 
\[
    \begin{aligned}
    &  \bigg \| \bigg[ \Big( \delta \mathcal N_*\big(x; B(x) \big) - \delta \mathcal N_*\big(x;\tilde { B}(x) \big) \Big) v_*(x)\bigg]_x \bigg\|_{L^2(\mathbbm R_x)} 
    \\
    \leq & \left \| \left( \delta \mathcal N_*\big(x; B(x) \big) - \delta \mathcal N_*\big(x;\tilde { B}(x) \big)\right)_x v_*(x)\right\|_{L^2(\mathbbm R_x)} \\
    & + \left \|\left( \delta \mathcal N_*\big(x; B(x) \big) - \delta \mathcal N_*\big(x;\tilde { B}(x) \big)\right) v_*'(x)\right\|_{L^2(\mathbbm R_x)}
    \end{aligned}
\]
the second term can be estimated similarly as in is the $L^2$ Lipschitz estimate above:
\[
    \left \|\left( \delta \mathcal N_*\big(x; B(x) \big) - \delta \mathcal N_*\big(x;\tilde { B}(x) \big) \right) v_*'(x)\right\|_{L^2(\mathbbm R_x)} \leq C \big\| B-\tilde { B}\big\|_{L^2}
\]
To estimate the first term, note that {\footnotesize
\begin{equation}
    \begin{aligned}
    & \Bigg| \frac12 \left( \delta \mathcal N_*\big(x; B(x) \big) - \delta \mathcal N_*\big(x;\tilde{ B}(x) \big)\right)_x \Bigg|
    = \Bigg| \frac12 \left[\mathcal N\Big( \big|  b_*(x) +  B(x) \big|^2 \Big) - \mathcal N\Big( \big|  b_*(x) + \tilde{ B}(x) \big|^2 \Big)\right]_x \Bigg| \\
    = & \Bigg|
    \mathcal N' \Big( \big|  b_*(x) +  B(x) \big|^2 \Big) \big(  b_*(x) +  B(x) \big) \cdot \Big(  b_*'(x) +  B' (x) \Big) - \mathcal N' \Big( \big|  b_*(x) + \tilde{ B}(x) \big|^2 \Big) \big(  b_*(x) + \tilde{ B} (x) \big) \cdot \Big(  b_*'(x) + \tilde{ B}' (x) \Big) \Bigg|
    \\ 
    \leq & \Bigg| \mathcal N' \Big( \big|  b_*(x) +  B(x) \big|^2 \Big) \big(  b_*(x) +  B(x) \big) \Bigg|  \Big|  B' (x) - \tilde{ B}'(x) \Big|
    \\
     & \Bigg| \mathcal N' \Big( \big|  b_*(x) +  B(x) \big|^2 \Big) \big(  b_*(x) +  B(x) \big) - \mathcal N' \Big( \big|  b_*(x) + \tilde{ B}(x) \big|^2 \Big) \big(  b_*(x) + \tilde{ B} (x) \big) \Bigg| \Big|  b_*'(x) + \tilde{ B}' (x) \Big|\\
    =: & \mathtt{III} + \mathtt{IV}
    \end{aligned}
\end{equation}
}
Term $\mathtt{III}$ is bounded pointwise by 
\begin{equation*}
    C \Big|  B' (x) - \tilde{ B}'(x) \Big|
\end{equation*}
since $\mathcal N'\big(r^2 \big) r$ is bounded.
For term $\mathtt {IV}$, set $\eta = B - \tilde B$ and define
{\footnotesize
\[
     F \big(x ,  \eta) := \mathcal N' \Big( \big|  b_*(x) + \tilde{ B}(x) +  \eta \big|^2 \Big) \big(  b_*(x) + \tilde{ B}(x) +  \eta \big) - \mathcal N' \Big( \big|  b_*(x) + \tilde{ B}(x) \big|^2 \Big) \big(  b_*(x) + \tilde{ B} (x) \big)
\]
} Then $ F$ is a bounded function of $\eta$, and continuously differentiable, vanishes for $ \eta =  0$. Consider the gradient of $F(x, \eta)$ with respect to $ \eta$ for a fixed $x\in \mathbbm R$ (suppressing explicit dependence on $x$): 
\[
    \begin{aligned}
    \frac{\pd  F}{\pd  \eta} =&  2 \mathcal N'' \Big ( \big|  b_* + \tilde{ B} +  \eta \big| \Big) \big ( b_* + \tilde{ B} +  \eta \big) \big ( b_* + \tilde{ B} +  \eta \big) ^\mathsf T + \\
    + & \mathcal N' \Big( \big ( b_* + \tilde{ B} +  \eta \big)  \Big)  \sigma_0
    \end{aligned}
\]
Both of the terms are bounded uniformly for all $ \eta$. Therefore applying {\sc lemma} \ref{lem: estimate F} on $ F\big(x ,  \eta \big) $, and note that
\[
\mathtt{IV} = \Big|  F \big( x ,  B(x) - \tilde{ B}(x) \big) \Big| \Big|  b_*'(x) + \tilde{ B}'(x) \Big| 
\]
there is
\[
\begin{aligned}
    \mathtt{IV} \leq & C \big |  B(x) - \tilde{ B}(x) \big|
\Big|  b_*'(x) + \tilde{ B}'(x) \Big|  \\
\leq & C \Big ( \big |  B(x) - \tilde{ B}(x) \big|
\big|  b_*'(x) \big| + \big |  B(x) - \tilde{ B}(x) \big|
\big| \tilde { B}' (x) \big| \Big)
\end{aligned}
\]
Where $C$ here is a constant only depending on the profile of $\mathcal N$.

Thus integrating $\mathtt{III}$ and $\mathtt{IV}$, 
\[
    \begin{aligned}
    & \bigg\| \Big( \delta \mathcal N_*\big(x; B(x) \big) - \delta \mathcal N_*\big(x;\tilde { B}(x) \big) \Big) v_*(x)\bigg\|_{H^1} \\
    \leq & C \Big\|  B  - \tilde{ B} \Big\|_{L^2} + C  \Big\|  B'  - \tilde{ B}' \Big\|_{L^2} + C \big\|\tilde{ B}' \big\|_{L^2} \Big\|  B  - \tilde{ B}\Big\|_{L^\infty} \\
    \leq & C \Big( 1 + \big\| \tilde { B} \big\|_{H^1} \Big) 
    \Big\|  B  - \tilde{ B} \Big\|_{H^1}
    \end{aligned}
\]
The last inequality is due to Sobolev embedding.
\section{Proofs of some technical results}
\subsection{Proof of {\sc lemma} \ref{lemma: real part bounded}}
\label{app: proof of lemma: real part bounded}
If $\alpha + \beta = 0$, $\alpha \beta \leq 0$ and $f(k) = \sqrt{\alpha \beta - k^2}$ and is purely imaginary.
Therefore $\Re f(k) = 0$ and we are done.
Moreover, $f(0) = \sqrt{\alpha\beta}$, $\Re f(0) = 0$ for $\alpha \beta \leq 0$ or 
$|\alpha\beta| \leq \Big|\frac{\alpha + \beta}{2}\Big|$ for $\alpha \beta > 0$.
So we can study the case when $\alpha + \beta \neq0$ and $k \neq 0$.

Consider the \textit{conformal} mapping $g(z) = z \mapsto z^2$ from half plane $\Re z > 0$ to $\mathbbm C \setminus \mathbbm R_{\leq0}$.
Now the line $\Re z = \Big| \frac{\alpha + \beta}{2} \Big|$ cuts $\Re z >0$ into two path-connected components:
\[
\Omega_1 = \Big\{ 0 < \Re z \leq \Big| \frac{\alpha + \beta} 2 \Big| \Big\},\quad
\Omega_2 = \Big\{\Re z \geq \Big| \frac{\alpha + \beta} 2 \Big| \Big\}
\]
and $\Omega_1 \cap \Omega_2 = \Big\{ \Re z = \Big|\frac{\alpha+\beta}{2} \Big| \Big\}$.
Under $g(z)$, $\Re z = \Big| \frac{\alpha + \beta} 2 \Big|$ transforms to the parabola $\Pi$ given by 
\[
    x = p(y) = -\frac{y^2}{(\alpha+\beta)^2} + \Big(\frac{\alpha+\beta}{2}\Big)^2
\]
where $x,y = \Re (z^2), \Im (z^2)$. The set $\Omega_1$ is transformed into $g\big(\Omega_1\big)$ by $g(z)$, 
which is the ``left'' path-connected component of the cut complex plane $\mathbbm C \setminus \mathbbm R_{\leq 0}$ of which $0$ is an element. 
Now the image of $f(k)$ is a subset of the open right-half plane since $h(z) = z \mapsto \sqrt{z}$ on $\mathbbm C \setminus \mathbbm R_{\leq 0}$ is the inverse of $g(z)$. 
$\Re f(k) \leq \Big| \frac{\alpha + \beta} 2 \Big|$ is equivalent to that the image of $f(k)$, $k \neq0$, is in $\Omega_1$; 
this is further equivalent to the image of $f(k)^2$, $k \neq 0$, is in $g\big(\Omega_1\big)$. 
In fact,
\[
    f(k)^2 = \alpha\beta - k^2 - \ii (\alpha+\beta) k + \alpha\beta
\]
whose image is
\[
    x = q(y) = -\frac{y^2}{(\alpha+\beta)^2} + \alpha\beta, \quad (y \neq 0)
\]
and since $q(y) \leq p (y)$, the image of $f(k)^2$ sits to the left (inclusive) of $\Pi$, 
namely $\mathrm{ran} f(k)^2 \in g (\omega_1)$ and equivalently $\mathrm {ran} f(k) \in \Omega_1$, and this is further equivalent to 
\[
\Re f(k) \leq \Big| \frac{\alpha + \beta}{2} \Big|
\]
Now we prove that $\sup \Re f(k) = \Big| \frac{\alpha + \beta}{2} \Big|$.
In fact, WLOG assume $\alpha + \beta < 0$ and 
\[
\Re f(k) =\Re \ii k \sqrt{- \alpha \beta / k^2 -  (\alpha + \beta) / \ii k + 1} = - (\alpha + \beta ) / 2 = \Big|\frac{\alpha+\beta}{2}\Big| + o(1)
\]
which is exactly what needs to be proved.
\subsection{Sketch of proof of {\sc proposition} \ref{prop: essential spectrum qualitative}}
\label{app: proof of prop: essential spectrum qualitative}
Now the asymptotic operator $L_\infty = \Sigma \big( \pd_x - a\big) + A(x) $ by \eqref{eq: L infty}. $L_\infty$ is constant-coefficient for both $x < 0$ and $x > 0$. 
We can characterize its essential spectrum. We adapt Theorem 3.1.11 and Remark 3.1.14 of Kapitula and Promislow (2013) \cite{KaPr13}.
\begin{theorem}
[Essential spectrum of $L_\infty$]
\label{thm: essential spectrum of L infty}
$\lambda \in \sigma_\mathrm{e}(L_\infty)$ if and only if either of the following two statements is true:
\begin{enumerate}[label=(\roman*)]
    \item $\lambda \in \sigma\big(L_+\big) \cup \sigma \big(L_- \big)$
    \item The index of $\lambda - L_\infty$, defined by
    \begin{equation}
    \label{eq: index}
    \begin{aligned}
        \mathrm{ind}\big (\lambda I - L_\infty\big) := & \dim \mathbbm E^\mathrm{u}\big(\lambda I - L_-\big) - \dim \mathbbm E^\mathrm{u}\big(\lambda I - L_+\big)
    \end{aligned}
\end{equation}
  is not equal to zero. 
    \end{enumerate}
\end{theorem}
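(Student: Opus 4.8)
The plan is to recognize the stated theorem as the specialization, to the piecewise-constant operator $L_\infty$, of the general Fredholm theory for first-order asymptotically-constant differential operators (exponential dichotomies / Palmer's theorem, \cite[Chapter 3]{KaPr13}), and to make the translation between the abstract hypotheses and conditions (i)--(ii) explicit. First I would reduce the resolvent equation to a first-order ODE system. Since $|c|\neq1$ the matrix $\Sigma$ of \eqref{eq: formula Sigma} is invertible ($\det\Sigma=c^2-1\neq0$), so $(\lambda I-L_\infty)f=g$ is equivalent to
\begin{equation*}
 f'(x)=M(x,\lambda)\,f(x)-\Sigma^{-1}g(x),\qquad M_\pm(\lambda):=a_\pm I+\Sigma^{-1}\big(\lambda I-A_\pm\big),
\end{equation*}
where $M(x,\lambda)=M_\pm(\lambda)$ is constant for $\pm x>0$, $A_\pm=\lim_{x\to\pm\infty}A_*(x)$, and $a_\pm=\lim_{x\to\pm\infty}w'(x)$. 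This is exactly the piecewise-constant specialization of the matrix $\mathcal A(x,\lambda)$ already encountered in \eqref{eq: supersonic eigenfunction}.

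Next I would identify hyperbolicity of the asymptotic matrices $M_\pm(\lambda)$ with condition (i). Seeking solutions $e^{\mu x}\xi$, $\xi\neq0$, of $(\lambda I-L_\pm)f=0$ shows that $\mu$ is a spatial eigenvalue of $M_\pm(\lambda)$ precisely when $\det\big((\mu-a_\pm)\Sigma-(\lambda I-A_\pm)\big)=0$; a purely imaginary exponent $\mu=\ii k$, $k\in\mathbbm R$, corresponds to a bounded oscillatory solution, i.e. to $\lambda$ lying on one of the dispersion curves of the constant-coefficient operator $L_\pm$ (cf. \eqref{eq: lambda O} and \eqref{eq: lambda theta a}). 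Hence $M_+(\lambda)$ or $M_-(\lambda)$ has an eigenvalue on $\ii\mathbbm R$ if and only if $\lambda\in\sigma(L_+)\cup\sigma(L_-)$. When this occurs the constant-coefficient equation on the corresponding half-line admits no exponential dichotomy, $\lambda I-L_\infty$ fails to be Fredholm, and therefore $\lambda\in\sigma_\mr e(L_\infty)$; this gives the sufficiency of (i), and conversely the \emph{absence} of imaginary exponents for both signs is exactly the hyperbolicity needed for a dichotomy on each half-line.

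When both $M_\pm(\lambda)$ are hyperbolic I would invoke the dichotomy framework of \cite[Chapter 3]{KaPr13}. Because the coefficient is literally constant on each half-line, the stable and unstable subspaces of the two dichotomies are the genuine stable and unstable eigenspaces of $M_\pm(\lambda)$, and $\lambda I-L_\infty$ is Fredholm with
\begin{equation*}
 \mathrm{ind}\big(\lambda I-L_\infty\big)=\dim\mathbbm E^{\mathrm u}\big(\lambda I-L_-\big)-\dim\mathbbm E^{\mathrm u}\big(\lambda I-L_+\big),
\end{equation*}
the difference of the numbers of eigenvalues of $M_-(\lambda)$ and $M_+(\lambda)$ with strictly positive real part. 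In the convention of \cite{KaPr13} the essential spectrum is the set of $\lambda$ for which $\lambda I-L_\infty$ is not Fredholm of index zero; combining the two cases, $\lambda\in\sigma_\mr e(L_\infty)$ precisely when $\lambda I-L_\infty$ is either not Fredholm (condition (i)) or Fredholm of nonzero index (condition (ii)), which is the assertion. The content is thus supplied by \cite[Theorem 3.1.11, Remark 3.1.14]{KaPr13}.

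I expect the only genuine subtlety to be bookkeeping rather than analysis: fixing the orientation so that the index comes out as the unstable dimension at $-\infty$ minus the unstable dimension at $+\infty$ (and not with the opposite sign), and confirming that the $L^2(\mathbbm R)$ realization of $L_\infty$ with domain $H^1(\mathbbm R)$ is the one to which the cited Fredholm statements apply verbatim. The analytic ingredients --- existence of the dichotomies and the index formula --- are exactly the hypotheses of \cite[Chapter 3]{KaPr13}, so no new estimates are required beyond the explicit identification of $M_\pm(\lambda)$ and its spectrum carried out above.
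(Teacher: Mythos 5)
Your proposal is correct and takes essentially the same route as the paper: the paper supplies no independent argument for this theorem, presenting it as a direct adaptation of Theorem 3.1.11 and Remark 3.1.14 of \cite{KaPr13}, and your reduction to the piecewise-constant first-order system with $M_\pm(\lambda)=a_\pm I+\Sigma^{-1}\big(\lambda I-A_\pm\big)$, the identification of imaginary spatial exponents with $\lambda\in\sigma(L_+)\cup\sigma(L_-)$, and the dichotomy/Fredholm-index dictionary is exactly the specialization that citation performs. The orientation bookkeeping you flag is the only point of care, and it matches the convention in the paper's statement \eqref{eq: index}.
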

In the theorem above,
$\mathrm{dim} 
\mathbbm E^\mathrm{u} \big(\lambda I - L_-\big)$ is the number of independent vectors $\eta$ such that the following generalized eigenvalue problem is solved with some $\Re k > 0$:
\begin{equation*}
    L_- \eta e^{\ii k x} = \lambda \eta e^{\ii k x} 
\end{equation*}
similarly for $\mathbbm E^\mathrm u \big(\lambda I - L_+ \big)$. The following theorem adapted from the Theorem 3.1.13 of \cite{KaPr13} characterizes the border of $\sigma_\mr{e}\big(L_\infty\big)$:
\begin{theorem}
[Characterization of $\pd \sigma_\mr{e}\big(L_\infty\big)$]
\label{thm: border of essential spectrum}
\begin{enumerate}[label=(\roman*)]
    \item The border of the essential spectrum of $L_\infty$ is contained in the union of the essential spectra of $L_\pm$. Namely
\begin{equation*}
    \pd \sigma_\mr{e} \big( L_\infty \big) \subset \sigma\big(L_+\big) \cup \sigma \big(L_-\big)
\end{equation*}
\item  The set
\begin{equation*}
    \mathbbm C \setminus \Big[ \sigma\big(L_+\big) \cup \sigma \big(L_-\big)\Big]
\end{equation*} consists of connected components that are either entirely contained in $\sigma_\mr{e}\big(L_\infty\big)$ or does not intersect with it.
\end{enumerate}
\end{theorem}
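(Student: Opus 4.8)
The plan is to reduce the eigenvalue equation $L_\infty u = \lambda u$ to a first-order linear ODE system and then to read off the Fredholm properties of $L_\infty - \lambda$ from the spectral data of its asymptotic coefficient matrices, exactly as in the exponential-dichotomy framework of \cite[Chapter 3]{KaPr13}. Writing $L_\infty = \Sigma(\pd_x - a) + A(x)$ with $A(x) = A_-$ for $x<0$ and $A(x)=A_+$ for $x>0$, and using that $\Sigma$ is invertible (its determinant is $c^2-1\neq0$ since $|c|\neq1$), the equation $L_\infty u = \lambda u$ is equivalent to $u_x = M(x,\lambda)u$ with $M(x,\lambda) = \Sigma^{-1}(\lambda I - A(x)) + aI$. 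The asymptotic matrices are $M_\pm(\lambda) = \Sigma^{-1}(\lambda I - A_\pm) + aI$, and a direct computation shows that $M_\pm(\lambda)$ has a purely imaginary eigenvalue $\nu = \ii k$ precisely when $\det\!\big(\Sigma(\ii k - a) + A_\pm - \lambda I\big)=0$ for some $k\in\mathbbm R$, i.e. precisely when $\lambda\in\sigma(L_\pm)$. Hence $\mathbbm C\setminus[\sigma(L_+)\cup\sigma(L_-)]$ is exactly the set on which both $M_+(\lambda)$ and $M_-(\lambda)$ are hyperbolic.

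On that hyperbolic set I would invoke Palmer's theorem in its dichotomy form \cite{KaPr13}: there $L_\infty - \lambda$ is Fredholm, and its index equals $\dim\mathbbm E^{\mathrm u}(M_-(\lambda)) - \dim\mathbbm E^{\mathrm u}(M_+(\lambda))$, the difference of the dimensions of the unstable eigenspaces of the two asymptotic matrices; the jump discontinuity of $A(x)$ at $x=0$ is harmless, since dichotomy theory only uses hyperbolicity of the limits $M_\pm$. Because the integer $\dim\mathbbm E^{\mathrm u}(M_\pm(\lambda))$ can change only when an eigenvalue of $M_\pm$ crosses the imaginary axis — that is, only as $\lambda$ crosses $\sigma(L_\pm)$ — each of these dimensions, and therefore the Fredholm index, is constant on every connected component of $\mathbbm C\setminus[\sigma(L_+)\cup\sigma(L_-)]$.

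With these facts both assertions follow. For (ii): adopting the convention $\sigma_{\mathrm e}(L_\infty)=\{\lambda:\ L_\infty-\lambda\ \text{is not Fredholm of index }0\}$ of \cite{KaPr13}, on the complement $\mathbbm C\setminus[\sigma(L_+)\cup\sigma(L_-)]$ the operator $L_\infty-\lambda$ is always Fredholm with a locally constant index, so whether or not $\lambda$ lies in $\sigma_{\mathrm e}(L_\infty)$ is constant on each connected component; each component is therefore entirely contained in $\sigma_{\mathrm e}(L_\infty)$ (when the common index is nonzero) or entirely disjoint from it (when the common index is zero). For (i): since membership in $\sigma_{\mathrm e}(L_\infty)$ is locally constant throughout this complement, no point of the complement can be a boundary point of $\sigma_{\mathrm e}(L_\infty)$, whence $\pd\sigma_{\mathrm e}(L_\infty)\subset\sigma(L_+)\cup\sigma(L_-)$.

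The main obstacle is not a hard estimate but the careful bookkeeping needed to apply the abstract machinery: I must confirm that the specific hypotheses of \cite[Theorem 3.1.13]{KaPr13} hold for $L_\infty$ — in particular the identification of $\sigma(L_\pm)$ with the loss-of-hyperbolicity loci of $M_\pm(\lambda)$ and the matching of the essential-spectrum conventions — and must justify local constancy of the Fredholm index directly, via homotopy invariance of the index together with the fact that the unstable dimensions $\dim\mathbbm E^{\mathrm u}(M_\pm(\lambda))$ are integer-valued and vary continuously, hence are constant, away from the dispersion curves. Once those identifications are in place, both parts are immediate.
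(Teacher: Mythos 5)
Your proposal is correct, and it is essentially the same route the paper takes: the paper offers no proof of this theorem, adapting it verbatim from \cite[Theorem 3.1.13]{KaPr13}, and your reduction to the first-order system $u_x = M(x,\lambda)u$, the identification of $\sigma(L_\pm)$ with the loss-of-hyperbolicity loci of $M_\pm(\lambda)$, and the Palmer/dichotomy Fredholm-index argument with locally constant unstable dimensions is precisely the standard proof underlying that citation. The one bookkeeping point to note is that with the two-sided weight \eqref{eq: weight general} the constant $a$ should be replaced by $w'(x)$ (with limits $a_\pm$), which changes nothing in the argument.
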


We are interested in the essential spectrum of $L_{*,w}$. In fact, 
\begin{theorem}
\label{prop: essential spectra identical}
The essential spectra of an $L_{*,w}$ and $L_\infty$ are identical.
\end{theorem}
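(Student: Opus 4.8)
The plan is to realize $L_{*,w}$ as a relatively compact perturbation of $L_\infty$, and then to invoke the invariance of the essential spectrum---equivalently, of the Fredholm property together with the Fredholm index---under such perturbations, exactly as in \cite[Chapter 3]{KaPr13}. First I would record that the two operators share a domain. Since $c\neq\pm1$, the matrix $\Sigma$ in \eqref{eq: formula Sigma} is invertible, so both $L_\infty$ (see \eqref{eq: L infty}) and $L_{*,w}$ (see \eqref{eq: conjugate operator}) are closed, densely defined operators on $L^2(\mathbb R;\mathbb C^2)$ with common domain $\dom(L_\infty)=\dom(L_{*,w})=H^1(\mathbb R;\mathbb C^2)$; they differ only by the bounded, matrix-valued multiplication operator
\[
V(x) := L_{*,w}-L_\infty = \Sigma\big(a_\infty(x)-w'(x)\big) + \big(A_*(x) - A_\infty(x)\big),
\]
where $a_\infty(x)=\mathbbm 1_{x\le0}a_- + \mathbbm 1_{x>0}a_+$ and $A_\infty(x)=\mathbbm 1_{x\le0}A_- + \mathbbm 1_{x>0}A_+$ denote the piecewise-constant asymptotic values of $w'$ and $A_*$ across $x=0$.

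Next I would show that $V(x)\to 0$ as $|x|\to\infty$. By the definition of the exponential-type weight \eqref{eq: weight general} we have $w'(x)\to a_\pm$ as $x\to\pm\infty$, while by the exponential convergence of the traveling-wave profile $b_*(x)$ to its asymptotic equilibria (Section \ref{sec:conv-rate}), the matrix $A_*(x)$ in \eqref{eq: A *} converges exponentially to $A_\pm$. Hence $V$ is bounded on $\mathbb R$ and decays (in fact exponentially) to zero as $|x|\to\infty$.

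The crux is to prove that $V$ is $L_\infty$-compact, i.e. that $V(L_\infty-\lambda)^{-1}$ is compact on $L^2$ for some $\lambda$ in the resolvent set of $L_\infty$, which is nonempty since $\sigma(L_\infty)$ is confined to the dispersion curves of $L_\pm$ together with the index-changing regions they bound (Theorems \ref{thm: essential spectrum of L infty} and \ref{thm: border of essential spectrum}). Here I would use that $(L_\infty-\lambda)^{-1}$ maps $L^2$ boundedly into $H^1$. Splitting $V = \chi_R V + (1-\chi_R)V$, with $\chi_R$ the indicator of $[-R,R]$, the piece $\chi_R V(L_\infty-\lambda)^{-1}$ factors through $H^1([-R,R])\hookrightarrow L^2([-R,R])$, which is compact by the Rellich--Kondrachov theorem; the piece $(1-\chi_R)V(L_\infty-\lambda)^{-1}$ has operator norm at most $\big(\sup_{|x|>R}\|V(x)\|\big)\,\|(L_\infty-\lambda)^{-1}\|$, which tends to $0$ as $R\to\infty$ by the decay just established. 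Thus $V(L_\infty-\lambda)^{-1}$ is a norm limit of compact operators and is therefore compact.

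Finally, since $V$ is relatively compact with respect to $L_\infty$, the operator $L_\infty-\lambda$ is Fredholm of a given index if and only if $L_{*,w}-\lambda=(L_\infty-\lambda)+V$ is Fredholm of the same index. Consequently the set of $\lambda$ for which $L-\lambda$ fails to be Fredholm of index zero coincides for the two operators, which is precisely the assertion $\sigma_\mr{e}(L_{*,w})=\sigma_\mr{e}(L_\infty)$. The step I expect to be the main obstacle is the careful verification of relative compactness: pinning down $\dom(L_\infty)=H^1$, the boundedness of the resolvent into $H^1$, and the cutoff argument that combines Rellich compactness on bounded intervals with decay of $V$ at infinity. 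Once relative compactness is established, the spectral conclusion follows immediately from the standard perturbation theory of Fredholm operators.
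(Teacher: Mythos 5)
Your proposal is correct and follows essentially the same route as the paper: identify $L_{*,w}-L_\infty$ as a bounded multiplication operator decaying at infinity, prove relative compactness by combining a cutoff decomposition, Rellich--Kondrachov compactness on bounded intervals, and norm-convergence of the truncations, then invoke Weyl-type invariance of the essential spectrum. The only cosmetic difference is that you phrase relative compactness via compactness of $V(L_\infty-\lambda)^{-1}$ on $L^2$ (which requires the easily verified facts that the resolvent set of $L_\infty$ is nonempty and that $(L_\infty-\lambda)^{-1}$ maps $L^2$ boundedly into $H^1$), whereas the paper works directly with $V$ as a compact map from $\dom(L_\infty)=H^1$ with the graph norm into $L^2$; these formulations are equivalent.
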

{\sc Theorem} \ref{prop: essential spectra identical} follows as an immediate corollary of the following {\sc Theorem} \ref{thm: Weyl} of Weyl
on the invariance of the essential spectrum of a linear operator under {\it relatively compact perturbations} \cite{EdEv18}, and {\sc proposition} \ref{prop: compact perturbation}:
\begin{theorem}[Weyl]
\label{thm: Weyl}
Let $L : \mathcal D(\mathcal L) \subset X\rightarrow Y$ be a closed operator between Banach spaces $X$ and $Y$, and $P$ relatively compact to $L$ then
\begin{equation*}
    \sigma_\mathrm{e}(L) = \sigma_\mathrm{e}(L+P)
\end{equation*}
\end{theorem}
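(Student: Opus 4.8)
The plan is to prove the invariance through Fredholm theory, since for each of the standard notions of essential spectrum the point $\lambda$ lies in $\sigma_{\mathrm{e}}(\cdot)$ precisely when $\lambda I-(\cdot)$ fails to be (semi-)Fredholm of the relevant type, respectively Fredholm of index zero. Thus the claim reduces to showing that the Fredholm domain of $\lambda I-L$, together with the value of the Fredholm index where defined, coincides with that of $\lambda I-(L+P)$ for every $\lambda\in\mathbb{C}$. First I would record the standing hypothesis that $P$ is relatively compact with respect to $L$, i.e.\ $\mathcal{D}(L)\subseteq\mathcal{D}(P)$ and $P$ is compact as a map from the graph space $(\mathcal{D}(L),\|\cdot\|_L)$ into $Y$; a routine consequence I would use repeatedly is that such a $P$ has $L$-bound zero.

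The first key step is a symmetry lemma. Because $P$ has $L$-bound zero, the operator $L+P$ is closed on the common domain $\mathcal{D}(L)=\mathcal{D}(L+P)$ and the graph norms $\|\cdot\|_L$ and $\|\cdot\|_{L+P}$ are equivalent. It follows that $P$ is also relatively compact with respect to $L+P$. This symmetry is exactly what will let me obtain the two inclusions $\sigma_{\mathrm{e}}(L+P)\subseteq\sigma_{\mathrm{e}}(L)$ and $\sigma_{\mathrm{e}}(L)\subseteq\sigma_{\mathrm{e}}(L+P)$ by one and the same argument, with the roles of $L$ and $L+P$ interchanged.

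The second key step is the classical stability theorem for (semi-)Fredholm operators under relatively compact perturbation (of Atkinson--Kato type): if $T$ with domain $\mathcal{D}(T)$ is (semi-)Fredholm and $K$ is $T$-compact, then $T+K$ is (semi-)Fredholm with $\mathrm{ind}(T+K)=\mathrm{ind}(T)$. I would apply this with $T=\lambda I-L$ and $K=-P$. For $\lambda\notin\sigma_{\mathrm{e}}(L)$ the operator $T$ is Fredholm of index zero; since $T$ has domain $\mathcal{D}(L)$ with graph norm equivalent to $\|\cdot\|_L$, the operator $-P$ is $T$-compact, so $\lambda I-(L+P)=T-P$ is Fredholm of the same (zero) index, whence $\lambda\notin\sigma_{\mathrm{e}}(L+P)$. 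The reverse inclusion is obtained verbatim, now writing $L=(L+P)+(-P)$ and invoking the symmetry lemma, which gives equality of the two essential spectra.

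The hard part will be the index-invariance content of the stability theorem and its alignment with the precise notion of essential spectrum in force here, which is index-sensitive (a point lies in $\sigma_{\mathrm{e}}$ when $\lambda I-L$ is not Fredholm \emph{or} is Fredholm of nonzero index). For the ``hole-filling'' Schechter-type essential spectrum, invariance is not a pointwise Fredholm statement: it requires that the index be locally constant on the connected components of the Fredholm domain and that the perturbation preserve those components. I would handle this by using continuity of the index (hence its constancy on connected components) together with the fact that $-P$ is relatively compact, so that it cannot alter the unbounded component containing a right half-plane. The residual verifications --- closedness of $L+P$, equivalence of the graph norms, and $T$-compactness of $-P$ --- are all routine once the $L$-bound-zero property has been established. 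Finally, combined with {\sc proposition} \ref{prop: compact perturbation}, which certifies that $L_{*,w}-L_\infty$ is relatively compact with respect to $L_\infty$, this yields the identity $\sigma_{\mathrm{e}}(L_{*,w})=\sigma_{\mathrm{e}}(L_\infty)$ of {\sc Theorem} \ref{prop: essential spectra identical}.
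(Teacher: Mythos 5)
The paper does not actually prove this theorem: it is quoted as a classical result and outsourced wholesale to the cited reference \cite{EdEv18}, with the paper's own labor confined to verifying the hypothesis in context ({\sc proposition} \ref{prop: compact perturbation}). Your proposal therefore cannot diverge from ``the paper's proof''; what it does is reconstruct the standard Atkinson--Kato argument found in the cited sources, and it does so essentially correctly: characterize $\sigma_{\mathrm e}$ via the (semi-)Fredholm property and index of $\lambda I - L$, invoke the stability theorem that a $T$-compact perturbation preserves semi-Fredholmness and the index (Kato, Thm.\ IV.5.26), and get the reverse inclusion from the symmetry fact that $P$ is also $(L+P)$-compact (Kato, Thm.\ IV.1.11). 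Since the paper's notion of essential spectrum follows \cite{KaPr13} --- $\lambda \notin \sigma_{\mathrm e}$ iff $\lambda I - L$ is Fredholm of index zero --- the pointwise index-stability statement already closes the argument, with no component-counting needed.

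Two refinements. First, your repeated appeal to ``$L$-bound zero'' is slightly off in the stated generality: relative compactness implies relative bound zero in reflexive spaces (in particular in the Hilbert setting $H^1 \to L^2$ actually used here), but in general Banach spaces this implication is delicate. It is also unnecessary: closedness of $L+P$ on $\mathcal D(L)$, the equivalence of the graph norms, and the $(L+P)$-compactness of $P$ all follow directly from the compactness-contradiction argument of Kato's Thm.\ IV.1.11, without any bound-zero input, so you should route the symmetry lemma through that theorem rather than through a bound-zero claim. Second, your ``hard part'' paragraph about hole-filling and connected components of the Fredholm domain addresses Browder's essential spectrum $\sigma_{\mathrm{e}5}$, for which invariance under relatively compact perturbations can genuinely fail; for the index-sensitive Schechter-type definition in force here it is a non-issue, since ``Fredholm of index zero'' is preserved pointwise by the stability theorem. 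As written, that paragraph flags a difficulty that does not arise and sketches a resolution (the unbounded-component argument) that would itself need justification if it were needed; better to simply fix the definition from \cite{KaPr13} at the outset and delete the detour.
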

We say $P$ is \textbf{relatively compact} with respect to $L$, if it is a compact operator on $\mathcal D(L)$ equipped with the \textbf{graph norm} to $Y$. 
For our case, the domain of $L_{*,w}$ is $H^1$, and the graph norm of $L_{*,w}$ is equivalent to $H^1$ norm since $L_{*,w} = \Sigma \pd_x + \text{some bounded matrix function}$, with $\Sigma$ being a invertible constant matrix.
To apply {\sc Theorem} \ref{thm: Weyl} we need the following proposition
\begin{proposition}
\label{prop: compact perturbation}
$L_{*,w}$ is a relatively compact perturbation of $L_\infty$. 
\end{proposition}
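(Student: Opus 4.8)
The plan is to exhibit $P := L_{*,w} - L_\infty$ as a multiplication operator by a bounded, matrix-valued function $M(x)$ that decays to $0$ as $|x|\to\infty$, and then to show that any such multiplication operator is compact as a map $H^1(\mathbb R)\to L^2(\mathbb R)$. Since $L_\infty = \Sigma\pd_x + (\text{bounded multiplication})$ with $\Sigma$ invertible, its graph norm on $\mathcal D(L_\infty)=H^1$ is equivalent to the $H^1$-norm (by the same reasoning given just before the proposition for $L_{*,w}$); hence compactness of $P:H^1\to L^2$ is precisely the relative compactness required to apply {\sc Theorem} \ref{thm: Weyl}.

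First I would compute $P$ explicitly. Both $L_{*,w}=\Sigma(\pd_x - w'(x)) + A_*(x)$, see \eqref{eq: conjugate operator}, and $L_\infty$, see \eqref{eq: L infty}, share the same principal part $\Sigma\pd_x$, so the first-order terms cancel and
\[
P = M(x) := -\Sigma\big(w'(x) - a(x)\big) + \big(A_*(x) - A(x)\big),
\]
where $a(x) = a_-\,\mathbbm 1_{x\le0} + a_+\,\mathbbm 1_{x>0}$ and $A(x) = A_-\,\mathbbm 1_{x\le0}+A_+\,\mathbbm 1_{x>0}$, with $A_\pm = \lim_{x\to\pm\infty}A_*(x)$. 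This is a genuine multiplication operator --- no derivatives remain --- so it is bounded $H^1\to L^2$. For the decay: by the explicit form of the weight \eqref{eq: weight general}, $w'(x)=a_\pm$ exactly for $|x|\ge1$, so the weight-difference term is supported in the compact set $[-1,1]$; and since $b_*(x)$ is a heteroclinic orbit approaching its asymptotic equilibria exponentially ({\sc Section} \ref{sec:conv-rate}) and $A_*$ depends smoothly on $b_*$ through \eqref{eq: A *}, the matrix $A_*(x)-A_\pm$ decays exponentially as $x\to\pm\infty$. Thus $M(x)$ is bounded, decays exponentially, and is continuous except possibly for a single jump at $x=0$, which is immaterial for the $L^2$ theory.

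It then remains to prove that multiplication by such an $M$ is compact $H^1\to L^2$. Given a bounded sequence $\{f_n\}\subset H^1$ with $\|f_n\|_{H^1}\le C$, fix $\varepsilon>0$ and choose $R$ so large that $\sup_{|x|>R}\|M(x)\|<\varepsilon$; then $\|M f_n\|_{L^2(|x|>R)}\le \varepsilon C$ uniformly in $n$. On the bounded interval $[-R,R]$ the embedding $H^1([-R,R])\hookrightarrow L^2([-R,R])$ is compact by the Rellich--Kondrachov theorem, so $\{M\,\mathbbm 1_{[-R,R]}f_n\}$ admits an $L^2$-convergent subsequence. A standard diagonal extraction over $R=1,2,\dots$ (equivalently $\varepsilon\downarrow0$) produces a subsequence of $\{Mf_n\}$ that is Cauchy in $L^2(\mathbb R)$, establishing compactness.

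The argument is essentially routine; the only point requiring the specific structure of the problem is the uniform (indeed exponential) decay of $M(x)$, which rests on the exponential convergence of the heteroclinic profile to its endpoint equilibria established in {\sc Section} \ref{sec:conv-rate}. Once that decay is secured, the interplay of local compactness (Rellich--Kondrachov) with smallness at infinity is standard, and I anticipate no genuine obstacle.
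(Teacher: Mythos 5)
Your proposal is correct and follows essentially the same route as the paper: both identify $L_{*,w}-L_\infty$ as multiplication by a bounded matrix function decaying (exponentially) at infinity, and both combine Rellich--Kondrachov local compactness with smallness of the tail to get compactness from $H^1$ (the graph-norm domain) to $L^2$. The only cosmetic difference is that you conclude via a diagonal-subsequence extraction, whereas the paper writes the operator as an operator-norm limit of compactly truncated multiplication operators $\delta L_N$ and invokes the norm-closedness of the compact operators --- two interchangeable formulations of the same argument.
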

\begin{proof}
Note that $L_{*,w} - L_\infty = A_*(x) - A_\infty(x)$ where 
\[
A_\infty(z) =\big( A_- + w'(-\infty)) \mathbbm{1}_{(-\infty,0]}(x) + ( A_+ + w'(\infty) )\mathbbm{1}_{(-,\infty)}(x)
\]
So it can be identified with a piecewise smooth matrix function, 
with a type-1 discontinuity at $x = 0$.
The proof has two steps. First we prove that for any $N >0$, 
the operator defined by 
\[
\begin{aligned}
    & \delta L_N(x) = \left ( A_*(x) - A_\infty(x)\right) \mathbbm{1}_{[-N,N]}(x)
    \\= & \left ( A_*(x) - A_\infty(x)\right) \mathbbm{1}_{[-N,0]}(x) 
    +\left ( A_*(x) - A_\infty(x)\right) \mathbbm{1}_{(0,N]}(x) \\
    = & \delta L_N^- (x) + \delta L_N^+ (x)
\end{aligned}
\]
is compact relative to $L_\infty$. Let $(f_n)_n$ be an arbitrary sequence bounded in $H^1$ and as a result of the equivalence of $\|\cdot\|_{H^1}$ and the graph norm, 
it is also bounded in the graph norm. 
Note that 
\[\left( \delta L_N f_n\right)_n = \left( \delta L_N^- f_n\right)_n +
\left( \delta L_N^+ f_n\right)_n
\]
and $\delta L_N^\pm$ can be identified with a bounded smooth matrix function on $[0,\infty)$ and $(-\infty, 0]$, respectively.
So both sequences $\delta L_N^\pm f_n$ are bounded in $H^1$, therefore both admit convergent subsequences in $L^2$, as a result of the Rellich-Kondrachev compactness theorem\cite{Hahe2010kolmogorov};
so is their sum $\delta L_N f_n$. 
As a result $\delta L_N$ compact from $H^1$ to $L^2$.

Then we prove $(L_N)_N$ is convergent in norm as bounded operators from $H^1$ to $L^2$.
Let $N$ be a positive integer and we have \[ \|f\|_{L^\infty} \leq C \] from Sobolev embedding  with $C$ independent of $f$ as long as we fix $\|f\|_{H^1} = 1$.
Since $A_*(x) - A^\infty(x) \to 0$ exponentially fast as $x \to \pm\infty$, we have 
\[
\begin{aligned}
 & \| (\delta L_N - L_{*,w} + L_\infty) f \|_{L^2} \\
 \leq & \left\| (A_*(x) - A_\infty(x))f(x)\mathbbm{1}_{|x| > N} (x) \right\|_{L^2} \\
 \leq & C \left\|(A_*(x) - A_\infty(x))  \mathbbm{1}_{|x| > N} (x) \right\|_{L^2} \| f\|_{L^\infty} \\
 \leq & C \left\|(A_*(x) - A_\infty(x))  \mathbbm{1}_{|x| > N} (x) \right\|_{L^2} \\
\leq & C e^{- \mu N} 
\end{aligned}
\]
for some $C, \mu > 0$ independent of $N$, $ b$ and vanishes as $N \rightarrow \infty$.
Thus $(\delta L_N)_N$ converges to $L_{*,w} -L_\infty$ 
in operator norm of $\mathcal B \left( H^1 , L^2\right)$, the space of bounded linear operators from $H^1$ to $L^2$.
Since $L_{*,w} - L_\infty$ is compact in this space, 
it is compact relative to $L_\infty$ and the proof is complete.
\end{proof}
\subsection{Proof of {\sc lemma} \ref{lemma: c = 0 both L 2}}
\label{app: proof of lemma: c = 0 both L 2}
It is equivalent to proving that the weight in \eqref{eq: B tilde transform 0} is bounded for all $x\in \mathbbm R$.
Equivalently, the exponent
\begin{equation}
\label{eq:expon}
     w(x) + \int_{-\infty}^x \mathcal N_p(y) \dd y  
\end{equation}
is bounded uniformly for $x \in \mathbbm R$. 
Note that $w(x)=0$ for $x\leq -1$; also $r_0(x)\to 0$ exponentially as $y\to -\infty$, so $\mathcal N_p(x)= \mathcal N'\big(r_0(x)^2 \big) r_0(x)^2 \to 0$ exponentially as $x\to-\infty$ as well since $\mathcal N'(r^2)$ is bounded for all $r^2 \geq 0$. 
As a result, the expression in \eqref{eq:expon} is bounded for all large and negative $x$. 
For $x \geq 1$, there is $w(x)=Kx$.
Also, $\mathcal N_p(x)= \mathcal N'\big(r_0(x)^2 \big) r_0(x)^2 \to -K$ exponentially fast as $x \to \infty$, since $\mathcal N''(r^2)$ is also bounded near $r^2 = 1$ and $r_0(x)^2 \to 1$ exponentially fast as $x \to \infty$. 
Thus, for some $X_0>0$ and all $x\ge X_0$, we have:
\[w(x) + \int_{-\infty}^x \mathcal N_p(y) \dd y = Kx + \int_{-\infty}^{X_0} N_p(y) dy + \int_{X_0}^x \left(-K+\mathcal{O}(e^{-\gamma X})\right) dX,\]
for some $\gamma>0$ which is bounded for all $x\ge X_0$.
as $x \to \infty$.
\subsection{Proof of {\sc lemma} \ref{lemma: schroedinger}}
\label{app: proof of lemma: schroedinger}
Written out, \eqref{eq: reduced GEP c = 0} is 
\begin{equation}
\begin{aligned}
U_x & = \big[\mathcal N_0(x) + \mathcal N_p (x)\big] U + \lambda V
\\
V_x & = \lambda U - \big[\mathcal N_0(x) + \mathcal N_p (x)\big] V
\end{aligned}
\label{eq:writout}
\end{equation}
It is easy to see in fact $U$ is twice continuously differentiable.
Now, differentiate the first equation with respect to $x$ and using the second to eliminate $V_x$:
\[
U_{xx}  = \big[\mathcal N_0(x) + \mathcal N_p (x)\big]_x U + \lambda ^ 2 U + \big[\mathcal N_0(x) + \mathcal N_p (x)\big]^2 U
\]
or \[
    \big[-\pd_x^2 + \mathcal V(x) \big] U (x) = \mathcal E U(x),\ \mathcal E = -\lambda^2  \] where
is a time-independent Schr\"{o}dinger equation with a real-valued potential
\[
    \mathcal V(x) =  \big[\mathcal N_0(x) + \mathcal N_p (x)\big]^2 + \big[\mathcal N_0(x) + \mathcal N_p (x)\big]_x.
\]
Let $\mathcal L = -\pd_x^2 + \mathcal V(x)$.
We observe that $U\ne0$ if $\lambda \neq 0$; otherwise if $\lambda\ne0$ and $U=0$, it would follow from \eqref{eq:writout} that $V=0$, 
which contradicts the assumption that $B\ne0$.

Now we prove that $\mathcal E = 0$ is an eigenvalue of $-\pd_x^2 + \mathcal V(x)$.
From the definition of $\mathcal N_0$ and $\mathcal N_p$ in \eqref{eq: N 0 and N p}, 
\[
\mathcal N_0(-\infty) + \mathcal N_p(-\infty) = 1 > 0 > \mathcal N_0(\infty) + \mathcal N_p(\infty) = -K\]
so there is an $x'\in \mathbbm R$ at which hold that $\mathcal N_0(x') + \mathcal N_p(x') = 0$ and that its derivative is negative.
At $x'$, there is 
\[
\begin{aligned}
    \mathcal V(x') = &   \big[\mathcal N_0(x') + \mathcal N_p (x')\big]^2 + \Big( \pd_x \big[\mathcal N_0(x) + \mathcal N_p (x)\big]\Big)_{x=x'} \\
    = & 0 + \Big( \pd_x \big[\mathcal N_0(x) + \mathcal N_p (x)\big]\Big)_{x=x'} < 0
\end{aligned}\]
Therefore $\min_{x \in \mathbbm R} \mathcal V (x) <0$.

Now $\mathcal E = - \lambda^2 =  0$ is an eigenvalue of $\mathcal L$. 
In fact, in this case, $\lambda = 0$ and from \eqref{eq:writout} we see $V(x) \equiv 0$ since as $x \to -\infty$,  
\[
- \mathcal N_0(-\infty) - \mathcal N_p(-\infty) = -1\] 
so $V$ cannot be bounded as $x \to -\infty$, unless it vanishes for all $x \in \mathbbm R$, 
and there is a unique $U(x)$ given by (up to a constant):
\begin{equation}
\label{eq: U solution}
U(x) = U_0(x) = \exp \bigg( \int^x \mathcal N_0(x) + \mathcal N_p(x) \bigg)
\end{equation}
that solves the first of the (now decoupled) equations in \eqref{eq:writout}. 
In fact, direct differentiation yields:
\[
U_{0,xx}(x) = \Big[\big(\mathcal N_0 + \mathcal N_p\big)U_0 \Big]_x = \big(\mathcal N_0 + \mathcal N_p\big)_x U_0 + \big(\mathcal N_0 + \mathcal N_p\big)^2 U_0
\]
Now for any $x\in \mathbbm R$ there is $U(x) \neq 0$, otherwise, there would be $U \equiv 0$. 
So $U(x)$ has \textit{no nodes} and thus it is the \underline{ground state} of $\mathcal L$,
and $\mathcal E = 0$ is the (nondegenerate) ground state energy.
\subsection{Proof of {\sc proposition} \ref{prop: B tilde B transform}}
\label{app: proof of prop: B tilde B transform}
We will express formulas with Pauli matrices in this proof. 
See \eqref{eq: pauli} and \eqref{eq: pauli commutation}.
Throughout this proof, $\theta = \theta_{c,0} = -\frac{1}{2}\arcsin c$.

We conduct an intermediate transform on the unknown function $\tilde B$:
\begin{equation}
\label{eq: B tilde beta transformation}
\tilde B \big(\sqrt{1-c^2} X \big) = \big(\sigma_0 \cos \theta + \sigma_1 \sin \theta\big)  \beta (X)
\end{equation} 
for $X \in \mathbbm R$, which is exactly \eqref{eq: B tilde beta transformation text}.

We must take care when plugging \eqref{eq: B tilde beta transformation} into \eqref{eq: ep c}. 
Consider a generic differential equation of the form $\pd_x f(x) = g(x)$
which holds for all $x \in \mathbbm R$. Consequently we have 
\[
    \frac{1}{\sqrt{1-c^2}}\pd_X f \big(\sqrt{1-c^2}X \big) = g \big(\sqrt{1-c^2}X \big)
\]
for all $X \in \mathbbm R$. 
So write $L_{*,w} \big(\pd_x, x\big) $, \eqref{eq: ep c} is equivalent to 
\begin{equation}
\label{eq: PDE scaling transform}
L_{*,w} \bigg(\frac{1}{\sqrt{1-c^2}} \pd_X, \sqrt{1-c^2} X\bigg) \tilde B \big(\sqrt{1-c^2}X\big) = \lambda \tilde B \big(\sqrt{1-c^2}X\big)
\end{equation}
For convenience we further define
\begin{equation}
v(X) =\sqrt{1-c^2} w_c'(x)|_{x = \sqrt{1-c^2} X} = \frac{\dd w_c\big(\sqrt{1-c^2} X \big) }{\dd X} ,\quad \Lambda = \frac{\lambda}{\sqrt{1-c^2}}
\label{eq: definition v(X) and Lambda}
\end{equation}
For $b_* (x) = \begin{bmatrix}
    u_*(x) & v_*(x)
\end{bmatrix} = b_{c,0}(x)$ 
satisfying \eqref{eq: kink profile ODE}, since $u_*(x) = r_*(x) \cos \theta$ for all $x \in \mathbbm R$ and $r_*(0) =1/2$, we have 
\begin{equation}
\label{eq: kink profile ODE c}
    r_*'(x) = \frac{r_*(x) \mathcal N \big( r_*(x)^2 \big) }{\sqrt{1-c^2}}
\end{equation}
Note that for $c = 0$, $r_0 (x) = u_0(x)$ satisfies
\begin{equation}
    \label{eq: kink profile ODE 0}
    r_0'(x) = r_0(x) \mathcal N \big( r_0(x)^2 \big)
\end{equation}
and 
$r_0 \Big( \frac{x}{\sqrt{1-c^2}} \Big)$ satisfies
\begin{equation}
\label{eq: kink profile ODE 0 temp}
    \frac{\dd }{\dd x} r_0 \Big( \frac{x}{\sqrt{1-c^2}} \Big) = \frac{1}{\sqrt{1-c^2}} r_0'\Big( \frac{x}{\sqrt{1-c^2}} \Big)
\end{equation}
Combining \eqref{eq: kink profile ODE 0} and \eqref{eq: kink profile ODE 0 temp}, 
we see $x \mapsto r_0 \Big( \frac{x}{\sqrt{1-c^2}} \Big)$ satisfies \eqref{eq: kink profile ODE c} and have value $r_0(0/\sqrt{1-c^2} ) = r_0(0) = 1/2$ at $x = 0$.
Therefore for all $x\in \mathbbm R$, 
\[
    r_*(x) =  r_0 \Big( \frac{x}{\sqrt{1-c^2}} \Big) 
\]
and equivalently for all $X \in \mathbbm R$, 
\begin{equation}
    \label{eq: kink scaling}
    r_*\big(\sqrt{1-c^2} X \big) =  r_0 ( X )
\end{equation}
Now we transform \eqref{eq: ep c} according to \eqref{eq: PDE scaling transform}. 
In the first identity below we used \eqref{eq: kink scaling}:
\begin{equation}
\label{eq: L * w expression}
\begin{aligned}
    & L_{*,w} \bigg(\frac{1}{\sqrt{1-c^2}} \pd_X, \sqrt{1-c^2} X\bigg) = \big( c \sigma_0 + \sigma_1\big) \bigg( \frac{1}{\sqrt{1-c^2}} \pd_X - w_c'(x)|_{x = \sqrt{1-c^2}X} \bigg)  \\
    & +
    \begin{bmatrix}
    {\mathcal N'\big(r_*( x )^2 \big)} r_*(x)^2 \sin 2\theta
    & \mathcal N_* \big(r_*(x)^2 \big) +  {2 \mathcal N' \big(r_*( x )^2 \big)} r_*(x)^2 \sin^2 \theta \\
     -\mathcal N \big(r_*( x )^2 \big)- {2\mathcal N' \big(r_*( x )^2 \big)}r_*(x)^2 \cos^2 \theta & -{\mathcal N' \big(r_*( x )^2 \big)}r_*(x)^2 \sin 2\theta 
    \end{bmatrix}_{x = \sqrt{1-c^2}X}\\
    =&  \frac{\big( c \sigma_0 + \sigma_1\big)}{\sqrt{1-c^2}} \big( \pd_X - v(X) \big)  \\
    & +\begin{bmatrix}
    {\mathcal N'\big(r_0( X )^2 \big)} r_0(X)^2 \sin 2\theta
    & \mathcal N \big(r_0(X)^2 \big) +  {2 \mathcal N' \big(r_0( X )^2 \big)} r_0(X)^2 \sin^2 \theta \\
     -\mathcal N \big(r_0( X )^2 \big)- {2\mathcal N' \big(r_0( X )^2 \big)}r_0(X)^2 \cos^2 \theta & -{\mathcal N' \big(r_0( X )^2 \big)}r_0(X)^2 \sin 2\theta 
    \end{bmatrix}
    \\
    = &\frac{\big( c \sigma_0 + \sigma_1\big)}{\sqrt{1-c^2}} \big( \pd_X - v(X) \big)  
    - \sigma_1 \mathcal N_p ( X ) \cos 2 \theta + \ii \sigma_2  \big[ \mathcal N_0 (X) + \mathcal N_p( X ) \big]
    + \sigma_3 \mathcal N_p (X) \sin 2\theta \\
    = & \frac{\big( c \sigma_0 + \sigma_1\big)}{\sqrt{1-c^2}} \big( \pd_X - v \big)  
    - \sigma_1 \mathcal N_p \sqrt{1-c^2} + \ii \sigma_2  \big( \mathcal N_0 + \mathcal N_p \big)
    - \sigma_3 \mathcal N_p c 
\end{aligned}
\end{equation}
In the last line we have used $\theta = -\frac12 \arcsin c$. 
Using \eqref{eq: L * w expression} above, plug the transformation \eqref{eq: B tilde beta transformation} relating $\tilde B$ with $\beta$ into \eqref{eq: PDE scaling transform}, \eqref{eq: ep c} is equivalent to 
\begin{equation}
\label{eq: transformed c neq 0 GEP writeout}
\begin{aligned}
& \frac{c \sigma_0  + \sigma_1 }{\sqrt{1-c^2}}
\big(\pd_X - v \big) \big(\sigma_0 \cos \theta + \sigma_1 \sin \theta\big) \beta 
- \sigma_1 \mathcal N_p \sqrt{1-c^2} 
\big(\sigma_0 \cos \theta + \sigma_1 \sin \theta\big) \beta  
\\
&+ \ii \sigma_2 \big(\mathcal N_0 + \mathcal N_p \big) 
\big(\sigma_0 \cos \theta + \sigma_1 \sin \theta\big)
\beta - \sigma_3 c \mathcal N_p \big(\sigma_0 \cos \theta + \sigma_1 \sin \theta\big) \beta
\\
= &\frac{c \sigma_0  + \sigma_1 }{\sqrt{1-c^2}}
\big(\pd_X - v \big) \big(\sigma_0 \cos \theta + \sigma_1 \sin \theta\big) \beta 
- \sigma_1 \mathcal N_p \sqrt{1-c^2} 
\big(\sigma_0 \cos \theta + \sigma_1 \sin \theta\big) \beta 
\\
&+ \ii \big(\mathcal N_0 + \mathcal N_p \big) 
\big(\sigma_0 \cos \theta - \sigma_1 \sin \theta\big) \sigma_2
\beta -  c \mathcal N_p \big(\sigma_0 \cos \theta - \sigma_1 \sin \theta\big) \sigma_3 \beta
\\
= & \lambda\big(\sigma_0 \cos \theta + \sigma_1 \sin \theta\big) \beta 
\end{aligned}
\end{equation}
In the last identity we have used the fact that $\sigma_i$ and $\sigma_j$ anti-commute for $i, j \in \{1,2,3\}$ and $i \neq j$. See \eqref{eq: pauli commutation}.
We would like to obtain a differential equation in which there is no coefficient in front of $\pd_X$.
For this, we multiply on both sides (on the left) by
\begin{equation}
\label{eq: inverses}
    \sqrt{1-c^2} \big( \sigma_0 \cos \theta + \sigma_1 \sin \theta \big)^{-1} \big( c \sigma_0 + \sigma_1\big)^{-1} 
\end{equation}
Note that $ \sigma_0 \cos \theta + \sigma_1 \sin \theta $ and $ c \sigma_0 + \sigma_1 $ commute,
therefore so does their inverses. 
Explicitly, we have:
\[
\big( c \sigma_0 + \sigma_1 \big)^{-1} = \frac{-c \sigma_0 + \sigma_1}{1-c^2}
\]
and, note that $\cos^2 \theta - \sin^2 \theta = \cos 2\theta = \sqrt{1-c^2}$, there is also
\[
\big(\sigma_0 \cos \theta + \sigma_1 \sin \theta \big)^{-1} = \frac{\sigma_0 \cos \theta - \sigma_1 \sin \theta }{\sqrt{1-c^2}}
\]
Now multiplying to the left by \eqref{eq: inverses} on both sides of the last identity of
\eqref{eq: transformed c neq 0 GEP writeout}:
\begin{equation}
\label{eq: transformed c neq 0 GEP}
    \begin{aligned}
    & \big(\pd_X - v \big) \beta - \big(-c\sigma_0 + \sigma_1\big)
    \mathcal N_p
    \sigma_1 \beta \\
    & + \ii \frac{-c\sigma_0 + \sigma_1}{1-c^2} \big( \sigma_0 \cos \theta - \sigma_1 \sin\theta \big)^2 
    \big(\mathcal N_0 + \mathcal N_p\big) \sigma_2 \beta \\
    & - \frac{-c\sigma_0 + \sigma_1}{1-c^2} \big( \sigma_0 \cos \theta - \sigma_1 \sin\theta \big)^2
    c \mr N_p \sigma_3 \beta
    \\
    = & \frac{-c \sigma_0 + \sigma_1}{\sqrt{1-c^2}} \lambda \beta = \big(-c \sigma_0 + \sigma_1\big) \Lambda \beta
    \end{aligned}
\end{equation}
The second term of LHS of \eqref{eq: transformed c neq 0 GEP} is 
\begin{equation}
\label{eq: transformed c neq 0 GEP term 2}
- \big(-c\sigma_0 + \sigma_1\big)
    \mathcal N_p
    \sigma_1 \beta = \big( - \sigma_0 + c \sigma_1 \big) \mathcal N_p \beta 
\end{equation}
the third term:
\begin{equation}
\label{eq: transformed c neq 0 GEP term 3}
\begin{aligned}
 & \ii \frac{-c\sigma_0 + \sigma_1}{1-c^2} \big( \sigma_0 \cos \theta - \sigma_1 \sin\theta \big)^2 \big(\mathcal N_0 + \mathcal N_p\big) \sigma_2 \beta \\
 = & \ii \frac{-c\sigma_0 + \sigma_1}{1-c^2} \big(\sigma_0 - \sigma_1 \sin 2\theta\big)
 \big(\mathcal N_0 + \mathcal N_p \big) \sigma_2 \beta \\
 = &\ii \frac{-c\sigma_0 + \sigma_1}{1-c^2} \big(\sigma_0 + \sigma_1 c\big) \big(\mathcal N_0 + \mathcal N_p \big) \sigma_2 \beta
 = \ii \sigma_1 \big(\mathcal N_0 + \mathcal N_p \big) \sigma_2 \beta \\
 = & -  \big(\mathcal N_0 + \mathcal N_p \big) \sigma_3 \beta
 \end{aligned}
\end{equation}
where again $\sin 2\theta = -c$.
The fourth term is 
\begin{equation}
\label{eq: transformed c neq 0 GEP term 4}
\begin{aligned}
    & - \frac{-c\sigma_0 + \sigma_1}{1-c^2} 
    \big( \sigma_0 \cos \theta - \sigma_1 \sin\theta \big)^2
    c \mathcal N_p \sigma_3 \beta
    \\
    = & \frac{c\sigma_0 - \sigma_1}{1-c^2} \big(\sigma_0 + \sigma_1 c\big) c \mathcal N_p \sigma_3 \beta
    \\
    = & - \sigma_1 c \mathcal N_p \sigma_3 \beta
    = \ii c \mathcal N_p \sigma_2
\end{aligned}
\end{equation}
Combining \eqref{eq: transformed c neq 0 GEP}, \eqref{eq: transformed c neq 0 GEP term 2}, \eqref{eq: transformed c neq 0 GEP term 3} and \eqref{eq: transformed c neq 0 GEP term 4}:
\[ \big(\pd_X - v \big) \beta + \big( - \sigma_0 + c \sigma_1 \big) \mathcal N_p \beta 
-  \big(\mathcal N_0 + \mathcal N_p \big) \sigma_3 \beta +\ii c \mathcal N_p \sigma_2
= \big(-c \sigma_0 + \sigma_1\big) \Lambda \beta
\]
which is
\begin{equation}
\pd_X  \beta = \sigma_0 \big( v - c \Lambda + \mathcal N_p\big) 
\beta + \Lambda \sigma_1 \beta
- c \mathcal N_p \big( \sigma_1 + \ii \sigma_2 \big)  \beta
+ \big(\mathcal N_0 + \mathcal N_p \big) \sigma_3  \beta 
  \label{eq:beta-eq}  
\end{equation}
Further, we let 
\begin{equation}
\label{eq: transform B beta}
\begin{aligned}
    B (X) & := e^{-c\Lambda X }
    \exp\bigg( \int_{-\infty}^X v(Y) + \mathcal N_p(Y) \dd Y \bigg)  \beta (X)  \\
    & = e^{-c\Lambda X }
    \exp\bigg( w_c\big(\sqrt{1-c^2} X \big) + \int_{-\infty}^X \mathcal N_p(Y) \dd Y \bigg)  \beta (X)
\end{aligned}
\end{equation}
which is exactly \eqref{eq: transform B beta text}.
Plugging \eqref{eq: transform B beta} into \eqref{eq:beta-eq} we get rid of the first first term on the RHS of \eqref{eq:beta-eq}: 
\[
    \pd_X  B = \Lambda \sigma_1 B
- c \mathcal N_p \big( \sigma_1 + \ii \sigma_2 \big)  B
+ \big(\mathcal N_0 + \mathcal N_p \big) \sigma_3  B 
\]
Equivalently, $B=\begin{bmatrix}
    U & V
\end{bmatrix}^\mathsf T$ satisfies
\begin{equation}
\label{eq: reduced c neq 0 U V system}
    \begin{aligned}
     U_X & = \big(\mathcal N_0 + \mathcal N_p \big)  U + \big( \Lambda -2c\mathcal N_p\big) V \\
     V_X & = \Lambda U - \big(\mathcal N_0 + \mathcal N_p\big) V
    \end{aligned}.
\end{equation}
which is exactly \eqref{eq: UV}, 
and the proof of {\sc proposition} \ref{prop: B tilde B transform} is done.
\subsection{Proof of {\sc lemma} \ref{lem: c neq 0 kink unbounded}}
\label{app: proof of lem: c neq 0 kink unbounded}
By assumption $L_{*,w} \tilde B(x) = \lambda \tilde B(x)$.
Since the coefficients of $L_{*,w}$ are all bounded and smooth with respect to $x$, 
$B(x)$ is actually smooth. 
Moreover, $\lambda$ with $\Re \lambda > 0$ is not in the essential spectrum of $L_{*,w}$ since $a = \frac{K}{\sqrt{1-c^2}}$ satisfies the condition in {\sc proposition} \ref{prop: essential kink}.
Hence there are constants $\mu_\pm$ with $\Re \mu \neq 0$ such that $\big|\tilde B(x)\big| = C e^{\mu x} (1 + o(1))$ as $x \to \infty$; similarly for $x \to -\infty$. 
As a result, $\tilde B \in L^2$ if and only if $\tilde B (x) = o(1) $ as $|x| \to \infty$.
From the transform \eqref{eq: B tilde beta transformation} relating $\tilde B$ and $\beta$, $\tilde B \in L^2$ if and only if $\beta \in L^2$, if and only if $\beta (X) = o(1) $ as $|X| \to \infty$.

Now assume $\beta \in L^2$ and rewrite \eqref{eq: transform B beta} as
\[
    B(X) = e^{g(X)}  \beta(X)
\]
where 
\[
g(X) = - c \Lambda X + w_c \big(\sqrt{1-c^2} X \big) + \int_{-\infty}^X \mathcal N_p(Y) \dd Y
\]
For $X \geq \frac{1}{\sqrt{1-c^2}}$, $w_c\big(\sqrt{1-c^2}X\big) = K$, see \eqref{eq: definition v(X) and Lambda};
and $\mathcal N_p(X) = \mathcal N'\big(r_0(X)^2\big) r_0(X)^2$ approaches $-K$ exponentially fast as $X \to \infty$. 
So $g(X) = -c\Lambda X + \mathcal O(1)$
as $X \to \infty$.
Since $\Re \Lambda >0$, 
if $\beta\in L^2$, there must be $B(X) \to 0$ (exponentially fast) as $X \to \infty$.

On the other hand $g(X)  = -c \Lambda X + o(1)$ as $X \to -\infty$ since $w_c\big(\sqrt{1-c^2}X\big)=0$ for $X \leq \frac{1}{\sqrt{1-c^2}}$, 
and $\mathcal N_p(X) \to 0$ exponentially fast. 
Since \eqref{eq: reduced c neq 0 U V system} is also exponentially asymptotically constant, as $X \to -\infty$, $B(X)$ also behaves exponentially.
There must be $B(X) \sim e^{\mu X}$
where $\mu$ satisfies the following, obtained by taking limits of the coefficients of \eqref{eq: reduced c neq 0 U V system} (note that $\mathcal N_0(-\infty) = 1$):
\[
    \det \begin{bmatrix}
    1 - \mu & \Lambda \\ \Lambda & - 1 - \mu \end{bmatrix} = 0
\]
namely $\mu = \pm \sqrt{1 + \Lambda^2}$. Since $\beta(X) = o(1)$ as $X \to -\infty$, 
\begin{equation}
\label{eq: B X asymp}
    B(X) = o \big(e^{-c\Lambda X}\big), \quad \text{as $X \to -\infty$}
\end{equation} 
This forces $B(X) \sim e^{\sqrt{1+\Lambda^2} X}$ since otherwise there must be $B(X) \sim e^{-\sqrt{1+\Lambda^2} X}$.
Due to the following relation
\begin{equation}
    \label{eq: B X order comparison}
    e^{-c\lambda X} = o \Big( e^{- \sqrt{\Lambda^2 + 1}X } \Big),\quad \text{as $X \to -\infty$}
\end{equation}
so condition \eqref{eq: B X asymp} is violated.
To prove \eqref{eq: B X order comparison}, 
note that $\Re \big(-\sqrt{1 + \Lambda^2} \big)  < \Re ( - \Lambda)$ 
since $\Re \Lambda > 0$, by the following elementary fact:
\begin{equation}
\label{eq: sun guanhao}
    \Re \sqrt{z^2 + 1 } > \Re z, \quad \text{for $\Re z > 0$}
\end{equation}
and thus 
\[
    \Re \big(c \Lambda - \sqrt{1+\Lambda^2} ) < \Re \big(c \Lambda - \Lambda ) < 0
\]
with $0\leq c < 1$, so $\Re (-c \Lambda) >  \Re \big(-\sqrt{\Lambda^2+1} \big)$, and $\Re \big( -c\Lambda X\big)  < \Re \big(-\sqrt{\Lambda^2+1} X \big)$ for $X < 0$. 
As a result \eqref{eq: B X order comparison} holds.

So $B(X) \sim  e^{\sqrt{1+\Lambda^2} X} \to 0$ as $X \to -\infty$.
Therefore if $\beta\in L^2$, there must be $B(X) \to 0$ exponentially fast as $|X| \to \infty$, which implies $B \in L^2$.

Now we prove inequality \eqref{eq: sun guanhao} \footnote{The authors thank Dr. SUN Guanhao of UCSD for pointing out this is not a trivial fact and for providing the following proof. }
\begin{proof}[Proof of \eqref{eq: sun guanhao} ]
Note that for any $z \in\mathbbm C$,
\begin{equation}
\label{eq: real part of z expression}
    \Re z = \frac{1}{2} \bigg( z + \frac{z\bar z}{z} \bigg).
\end{equation}
Write $z = r e^{\ii \theta}$. Since $\Re z > 0 $, where $-\pi/2 <\theta < \pi/2$. 
Equation \eqref{eq: sun guanhao} becomes
 $r\cos\theta<\Re \sqrt{r^2e^{2i\theta}+1}$. 
Dividing by $r$, we find that \eqref{eq: sun guanhao} is equivalent to  \[
\Re \sqrt{r^{-2} + e^{\ii 2 \theta} } > \cos \theta,\quad -\pi/2 <\theta < \pi/2. \] 
Note that the real part of $\sqrt{\cdot}$ is always nonnegative, it is equivalent to prove 
\[2 \Big[\Re \sqrt{s + e^{\ii 2 \theta} } \Big]^2 - 2 \cos^2 \theta > 0, \quad s \equiv r^{-2} > 0, \quad -\pi/2 <\theta < \pi/2.\]
Using \eqref{eq: real part of z expression}, the LHS satisfies
\[
\begin{aligned}
& \frac{1}{2}\left[ \Big(s + e^{\ii 2 \theta} \Big)^{1/2} + \frac{\Big(s^2 + 1 + 2 s \cos 2\theta \Big)^{1/2} }{\Big( s+ e^{\ii2\theta}\Big)^{1/2}} \right]^2  - 2 \cos^2 \theta 
\\ =& 2 \times \frac{1}{4}\Big[  s+e^{\ii 2 \theta} + 2 \big( s^2+1+2s\cos 2\theta\big)^{1/2} + s + e^{-\ii2\theta} \Big] - 2 \cos^2 \theta \\
=& s + \cos 2\theta  + \big( s^2 +1+2s \cos 2\theta\big) ^{1/2}  - \cos 2\theta - 1 \\
 = &  s + \big( s^2 +1+2s \cos 2\theta\big) ^{1/2} -1 > s + |s-1|-1 \geq 0
\end{aligned}
\]
with $s>0$ and $\cos 2\theta  \neq -1$ since the latter requires $\theta = \pm \pi/2$, contradicting the requirement that $\Re z = r\cos \theta > 0$. 
Thus we have concluded the proof of \eqref{eq: sun guanhao}.
\end{proof}

\bibliographystyle{plain}
\bibliography{references}
\end{document}